\definecolor{stroke1}{HTML}{2574A9} 
    \def\IfEmptyTF#1%
\relax\detokenize{#1}\relax%
\NewDocumentCommand{\mathOrText}{m}
{%
    \ensuremath{#1}\xspace%
}
\let\originalleft\left
\let\originalright\right
\renewcommand{\left}{\mathopen{}\mathclose\bgroup\originalleft}
\renewcommand{\right}{\aftergroup\egroup\originalright}
    \DeclareRobustCommand{\bfseries}%
    {%
        \not@math@alphabet\bfseries\mathbf%
        \fontseries\bfdefault\selectfont%
        \boldmath%
    }
\crefname{ineq}{inequality}{inequalities}
\crefname{term}{term}{terms}
\crefname{cond}{condition}{conditions}
\crefname{assume}{assumption}{assumptions}
\let\oldfootnote\footnote
\newlength{\spaceBeforeFootnote} 
\newlength{\spaceAfterFootnote}  
\RenewDocumentCommand{\footnote}{o o o m}%
{%
    \IfNoValueTF{#1}%
    {%
        \oldfootnote{#4}%
    }%
    {%
        \setlength{\spaceBeforeFootnote}{\IfEmptyTF{#1}{0}{#1} em}%
        \IfNoValueTF{#2}%
        {%
            \hspace*{\spaceBeforeFootnote}\oldfootnote{#4}%
        }%
        {%
            \setlength{\spaceAfterFootnote}{\IfEmptyTF{#2}{0}{#2} em}%
            \hspace*{\spaceBeforeFootnote}\IfNoValueTF{#3}{\oldfootnote{#4}}{\oldfootnote[#3]{#4}}\hspace*{\spaceAfterFootnote}%
        }%
    }%
}
\declaretheorem
[
   	name = Claim,
    sharenumber = conjecture,
]
{claim}
\declaretheorem
[
   	name = Lemma,
    sharenumber = conjecture,
]
{lemma}
\declaretheorem
[
   	name = Corollary,
    sharenumber = conjecture,
]
{corollary}
\declaretheorem
[
   	name = Theorem,
    sharenumber = conjecture,
]
{theorem}
\declaretheorem
[
   	name = Definition,
    sharenumber = conjecture,
]
{definition}
\declaretheorem
[
    name = Remark,
    sharenumber = conjecture,
]
{remark}
\declaretheorem
[
    name = Observation,
    sharenumber = conjecture,
]
{observation}
\NewDocumentCommand{\functionTemplate}{m m m m o}%
{%
    \IfNoValueTF{#5}%
    {%
        \mathOrText{#1\left#2{#4}\right#3}%
    }%
    {%
        \mathOrText{#1#5#2{#4}#5#3}%
    }%
}
\newcommand*{\leftBracketType}{(}
\newcommand*{\rightBracketType}{)}
\NewDocumentCommand{\createFunction}{m m o o}%
{%
    \renewcommand*{\leftBracketType}{\IfNoValueTF{#3}{(}{#3}}%
    \renewcommand*{\rightBracketType}{\IfNoValueTF{#4}{)}{#4}}%
    \NewDocumentCommand{#1}{o o}%
    {%
        \IfNoValueTF{##1}%
        {%
            \mathOrText{#2}%
        }%
        {%
            \functionTemplate{#2}{\leftBracketType}{\rightBracketType}{##1}[##2]%
        }%
    }%
}
\DeclareDocumentCommand{\probabilisticFunctionTemplate}{m m O{} O{} o}
{%
    \functionTemplate{#1\IfEmptyTF{#4}{}{_{#4}}}%
    {\lbrack}%
    {\rbrack}%
    {#2\IfEmptyTF{#3}{}{\ \IfNoValueTF{#5}{\left}{#5}\vert\ \vphantom{#2}#3\IfNoValueTF{#5}{\right.}{}}}%
    [#5]%
}
\newcommand*{\N}{\mathOrText{\mathds{N}}}
\newcommand*{\Z}{\mathOrText{\mathds{Z}}}
\newcommand*{\R}{\mathOrText{\mathds{R}}}
\newcommand*{\indicatorFunctionSymbol}{\mathds{1}}
\RenewDocumentCommand{\Pr}{m O{} O{} o}%
{%
    \probabilisticFunctionTemplate{\mathds{P}}{#1}[#2][#3][#4]%
}
\NewDocumentCommand{\E}{m O{} O{} O{} o}%
{%
    \IfEmptyTF{#3}
    {
    	\probabilisticFunctionTemplate{\mathds{E}}{#1}[#2][#4][#5]
    }
    {
    	\IfEmptyTF{#2}
    	{\probabilisticFunctionTemplate{\mathds{E}}{#1}[#3][#4][#5]}
    	{\probabilisticFunctionTemplate{\mathds{E}}{#1}[#2;#3][#4][#5]}
    }
}
\NewDocumentCommand{\Var}{m O{} O{} o}%
{%
    \probabilisticFunctionTemplate{\mathrm{Var}}{#1}[#2][#3][#4]%
}
\DeclareDocumentCommand{\bigO}{m o}%
{%
    \functionTemplate{\mathrm{O}}{(}{)}{#1}[#2]%
}
\DeclareDocumentCommand{\bigOTilde}{m o}%
{%
	\functionTemplate{\widetilde{\mathrm{O}}}{(}{)}{#1}[#2]%
}
\DeclareDocumentCommand{\smallO}{m o}%
{%
    \functionTemplate{\mathrm{o}}{(}{)}{#1}[#2]%
}
\DeclareDocumentCommand{\bigTheta}{m o}%
{%
    \functionTemplate{\Theta}{(}{)}{#1}[#2]%
}
\DeclareDocumentCommand{\bigOmega}{m o}%
{%
    \functionTemplate{\upOmega}{(}{)}{#1}[#2]%
}
\DeclareDocumentCommand{\smallOmega}{m o}%
{%
    \functionTemplate{\upomega}{(}{)}{#1}[#2]%
}
\DeclareDocumentCommand{\eulerE}{o}%
{%
    \mathOrText{\mathrm{e}\IfNoValueTF{#1}{}{^{#1}}}%
}
\DeclareDocumentCommand{\poly}{m o}%
{%
    \functionTemplate{\mathrm{poly}}{(}{)}{#1}[#2]%
}
\DeclareDocumentCommand{\exponential}{m o}%
{%
	\functionTemplate{\mathrm{exp}}{(}{)}{#1}[#2]%
}
\createFunction{\id}{\mathrm{id}}
\NewDocumentCommand{\ind}{m o o}%
{%
    \IfNoValueTF{#2}%
    {%
        \mathOrText{\indicatorFunctionSymbol_{#1}}%
    }%
    {%
        \functionTemplate{\indicatorFunctionSymbol_{#1}}{(}{)}{#2}[#3]%
    }%
}
\DeclareDocumentCommand{\dom}{m o}%
{%
    \functionTemplate{\mathrm{dom}}{(}{)}{#1}[#2]%
}
\DeclareDocumentCommand{\rng}{m o}%
{%
    \functionTemplate{\mathrm{rng}}{(}{)}{#1}[#2]%
}
\DeclareDocumentCommand{\d}{o}%
{%
    \mathrm{d}\IfNoValueTF{#1}{}{^{#1}}%
}
\DeclareDocumentCommand{\set}{m O{} o}%
{%
    \mathOrText{
    	\IfNoValueTF{#3}{\left}{#3}\{
    		#1 
    		\IfEmptyTF{#2}{}{\ \IfNoValueTF{#3}{\left}{#3}\vert\ \vphantom{#1}#2\IfNoValueTF{#3}{\right.}{}}
    	\IfNoValueTF{#3}{\right}{#3}\}}%
}
\newcommand*{\size}[1]{\mathOrText{\left\vert #1 \right\vert}}
\newcommand*{\absolute}[1]{\mathOrText{\left\vert #1 \right\vert}}
\newcommand*{\powerset}[1]{\mathOrText{2^{#1}}}
\newcommand{\borel}{\mathcal{B}}
\newcommand{\boundedBorel}{\borel_{b}}
\newcommand{\comp}{c}
\DeclareDocumentCommand{\potential}{o o}%
{%
	\mathOrText{%
		\phi \IfNoValueF{#2}{\left(#1, #2\right)}
	}%
}
\DeclareDocumentCommand{\hamiltonian}{o}%
{%
	\mathOrText{%
		H \IfNoValueF{#1}{\left(#1\right)}
	}%
}
\newcommand*{\activity}{\mathOrText{\lambda}}
\DeclareDocumentCommand{\activityFunction}{o}%
{%
	\mathOrText{%
		\myMathbold{\lambda} \IfNoValueF{#1}{\left(#1\right)}
	}%
}
\newcommand{\projection}[1]{\pi_{#1}}
\DeclareDocumentCommand{\gibbs}{o o}%
{%
	\mathOrText{%
		\mu \IfNoValueTF{#2}
        {\IfNoValueF{#1}{_{#1}}}
        {_{#1}^{#2}}
	}%
}
\DeclareDocumentCommand{\partitionFunction}{o o}%
{%
	\mathOrText{%
		Z \IfNoValueTF{#2}
        {\IfNoValueF{#1}{_{#1}}}
        {_{#1}^{#2}}
	}%
}
\newcommand*{\dtv}[2]{\left\lvert #1 - #2\right\rvert_{TV}}
\newcommand{\dtvProjected}[3]{\left\lvert #1 - #2\right\rvert_{#3}}
\NewDocumentCommand{\PrSymbol}{o}%
{%
	\mathOrText{\mathds{P}\IfNoValueF{#1}{_{#1}}}
}
\newcommand*{\symmDiff}{\mathOrText{\ominus}}
\newcommand*{\norm}[2]{\mathOrText{\left\lVert #1 \right\rVert_{#2}}}
\newcommand*{\myMathbold}[1]{\mathOrText{\pmb{#1}}}
\newcommand*{\intD}{\mathOrText{\, \text{d}}}
\newcommand*{\vectorize}[1]{\mathOrText{\myMathbold{#1}}}
\newcommand*{\setFromTuple}[1]{\mathOrText{\eta_{#1}}}
\newcommand*{\complementOf}[1]{\mathOrText{(#1)^{\mathrm{c}}}}
\newcommand*{\supp}[1]{\mathOrText{\mathrm{supp}\left(#1\right)}}
\newcommand*{\distSymbol}{\mathOrText{\mathrm{dist}}}
\newcommand*{\dist}[2]{\mathOrText{\mathrm{\distSymbol}\left(#1, #2\right)}}
\newcommand*{\volume}[1]{\mathOrText{\size{#1}}}
\newcommand*{\dimensions}{\mathOrText{d}}
\newcommand*{\range}{\mathOrText{r}}
\newcommand*{\sidelength}{\mathOrText{L}}
\newcommand*{\region}{\mathOrText{\Lambda}}
\newcommand*{\subregion}{\mathOrText{{\Lambda'}}}
\newcommand*{\boxRegion}[1]{\mathOrText{\Lambda_{#1}}}
\DeclareDocumentCommand{\pointsets}{o}%
{%
	\mathOrText{%
		\mathcal{N} \IfNoValueF{#1}{_{#1}}
	}%
}
\DeclareDocumentCommand{\pointsetEvents}{o}%
{%
	\mathOrText{%
		\mathfrak{R} \IfNoValueF{#1}{_{#1}}
	}%
}
\DeclareDocumentCommand{\PPP}{m o}%
{%
	\mathOrText{P_{#1 \IfNoValueF{#2}{, #2}}}
}
\DeclareDocumentCommand{\potentialHS}{o o}%
{%
	\mathOrText{%
		\IfNoValueTF{#2}{\potential}{\potential[#1][#2]}
	}%
}
\DeclareDocumentCommand{\partitionFunctionApprox}{o o o}%
{%
	\mathOrText{%
		\IfEmptyTF{#3}{\hat{Z}}{\hat{Z}\left(#1, #2, #3\right)}
	}%
}
\DeclareDocumentCommand{\validHS}{o o}%
{%
	\mathOrText{%
		\IfNoValueTF{#1}{D}{D\left(#1 \IfNoValueF{#2}{\mid #2}\right)}
	}%
}
\DeclareDocumentCommand{\density}{o O{}}%
{%
	\mathOrText{%
		\rho \IfNoValueF{#1}{_{#1}} \IfEmptyTF{#2}{}{\left(#2\right)}
	}%
}
\newcommand*{\numberBoxes}{\mathOrText{N}}
\newcommand*{\boxIds}{\mathOrText{\mathcal{V}}}
\newcommand*{\ball}[2]{\mathOrText{\mathds{B}_{#2}\left(#1\right)}}
\DeclareDocumentCommand{\ball}{m m o}%
{%
	\mathOrText{%
		\mathds{B}\IfNoValueF{#3}{^{(#3)}}_{#2}\left(#1\right) 
	}%
}
\newcommand*{\iterationNumber}{\mathOrText{T}}
\newcommand*{\iterationTime}[1]{\mathOrText{R_{#1}}}
\newcommand*{\boxesToFix}[1]{\mathOrText{\mathcal{U}_{#1}}}
\newcommand*{\boxChosen}[1]{\mathOrText{\vectorize{u}_{#1}}}
\DeclareDocumentCommand{\boxesToUpdate}{o o o o}
{
	\mathOrText{B \IfNoValueF{#3}{\left(#1, #2, #3\right)}}
}
\DeclareDocumentCommand{\filterProbability}{o o o o o o}
{
	\mathOrText{p \IfNoValueF{#6}{\left(#1, #2, #3, #4, #5, #6\right)}}
}
\DeclareDocumentCommand{\bayesFilterCorrection}{o o o}
{
	\mathOrText{C \IfNoValueF{#3}{\left(#1, #2, #3\right)}}
}
\DeclareDocumentCommand{\bayesFilterCorrectionHS}{m o o o}
{
	\mathOrText{C_{#1} \IfNoValueF{#4}{\left(#2, #3, #4\right)}}
}
\DeclareDocumentCommand{\bayesFilterCorrectionR}{m m o o o}
{
	\mathOrText{C_{#1, #2} \IfNoValueF{#5}{\left(#3, #4, #5\right)}}
}
\newcommand*{\currentPointset}[1]{\mathOrText{X_{#1}}}
\newcommand*{\resampledPointset}{\mathOrText{Y}}
\newcommand*{\bayesFilter}[1]{\mathOrText{F_{#1}}}
\newcommand{\updateRadius}{
	\mathOrText{\ell}
}
\DeclareDocumentCommand{\correction}{o o}
{
	\mathOrText{\delta \IfNoValueF{#2}{\left(#1, #2\right)}}
}
\newcommand*{\statespace}{\mathOrText{\Omega}}
\newcommand*{\sigmafield}{\mathOrText{\mathcal{A}}}
\newcommand*{\Ber}[1]{\mathOrText{\mathrm{Ber}\left(#1\right)}}
\title{Perfect Sampling for Hard Spheres from Strong Spatial Mixing}
\author{Konrad Anand\thanks{School of Informatics, University of Edinburgh}, Andreas G\"obel\thanks{Hasso Plattner Institute, University of Potsdam}, Marcus Pappik$^\dagger$\!\!, Will Perkins\thanks{School of Computer Science, Georgia Institute of Technology}}
\date{\today}
\author{Anonymous Author(s)}
\date{}
\begin{document}

\maketitle

\begin{abstract}
    We provide a perfect sampling algorithm for the hard-sphere model on subsets of $\R^d$ with expected running time linear in the volume under the assumption of strong spatial mixing.   A large number of perfect and approximate sampling algorithms have been devised to sample from the hard-sphere model, and  our perfect sampling algorithm is efficient for a range of parameters for which only efficient approximate samplers were previously known and is faster than these known approximate approaches.  Our methods also extend to the more general setting of Gibbs point processes interacting via finite-range, repulsive potentials.
\end{abstract}

\section{Introduction}
\label{secIntro}

Gibbs point processes, or classical gases, are mathematical models of interacting particles.  In statistical physics they are used to model gases, fluids, and crystals, while in other fields they are used to model spatial phenomena such as the growth of trees in a forest, the distribution of stars in the universe, or the location of cities on a map (see e.g.~\cite{ruelle1999statistical,moller2003statistical,van2000markov,dereudre2019introduction}). 

Perhaps the longest and most intensively studied Gibbs point process is the hard-sphere model: a  model of a gas in which the only interaction between particles is a hard-core exclusion in a given radius around each particle. That is, it is a model of a  random packing of equal-sized spheres.   Despite the simplicity of its definition, the hard-sphere model is expected to exhibit the  qualitative behavior of a real gas~\cite{alder1957phase}, and in particular exhibits gas, liquid, and solid phases, thus giving evidence for the hypothesis, dating back to at least Boltzmann, that the macroscopic properties of a gas or fluid are determined by its microscopic interactions. This rich behavior exhibited by the hard-sphere model is  very difficult to analyze rigorously, and the most fundamental questions about phase transitions in this model are open mathematical problems~\cite{ruelle1999statistical,lowen2000fun}.

In studying the hard-sphere model (or Gibbs point processes more generally), a fundamental task is to sample from the model.  Sampling is used to estimate statistics, observe evidence of phase transitions, and perform statistical tests on data.  A wide variety of methods have been proposed to sample from these distributions; for instance, the Markov chain Monte Carlo (MCMC) method was first proposed by Metropolis, Rosenbluth, Rosenbluth, Teller, and Teller~\cite{metropolis1953equation} to sample from the two-dimensional hard-sphere model.  Understanding sampling methods for point processes in theory and in practice is a major area of study~\cite{moller2001review,moller2003statistical,engel2013hard,isobe2016hard,li2022hard}, and advances in sampling techniques have led to advances in the understanding of the physics of these models~\cite{metropolis1953equation,alder1957phase,lowen2000fun,bernard2009event,bernard2011two,engel2013hard}.

In this paper we will be concerned with provably efficient sampling from the hard-sphere model.  Rigorous guarantees for sampling algorithms come in several different varieties.  One question is what notion of `efficient' to use; another is what guarantee we insist on for the output.  In this paper we will provide an efficient sampling algorithm under the strictest possible terms with respect to both running time and accuracy of the output: a linear-time, \textit{perfect} sampling algorithm.

For simplicity we focus on sampling from the hard-sphere model defined on finite boxes in $\R^d$. For fixed parameter values of the model, the typical number of points appearing in such a region is linear in the volume, and so any  sampling algorithm will require at least this much time.  

As for guarantees on the output, there are two main types of guarantees.  The first type is an \textit{approximate sampler}: the output of such an algorithm must be distributed within $\varepsilon$ total variation distance of the desired target distribution.  Perhaps the main approach to efficient sampling from distributions normalized by intractable normalizing constants is the MCMC method.  In this approach, one devises a Markov chain with the target distribution as the stationary distribution and runs a given number steps of the chain from a chosen starting configuration; if the number of steps is at least the $\varepsilon$-mixing time, then the final state has distribution within $\varepsilon$ total variation distance of the target~\cite{jerrum1996markov,randall2006rapidly,diaconis2009markov}.  In general, however, computing or bounding the mixing time can be a very challenging problem.

The second type of guarantee is that of a \textit{perfect sampler}~\cite{propp1996exact}.  Such an algorithm has a running time that is random, but the distribution of the output is guaranteed to be \textit{exactly} that of the target distribution.  The main advantage of perfect sampling algorithms -- and the primary reason they are studied and used in practice -- is that one need not prove a theorem or understand the mixing time of a Markov chain to run the algorithm and get an accurate sample; one can simply run the algorithm and know that the output has the correct distribution.  The drawback is that the running time may be very large, depending on the specific algorithm and on the parameter regime.  Some naive sampling methods such as rejection sampling return perfect samples but are inefficient on large instances (exponential expected running time in the volume).  The breakthrough of Propp and Wilson in introducing `coupling from the past'~\cite{propp1996exact,propp1998get} was to devise a procedure for using a Markov chain transition matrix to design perfect sampling algorithms which, under some conditions, could run in time polylogarithmic in the size of a discrete state space (polynomial-time in the size of the graph of a spin system), matching the efficiency of fast mixing Markov chains which only return approximate samples 
(see also~\cite{asmussen1992stationarity,lovasz1995exact} for precedents in perfect sampling).  The work of Propp and Wilson led to numerous constructions of  perfect sampling algorithms for problems with both discrete and continuous state spaces including~\cite{felsner1997markov,haggstrom1998exact,kendall1998perfect,murdoch1998exact,haggstrom1999characterization,ferrari2002perfect,kendall2000perfect,moller2001review,garcia2000perfect}.  Notably, many of the first applications of Propp and Wilson's technique were  in designing perfect sampling algorithms for Gibbs point processes (though often without rigorous guarantees on the efficiency of the algorithms).

Perfect sampling continues to be a very active area of research today, with a special focus on improving the range of parameters for which perfect sampling algorithms can (provably) run in expected linear or polynomial  time~\cite{bhandari2020improved,jain2021perfectly,he2021perfect}

In this paper we design a perfect sampling algorithm for the hard-sphere model (and Gibbs point processes interacting with a finite-range, repulsive pair potential more generally) that is guaranteed to run in linear expected time for activity parameters up to the best known bound for efficient approximate sampling via MCMC.  


What is this bound and how do we design the algorithm?    One central theme in the analysis of discrete spin systems is the relationship between spatial mixing (correlation decay properties) and temporal mixing (mixing times of Markov chains)~\cite{holley1985possible,aizenman1987rapid,stroock1992logarithmic,martinelli1999lectures,dyer2004mixing}.  At a high level, these works show that for discrete lattice systems a strong correlation decay property (\textit{strong spatial mixing}) implies a near-optimal convergence rate for local-update Markov chains like the Glauber dynamics.  Recently it has been showed that strong spatial mixing in a discrete lattice model also implies the existence of efficient \textit{perfect} sampling algorithms~\cite{feng2022perfect,anand2022perfect}.  In parallel, there has been work establishing the connection between strong spatial mixing and optimal temporal mixing for Markov chains in the setting of the hard-sphere model and Gibbs point processes~\cite{helmuth2020correlation,michelen2020analyticity,michelen2022strong}.  At a high level, our aim is to combine these threads to show that strong spatial mixing for Gibbs point processes implies the existence of an efficient perfect sampler.  One challenge is that the approaches of~\cite{feng2022perfect,anand2022perfect} are inherently discrete in that key steps of the algorithms involve enumerating over all possible configurations in a subregion, something that is not possible in the continuum.  To overcome this we make essential use of Bernoulli factories -- a method for perfect simulation of a coin flip with a bias $f(p)$ given access to coin flips of bias $p$.  Bernoulli factories have recently been used in perfect sampling algorithms for solutions to constraint satisfaction problems in~\cite{he2022sampling,he2022improved}.

\subsection{The hard-sphere model, strong spatial mixing, and perfect sampling}\label{sec:intro_hard-sphere}

The hard-sphere model is defined on a bounded, measurable subset $\region$ of $\R^d$ with an activity parameter $\lambda \ge 0$ that governs the density of the model and a parameter $r >0$ that governs the range of interaction (though by re-scaling there is really only one meaningful parameter, and we could take $r=1$ without loss of generality).  In words, the hard-sphere model is the distribution of finite point sets in $\region$ obtained by taking a Poisson point process of activity $\lambda $ on $\region$ and conditioning on the event that all pairs of points are at distance at least $r$ from each other; in other words, on the event that spheres of radius $r/2$ centered at the given points form a sphere packing.  

We can equivalently define the model more explicitly, and in doing so, introduce objects and notation we work with throughout the paper.  
Fix the number of dimensions $\dimensions \in \N$, and denote by $\borel$ the Borel $\sigma$-field on $\R^{\dimensions}$ and by $\boundedBorel$ all bounded sets in $\borel$.
A point process on $\R^{\dimensions}$ is a probability measure on the set of locally finite point sets $\pointsets = \{\eta \subset \R^{\dimensions} \mid \forall \region \in \boundedBorel: \size{\eta \cap \region} < \infty\}$, equipped with the $\sigma$-field $\pointsetEvents$ that is generated by the maps $\set{\pointsets \to \N_{0}, \eta \mapsto \size{\eta \cap \region}}[\region \in \borel][\big]$.
The idea behind modeling gases via point processes is to represent them as random point configurations $\eta \in \pointsets$, where each point $x \in \eta$ indicates the (random) location of a particle.  

Throughout this paper, we focus on gases that are confined in a bounded region of space.
To this end, for $\region \in \boundedBorel$, we write $\pointsets_{\region}$ for the set of point configurations $\eta \in \pointsets$ with $\eta \cap \region^{\comp} = \emptyset$, and we write $\pointsetEvents_{\region}$ for the trace of $\pointsets_{\region}$ in $\pointsetEvents$. 
Note that every configuration in $\pointsets_{\region}$ contains only finitely many points.
The hard-sphere model (or in fact any Gibbs point process) on a bounded region $\region \in \pointsets[\region]$ is a point process $\gibbs[\lambda, \region]$ that is only supported on $\pointsets[\region]$.

Define for every $x_1, \dots, x_k \in \R^{\dimensions}$  the indicator that the points are centers of non-overlapping spheres of radius $r/2$; that is,
\begin{align*}
	\validHS[x_1, \dots, x_k] &= \prod_{\set{i, j} \in \binom{[k]}{2}} \ind{\dist{x_i}{x_j} \ge \range}  \, .
 \end{align*}
 Then define the hard-sphere partition function on $\region \in \boundedBorel$ at activity $\activity \in \R_{\ge 0}$ as
\begin{equation*}
    \partitionFunction[\region](\lambda) = \sum_{k \ge 0}  
   \frac{ \lambda^k}{k!}\int_{\region^k} \validHS[x_1, \dots, x_k]   \intD x_1 \dots \intD x_k \,.
\end{equation*}
For an event $A \in \pointsetEvents$, the hard-sphere model on $\region$ with activity $\activity$ assigns the probability
\begin{equation}
\label{eqHSGibbsDefn}
    \gibbs[\activity, \region](A) = \frac{1}{\partitionFunction[\region](\lambda)} \sum_{k \ge 0} \frac{\lambda^k}{k!} \int_{\region^k} \ind{\set{x_1, \dots, x_k} \in A} \validHS[x_1, \dots, x_k] \intD x_1 \dots \intD x_k \,.
\end{equation}

A very useful generalization of this model is to allow for bounded, measurable activity functions $\activityFunction: \R^{\dimensions} \to \R_{\ge 0}$ instead just constant activities.  
Here the model is a Poisson process with inhomogenous activity $\activityFunction$ conditioned on the points forming the centers of a sphere packing; the partition function is now
\begin{equation*}
    \partitionFunction[\region](\activityFunction) = \sum_{k \ge 0}  
   \frac{ 1}{k!}\int_{\region^k} \prod_{i=1}^k \activityFunction(x_i) \validHS[x_1, \dots, x_k]   \intD x_1 \dots \intD x_k 
\end{equation*}
and the measure $\gibbs[\activityFunction, \region]$ is defined analogously to~\eqref{eqHSGibbsDefn}. 
This generalization allows modeling of inhomogenous spaces and generalizes the concept of imposing boundary conditions on the model (i.e., the effect of placing particles at fixed locations in space).  
To see the latter, suppose we fix a point configuration $\eta \in \pointsets$ as boundary condition, meaning that no points are allowed to be placed in a ball of radius $r$ around the points in $\eta$, then we can simply model this by setting $\activityFunction(x) = 0$ for every point $x \in \region$ such that $\dist{x}{y} < r$ for any $y \in \eta$. 
We defer a detailed discussion to the more general setting of repulsive point processes,
and proceed by using activity functions to define \textit{strong spatial mixing}, the condition under which we can guarantee the efficiency of our perfect sampling algorithm. 

To define the concept of strong spatial mixing, we write $\projection{\region}: \pointsets \to \pointsets$ for the projection $\eta \mapsto \eta \cap \region$ to some region $\region \in \borel$.
Moreover, for any two point processes $P, Q$ on $\R^{\dimensions}$, we write $\dtv{P}{Q}$ for their total variation distance, and we write $\dtvProjected{P}{Q}{\region} \coloneqq \dtv{P \circ \projection{\region}^{-1}}{Q \circ \projection{\region}^{-1}}$ for the total variation distance between the projections of $P$ and $Q$ to $\region$ (i.e., of their pushforward measures under $\projection{\region}$). 

Strong spatial mixing asserts that, for any bounded region $\region \in \boundedBorel$ and any suitable pair of activity functions $\activityFunction, \activityFunction'$, the distributions $\gibbs[\activityFunction, \region]$ and $\gibbs[\activityFunction', \region]$ are similar on any region $\region' \in \boundedBorel$ such that $\activityFunction$ and $\activityFunction'$ only differ far away from $\region'$; i.e., $\dtvProjected{\gibbs[\activityFunction, \region]}{\gibbs[\activityFunction', \region]}{\region'}$ vanishes as $\dist{\subregion}{\supp{\activityFunction - \activityFunction'}}$ increases. 
Writing $\volume{\subregion}$ for the volume of $\subregion$, strong spatial mixing with exponential decay is defined as follows.
\begin{definition}\label{def:ssmHS}
	Given $a, b \in \R_{>0}$, the hard-sphere model on $\R^{\dimensions}$ exhibits \textbf{$(a, b)$-strong spatial mixing} up to activity $\activity \in \R_{>0}$ if for all bounded regions $\region, \region' \in \boundedBorel$ and all activity functions $\activityFunction, \activityFunction' \le \activity$ it holds that
	\[
	   \dtvProjected{\gibbs[\activityFunction, \region]}{\gibbs[\activityFunction', \region]}{\region'} \le a \volume{\subregion} \eulerE^{-b \cdot \dist{\subregion}{\supp{\activityFunction - \activityFunction'}}} .
	\]
\end{definition}

This definition of strong spatial mixing comes from~\cite{michelen2022strong}, which in turn adapted similar notions from discrete spin systems~\cite{dyer2004mixing,Weitz}.  Strong spatial mixing has proved to be an essential definition in the analysis, both probabilistic and algorithmic, of spin systems on graphs, and many recent works are focused on either proving strong spatial mixing for a particular model, range of parameters, and class of graphs (e.g.~\cite{Weitz,gamarnik2015strong,lu2013improved,sinclair2017spatial,regts2023absence,chen2023strong})  or deriving consequences of strong spatial mixing (e.g.~\cite{spinka2020finitary,feng2018local,liu2022correlation,feng2022perfect, anand2022perfect}).

Our main result is a linear expected-time perfect sampling algorithm for the hard-sphere model under the assumption of strong spatial mixing.

\begin{theorem}
    \label{mainThmHardSphere}
    There is a perfect sampling algorithm for the hard-sphere model on finite boxes $\region \subset \R^d$ with the property that if the hard-sphere model exhibits $(a,b)$-strong spatial mixing up to $\lambda$, then the expected running time of the algorithm at activity $\lambda$ is $O(| \region|)$, where the implied constant is a function of $a,b$, and $\activity$. 
\end{theorem}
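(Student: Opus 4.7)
The plan is to transport the discrete strong-spatial-mixing-implies-perfect-sampling framework of~\cite{feng2022perfect,anand2022perfect} to the continuum hard-sphere setting by using Bernoulli factories, in the spirit of~\cite{he2022sampling,he2022improved}. The overall scheme is to start from an initial configuration $\currentPointset{0}$ that is trivial to sample exactly---a Poisson point process of intensity $\activity$ on $\region$---and iteratively transform it into a sample from $\gibbs[\activity]$ via a sequence of local corrections. Each correction is a Bernoulli-factory-based accept/reject step whose bias equals the Radon--Nikodym derivative between the hard-sphere conditional and the Poisson proposal on the current box, given the configuration elsewhere; this is the role of the ``Bayes filter'' correction factor $\bayesFilterCorrection$.

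Concretely, partition $\region$ into a grid $\set{\boxRegion{i}}[i \in \boxIds]$ of sub-boxes of side length $\bigTheta{\range}$, so that each box contains $\bigO{1}$ points in expectation and only adjacent boxes can interact via the hard-core constraint. Initialize $\boxesToFix{0} = \boxIds$ (all boxes unresolved) with independent Poisson samples inside each box. At iteration $t$, pick $\boxChosen{t} \in \boxesToFix{t}$ and invoke the Bernoulli factory---using only fresh Poisson samples and evaluations of the validity indicator $\validHS$---to simulate an accept/reject coin with bias $\bayesFilterCorrection$. On accept, mark $\boxChosen{t}$ as resolved. On reject, resample the configuration in a small neighborhood of $\boxChosen{t}$ from the Poisson proposal and reinsert the affected boxes into $\boxesToFix{t}$. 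Terminate when $\boxesToFix{t} = \emptyset$.

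Correctness follows by induction on $t$: one maintains the invariant that the joint distribution of $\currentPointset{t}$ and $\boxesToFix{t}$ is consistent with an appropriate coupling to a true sample from $\gibbs[\activity]$, and verifies that each Bernoulli-factory step preserves this invariant because the factory outputs accept/reject decisions at exactly the correct bias without needing access to any normalizing constant. For the expected running time, a disagreement-percolation argument converts $(a,b)$-strong spatial mixing into exponentially small cascade probabilities: the probability that a rejection at a box $B$ propagates to a box of grid-distance $k$ should be bounded by $a\,\poly{k}\,\eulerE^{-bk}$, because SSM compares the local conditional used by the factory to the one in which the far-away boundary is removed. Summing in $k$ yields $\bigO{1}$ expected corrections per sub-box, hence total expected running time $\bigO{\volume{\region}}$.

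The main obstacle is designing and analyzing the Bernoulli factory itself. Unlike in the discrete case, where the acceptance probability can be evaluated by enumerating a finite local state space, here one must simulate a continuous density ratio using only finite random bits together with queries to $\validHS$. A natural construction is a Bernoulli race that draws candidate configurations from the Poisson proposal and uses $\validHS$-based acceptance rules, but proving that such a factory terminates almost surely, outputs exactly the correct bias, and has $\bigO{1}$ expected internal steps under SSM will require combining Poisson tail bounds with the exponential decay from Definition~\ref{def:ssmHS}. A secondary subtlety is formulating the global invariant so that the coupling analysis can be iterated across steps; boundary effects of the grid partition should be absorbable into the constants.
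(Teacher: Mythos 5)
Your high-level orientation is right: adapt the framework of Feng--Guo--Yin to the continuum by putting a graph structure on sub-boxes, maintain a set of unresolved boxes, use a Bayes filter to remove bias, and implement the filter with a Bernoulli factory. But the proposal inverts a crucial step and leaves the central technical construction as a gap.

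\textbf{The resampling branch is inverted, and the resampling distribution is wrong.} In the paper's algorithm (Algorithm~1) the configuration is \emph{unchanged} when the filter rejects ($\bayesFilter{t}=0$) and the boundary boxes $\partial\boxesToUpdate$ are simply re-added to $\boxesToFix{t}$. Resampling happens only when the filter \emph{accepts} ($\bayesFilter{t}=1$), and not from a bare Poisson process but from the conditional Gibbs measure $\gibbs[\activity_{\currentPointset{t} \cap \complementOf{\boxRegion{\boxesToUpdate}}} \ind{\boxRegion{\boxesToUpdate}}][][\boxRegion{\boxesToUpdate}]$ (via rejection sampling with a Poisson proposal, Lemma~6.8). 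Your proposal resamples from Poisson on reject; this would not preserve the invariant, because the configuration outside $\boxesToFix{t}$ was previously sampled conditionally on the \emph{old} contents of $\boxRegion{\boxChosen{t}}$, and overwriting those contents with an independent Poisson draw destroys the coupling. Similarly, initializing $\currentPointset{0}$ to a Poisson sample rather than $\emptyset$ is harmless only because $\boxesToFix{0}=\boxIds$ makes the invariant vacuous at $t=0$, but the paper's $\currentPointset{0}=\emptyset$ is cleaner and avoids confusion.

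\textbf{The Bayes filter correction is not a Radon--Nikodym ratio and you do not construct it.} The bias to be cancelled is a ratio of partition functions $\partitionFunction[\activity_{\cdot}][\boxRegion{\boxesToUpdate}]/\partitionFunction[\activity_{\cdot}][\boxRegion{\boxesToUpdate\setminus\set{\vectorize{v}}}]$, which arises from the normalization mismatch between the update region with and without the target box, not from comparing a hard-sphere conditional to a Poisson proposal. Your plan frames the factory as ``simulating a continuous density ratio,'' but the actual difficulty is different: by Definition~4.1 the correction $\bayesFilterCorrection[S][\vectorize{v}][\cdot]$ must (i) depend only on $\eta\cap\boxRegion{S}$, (ii) be uniformly bounded below, and (iii) lower-bound an \emph{infimum over all feasible boundary configurations in $\boxRegion{H}$}. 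Constructing a computable such $\bayesFilterCorrection$ is the heart of Section~6.1. The paper does it by discretizing the uncountable boundary set to a finite grid (Lemma~6.4, Corollary~6.5), discretizing the partition functions themselves via Riemann-type sums (Lemma~6.6), and then taking a minimum over the resulting finite set (Lemma~6.7). Only \emph{after} the correction factor is explicitly computed does the Bernoulli factory (Lemma~3.1) come in, and its role is purely to sample $\Ber{p/q}$ where $p$ and $q$ are probabilities of drawing the empty configuration from two small conditional Gibbs measures (Lemma~6.10). Your proposal names the Bernoulli factory as the difficulty and then says it ``will require combining Poisson tail bounds with the exponential decay,'' which hand-waves precisely the part that needs an argument.

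\textbf{A consequence you miss.} A defining feature of Theorem~1.2, in contrast to Theorem~1.4 for general repulsive potentials, is that the hard-sphere algorithm does \emph{not} require knowledge of the SSM constants $a,b$. This is possible exactly because the correction $\bayesFilterCorrectionHS{\varepsilon}$ is computed from the discretized infimum rather than from $a,b$ directly; SSM enters only in the efficiency analysis (Lemma~6.11). Your proposal frames SSM as an input to a disagreement-percolation argument, which would match the Theorem~1.4 story but not this one. Also, the efficiency bound in the paper is proved by an additive drift theorem (Theorem~B.2) applied to $|\boxesToFix{t}|$, not disagreement percolation; the intuition you give (cascades decay exponentially) is compatible, but the argument you sketch would need to be replaced by the drift calculation plus Wald's identity for the per-iteration cost.

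In short: the skeleton is right, but the acceptance/rejection logic is flipped, the resampling distribution is wrong, and the key construction of a computable Bayes filter correction satisfying the measurability, uniform lower bound, and infimum constraints is missing rather than sketched.
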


In particular, one can run the algorithm for any value of $\activity$ (without knowing whether or not strong spatial mixing holds) and the algorithm will terminate in finite time with an output distributed exactly as $\gibbs[\activity, \region]$; under the assumption of strong spatial mixing the expected running time is guaranteed to be linear in the volume.

Using bounds from~\cite{michelen2022strong} on strong spatial mixing in the hard-sphere model, we obtain the following explicit bounds on the activities for which the algorithm is efficient.

\begin{corollary}
    \label{corHardSphereBound}
    The above perfect sampling algorithm runs in expected time $O(| \region|)$ when $\lambda < \frac{\eulerE}{v_d(r)}$, where $v_d(r)$ is the volume of the ball of radius $r$ in $\R^d$.
\end{corollary}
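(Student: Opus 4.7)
The plan is to derive this as an immediate combination of Theorem~\ref{mainThmHardSphere} with a quantitative strong spatial mixing bound from the literature. Since Theorem~\ref{mainThmHardSphere} already packages the algorithmic content --- expected running time $O(|\region|)$ whenever $(a,b)$-strong spatial mixing holds at activity $\activity$ --- the corollary reduces to identifying a clean analytic condition on $\activity$ that certifies such strong spatial mixing.

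For this I would invoke the main strong spatial mixing theorem of \cite{michelen2022strong}: the hard-sphere model on $\R^{\dimensions}$ exhibits $(a,b)$-strong spatial mixing (in the sense of \cref{def:ssmHS}, which, as noted in the discussion after the definition, is exactly the notion used in that paper) for all $\activity < \eulerE / v_{\dimensions}(\range)$, where the constants $a$ and $b$ can be taken to depend continuously on $\activity$, $\dimensions$, and $\range$. Applying Theorem~\ref{mainThmHardSphere} with these constants yields expected running time $O(|\region|)$.

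The only subtlety worth double-checking --- and what I would treat as the main (routine) obstacle --- is that the formulation of strong spatial mixing in \cite{michelen2022strong} agrees up to constants with \cref{def:ssmHS}: in particular, one must verify that the decay is stated with respect to $\dist{\subregion}{\supp{\activityFunction - \activityFunction'}}$ rather than some variant (e.g.\ a single perturbed point, or the distance from a fixed base point), and that the prefactor $a\volume{\subregion}$ appears in the quantitative bound. Since \cref{def:ssmHS} is imported verbatim from that reference, this verification is immediate.

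Finally, I would remark that no additional algorithmic work is required: the same algorithm of Theorem~\ref{mainThmHardSphere} is run without any knowledge of the constants $a,b$, and the corollary only asserts the existence of the constants guaranteeing linear expected time within the stated activity regime.
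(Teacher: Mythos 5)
Your proposal matches the paper's reasoning exactly: the corollary is obtained by combining \Cref{mainThmHardSphere} with the strong spatial mixing bound $\activity < \eulerE/v_{\dimensions}(\range)$ established in \cite{michelen2022strong}, and the definitional compatibility you flag is already settled in the text, since \Cref{def:ssmHS} is stated to come from that same reference. Nothing further is needed.
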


In comparison, near-linear time MCMC-based approximate samplers  were given in~\cite{michelen2022strong} for the same range of parameters (following results for more restricted ranges in~\cite{kannan2003rapid,helmuth2020correlation}).  For perfect sampling from the hard-sphere model, linear expected time algorithms were given in~\cite{huber2012spatial,jerrum2019perfect} for more restrictive ranges of parameters.

\subsection{Gibbs point processes with finite-range repulsive potentials}\label{sec:intro_gpp}

We now give a closely related result in the  more general setting of Gibbs point processes interacting via finite-range, repulsive pair potentials.

Gibbs point processes are defined via a density against an underlying Poisson point process.  In general, this density is the exponential of (the negative of) an energy function on point sets that captures the interactions between points.  In many of the most studied cases, this energy function takes a special form: it is the sum of potentials over pairs of points in a configuration.  

A \textit{pair potential} is a measurable symmetric function $\potential: \R^d \times \R^d \to \R \cup \set{\infty}$.  For a bounded, measurable activity function $\activityFunction$ on $\region$ the Gibbs point process with pair potential $\potential$ on $\region$ is defined via the partition function 
\[
    \partitionFunction[\region](\activityFunction) = \sum_{k \ge0 } \frac{1}{k!} \int_{\region^k} \left(\prod_{i \in [k]} \activityFunction[x_i] \right) \eulerE^{- \hamiltonian[x_1, \dots, x_k]} \intD x_1 \dots \intD x_k
\]
where
\[
    \hamiltonian[x_1, \dots, x_k] = \sum_{\set{i, j} \in \binom{[k]}{2}} \potential[x_i][x_j]  \,,
\]
denotes the Hamiltonian given by the potential $\potential$.
Again the corresponding probability measure  $\gibbs[\activityFunction, \region]$ is obtained as in~\eqref{eqHSGibbsDefn}. 
Note that the hard-sphere model discussed earlier is obtained by setting $\potential(x, y) = \infty$ if $\dist{x}{y} < r$ and $\potential(x, y) = 0$ otherwise.
Making sure that the potential in question is always clear from the context, we allow ourselves to omit the dependency on the potential from the notation.
For the same reason, we use the same notation for the general model and the hard-sphere model.

A pair potential $\phi$ is \textit{repulsive} if $\phi(x,y) \ge 0$ for all $x,y$. 
It is of \textit{finite-range} if there exists $\range\ge0$ so that $\potential[x][y] =0$ whenever $\dist{x}{y} \ge \range$.  
Typical example of models interacting via a finite-range, repulsive pair potential are the hard-sphere model (as illustrated above) or the Strauss process~\cite{strauss1975model,kelly1976note}.

As with the hard-sphere model, we can use the activity function to encode the influence of boundary conditions.
To this end, given a repulsive potential $\potential$ and an activity function $\activityFunction$, we account for the impact of a boundary condition $\eta \in \pointsets$ by defining the modified activity function $\activityFunction_{\eta}: y \mapsto \activityFunction[y] \eulerE^{- \sum_{x \in \eta} \potential[x][y]}$.
We then define the partition function and Gibbs point process on $\region \in \boundedBorel$ with activity function $\activityFunction$ and boundary condition $\eta$ by $\partitionFunction[\region][\eta](\activityFunction) \coloneqq \partitionFunction[\region](\activityFunction_{\eta})$ and $\gibbs[\activityFunction, \region][\eta] \coloneqq \gibbs[\activityFunction_{\eta}, \region]$.
In the case of constant activity functions $\activityFunction \equiv \activity \in \R_{\ge 0}$, our notation simplifies to $\partitionFunction[\region][\eta](\activity)$ and $\gibbs[\activity, \region][\eta]$ respectively.

We proceed by defining strong spatial mixing for a Gibbs point process exactly as in Definition~\ref{def:ssmHS}.
Our next result is a near-linear expected time perfect sampling algorithm for Gibbs point processes interacting via finite-range, repulsive potentials under the assumption  of strong spatial mixing.
\begin{theorem}
	\label{thmMain}
	Suppose $\phi$ is a finite-range, repulsive potential on $\R^d$ and suppose $\phi$ exhibits $(a,b)$-strong spatial mixing up to $\lambda$ for some constants $a, b >0$. Then there is a perfect sampling algorithm for the Gibbs point process defined by $\phi$ and activity bounded by $\lambda$ on boxes $\region $ in $ \R^d$ with expected running time $O \left( \volume{\region} \log ^{O(1)} \volume{\region} \right)$.
\end{theorem}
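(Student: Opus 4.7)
The plan is to adapt the perfect sampling algorithm behind Theorem~\ref{mainThmHardSphere} by treating the soft repulsive potential as an additional layer of independent pairwise Bernoulli acceptances on top of the same Poisson proposal. I would partition $\region$ into a grid of boxes of side length $\Theta(\range)$ so that interactions are confined to pairs of points in neighboring boxes, then process the boxes iteratively, at each step using the Bayes filter / Bernoulli factory scheme (as in~\cite{he2022sampling,he2022improved} and already exploited for Theorem~\ref{mainThmHardSphere}) to resample the configuration in a box conditional on its neighbors.

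The key new ingredient is to simulate the soft interaction exactly. Because $\potential$ is repulsive, every factor $\eulerE^{-\potential[x][y]}$ lies in $[0,1]$ and is therefore a genuine probability, so the weight $\eulerE^{-\hamiltonian[x_1,\dots,x_k]} = \prod_{\set{i,j} \in \binom{[k]}{2}} \eulerE^{-\potential[x_i][x_j]}$ is realized exactly by flipping, for each pair, an independent coin of bias $\eulerE^{-\potential[x_i][x_j]}$ and accepting the proposed configuration only if all coins come up heads. This replaces the deterministic validity check $\validHS[\cdot]$ used in the hard-sphere case by a randomized pairwise compatibility check, while introducing no approximation and thus preserving perfect sampling. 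Moreover, since the boundary-conditioned activity function satisfies $\activity_{\eta} \le \activity$ pointwise, the strong spatial mixing hypothesis applies uniformly to every conditional distribution the algorithm samples from, so the Bayes filter corrections inherit the same exponential decay used in the hard-sphere argument.

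Efficiency then reduces to bounding the expected cost per box update. Strong spatial mixing keeps the Bayes filter correction close to $1$ on a high-probability event, giving $\bigO{1}$ expected Bernoulli factory trials per box and hence a leading $\volume{\region}$ term. The extra $\log^{\bigO{1}} \volume{\region}$ factor should come from tail scenarios absent in the hard-sphere case: with a soft potential a proposal may contain several close pairs, each contributing a coin bias strictly less than $1$, so the number of pairwise trials needed to resolve a single box update has a heavier tail that must be controlled by union-bounding over the $\Theta(\volume{\region})$ boxes. The main obstacle, and the most delicate part of the proof, is precisely this tail analysis: the local correction ratio is at most $1$ but can be exponentially small in the local Hamiltonian $\hamiltonian$, so the expected number of pairwise trials per box has to be bounded by combining the bias-free pairwise-coin-flip implementation with an SSM-derived tail estimate on $\hamiltonian$ under the Gibbs measure, which is ultimately what forces the $\log^{\bigO{1}}$ overhead.
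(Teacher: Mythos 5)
Your high-level template — grid the space, iterate with a Bayes filter, use a Bernoulli factory, invoke strong spatial mixing — matches the paper, and the observation that repulsiveness lets you realize $\eulerE^{-\hamiltonian}$ by (an equivalent of) pairwise coin flips is exactly how the paper's rejection sampler for a box update works. But there are two concrete gaps, and the second one is the crux of the theorem.

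First, you assume the Bayes filter / Bernoulli factory scheme from Theorem~\ref{mainThmHardSphere} carries over with only the acceptance test changed. It doesn't. The hard-sphere Bayes filter correction $\bayesFilterCorrectionHS{\varepsilon}$ is built by replacing the infimum over boundary configurations $\xi \in \pointsets[\boxRegion{H}]$ by a minimum over a finite $\delta$-grid $(\delta\Z)^{\dimensions}$, and this works only because hard-sphere feasibility is a discrete, locally stable constraint (at most $\bigO{1}$ points per box, and indicator-valued interactions). For a general soft potential that discretization argument breaks down, and the paper explicitly abandons it: instead it defines $\bayesFilterCorrectionR{a}{b}[S][\vectorize{v}][\eta] = \correction[a][b] \cdot \partitionFunction[\activity_{\eta \cap \boxRegion{S}}][\boxRegion{\boxesToUpdate \setminus \set{\vectorize{v}}}] / \partitionFunction[\activity_{\eta \cap \boxRegion{S \setminus \set{\vectorize{v}}}}][\boxRegion{\boxesToUpdate}]$ with $\correction[a][b]$ an explicit function of the SSM constants, computed via Lemma~\ref{lemma:ssm_partition_function}. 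In particular the algorithm must \emph{know} $a,b$ to run — which is why Theorem~\ref{thmMain} is weaker than Theorem~\ref{mainThmHardSphere} in that respect. Your proposal never notices that a new, SSM-informed correction term is required, nor why the hard-sphere one fails.

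Second, you misattribute the $\log^{\bigO{1}}\volume{\region}$ overhead. You trace it to the softness of the pairwise rejection, i.e.\ that many close pairs give many coins with bias strictly below $1$. That is not the bottleneck: the rejection sampler on a box of fixed volume has $\bigO{\eulerE^{\activity\volume{\boxRegion{\boxesToUpdate}}}}$ expected rounds regardless, exactly as for hard spheres, and SSM keeps the Bayes filter success probability uniformly close to $1$ via Corollary~\ref{cor:scaled_filter_repsulsive}. What changes is that a feasible configuration for a soft potential can have \emph{arbitrarily many} points in a fixed box (Poisson-type tails), whereas hard-sphere feasibility caps the local point count at a constant. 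Thus each iteration's running time is $\bigO{\size{\currentPointset{t-1} \cap \boxRegion{\set{\boxChosen{t-1}} \cup \partial \boxesToUpdate}}}$ (Corollary~\ref{cor:sampling_bayes_filter}), which is an unbounded, configuration-dependent quantity. Worse, it is correlated with the future trajectory, so Wald's identity does not apply as it does in the hard-sphere proof. The paper instead combines a tail bound on the maximum per-box point count over the first $k$ iterations (Lemma~\ref{lemma:poisson_domination}) with a tail bound on the total number of iterations (Lemma~\ref{lemma:iterations}) and integrates; the polylog comes from choosing the truncation level $k(w)$ as a logarithm of the running-time threshold $w$. Your proposed route — an SSM tail estimate on $\hamiltonian$ and a union bound over boxes — does not address this dependence structure and would not yield the stated bound.
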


One difference between this algorithm and the hard-sphere algorithm of Theorem~\ref{mainThmHardSphere} is that this algorithm needs knowledge of the constants $a,b$ in the assumption of strong spatial mixing, whereas the hard-sphere algorithm does not.  

Using the results of~\cite{michelen2022strong}, we can get explicit bounds for the existence of efficient perfect sampling algorithms in terms of the \textit{temperedness constant} of the potential defined by
\begin{equation}
	C_\phi := \sup_{x \in \R^d} \int_{\R^d} | 1- e^{-\phi(x,y)} |\, dy \,.
\end{equation}
Under the assumption that $\phi$ is repulsive and of finite range $r$, we have $0 \le C_\phi \le v_d(r)$.
\begin{corollary}
	\label{corBounds}
	The above perfect sampling algorithm runs in expected time $O \left( \volume{\region} \log ^{O(1)} \volume{\region} \right)$ when $\lambda < \frac{\eulerE}{C_{\phi}}$.
\end{corollary}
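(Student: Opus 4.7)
The plan is to derive the corollary as an immediate combination of Theorem~\ref{thmMain} with an off-the-shelf strong spatial mixing result for finite-range repulsive potentials. Concretely, the paper has already pointed to~\cite{michelen2022strong} as the source of explicit bounds on $(a,b)$-strong spatial mixing, and one can verify that under the hypothesis $\lambda < \eulerE/C_{\phi}$ this reference furnishes constants $a, b > 0$ depending only on $\lambda$ and $C_{\phi}$ (and in particular not on $\volume{\region}$) such that any finite-range repulsive pair potential $\phi$ exhibits $(a,b)$-strong spatial mixing up to $\lambda$ in the sense of Definition~\ref{def:ssmHS} (extended to general pair potentials as described just above the statement of Theorem~\ref{thmMain}).

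With those constants in hand, the proof is essentially two lines. First, I would state that for any $\lambda < \eulerE / C_{\phi}$ the reference~\cite{michelen2022strong} provides $(a,b)$-strong spatial mixing up to $\lambda$ with constants $a = a(\lambda, C_{\phi})$ and $b = b(\lambda, C_{\phi})$. Second, I would plug these constants into Theorem~\ref{thmMain}, which then yields a perfect sampling algorithm with expected running time $\bigO{\volume{\region} \log^{\bigO{1}} \volume{\region}}$, where the implied constants depend on $a, b$, and $\lambda$; under the assumed bound on $\lambda$ these are all absolute constants (depending only on $\phi$, $d$, and $\lambda$).

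The main thing to watch out for is that the constants $(a,b)$ supplied by~\cite{michelen2022strong} really are uniform over the choice of bounded region $\region$ and over activity functions bounded by $\lambda$, which is exactly the form required by Definition~\ref{def:ssmHS}; once this uniformity is observed, there is no further analysis needed and Theorem~\ref{thmMain} closes the proof.
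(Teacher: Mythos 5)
Your proposal is correct and matches the paper's intended argument: the paper does not give a separate proof of Corollary~\ref{corBounds} beyond the sentence preceding it, which points exactly to combining Theorem~\ref{thmMain} with the strong spatial mixing bound $\lambda < e/C_\phi$ from~\cite{michelen2022strong}. Your attention to the uniformity of $(a,b)$ over regions and bounded activity functions is the right thing to check, and once that is observed the corollary is immediate.
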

\begin{remark}
In fact, using the results of Michelen and Perkins~\cite{mp-CC}, one can push the bound for strong spatial mixing up to $e/ \Delta_\phi$, where $\Delta_\phi\le C_{\phi}$ is the \textit{potential-weighted connective constant} defined therein; our perfect sampling algorithm is efficient up to that point. 
\end{remark}

\subsection{Related work and future directions}

\subsubsection*{Related work}
In recent years there has been a moderate flurry of activity around proving rigorous results for Gibbs point processes in both the setting of statistical physics and probability theory and in the setting of provably efficient sampling algorithms.  

Work on provably efficient approximate sampling methods for the hard-sphere model begins with the seminal paper of Kannan, Mahoney, and Montenegro~\cite{kannan2003rapid}, who used techniques from the analysis of discrete spin systems to prove mixing time bounds for Markov chains for the hard-sphere model.  Improvements to the range of parameters for which fast mixing holds came in~\cite{hayes2014lower,helmuth2020correlation}, before Michelen and Perkins proved the bound $e/v_d(r)$ in~\cite{michelen2022strong}, which we match with a perfect sampling algorithm in Corollary~\ref{corHardSphereBound}. 

Perfect sampling algorithms for the hard sphere model have been considered in~\cite{haggstrom1998exact,kendall2000perfect,ferrari2002perfect,jerrum2019perfect,huber2013bounds}.  In terms of rigorous guarantees of efficiency,  Huber proved a bound of $2/v_d(r)$ for a near-linear expected time perfect sampler in~\cite{huber2012spatial}.  The perfect sampling algorithm of Guo and Jerrum in~\cite{jerrum2019perfect} does not match this bound, but the algorithm, based on `partial rejection sampling'~\cite{guo2019uniform} is novel and particularly simple.  Several of these approaches also apply for finite-range, repulsive potentials or can be extended to that setting (e.g.~\cite{moka2020perfect}).

In parallel, there has been much work on proving bounds on the range of activities for which no phase transition can occur in the hard-sphere model; and, in recent years in particular, the techniques used have close connections to algorithms and the study of Markov chains.  The classic approach to proving absence of phase transition is by proving convergence of the cluster expansion; the original bound here is $1/(e v_d(r))$ due to Groeneveld~\cite{groeneveld1962two}. In small dimensions (most significantly in dimension $2$) improvements to the radius of convergence can be obtained~\cite{fernandez2007analyticity}.  On the other hand, this approach is inherently limited by the presence of non-physical singularities on the negative real axis.  Alternative approaches avoiding this obstruction include using the equivalence of spatial and temporal mixing~\cite{helmuth2020correlation,michelen2022strong}; or disagreement percolation~\cite{hofer2015disagreement,hofer2019disagreement,betsch2021uniqueness}.  The best current bound for absence of phase transition for the hard-sphere model and for repulsive pair potentials is the bound of $e/ C_{\potential}$ (and $e/ \Delta_{\potential}$) obtained by Michelen and Perkins~\cite{michelen2020analyticity,michelen2022strong,mp-CC}.  Theorem~\ref{thmMain} brings the bound for efficient perfect sampling up to this bound.

On a technical level, the most relevant past work is~\cite{feng2022perfect}, in which the authors prove that for discrete spin systems, strong spatial mixing and subexponential volume growth of a sequence of graphs imply the existence of an efficient perfect sampling algorithm. We take their approach as a starting point but need new ideas to replace their exhaustive enumeration of configurations. 

A key step in  our algorithm is the use of a Bernoulli factory to implement a Bayes filter.  Bernoulli factories are algorithms by which a Bernoulli random variable with success probability $f(p)$ can be simulated (perfectly) by an algorithm with access to independent Bernoulli $p$ random variables, where the algorithm does not know the value $p$.  Whether a Bernoulli factory exists (and how efficient it can be) depends on the function $f(\cdot)$ and a priori bounds on the possible values $p$.  Bernoulli factories have been studied in~\cite{nacu2005fast,MR3506427,dughmi2021bernoulli} and recently used in the design of perfect sampling algorithms for CSP solutions in~\cite{he2022sampling, he2022improved}.

\subsubsection*{Future directions}
There are a number of extensions and improvements to these results one could pursue.  
 Perhaps most straightforward would be to relax the notion of strong spatial mixing from exponential decay to decay faster than the volume growth of $\R^d$ and to extend the results to repulsive potentials of unbounded range but finite temperedness constant $C_{\potential}$.  Moreover, it would be nice to upgrade the guarantees of the  algorithm in Theorem~\ref{thmMain} to that of Theorem~\ref{mainThmHardSphere}: that the algorithm does not need prior knowledge of the strong spatial mixing constants $a,b$ to run correctly.  

An ambitious and exciting direction would be to remove the assumption of a repulsive potential and find efficient perfect sampling algorithms for the class of \textit{stable} potentials (see e.g.~\cite{penrose1963convergence,ruelle1963correlation,ruelle1999statistical} for a definition).  A stable potential is repulsive at short ranges but can include a weak attractive part; such potentials include the physically realistic Lenard-Jones potential among others~\cite{wood1958recent}. 
This would require some very new ideas, as much of the recent probabilistic and algorithmic work on Gibbs point processes (e.g.~\cite{michelen2020analyticity,michelen2022strong,betsch2021uniqueness,mp-CC}) has used repulsiveness as an essential ingredient (for one, repulsiveness of the potential implies stochastic domination by the underlying Poisson point process). 
As a notable exception, a deterministic approximation algorithm for partition functions of finite-range stable potentials based on cluster expansion was recently proposed in~\cite{jenssen2022quasipolynomial}. 

\subsection{Outline of the paper}

In \Cref{secIntution}, we describe the high-level idea and intuition behind the algorithm.  In \Cref{secPrelim} we introduce  some notation and present some preliminary results that we will use throughout the paper.  In \Cref{secTheAlgorithm} we present the algorithm that we will apply to both hard spheres and more general processes.  In \Cref{secCorrectness} we prove correctness of the algorithm. In \Cref{sec:point_densities} we prove a technical lemma that will be crucial for showing efficiency of our algorithm under the assumption of strong spatial mixing. In \Cref{secHardSphere} we specialize to the hard-sphere model to complete the proof of \Cref{mainThmHardSphere}.  In \Cref{secRepulsive} we work with finite-range, repulsive potentials to complete the proof of \Cref{thmMain}. In \Cref{sec:BernoulliFactories} we prove the running time bound for the Bernoulli factory used by our algorithm.  The appendix contains some technical lemmas on measure theory and stochastic processes.

\section{Intuitive idea behind the algorithm}
\label{secIntution}

Our algorithm is an adaptation of the work by Feng, Guo, and Yin \cite{feng2022perfect} on perfect sampling from discrete spin systems to continuum models. 
We mimic their setting of a spin system on a graph $G = (V,E)$ by considering a graphical structure on sub-regions of our continuous space.


Let $\region = [0, \sidelength)^{\dimensions} \subset \R^{\dimensions}$ be the region considered,  $\activity>0$ the activity, and let  $\potential$ be a repulsive potential of range $\range > 0$. 
The main idea is to subdivide $\region$ into boxes of (roughly) side length $\range$, indexed by $\boxIds = \set{0, \dots, \numberBoxes-1}^{\dimensions}$ for $\numberBoxes = \left\lceil \sidelength/\range \right\rceil$.
Each box index $\vectorize{v} = (v_1, \dots, v_{\dimensions}) \in \boxIds$ is associated with the sub-region 
\[
    \boxRegion{\vectorize{v}} = \big([v_1 \range, (v_1+1) \range) \times \dots \times [v_{\dimensions} \range, (v_{\dimensions}+1) \range)\big) \cap \region.
\]
We extend this notation to sets of box indices $S \subseteq \boxIds$ by setting $\boxRegion{S} = \bigcup_{\vectorize{v} \in S} \boxRegion{\vectorize{v}}$.
Further, for $\vectorize{v} \in \boxIds$, we write $\ball{\vectorize{v}}{k}$ for the set of boxes $\vectorize{w} \in \boxIds$ with $\norm{\vectorize{v} - \vectorize{w}}{\infty} \le k$, and we denote by $\partial S = (\bigcup_{\vectorize{v} \in S} \ball{\vectorize{v}}{1}) \setminus S$ for the outer boundary of $S \subseteq \boxIds$.
For readers familiar with the work of Feng, Guo, and Yin \cite{feng2022perfect} on discrete spin systems, it will be helpful to think of $\boxIds$ as the vertices of a graph, where two vertices $\vectorize{v}, \vectorize{w} \in \boxIds$ are adjacent if $\norm{\vectorize{v} - \vectorize{w}}{\infty} = 1$. 

Often, it will be convenient to not differentiate between a set of box indices $S \subseteq \boxIds$ and the associated region $\boxRegion{S}$.
More precisely, we write $\pointsets_{S}$ for the point sets in $\pointsets_{\boxRegion{S}}$, we denote by $\partitionFunction[S](\activity)$ and $\gibbs[\activity, S]$ the partition function and the Gibbs point process on $\boxRegion{S}$, and, for a point set $\eta \in \pointsets$, we write $\eta \cap S$ and $\eta \setminus S$ for $\eta \cap \boxRegion{S}$ and $\eta \setminus \boxRegion{S}$.
A notable exception from this abuse of notation is that we always write $\size{S}$ for the cardinality of the set $S$ and $\volume{\boxRegion{S}}$ for the total volume of the boxes indicated by $S$.
Moreover, the set complement $S^{\comp}$ should be understood as $\boxIds \setminus S$, which is then associated with the region $\region \setminus \boxRegion{S}$ (opposed to $\R^{\dimensions} \setminus \boxRegion{S}$).

Our algorithm runs iteratively, keeping track of two random variables: a point configuration $\currentPointset{t} \in \pointsets[\region]$ with $\currentPointset{0} = \emptyset,$ and a set of `incorrect' boxes $\boxesToFix{t} \subseteq \boxIds$ with $\boxesToFix{0} = \boxIds$. 
With each iteration $t$ we maintain the following \emph{invariant}: the partial configuration $\currentPointset{t} \cap \boxesToFix{t}^{\comp}$ is distributed according to the projection of $\gibbs[\activity, \boxesToFix{t}^{\comp}][\currentPointset{t} \cap \boxesToFix{t}]$ (i.e., the Gibbs point process on $\region \setminus \boxRegion{\boxesToFix{t}}$ with boundary condition $\currentPointset{t} \cap \boxesToFix{t}$).
It follows that $\currentPointset{t}$ is distributed according to $\gibbs[\activity, \region]$ once we reach the state $\boxesToFix{t} = \emptyset$.

In every iteration, the algorithm tries to update the point configuration $\currentPointset{t}$ on a subset of boxes $\boxesToUpdate \subseteq \boxIds$.
To this end, given an \emph{update radius} $\updateRadius \in \N$, we define for every $S \subseteq \boxIds$ and $\vectorize{v} \in S$ the set of box indices
\begin{align}
	\boxesToUpdate[S][\vectorize{v}][\updateRadius] \coloneqq \left(\ball{\vectorize{v}}{\updateRadius} \setminus  S\right) \cup \vectorize{v}  .
	\label{eq:boxesToUpdate}
\end{align}
We proceed by sketching an iteration of the algorithm. 
An example for the involved subregions is given in \Cref{fig:regions}.
Each iteration runs as follows:
\begin{enumerate}[1.]
	\item We choose $\boxChosen{t} \in \boxesToFix{t}$ uniformly at random and attempt to `repair' it by updating $\currentPointset{t}$ on a neighborhood of boxes $\boxesToUpdate = \boxesToUpdate[\boxesToFix{t}][\boxChosen{t}][\updateRadius]$ as given in \eqref{eq:boxesToUpdate}.
	\item We sample a Bayes filter $\bayesFilter{t}$ (i.e., a Bernoulli random variable) with probability depending on the potential $\potential$, the activity $\activity$, and the current point configuration $\currentPointset{t}$ on $\boxRegion{\boxChosen{t}}$ and $\boxRegion{\partial \boxesToUpdate}$.
	\item \begin{enumerate}[a)]
		\item If $\bayesFilter{t} = 1$, we set $\boxesToFix{t+1} = \boxesToFix{t} \setminus \boxChosen{t}$, and we obtain $\currentPointset{t+1}$ by updating $\currentPointset{t}$ on $\boxRegion{\boxesToUpdate}$ according to a sample from $\gibbs[\activity, \boxesToUpdate][\currentPointset{t} \cap \boxesToUpdate^{\comp}]$ (i.e., the Gibbs point process on $\boxRegion{\boxesToUpdate}$ with boundary condition $\currentPointset{t} \cap \boxesToUpdate^{\comp}$). 
		\item If $\bayesFilter{t} = 0$, the configuration is unchanged and we add the boundary boxes to our `incorrect' list, i.e., $\currentPointset{t+1} = \currentPointset{t}$ and $\boxesToFix{t+1} = \boxesToFix{t} \cup \partial \boxesToUpdate$. 
	\end{enumerate}
\end{enumerate}

\sidecaptionvpos{figure}{c}
\begin{SCfigure}[50][h]
	\centering
	\includegraphics[width=0.35\textwidth]{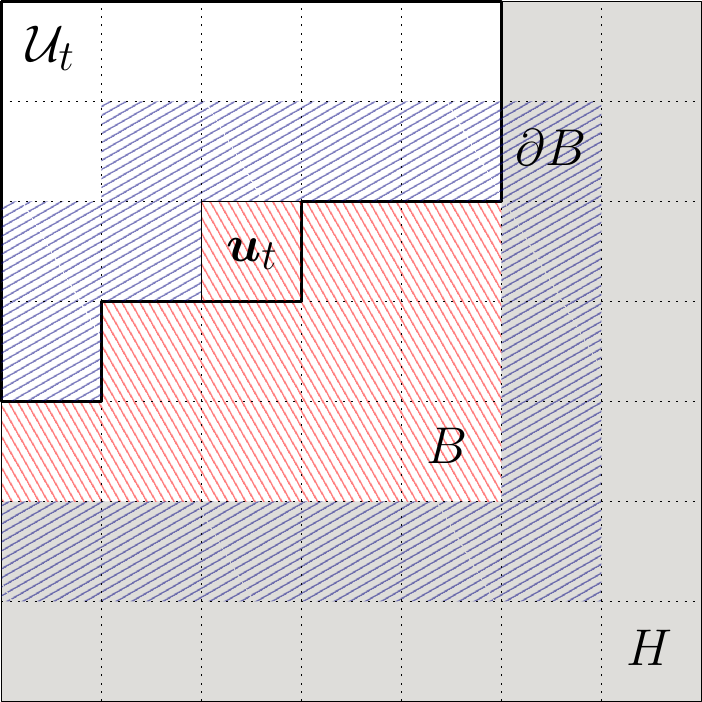}
	\caption{{\small The box-shaped region $\region \subset \R^{2}$ is divided into boxes of side length $\range$ (dotted lines). The boxes $\boxesToFix{t}$ are bordered by bold black lines. For $\boxChosen{t}$ as given and update radius $\updateRadius=2$, the corresponding set~$\boxesToUpdate$ of boxes to be updated  is indicated by the red hatched area (falling left to right). Its boundary boxes $\partial \boxesToUpdate$ are shown as blue hatched area (rising left to right). The boxes in $H = \complementOf{\boxesToFix{t} \cup \boxesToUpdate}$ are shown with gray background.}}
	\label{fig:regions}
\end{SCfigure}

We use the Bayes filter, as in~\cite{feng2022perfect}, to remove bias from the resulting distribution. To give some intuition for its role, suppose we run a naive version of the algorithm where we always update $\currentPointset{t}$ on $\boxRegion{\boxesToUpdate}$ as in step 3.a) above.
Assuming the desired invariant holds after $t$ iterations, this naive algorithm gives a bias to the distribution of $\currentPointset{t+1}$ proportional to $
	\frac{ \partitionFunction[\boxesToUpdate \setminus \boxChosen{t}][\currentPointset{t} \cap (\partial \boxesToUpdate \cup \boxChosen{t})](\activity)}{\partitionFunction[\boxesToUpdate][\currentPointset{t} \cap \partial \boxesToUpdate](\activity)}
$.
We choose the Bayes filter such that, conditioned on $\bayesFilter{t} = 1$, the bias term gets canceled.
This suggests the choice
\begin{align} \label{eq:filter_probability}
	\Pr{\bayesFilter{t} = 1}[\currentPointset{t}, \boxesToFix{t}, \boxChosen{t}] = \bayesFilterCorrection[\boxesToFix{t}][\boxChosen{t}][\currentPointset{t}] \cdot
	\frac{
		\partitionFunction[\boxesToUpdate][\currentPointset{t} \cap \partial \boxesToUpdate](\activity) 
	}{ 
		\partitionFunction[\boxesToUpdate \setminus \boxChosen{t}][\currentPointset{t} \cap (\partial \boxesToUpdate \cup \boxChosen{t})](\activity)
	},
\end{align}
where scaling $\bayesFilterCorrection[\boxesToFix{t}][\boxChosen{t}][\currentPointset{t}]$   serves three main purposes.

First, it must guarantee that the right-hand side of \eqref{eq:filter_probability} is a probability.
To achieve this we need, for $H = \complementOf{\boxesToFix{t} \cup \boxesToUpdate}$ and almost all realizations of $\currentPointset{t}$, $\boxesToFix{t}$ and $\boxChosen{t}$, that
\begin{align} \label{eq:infimum_bound}
	\bayesFilterCorrection[\boxesToFix{t}][\boxChosen{t}][\currentPointset{t}] 
	\le \inf_{\substack{\xi \in \pointsets[H]}}  \frac{\partitionFunction[\boxesToUpdate \setminus \boxChosen{t}][\xi \cup (\currentPointset{t} \cap \boxesToFix{t})](\activity)}{\partitionFunction[\boxesToUpdate][\xi \cup (\currentPointset{t} \cap (\boxesToFix{t} \setminus \boxChosen{t}))](\activity)}.
\end{align}
Second, $\bayesFilterCorrection[\boxesToFix{t}][\boxChosen{t}][\currentPointset{t}]$ must introduce no new bias.
Carrying out the necessary calculations, it can be shown that this is guaranteed if $\bayesFilterCorrection[\boxesToFix{t}][\boxChosen{t}][\currentPointset{t}]$ only depends on $\currentPointset{t} \cap \boxesToFix{t}$. 
Finally, it must ensure that the algorithm terminates almost surely.
It suffices to ensure $\bayesFilterCorrection[\boxesToFix{t}][\boxChosen{t}][\currentPointset{t}]$ is uniformly bounded away from $0$ for almost all realizations of $\currentPointset{t}$,
implying that the same holds for the right-hand side of \eqref{eq:filter_probability}.
We refer to  a function $\bayesFilterCorrection(\cdot)$ satisfying these requirements as a \textit{Bayes filter correction}. 

If we use a Bayes filter as given in \eqref{eq:filter_probability}, keeping $\currentPointset{t}$ and $\boxesToFix{t}$ unchanged whenever $\bayesFilter{t} = 0$ introduces new bias.
To prevent this, we set $\boxesToFix{t+1} = \boxesToFix{t} \cup \partial \boxesToUpdate$ in step 3.b), effectively deleting the part of the configuration that was revealed by the filter.
Since the algorithm only terminates once $\boxesToFix{t} = \emptyset$, we further require the Bayes filter correction to ensure that the probability of $\bayesFilter{t} = 0$ is small to guarantee efficiency. 

Constructing a Bayes filter correction that satisfies the requirements above and allows for efficient sampling of $\bayesFilter{t}$ is a non-trivial task.
In the next subsections, we present two approaches for this, the first specialized to the hard-sphere model without requirements, and the second one for more general potentials under the assumption of strong spatial mixing. 
Crucially, assuming strong spatial mixing, both constructions allow us to control the success probability of the Bayes filter via the update radius $\updateRadius$ in the construction of the updated set of boxes $\boxesToUpdate$ (see step 1 and \eqref{eq:boxesToUpdate}).

\subsection{Bayes filter for the hard-sphere model}
To construct a Bayes filter for the hard-sphere model, we efficiently approximate the right-hand side of \eqref{eq:infimum_bound}.
To approximate the infimum over the uncountable set of configurations $\xi \in \pointsets[H]$ we take the minimum over a finite, but sufficiently rich set of configurations, balancing the quality of approximation with the computation required. In fact the number of configurations needed will depend only on the volume of $\boxRegion{\boxesToUpdate \cup \partial \boxesToUpdate}$.
We approximate the fraction of partition functions in \eqref{eq:infimum_bound} with running time only depending on the volume of $\boxRegion{\boxesToUpdate \cup \partial \boxesToUpdate}$.
As a result, we efficiently compute a Bayes filter correction $\bayesFilterCorrectionHS{\varepsilon}(\cdot)$, with the parameter $\varepsilon > 0$ controlling how much $\bayesFilterCorrectionHS{\varepsilon}[\boxesToFix{t}][\boxChosen{t}][\currentPointset{t}]$ deviates from the right-hand side of \eqref{eq:infimum_bound}.

While our construction of $\bayesFilterCorrectionHS{\varepsilon}(\cdot)$ guarantees correctness of the sampling algorithm for any $\varepsilon > 0$, proving efficiency requires additional assumptions.
To this end, we show that strong spatial mixing allows us to choose $\varepsilon$ so that the probability that $\bayesFilter{t} = 0$ is uniformly bounded above, ensuring $\bigO{\volume{\region}}$ iterations of the algorithm in expectation.

It remains to argue that we can efficiently sample $\bayesFilter{t}$, using the Bayes filter correction $\bayesFilterCorrectionHS{\varepsilon}(\cdot)$.
Explicitly computing the success probability of $\bayesFilter{t}$ as in \eqref{eq:filter_probability} would require computing the fraction of partition functions on the right-hand side exactly, while approximating these partition functions would require that the approximation error only depends on $\currentPointset{t} \cap \boxesToFix{t}$, to avoid new bias.

It is unclear how to implement these approaches, so instead we use Bernoulli factories to sample $\bayesFilter{t}$ without knowing the success probability.
To do so, we observe that the fraction of partition functions can be written as a ratio of probabilities for drawing the empty set from a conditional hard-sphere model on $\boxRegion{\boxesToUpdate}$ and $\boxRegion{\boxesToUpdate \setminus \boxChosen{t}}$.
Since both regions have constant volume, rejection sampling gives Bernoulli random variables with these success probabilities in constant time.
Hence, we obtain a Bernoulli factory for $\bayesFilter{t}$ with constant expected running time.
Wald's identity  yields a total expected running time $\bigO{\volume{\region}}$ for the algorithm.

\subsection{Bayes filter for general potentials} 
We now consider the case of general bounded-range, repulsive potentials.
Unlike the hard sphere model, it is not clear here how to approximate the infimum in \eqref{eq:infimum_bound} from a finite set of boundary configurations.
However, given constants $a, b > 0$ such that $\potential$ satisfies $(a, b)$-strong spatial mixing, we can explicitly compute a function $\correction[a][b]$ so that 
\[
	\bayesFilterCorrectionR{a}{b}[\boxesToFix{t}][\boxChosen{t}][\currentPointset{t}] = \correction[a][b] \cdot \frac{\partitionFunction[\boxesToUpdate \setminus \boxChosen{t}][\currentPointset{t} \cap \boxesToFix{t}](\activity)}{\partitionFunction[\boxesToUpdate][\currentPointset{t} \cap (\boxesToFix{t} \setminus \boxChosen{t})](\activity)}
\] 
is a Bayes filter correction.
With strong spatial mixing, we use $\bayesFilterCorrectionR{a}{b}(\cdot)$ to construct a Bayes filter such that probability that $\bayesFilter{t} = 0$ is bounded above, again implying a bound of $\bigO{\volume{\region}}$ on the expected number of iterations of the algorithm.

Note that in this setting, we require spatial mixing for both correctness and efficiency, while for the hard-sphere model we only need it for efficiency.
Another crucial difference is that, while we can explicitly compute $\correction[a][b]$, the same does not hold for $\bayesFilterCorrectionR{a}{b}(\cdot)$ due to the fraction of partition functions involved.
Again we circumvent this by rewriting the success probability of the Bayes filter in a suitable way and applying a Bernoulli factory for sampling $\bayesFilter{t}$.
Finally, we  point out that we do not obtain a constant bound for the expected running time of each iteration, but instead the bound depends on the number of points in $\currentPointset{t} \cap \partial \boxesToUpdate$.
Possible dependencies between the configuration $\currentPointset{t}$ and the number of iterations prevent us from bounding the total expected running time using Wald's identity.
Instead, we provide tail bounds on the number of iterations and the running time of each iteration, allowing us to derive an expected total running time that is linear in the volume of $\region$ up to polylogarithmic factors.

\section{Preliminaries}
\label{secPrelim}

Throughout the paper, we write $\N$ for the set of strictly positive integers, and we write $\N_{0} = \N \cup \set{0}$.
For any $k \in \N$, we denote by $[k]$ the set $[1, k] \cap \N$.

For a point configuration $\eta \in \pointsets$, we write $\size{\eta} \in \N_{0} \cup \{\infty\}$ for the number of points in $\eta$.
Note that this notation is the same that as the one we use for the volume of a region. The particular meaning will be clear from the context.
Moreover, for $k \in \N$, we write $\binom{\eta}{k}$ for the set $\set{\eta' \subseteq \eta}[\size{\eta'} = k]$.

\subsection{Gibbs point processes}
Throughout the paper, we use the definitions and notation for point sets and Gibbs point processes introduced in \Cref{sec:intro_hard-sphere,sec:intro_gpp}.
However, we will allow ourselves some notational shortcuts. 
Firstly, when dealing with a tuple $(x_1, \dots, x_k) \in (\R^{\dimensions})^{k}$ we frequently denote it by the corresponding bold letter $\vectorize{x}$.
Based on this, we write $\intD \vectorize{x}$ for $\intD x_1 \dots \intD x_k$ and $\hamiltonian[\vectorize{x}]$ for $\hamiltonian[x_1, \dots, x_k]$. 
Moreover, for any $k \in \N_{0}$ and $\vectorize{x} = (x_1, \dots, x_k) \in (\R^{\dimensions})^{k}$ we write $\setFromTuple{\vectorize{x}}$ for the set $\set{x_1, \dots, x_k}$, where the case $k = 0$ results in $\setFromTuple{\vectorize{x}} = \emptyset$.
Finally, for $\vectorize{x} \in \region^{k}$ we write $\activityFunction^{\vectorize{x}}$ for $\prod_{i \in [k]} \activityFunction[x_i]$.
This allows us to write the partition function on $\region \in \boundedBorel$ with activity function $\activityFunction$ as
\[
    \partitionFunction[\region](\activityFunction) = \sum_{k \ge 0} \frac{1}{k!} \int_{\region^k} \activityFunction^{\vectorize{x}} \eulerE^{- \hamiltonian[\vectorize{x}]} \intD \vectorize{x}
\]
and the probability of $A \in \pointsetEvents$ under $\gibbs[\activityFunction, \region]$ as
\[
	\gibbs[\activityFunction, \region](A) = \frac{1}{\partitionFunction[\region](\activityFunction)} \sum_{k \ge 0} \frac{1}{k!} \int_{\region^k} \ind{\setFromTuple{\vectorize{x}} \in A} \activityFunction^{\vectorize{x}} \eulerE^{- \hamiltonian[\vectorize{x}]} \intD \vectorize{x}.
\]
As discussed in the introduction, we express the impact of a boundary conditions $\eta \in \pointsets$ by considering the modified activity function $\activityFunction_{\eta}: y \mapsto \activityFunction(y) \eulerE^{-\sum_{x \in \eta} \potential(x, y)}$, and we write $\partitionFunction[\region][\eta](\activityFunction) \coloneqq \partitionFunction[\region](\activityFunction_{\eta})$ and $\gibbs[\activityFunction, \region][\eta] \coloneqq \gibbs[\activityFunction_{\eta}, \region]$ for the respective partition function and Gibbs point process with boundary condition $\eta$.
Moreover, if the activity function is constant $\activityFunction \equiv \activity \in \R_{\ge 0}$, our notation simplifies to $\partitionFunction[\region][\eta](\activity)$ and $\gibbs[\activity, \region][\eta]$ respectively.
Finally,, if the $\activity$ is clear from the context, we omit it and write $\partitionFunction[\region][\eta]$ and $\gibbs[\region][\eta]$.

We introduce further concepts related to Gibbs point processes, such a point density functions when they are required.
Moreover, various useful properties of Gibbs point processes are given in \Cref{appendix:gpps}.

\subsection{Bernoulli factories}
In designing our  sampling algorithm, it will be useful to consider the following Bernoulli factory problem. We are given access to a sampler for $\Ber{p}$ and for $\Ber{q}$, that is samplers of Bernoulli random variables with parameters $p$ and $q$ respectively, where we further assume $p < q$. We want to sample a random variable $Z \sim \Ber{\frac{p}{q}}$. 

Most work on Bernoulli factories studies their running time in terms of the number of coin flips required. In our setting, the time needed to generate each of these coin flips is random variable. Fortunately, suitable independence assumptions hold in our setting allowing us to prove the following lemma.

\begin{restatable}{lemma}{bernoulliFrac} \label{lemma:bernoulli_frac}
	Fix some $p, q \in [0, 1]$ such that $q - p \ge \epsilon$ for some $\epsilon > 0$. Further assume that we have oracle access to a sampler from $\Ber{p}$ and $\Ber{q}$ in the following sense:
	\begin{enumerate}
		\item every sample  from $\Ber{p}$ (resp. $\Ber{q}$) is independent from all previous samples;
		\item the expected running time for obtaining a sample from $\Ber{p}$ (resp. $\Ber{q}$), conditioned on previously obtained samples, is uniformly bounded by some $t \in \R_{\ge 0}$. 
	\end{enumerate}
	Then we can sample from $\Ber{\frac{p}{q}}$ in $\bigO{t\epsilon^{-2}}$ expected time.
\end{restatable}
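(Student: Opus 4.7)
The plan is to split the proof into two stages: first design a two-coin Bernoulli factory that outputs $\Ber{p/q}$ using $\bigO{\epsilon^{-2}}$ coin flips in expectation, and then convert this flip-count bound into a total running-time bound via a Wald-type identity.

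For the factory, observe that $q - p \ge \epsilon$ forces $p/q \le 1 - \epsilon/q \le 1 - \epsilon$, so the target bias is bounded away from $1$, which is the standard precondition for a Bernoulli factory to exist (Keane--O'Brien / Nacu--Peres). The idea is to interleave concentration-based estimation with an exact rejection step. First, take a batch of $\Theta(\epsilon^{-2})$ flips of each coin and form empirical means $\hat p, \hat q$; by a Hoeffding bound, $\size{\hat p - p}, \size{\hat q - q} \le \epsilon/8$ except with probability $\bigO{\epsilon^{2}}$. Next, sample a uniform $U \in [0,1]$ bit by bit and output the indicator $\ind{U \le p/q}$ by comparing $U$ against $\hat p/\hat q$, committing whenever the dyadic interval currently known to contain $U$ is disjoint from the confidence window for $p/q$. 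In the ambiguous residual case, draw further batches of flips (and more bits of $U$) to refine the comparison; a geometric-tail argument on the batch sizes shows that the expected total number of coin flips is $\bigO{\epsilon^{-2}}$.

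For the running-time step, let $N$ denote the random number of coin flips consumed by the factory, and let $T_i$ be the random time needed to produce the $i$-th flip. Assumption (2) gives $\mathds{E}[T_i \mid T_1, \dots, T_{i-1}] \le t$ almost surely, and assumption (1) makes $N$ a stopping time with respect to the natural filtration generated by the coin outcomes. Wald's identity in its conditional form then yields $\mathds{E}\left[\sum_{i=1}^N T_i\right] \le t \cdot \mathds{E}[N] = \bigO{t \epsilon^{-2}}$, which is the desired bound.

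The main obstacle will be engineering the correction step so that the output has bias \emph{exactly} $p/q$, and not merely concentrates there. Thresholding $U$ against $\hat p / \hat q$ naively introduces bias from the randomness of the estimators themselves, so the ambiguous region must be resolved with additional flips in a manner reminiscent of the polynomial upper- and lower-envelope construction underlying the Nacu--Peres framework for general Bernoulli factories. Once the factory is in place, applying Wald's identity and bookkeeping the constants should be routine.
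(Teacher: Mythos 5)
Your Wald-identity reduction of the running-time bound to the expected flip count is sound and matches the paper's argument in spirit, but the heart of the lemma is the flip-count bound, and there your proposal has a genuine gap that you yourself flag at the end: you do not actually give a Bernoulli factory, you only sketch an estimate-then-refine scheme and note that ``the ambiguous region must be resolved \ldots\ in a manner reminiscent of the polynomial upper- and lower-envelope construction.'' That remark is doing all the work. Comparing a dyadically revealed uniform $U$ against empirical estimates $\hat p/\hat q$ produces a biased output in any finite truncation, and the Nacu--Peres envelope machinery you gesture at is designed for the one-coin problem of simulating $f(p)$ from $\Ber{p}$ flips; it does not directly give a two-coin factory for the ratio $p/q$, nor does it obviously give the $\bigO{\epsilon^{-2}}$ expected flip count you assert. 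So as written, the proposal asserts the existence of a factory without constructing one, which is precisely the content that needs to be proved.

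The paper avoids this entirely by composing three small, fully explicit steps. First, a single fair coin mixes $\Ber{p}$ and $1-\Ber{q}$ into $\Ber{\varrho}$ with $\varrho = \tfrac{1-(q-p)}{2}$. Second, since $q-p \ge \epsilon$ forces $\varrho \le \tfrac{1-\epsilon}{2}$, Huber's doubling algorithm (cited as a black box with an $\bigO{\epsilon^{-1}}$ expected flip count) turns $\Ber{\varrho}$ into $\Ber{2\varrho}$, hence by flipping into $\Ber{q-p}$. Third, a rejection loop using $\Ber{p}$ and $\Ber{q-p}$ outputs $1$ with probability $p/2$, outputs $0$ with probability $(q-p)/2$, and restarts with probability $1-q/2$; conditioned on terminating, the output is exactly $\Ber{p/q}$, and the loop runs $\bigO{\epsilon^{-1}}$ times in expectation. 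Nesting the $\bigO{\epsilon^{-1}}$ loop count with the $\bigO{\epsilon^{-1}}$ cost per $\Ber{q-p}$ draw gives the $\bigO{\epsilon^{-2}}$ flip bound, and Wald then converts flips to time as you proposed. If you want to keep your route, you would need to replace the sketch by an actual exact two-coin construction (or invoke a specific theorem that provides one with the claimed flip complexity); otherwise, it is cleaner to take the paper's compositional path.

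One smaller point on the Wald step: the hypothesis bounds the expected time conditioned on \emph{previously obtained samples}, i.e.\ on the $\sigma$-field generated by the past coin outcomes (and any auxiliary randomness the algorithm used), not on the past running times $T_1,\dots,T_{i-1}$ themselves. The paper is careful to condition on $(X_j)_{j<i}$ and to check that $\{N\ge i\}$ is measurable with respect to that $\sigma$-field; your phrasing conditions on the wrong variables, which is easy to fix but worth getting right since the $T_i$ need not be measurable functions of the coin outcomes.
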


\Cref{lemma:bernoulli_frac}, will play a key role in bounding the expected running time of our algorithm. To proceed with the formal description of our algorithm, we defer the proof of this lemma to \Cref{sec:BernoulliFactories}.

\section{The algorithm}
\label{secTheAlgorithm}
Let $\region = [0, \sidelength)^{\dimensions}$ and consider a Gibbs point processes on $\region$ with uniform activity $\activityFunction(x) \equiv \activity$ for some $\activity \in \R_{> 0}$
and repulsive potential $\potential$ with finite range $\range \in \R_{>0}$.
Throughout the analysis of our algorithm, it will be useful to focus on configurations $\eta \in \pointsets[\region]$ such that $\potential[x][y] < \infty$ for all ${\set{x, y} \in \binom{\eta}{2}}$, in which case we call $\eta$ a \emph{feasible configuration}.

We use the method of splitting $\region$ into smaller boxes that we introduced in \Cref{secIntution}, along with the same definitions and notation.
As discussed earlier, our algorithm runs in multiple iterations, and the update steps in every iteration $t$ depends on the outcome of a Bernoulli random variable $\bayesFilter{t}$, called the \emph{Bayes filter}.
The construction of this Bayes filter is closely tied to the following  definition.
\begin{definition} \label{def:bayes_filter}
	Fix a repulsive potential $\potential$ of range $\range \in \R_{>0}$, an activity $\activity \in \R_{>0}$ and some $\updateRadius \in \N$.
	We call a function $\bayesFilterCorrection: \powerset{\boxIds} \times \boxIds \times \pointsets \to [0, 1]$ a \emph{Bayes filter correction} if, 
	for all non-empty $S \subseteq \boxIds$ and $\vectorize{v} \in S$, it holds that
	\begin{enumerate}
		\item The map $\bayesFilterCorrection[S][\vectorize{v}][\cdot]$ is $\pointsetEvents$-measurable and satisfies $\bayesFilterCorrection[S][\vectorize{v}][\eta] = \bayesFilterCorrection[S][\vectorize{v}][\eta \cap S]$ for all $\eta \in \pointsets$,
		\item there is some $\varepsilon > 0$ such that for $\boxesToUpdate = \boxesToUpdate[S][\vectorize{v}][\updateRadius]$, $H = (S \cup \boxesToUpdate)^{\comp}$ and all feasible $\eta \in \pointsets_{\region}$ it holds that
		\[
			\varepsilon \le  \bayesFilterCorrection[S][\vectorize{v}][\eta] 
			\le \inf_{\substack{\xi \in \pointsets[H]\\ \xi \cup (\eta \cap S) \text{ is feasible}}}  \left\{\frac{\partitionFunction[\boxesToUpdate \setminus \vectorize{v}][\xi \cup (\eta \cap S)]}{\partitionFunction[\boxesToUpdate][\xi \cup (\eta \cap (S \setminus \vectorize{v}))]}\right\} .
		\]
	\end{enumerate} 
    Note that by 1. it holds that  $\bayesFilterCorrection[S][\vectorize{v}][\cdot]$ is fully characterized by its behavior on $\pointsets_{\region}$.
\end{definition}

Our perfect sampling procedure is stated in \Cref{algo:sampling}.
\begin{algorithm}[h!]
	\KwData{region $\region = [0, \sidelength)^{\dimensions}$, repulsive potential $\potential$ of range at most $\range \in \R_{>0}$, activity $\activity \in \R_{>0}$, update radius $\updateRadius \in \N$}
	set  $t = 0$, $\boxesToFix{t} = \boxIds$, $\currentPointset{t} = \emptyset$\\
	\While{$\boxesToFix{t} \neq \emptyset$}{
		draw $\boxChosen{t} \in \boxesToFix{t}$ uniformly at random\\
		set $\boxesToUpdate = \boxesToUpdate[\boxesToFix{t}][\boxChosen{t}][\updateRadius]$ as defined in \eqref{eq:boxesToUpdate}\\
		draw $\bayesFilter{t}$ from $\Ber{
		\bayesFilterCorrection[\boxesToFix{t}][\boxChosen{t}][\currentPointset{t}] \cdot
		\frac{
			\partitionFunction[\boxesToUpdate][\currentPointset{t} \cap \partial \boxesToUpdate] 
		}{ 
		      \partitionFunction[\boxesToUpdate \setminus \boxChosen{t}][\currentPointset{t} \cap (\partial \boxesToUpdate \cup \boxChosen{t})]
		}}$ where $\bayesFilterCorrection$ is a Bayes filter correction as in \Cref{def:bayes_filter}\\ \label{algo:sampling:filter}
		\eIf{$\bayesFilter{t}$ = 1}{
			draw $\resampledPointset$ from $\gibbs[\boxesToUpdate][\currentPointset{t} \cap \boxesToUpdate^{\comp}]$\\ \label{algo:sampling:update}
			set $\currentPointset{t+1} = \left(\currentPointset{t} \setminus \boxesToUpdate\right) \cup \resampledPointset$\\
			set $\boxesToFix{t+1} = \boxesToFix{t} \setminus \boxChosen{t}$\\
		}{
			
			set $\boxesToFix{t+1} = \boxesToFix{t} \cup \partial \boxesToUpdate$\\
		}
		increase $t$ by $1$\\
	}
	\Return $\currentPointset{t}$
	\caption{Perfect sampling algorithm for repulsive Gibbs point processes}
	\label{algo:sampling}
\end{algorithm}

To analyze \Cref{algo:sampling}, it will help to think of it as a Markov chain $(\currentPointset{t}, \boxesToFix{t}, \bayesFilter{t}, \boxChosen{t})_{t \in \N_{0}}$, which we set to remain constant once it hits a state with $\boxesToFix{t} = \emptyset$.
We write $\statespace = (\pointsets \times 2^{\boxIds} \times \set{0, 1} \times \boxIds)^{\N_{0}}$ for the state space of all trajectories of that Markov chain, which we equip the $\sigma$-field $\sigmafield =(\pointsetEvents \otimes 2^{2^{\boxIds}} \otimes 2^{\set{0, 1}} \otimes 2^{\boxIds})^{\otimes\N_{0}}$, and we denote by $\PrSymbol$ the distribution on $(\statespace, \sigmafield)$ induced by \Cref{algo:sampling}.
Note that in particular $\Pr{\currentPointset{0} = \emptyset, \boxesToFix{0} = \boxIds} = 1$.

Before we analyze the correctness and running time of \Cref{algo:sampling}, we first argue that each update step is well-defined.
In particular, we need to show that the success probability of the Bayes filter that we require in line~\ref{algo:sampling:filter} is indeed a probability.
Moreover, we convince ourselves that the algorithm terminates almost surely after finitely many iterations.
For this, we use the following lemma.
\begin{lemma} \label{lemma:bayes_filter_bounds}
	Suppose $\bayesFilterCorrection$ is a Bayes filter correction.
	Let $S \subseteq \boxIds$ be non-empty, $\vectorize{v} \in S$ and $\boxesToUpdate = \boxesToUpdate[S][\vectorize{v}][\updateRadius]$.
	There is some $\varepsilon > 0$ such that, for all feasible $\eta \in \pointsets[\region]$, it holds that
	\[
		\varepsilon \le
		\bayesFilterCorrection[S][\vectorize{v}][\eta] \cdot
		\frac{
			\partitionFunction[\boxesToUpdate][\eta \cap \partial \boxesToUpdate] 
		}{ 
		      \partitionFunction[\boxesToUpdate \setminus \vectorize{v}][\eta \cap (\partial \boxesToUpdate \cup \vectorize{v})]
		}
		\le 1 . \qedhere
	\]
\end{lemma}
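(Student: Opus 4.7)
The plan is to verify the two inequalities separately, using the locality of the Hamiltonian together with items~1 and~2 of \Cref{def:bayes_filter}. The main obstacle is the set-theoretic bookkeeping that identifies the infimum in \Cref{def:bayes_filter} with (the reciprocal of) the ratio appearing in the lemma; the remaining estimates are routine consequences of repulsiveness and the trivial bound $\partitionFunction \ge 1$.

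For the upper bound by $1$, I would exhibit a specific $\xi^{\star} \in \pointsets[\boxRegion{H}]$ in the infimum on the right-hand side of item~2 of \Cref{def:bayes_filter} for which the resulting ratio is exactly the reciprocal of the lemma's ratio. The natural candidate is $\xi^{\star} = \eta \cap \boxRegion{H}$; the set $\xi^{\star} \cup (\eta \cap \boxRegion{S})$ is then a subset of $\eta$ and hence feasible. I would first verify the set-theoretic inclusions $\partial \boxesToUpdate \subseteq (S \setminus \set{\vectorize{v}}) \cup H$ and $\partial(\boxesToUpdate \setminus \set{\vectorize{v}}) \subseteq \partial \boxesToUpdate \cup \set{\vectorize{v}}$, both of which follow directly from $\boxesToUpdate = \set{\vectorize{v}} \cup (\ball{\vectorize{v}}{\updateRadius} \setminus S)$. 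Next, since $\potential$ has range $\range$ and each box has side length $\range$, particles in boxes at $\norm{\cdot}{\infty}$-distance at least $2$ do not interact via $\potential$. Hence $\partitionFunction[\activity_{\eta'}][\boxRegion{\boxesToUpdate}]$ depends on $\eta'$ only through $\eta' \cap \boxRegion{\partial \boxesToUpdate}$ and, analogously, $\partitionFunction[\activity_{\eta'}][\boxRegion{\boxesToUpdate \setminus \set{\vectorize{v}}}]$ depends on $\eta'$ only through $\eta' \cap \boxRegion{\partial(\boxesToUpdate \setminus \set{\vectorize{v}})}$. Substituting $\xi^{\star}$ into item~2's ratio and applying these two reductions (together with the inclusion $\partial(\boxesToUpdate \setminus \set{\vectorize{v}}) \subseteq \partial \boxesToUpdate \cup \set{\vectorize{v}}$) collapses the ratio to
\[
\frac{\partitionFunction[\activity_{\eta \cap \boxRegion{\partial \boxesToUpdate \cup \set{\vectorize{v}}}}][\boxRegion{\boxesToUpdate \setminus \set{\vectorize{v}}}]}{\partitionFunction[\activity_{\eta \cap \boxRegion{\partial \boxesToUpdate}}][\boxRegion{\boxesToUpdate}]},
\]
which is the reciprocal of the lemma's ratio. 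Multiplying through yields the upper bound.

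For the lower bound, I would combine $\bayesFilterCorrection[S][\vectorize{v}][\eta] \ge \varepsilon$ from item~2 with a positive lower bound on the lemma's ratio that is uniform in feasible $\eta$. The numerator is at least $1$ (the $k=0$ contribution to the partition function), while repulsiveness of $\potential$ gives the pointwise inequality $\activity_{\eta'}(y) \le \activity$, so the denominator satisfies
\[
\partitionFunction[\activity_{\eta \cap \boxRegion{\partial \boxesToUpdate \cup \set{\vectorize{v}}}}][\boxRegion{\boxesToUpdate \setminus \set{\vectorize{v}}}] \le \partitionFunction[\activity][\boxRegion{\boxesToUpdate}] \le \eulerE[\activity \volume{\boxRegion{\boxesToUpdate}}].
\]
Since $\volume{\boxRegion{\boxesToUpdate}} \le (2\updateRadius+1)^{\dimensions} \range^{\dimensions}$ is bounded in terms of $\updateRadius$, $\range$, and $\dimensions$ only, the lemma's ratio is uniformly bounded below by a positive constant, and taking the product with $\varepsilon$ (possibly replacing it by a smaller positive constant) delivers the claimed lower bound.
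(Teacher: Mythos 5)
Your proof is correct and takes essentially the same route as the paper. The upper-bound argument is identical: choosing $\xi^\star = \eta \cap \boxRegion{H}$ in the infimum of \Cref{def:bayes_filter} and applying the spatial Markov property (\Cref{lemma:spatial_markov}) to collapse the resulting ratio to the reciprocal of the lemma's ratio is precisely what the paper does. The only (cosmetic) difference is in the lower bound: the paper observes that the lemma's ratio is already $\ge 1$, via the monotonicity chain $\partitionFunction[\activity_{\eta \cap \boxRegion{\partial \boxesToUpdate}}][\boxRegion{\boxesToUpdate}] \ge \partitionFunction[\activity_{\eta \cap \boxRegion{\partial \boxesToUpdate}}][\boxRegion{\boxesToUpdate \setminus \set{\vectorize{v}}}] \ge \partitionFunction[\activity_{\eta \cap \boxRegion{\partial \boxesToUpdate \cup \set{\vectorize{v}}}}][\boxRegion{\boxesToUpdate \setminus \set{\vectorize{v}}}]$, so the product is $\ge \bayesFilterCorrection[S][\vectorize{v}][\eta] \ge \varepsilon$ with the same $\varepsilon$; you instead settle for the coarser (but still uniformly positive) lower bound $\eulerE^{-\activity \volume{\boxRegion{\boxesToUpdate}}}$ on the ratio via Poisson domination, which forces you to shrink $\varepsilon$. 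Both are valid since the lemma only demands some positive constant, but the paper's observation is both cleaner and gives a sharper constant, which actually matters later when this bound feeds into the drift analysis of the number of iterations.
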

\begin{proof} 
	Fix $S$ and $\vectorize{v}$.
	For the lower bound, note that for all $\eta \in \pointsets[\region]$
	\[
	 	\partitionFunction[\boxesToUpdate][\eta \cap \partial \boxesToUpdate]  
	 	\ge 
        \partitionFunction[\boxesToUpdate \setminus \vectorize{v}][\eta \cap \partial \boxesToUpdate]
	 	\ge \partitionFunction[\boxesToUpdate \setminus \vectorize{v}][\eta \cap (\partial \boxesToUpdate \cup \vectorize{v})].
	\]
	Thus, by the definition of a Bayes filter correction, there is some $\varepsilon > 0$ such that for all feasible $\eta \in \pointsets[\region]$
	\[
		\bayesFilterCorrection[S][\vectorize{v}][\eta] \cdot
		\frac{
			\partitionFunction[\boxesToUpdate][\eta \cap \partial \boxesToUpdate] 
		}{ 
			\partitionFunction[\boxesToUpdate \setminus \vectorize{v}][\eta \cap (\partial \boxesToUpdate \cup \vectorize{v})]
		}
		\ge \bayesFilterCorrection[S][\vectorize{v}][\eta]
		\ge \varepsilon.
	\]
	To derive the upper bound, note that by \Cref{lemma:spatial_markov}  	$\partitionFunction[\boxesToUpdate][\eta \cap \partial \boxesToUpdate] = \partitionFunction[\boxesToUpdate][\eta \cap \boxesToUpdate^{\comp}]$ and $\partitionFunction[\boxesToUpdate \setminus \vectorize{v}][\eta \cap (\partial \boxesToUpdate \cup \vectorize{v})] = \partitionFunction[\boxesToUpdate \setminus \vectorize{v}][\eta \cap (\boxesToUpdate \setminus \vectorize{v})^{\comp}]$.
	Next, set $H = (S \cup \boxesToUpdate)^{\comp}$ and note that, for feasible $\eta \in \pointsets[\region]$, it holds that $(\eta \cap H) \cup (\eta \cap S)$ is feasible as well.
	By the definition of a Bayes filter correction, this implies
	\[
		\bayesFilterCorrection[S][\vectorize{v}][\eta] \le \frac{\partitionFunction[\boxesToUpdate \setminus \vectorize{v}][\eta \cap (\boxesToUpdate \setminus \vectorize{v})^{\comp}]}{\partitionFunction[\boxesToUpdate][\eta \cap \boxesToUpdate^{\comp}]} .
	\]
	Consequently, it holds that
	\[
		\bayesFilterCorrection[S][\vectorize{v}][\eta] \cdot
		\frac{
			\partitionFunction[\boxesToUpdate][\eta \cap \partial \boxesToUpdate] 
		}{ 
			\partitionFunction[\boxesToUpdate \setminus \vectorize{v}][\eta \cap (\partial \boxesToUpdate \cup \vectorize{v})]
		}
		= \bayesFilterCorrection[S][\vectorize{v}][\eta] \cdot
		\frac{
			\partitionFunction[\boxesToUpdate][\eta \cap \boxesToUpdate^{\comp}]
		}{
			\partitionFunction[\boxesToUpdate \setminus \vectorize{v}][\eta \cap (\boxesToUpdate \setminus \vectorize{v})^{\comp}]
		}
		\le 1,
	\]
	which proves the claim
\end{proof}

We use the previous lemma to derive the following statement, which will guarantee that \Cref{algo:sampling} is well-defined and terminates almost surely.
\begin{lemma} \label{lemma:feasible_bounded}	
 The following holds throughout \Cref{algo:sampling}:
	\begin{enumerate}[1)]
		\item For every $t \in \N_0$, $\currentPointset{t}$ is almost surely feasible.
		\label{lemma:feasible_bounded:1}
		\item There is some $\varepsilon > 0$ such that for all $t \in \N_0$ it holds that almost surely $\boxesToFix{t} = \emptyset$ or 
		\[
			\varepsilon \le
			\bayesFilterCorrection[\boxesToFix{t}][\boxChosen{t}][\currentPointset{t}] \cdot
			\frac{
				\partitionFunction[\boxesToUpdate][\currentPointset{t} \cap \partial \boxesToUpdate] 
			}{ 
				\partitionFunction[\boxesToUpdate \setminus \boxChosen{t}][\currentPointset{t} \cap (\partial \boxesToUpdate \cup \boxChosen{t})]
			}
			\le 1 .
		\]
		\label{lemma:feasible_bounded:2} 
	\end{enumerate}
\end{lemma}

\begin{proof}
	We prove this statement via induction over the iteration $t \in \N_{0}$.
	For $t = 0$, note that $\currentPointset{0} = \emptyset$.
	Thus, \ref{lemma:feasible_bounded:1} is trivially true.
	Moreover, \ref{lemma:feasible_bounded:2} follows from applying \Cref{lemma:bayes_filter_bounds} to $\bayesFilterCorrection[\boxIds][\vectorize{v}][\emptyset]$ for every $\vectorize{v} \in \boxIds$.
	
	Now, suppose our claim holds at some iteration $t \in N$.
	We start by showing that \ref{lemma:feasible_bounded:1} holds in iteration $t+1$.
	First, note that if $\boxesToFix{t} = \emptyset$, then there is nothing to prove.
	Thus, we may assume $\boxesToFix{t} \neq \emptyset$.
	If $\bayesFilter{t} = 0$, then $\currentPointset{t+1} = \currentPointset{t}$.
	Thus, in this case, $\currentPointset{t+1}$ is feasible if and only if $\currentPointset{t}$ was feasible, which holds almost surely by the induction hypothesis. 
	Next, consider the case $\bayesFilter{t} = 1$ and set $\boxesToUpdate = \boxesToUpdate[\boxesToFix{t}][\boxChosen{t}][\updateRadius]$.
	By the induction hypothesis, we have that $\currentPointset{t} \cap \boxesToUpdate^{\comp} \subseteq \currentPointset{t}$ is almost surely feasible.
    Further, note that $ \gibbs[\boxesToUpdate][\currentPointset{t} \cap \boxesToUpdate^{\comp}]$ is only supported on $\eta \in \pointsets_{\boxesToUpdate}$ such that $\eta \cup (\currentPointset{t} \cap \boxesToUpdate^{\comp})$ is feasible.
    Hence, we have that $\currentPointset{t+1} = (\currentPointset{t+1} \cap \boxesToUpdate) \cup (\currentPointset{t} \cap \boxesToUpdate^{\comp})$ is almost surely feasible, proving \ref{lemma:feasible_bounded:1}.
	For \ref{lemma:feasible_bounded:2}, assume that $\boxesToFix{t+1} \neq \emptyset$.
	Applying \Cref{lemma:bayes_filter_bounds} for every non-empty $S \subseteq \boxIds$ and $\vectorize{v} \in S$ yields the desired bounds on $\bayesFilterCorrection[\boxesToFix{t+1}][\boxChosen{t+1}][\currentPointset{t+1}]$ whenever $\currentPointset{t+1}$ is feasible.
	As we have just shown, this is the case almost surely, which concludes the proof.
\end{proof}

Considering \Cref{algo:sampling}, an immediate question is how to construct the Bayes filter correction in line~\ref{algo:sampling:filter}, and in particular, how to do so in such a way that the Bayes filter $\bayesFilter{t}$ can be sampled efficiently.
However, we will defer this question for now and first prove that \Cref{algo:sampling}  produces the correct output distribution.

\section{Proof of correctness}
\label{secCorrectness}
In this section we prove that \Cref{algo:sampling} produces the correct output distribution.
That is for $\iterationNumber = \inf\set{t \in \N_{0}}[\boxesToFix{t} = \emptyset]$ it holds that $\currentPointset{\iterationNumber} \sim \gibbs[\region]$.  
We first show that the number of iterations $\iterationNumber$ is finite almost surely.
This directly follows as a corollary of \Cref{lemma:feasible_bounded}.
\begin{corollary} \label{cor:finite_iterations}
	\Cref{algo:sampling} terminates almost surely after finitely many iterations.
	That is, for $\iterationNumber = \inf\set{t \in \N_{0}}[\boxesToFix{t} = \emptyset]$ we have $\Pr{\iterationNumber < \infty} = 1$.
\end{corollary}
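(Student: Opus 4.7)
The plan is to exploit the uniform lower bound on the Bayes filter success probability provided by \Cref{lemma:feasible_bounded}\ref{lemma:feasible_bounded:2}, together with the finiteness of $\boxIds$, to show that the algorithm makes progress in bounded chunks of iterations. Concretely, set $\numberBoxes \coloneqq \size{\boxIds}$, which is finite since $\region$ is bounded and each box has positive volume $\range^{\dimensions}$. Note that $\boxesToFix{t} \subseteq \boxIds$ so $\size{\boxesToFix{t}} \le \numberBoxes$ for all $t$, and whenever $\bayesFilter{t} = 1$ the update in lines 7--9 of \Cref{algo:sampling} gives $\size{\boxesToFix{t+1}} = \size{\boxesToFix{t}} - 1$. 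Hence any stretch of $\numberBoxes$ consecutive iterations on which the Bayes filter is always $1$ forces termination.

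By \Cref{lemma:feasible_bounded}\ref{lemma:feasible_bounded:2}, there exists $\varepsilon > 0$ such that, almost surely on $\{\boxesToFix{t} \neq \emptyset\}$, the success probability of $\bayesFilter{t}$ (conditional on the history up to iteration $t$) is at least $\varepsilon$. I would formalise this by fixing the natural filtration $(\mathcal{F}_t)_{t \in \N_0}$ generated by $(\currentPointset{s}, \boxesToFix{s}, \boxChosen{s}, \bayesFilter{s})_{s \le t}$, and for each $t_0 \in \N_0$ and $k \in \{0, \dots, \numberBoxes - 1\}$ defining the event $A_{t_0, k}$ that either the algorithm has already terminated before iteration $t_0 + k$ or $\bayesFilter{t_0 + k} = 1$. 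Then $\Pr{A_{t_0, k}}[\mathcal{F}_{t_0 + k}] \ge \varepsilon$ almost surely (vacuously $1$ after termination, at least $\varepsilon$ otherwise), so by the tower property
\[
    \Pr{\textstyle\bigcap_{k=0}^{\numberBoxes - 1} A_{t_0, k}}[\mathcal{F}_{t_0}] \ge \varepsilon^{\numberBoxes}
\]
almost surely. On this intersection, the algorithm has terminated by iteration $t_0 + \numberBoxes$, because either it already terminated earlier or it enjoyed $\numberBoxes$ consecutive Bayes filter successes, reducing $\size{\boxesToFix{\cdot}}$ from at most $\numberBoxes$ down to $0$.

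Applied with $t_0 = k \numberBoxes$ and iterated over $k \in \N_0$, this gives
\[
    \Pr{\iterationNumber > (k+1) \numberBoxes}[\iterationNumber > k \numberBoxes] \le 1 - \varepsilon^{\numberBoxes},
\]
so by induction $\Pr{\iterationNumber > k \numberBoxes} \le (1 - \varepsilon^{\numberBoxes})^{k}$, which tends to $0$ as $k \to \infty$. Hence $\Pr{\iterationNumber = \infty} = 0$, proving the claim; in fact, one obtains a geometric tail bound on $\iterationNumber$ for free.

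The only subtlety I anticipate is bookkeeping around iterations that occur after termination: the sequence $(\currentPointset{t}, \boxesToFix{t})$ is only defined up to time $\iterationNumber$. This is the reason I phrase $A_{t_0, k}$ to be vacuously satisfied after termination, which keeps the conditional lower bound clean and makes the geometric iteration step purely deterministic. Apart from this, no serious obstacle arises: the proof is genuinely a finite-state-space argument that uses only the uniform positive lower bound from \Cref{lemma:feasible_bounded} and the trivial monotonicity $\size{\boxesToFix{t+1}} = \size{\boxesToFix{t}} - 1$ on successful iterations.
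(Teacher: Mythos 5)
Your proposal is correct and follows essentially the same approach as the paper's own proof: invoke the uniform lower bound $\varepsilon>0$ on the Bayes filter success probability from \Cref{lemma:feasible_bounded}, observe that $\size{\boxIds}$ consecutive successes force $\size{\boxesToFix{\cdot}}$ to reach $0$, and conclude that $\iterationNumber$ is stochastically dominated by a scaled geometric random variable. Your write-up is more explicit about the filtration, the tower-property step, and the post-termination bookkeeping, but the underlying argument is identical.
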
 

\begin{proof}
	By \Cref{lemma:feasible_bounded}, the probability that $\size{\boxesToFix{t}}$ decreases in each step is uniformly bounded away from $0$.
	Thus, there is a positive probability of going from any $\boxesToFix{t}$ to the empty set in $\size{\boxesToFix{t}}$ steps.
	Since $\size{\boxesToFix{t}} \le \size{\boxIds}$, this means for every $k \in \N$, it holds that the probability that $\boxesToFix{k \cdot \size{\boxIds}} = \emptyset$ is bounded away from $0$ uniformly in $k$.
	Thus, $\iterationNumber$ is dominated by a geometric random variable with strictly positive success probability, which proves the claim. 
\end{proof}

Before going into the technical part of proving correctness, a few remarks about our notation are in place.
Firstly, for any $A \in \sigmafield$ with $\Pr{A} > 0$ we write $\PrSymbol[A]$ as shorthand for the probability measure $\Pr{\,\cdot}[][A] = \Pr{\,\cdot}[A]$.
Note that for all events $A, B \in \sigmafield$ with $\Pr{A \cap B} > 0$ it holds that $\PrSymbol[A \cap B] = (\PrSymbol[A])_{B}$.
Throughout our proofs, we use conditional expectations to make conditioning on partial point configurations $\currentPointset{t} \cap \subregion$ rigorous.
In particular, we frequently condition on a sub-$\sigma$-field $\mathcal{F} \subseteq \sigmafield$ and an event $A \in \sigmafield$ with $\Pr{A} > 0$ at the same time.
Formally, for a measurable function $f: \statespace \to \R$ we write $\E{f}[\mathcal{F}][A]$ for the conditional expectation of $f$ given $\mathcal{F}$ under the conditional measure $\PrSymbol[A]$.
Note that any identity involving $\E{f}[\mathcal{F}][A]$ should be understood to hold $\PrSymbol[A]$-almost surely. 
Moreover, if $f = \ind{B}$ for some event $B \in \mathcal{F}$, we write the conditional expectation as $\Pr{B}[\mathcal{F}; A]$.
More details on conditional expectations can be found in \Cref{appendix:measures}.
Moreover, we often make use of the concept of regular condition distributions.
For more details, see \Cref{appendix:rcd}.
Lastly, for every bounded measurable region $\region' \in \boundedBorel$, we write $\projection{\region'}: \pointsets \to \pointsets$ for the projection $\eta \mapsto \eta \cap \region'$.

Our main result in this section is the following statement.
\begin{theorem}
	\label{thm:correctness}
	For all $t \in \N_{0}$ with $\Pr{\boxesToFix{t} = \emptyset} > 0$ and all $A \in \pointsetEvents$, it holds that 
	\[
		\Pr{\currentPointset{t} \in A}[\boxesToFix{t} = \emptyset] = \gibbs[\region](A).
	\]
\end{theorem}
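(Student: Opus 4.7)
Proof plan. I would prove by induction on $t$ the stronger invariant: for each $S \subseteq \boxIds$ with $\Pr{\boxesToFix{t} = S} > 0$ and for (a regular version of the conditional law of) $\currentPointset{t} \cap \boxRegion{S} = \eta_S$, the conditional distribution of $\currentPointset{t} \cap \complementOf{\boxRegion{S}}$ given $\boxesToFix{t} = S$ and $\currentPointset{t} \cap \boxRegion{S} = \eta_S$ equals $\gibbs[\activity_{\eta_S} \ind{\complementOf{\boxRegion{S}}}][][\complementOf{\boxRegion{S}}]$. Specialising to $S = \emptyset$ recovers the theorem; the base case $t = 0$ is immediate since $\currentPointset{0} = \emptyset$, $\boxesToFix{0} = \boxIds$ and $\boxRegion{\boxIds} = \region$.

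For the inductive step, fix $S' \subseteq \boxIds$ with $\Pr{\boxesToFix{t+1} = S'} > 0$ and a boundary $\eta_{S'}$. The event $\boxesToFix{t+1} = S'$ decomposes into two families of antecedents: a \emph{filter-success} transition with $\boxesToFix{t} = S' \cup \set{\vectorize{v}}$, $\boxChosen{t} = \vectorize{v}$, $\bayesFilter{t} = 1$; and a \emph{filter-failure} transition with $\boxesToFix{t} = S \subseteq S'$, $\boxChosen{t} = \vectorize{v} \in S$ such that $S \cup \partial \boxesToUpdate[S][\vectorize{v}][\updateRadius] = S'$, $\bayesFilter{t} = 0$. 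It suffices to check that, conditional on any such antecedent together with $\currentPointset{t+1} \cap \boxRegion{S'} = \eta_{S'}$, the conditional distribution of $\currentPointset{t+1} \cap \complementOf{\boxRegion{S'}}$ is $\gibbs[\activity_{\eta_{S'}} \ind{\complementOf{\boxRegion{S'}}}][][\complementOf{\boxRegion{S'}}]$; averaging over antecedents then preserves the identity.

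For filter failure the configuration is unchanged ($\currentPointset{t+1} = \currentPointset{t}$), so the required conclusion follows from the induction hypothesis at time $t$ together with the spatial Markov property (\Cref{lemma:spatial_markov}), provided the further conditioning on $\bayesFilter{t} = 0$ does not distort the conditional law on $\complementOf{\boxRegion{S'}}$. This is the case because the success probability appearing in line~\ref{algo:sampling:filter} is $\pointsetEvents[\boxRegion{S'}]$-measurable: the correction $\bayesFilterCorrection[S][\vectorize{v}][\cdot]$ is $\pointsetEvents[\boxRegion{S}]$-measurable by \Cref{def:bayes_filter}, and the ratio of partition functions depends on $\currentPointset{t}$ only through $\currentPointset{t} \cap \boxRegion{\partial \boxesToUpdate \cup \set{\vectorize{v}}} \subseteq \currentPointset{t} \cap \boxRegion{S'}$. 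Hence $\bayesFilter{t}$ is conditionally independent of $\currentPointset{t} \cap \complementOf{\boxRegion{S'}}$ given $\currentPointset{t} \cap \boxRegion{S'}$.

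For filter success, set $S = S' \cup \set{\vectorize{v}}$, $\boxesToUpdate = \boxesToUpdate[S][\vectorize{v}][\updateRadius]$, and $H = \complementOf{S \cup \boxesToUpdate}$. I would express the joint law of $\currentPointset{t+1} \cap \complementOf{\boxRegion{S'}}$ conditional on the antecedent and on $\currentPointset{t+1} \cap \boxRegion{S'} = \eta_{S'}$ as the product of three densities: the induction-hypothesis density of $\currentPointset{t} \cap \complementOf{\boxRegion{S}}$ with boundary $\eta_S = \eta_{S'} \cup (\currentPointset{t} \cap \boxRegion{\vectorize{v}})$; the filter success probability from line~\ref{algo:sampling:filter}; and the resampling density from line~\ref{algo:sampling:update}. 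Rewriting the two partition functions inside the filter via \Cref{lemma:spatial_markov} with boundaries on $\complementOf{\boxRegion{\boxesToUpdate}}$ rather than $\boxRegion{\partial \boxesToUpdate}$, the numerator cancels the normaliser of the resampling density, while integrating out $\currentPointset{t} \cap \boxRegion{\boxesToUpdate \setminus \set{\vectorize{v}}}$ cancels the denominator. The prefactor $\bayesFilterCorrection[S][\vectorize{v}][\cdot]$ depends only on $\currentPointset{t} \cap \boxRegion{S}$ and so drops out after renormalisation. Finally, because $\boxRegion{H}$ lies outside the range $\range$ of $\potential$ relative to $\boxRegion{\vectorize{v}}$, the potential contributions between these two regions vanish, and the remaining activity factors combine into $\activity_{\eta_{S'}}$ on $\complementOf{\boxRegion{S'}}$, yielding precisely the density of $\gibbs[\activity_{\eta_{S'}} \ind{\complementOf{\boxRegion{S'}}}][][\complementOf{\boxRegion{S'}}]$. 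This filter-success cancellation is the main obstacle: algebraically the identity is clean, but executing it rigorously requires careful handling of regular conditional distributions on the continuum point-process space and repeated use of the spatial Markov property to align boundary conditions between $\boxRegion{\partial \boxesToUpdate}$ and $\complementOf{\boxRegion{\boxesToUpdate}}$.
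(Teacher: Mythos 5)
Your plan mirrors the paper's own proof: the paper deduces the theorem from the same invariant (its Lemma~\ref{lemma:invariant}), proved by induction on $t$ with the same case split on $\bayesFilter{t}$ (Lemmas~\ref{lemma:invariant_case1} and \ref{lemma:invariant_case0}), the same observation that the filter's success probability is $\pointsetEvents[\boxRegion{\boxesToFix{t+1}}]$-measurable, and the same partition-function cancellation in the success case after realigning boundary conditions via Lemma~\ref{lemma:spatial_markov}. The one thing you gesture at but do not execute --- making the density manipulations rigorous on the continuum state space --- is exactly what the paper's machinery of regular conditional distributions (Lemma~\ref{lemma:invariant_projection}, Theorem~\ref{thm:integration_RCP}, Lemma~\ref{lemma:extension_RCD}, Lemma~\ref{lemma:conditioning_on_event}) is there to do, so you have correctly identified both the argument and the technical work it requires.
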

Since the algorithm terminates when $\boxesToFix{t} = \emptyset$, this implies that the output of \Cref{algo:sampling} follows the distribution $\gibbs[\region](A)$.
We  deduce \Cref{thm:correctness} from the following invariant. 

\begin{lemma} \label{lemma:invariant}
	For all $t \in \N_{0}$, $S \subseteq \boxIds$ with $\Pr{\boxesToFix{t} = S} > 0$ and $A \in \pointsetEvents$ it holds that 
	\[
		\Pr{\currentPointset{t} \cap S^{\comp} \in A}[\currentPointset{t} \cap S, \boxesToFix{t} = S] = \gibbs[S^{\comp}][\currentPointset{t} \cap S](A) .
	\] 
	In particular, the map 
	\[
		(\omega, A) \mapsto \gibbs[S^{\comp}][\currentPointset{t}(\omega) \cap S](A) \quad \text{$\omega \in \statespace$, $A \in \pointsetEvents$}
	\] 
	is a regular conditional distribution of $\currentPointset{t} \cap S^{\comp}$ given $\sigma(\currentPointset{t} \cap S)$ under the probability measure $\PrSymbol[\set{\boxesToFix{t} = S}]$. 
\end{lemma}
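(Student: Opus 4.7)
The plan is induction on $t$. The base case $t=0$ is immediate: the algorithm deterministically sets $\boxesToFix{0} = \boxIds$ and $\currentPointset{0} = \emptyset$, so the only $S$ with positive probability is $S = \boxIds$; then $\complementOf{\boxRegion{S}} = \emptyset$ and both sides of the claimed identity trivially agree. For the inductive step, fix $S' \subseteq \boxIds$ with $\Pr{\boxesToFix{t+1} = S'} > 0$ and decompose $\set{\boxesToFix{t+1} = S'}$ into the disjoint subevents, indexed by $\vectorize{v} \in \boxIds$ and $S \subseteq \boxIds$:
\begin{enumerate}[(I)]
\item $\bayesFilter{t} = 1$, $\boxChosen{t} = \vectorize{v}$, $\boxesToFix{t} = S' \cup \set{\vectorize{v}}$ for some $\vectorize{v} \notin S'$;
\item $\bayesFilter{t} = 0$, $\boxChosen{t} = \vectorize{v}$, $\boxesToFix{t} = S$ with $\vectorize{v} \in S$ and $S \cup \partial \boxesToUpdate[S][\vectorize{v}][\updateRadius] = S'$.
\end{enumerate}
On each subevent I would verify that the conditional distribution of $\currentPointset{t+1} \cap \complementOf{\boxRegion{S'}}$ given $\currentPointset{t+1} \cap \boxRegion{S'}$ is $\gibbs[\activity_{\currentPointset{t+1} \cap \boxRegion{S'}} \ind{\complementOf{\boxRegion{S'}}}][][\complementOf{\boxRegion{S'}}]$; summing over subevents then gives the invariant at $t+1$.

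Case (II) is the simpler one. With $\boxesToUpdate = \boxesToUpdate[S][\vectorize{v}][\updateRadius]$, we have $\currentPointset{t+1} = \currentPointset{t}$, so I apply the inductive hypothesis at $S$ to describe the conditional distribution of $\currentPointset{t} \cap \complementOf{\boxRegion{S}}$ given $\currentPointset{t} \cap \boxRegion{S}$ as $\gibbs[\activity_{\currentPointset{t} \cap \boxRegion{S}} \ind{\complementOf{\boxRegion{S}}}][][\complementOf{\boxRegion{S}}]$. The success probability of the Bayes filter from line~\ref{algo:sampling:filter} is $\sigma(\currentPointset{t} \cap \boxRegion{S \cup \partial \boxesToUpdate})$-measurable, and $S \cup \partial \boxesToUpdate = S'$, so further conditioning on $\bayesFilter{t} = 0$ only reshapes the distribution of $\currentPointset{t} \cap \boxRegion{S'}$ while leaving the conditional distribution of $\currentPointset{t} \cap \complementOf{\boxRegion{S'}}$ given $\currentPointset{t} \cap \boxRegion{S'}$ unchanged. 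The spatial Markov property (\Cref{lemma:spatial_markov}) then identifies this conditional distribution with the target $\gibbs[\activity_{\currentPointset{t+1} \cap \boxRegion{S'}} \ind{\complementOf{\boxRegion{S'}}}][][\complementOf{\boxRegion{S'}}]$.

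Case (I) is the main obstacle, because here $\currentPointset{t+1}$ genuinely differs from $\currentPointset{t}$ on $\boxRegion{\boxesToUpdate}$ via the resampled $\resampledPointset$. Writing $S = S' \cup \set{\vectorize{v}}$ and $\boxesToUpdate = \boxesToUpdate[S][\vectorize{v}][\updateRadius]$, I would write down the joint (unnormalized) density of $(\currentPointset{t}, \resampledPointset)$ under $\PrSymbol[\set{\boxesToFix{t} = S, \boxChosen{t} = \vectorize{v}, \bayesFilter{t} = 1}]$ by multiplying together: the conditional density of $\currentPointset{t} \cap \complementOf{\boxRegion{S}}$ given $\currentPointset{t} \cap \boxRegion{S}$ provided by the inductive hypothesis; the density of $\resampledPointset$ drawn from $\gibbs[\activity_{\currentPointset{t} \cap \complementOf{\boxRegion{\boxesToUpdate}}} \ind{\boxRegion{\boxesToUpdate}}][][\boxRegion{\boxesToUpdate}]$; and the Bayes filter acceptance probability from line~\ref{algo:sampling:filter}. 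The ratio $\partitionFunction[\activity_{\currentPointset{t} \cap \boxRegion{\partial \boxesToUpdate}}][\boxRegion{\boxesToUpdate}]/\partitionFunction[\activity_{\currentPointset{t} \cap \boxRegion{\partial \boxesToUpdate \cup \set{\vectorize{v}}}}][\boxRegion{\boxesToUpdate \setminus \set{\vectorize{v}}}]$ in the acceptance probability is designed so that, after simplification via the spatial Markov property and marginalization over $\currentPointset{t} \cap \boxRegion{\boxesToUpdate}$, all partition-function factors cancel, leaving exactly the unnormalized density of $\gibbs[\activity_{\currentPointset{t+1} \cap \boxRegion{S'}} \ind{\complementOf{\boxRegion{S'}}}][][\complementOf{\boxRegion{S'}}]$ on $\pointsets[\complementOf{\boxRegion{S'}}]$. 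Since $\bayesFilterCorrection[S][\vectorize{v}][\currentPointset{t}]$ is $\sigma(\currentPointset{t} \cap \boxRegion{S})$-measurable, it only affects the probability of Case~(I) occurring, not the conditional distribution of $\currentPointset{t+1} \cap \complementOf{\boxRegion{S'}}$ given $\currentPointset{t+1} \cap \boxRegion{S'}$.

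The hard part will be the bookkeeping in Case~(I): matching the conditional density from the inductive hypothesis, the Poisson-reference density of $\resampledPointset$, and the partition-function ratio in the Bayes filter so that the cancellations produce exactly the target Gibbs density on $\complementOf{\boxRegion{S'}}$. Once the first displayed identity is established, the ``in particular'' clause follows because the map $(\omega, A) \mapsto \gibbs[\activity_{\currentPointset{t}(\omega) \cap \boxRegion{S}} \ind{\complementOf{\boxRegion{S}}}][A][\complementOf{\boxRegion{S}}]$ is a probability measure in $A$ for each $\omega$ by construction, is $\sigma(\currentPointset{t} \cap \boxRegion{S})$-measurable in $\omega$ (measurability through the configuration on $\boxRegion{S}$ is routine from the definition of the Gibbs measure), and has the correct conditional expectation by the identity itself.
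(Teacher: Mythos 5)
Your proposal follows essentially the same route as the paper's proof: induction on $t$, with the inductive step split according to whether $\bayesFilter{t}=1$ or $\bayesFilter{t}=0$, and the heavy lifting in the $\bayesFilter{t}=1$ case coming from an explicit density/partition-function cancellation using the design of the Bayes filter. The paper packages this into helper lemmas (\Cref{lemma:invariant_projection}, \Cref{lemma:invariant_case1}, \Cref{lemma:invariant_case0}) and makes the ``density bookkeeping'' rigorous via regular conditional distributions, integration against them (\Cref{thm:integration_RCP}), generating $\pi$-systems (\Cref{lemma:pi_system}, \Cref{lemma:extension_RCD}), and careful changes of conditioning measure (\Cref{lemma:conditioning_on_event}, \Cref{lemma:adding_conditions}); your sketch is a higher-level outline of the same computations.

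Two small gaps to be aware of. First, your case decomposition omits the event $\set{\boxesToFix{t}=\emptyset}$, which does contribute to $\set{\boxesToFix{t+1}=\emptyset}$ because the chain is defined to remain constant once $\boxesToFix{t}=\emptyset$ (rather than the algorithm literally halting); this contribution is trivially handled by the induction hypothesis, but it has to be listed alongside your Cases (I) and (II). Second, in Case (II) the order of reasoning is slightly off: you should first use the induction hypothesis together with the spatial Markov property to pass from the conditional distribution given $\currentPointset{t}\cap\boxRegion{S}$ to the conditional distribution given $\currentPointset{t}\cap\boxRegion{S'}$, and only then invoke the $\sigma(\currentPointset{t}\cap\boxRegion{S'})$-measurability of the filter success probability to argue that conditioning on $\bayesFilter{t}=0$ leaves this unchanged. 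Your write-up has the spatial Markov step last, which as written would require you to already know the conditional distribution given $\boxRegion{S'}$ before the Markov step is applied. These are presentational issues rather than conceptual ones; the underlying argument is sound and matches the paper's.
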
 

Before we get into proving \Cref{lemma:invariant}, we first show how \Cref{thm:correctness} follows from it.

\begin{proof}[Proof of \Cref{thm:correctness}]
	Note that $\currentPointset{t} \cap \boxIds = \currentPointset{t}$ and $\sigma(\currentPointset{t} \cap \emptyset) = \set{\emptyset, \statespace}$ for all $t \in \N_{0}$. 
	Using \Cref{lemma:invariant} for $S = \emptyset$, we obtain for all $A \in \pointsetEvents$
	\begin{align*}
		\Pr{\currentPointset{t} \in A}[\boxesToFix{t} = \emptyset] 
		= \Pr{\currentPointset{t} \cap \boxIds \in A}[\currentPointset{t} \cap \emptyset, \boxesToFix{t} = \emptyset] = \gibbs[\region](A) ,
	\end{align*}
	which proves the theorem.
\end{proof}

We proceed by stating and proving several lemmas that we will use to prove \Cref{lemma:invariant}.

\begin{lemma}\label{lemma:invariant_projection}
	Fix  $t \in \N_{0}$, and assume that for all $S \subseteq \boxIds$ with $\Pr{\boxesToFix{t} = S} > 0$ and all $A \in \pointsetEvents$ it holds that 
	\[
		\Pr{\currentPointset{t} \cap S^{\comp} \in A}[\currentPointset{t} \cap S, \boxesToFix{t} = S] = \gibbs[S^{\comp}][\currentPointset{t} \cap S](A) .
	\] 
	Let $E = \set{\boxesToFix{t} = S, \boxChosen{t} = \vectorize{v}}$ for some $S \in \powerset{\boxIds} \setminus \set{\emptyset}$ and $\vectorize{v} \in S$ such that $\Pr{E} > 0$.
	For any measurable region $\region' \subseteq \complementOf{\boxRegion{S}}$ and any event $A \in \pointsetEvents$ it holds that
	\[
		\Pr{\currentPointset{t} \cap \region' \in A}[\currentPointset{t} \cap S; E]  
		= \gibbs[S^{\comp}][\currentPointset{t} \cap S] \circ \projection{\region'}^{-1}(A) .
	\]
	In particular, 
	\[
		(\omega, A) \mapsto \gibbs[S^{\comp}][\currentPointset{t}(\omega) \cap S] \circ \projection{\region'}^{-1}(A) \quad \text{$\omega \in \statespace$, $A \in \pointsetEvents$}
	\] 
	is a regular conditional distribution of $\currentPointset{t} \cap \region'$ given $\sigma(\currentPointset{t} \cap S)$ under $\PrSymbol[E]$.
\end{lemma}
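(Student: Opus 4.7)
The plan is to derive this statement from the hypothesized invariant via two reductions: one that decouples the conditioning from $\boxChosen{t}$, and one that passes from a projection onto $\complementOf{\boxRegion{S}}$ to a projection onto the smaller region $\subregion$.

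First I would argue that conditioning on $E$ has the same effect on $\currentPointset{t}$ as conditioning on $\set{\boxesToFix{t} = S}$. In \Cref{algo:sampling}, $\boxChosen{t}$ is drawn uniformly from $\boxesToFix{t}$ using fresh randomness independent of $\currentPointset{t}$; in particular, given $\boxesToFix{t} = S$, the variable $\boxChosen{t}$ is uniform on $S$ and independent of $\currentPointset{t}$. Hence for every $B \in \sigma(\currentPointset{t})$ one has $\Pr{B}[E] = \Pr{B}[\boxesToFix{t} = S]$, and this identity carries over to conditional expectations given the sub-$\sigma$-field $\sigma(\currentPointset{t} \cap \boxRegion{S})$, yielding
\[
	\E{\ind{\currentPointset{t} \cap \subregion \in A}}[\currentPointset{t} \cap \boxRegion{S}][E]
	= \E{\ind{\currentPointset{t} \cap \subregion \in A}}[\currentPointset{t} \cap \boxRegion{S}][\boxesToFix{t} = S].
\]
Next, since $\subregion \subseteq \complementOf{\boxRegion{S}}$ and the projection $\eta \mapsto \eta \cap \subregion$ is measurable, the set $A' = \set{\eta \in \pointsets[\complementOf{\boxRegion{S}}]}[\eta \cap \subregion \in A]$ lies in $\pointsetEvents[\complementOf{\boxRegion{S}}]$, and $\set{\currentPointset{t} \cap \subregion \in A} = \set{\currentPointset{t} \cap \complementOf{\boxRegion{S}} \in A'}$. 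Applying the hypothesis of the lemma to $A'$ and then unfolding the definition of $\gibbs[\cdot][][\subregion]$ as the image measure of $\gibbs[\cdot][][\complementOf{\boxRegion{S}}]$ under $\eta \mapsto \eta \cap \subregion$ gives
\[
	\E{\ind{\currentPointset{t} \cap \complementOf{\boxRegion{S}} \in A'}}[\currentPointset{t} \cap \boxRegion{S}][\boxesToFix{t} = S]
	= \gibbs[\activity_{\currentPointset{t} \cap \boxRegion{S}} \ind{\complementOf{\boxRegion{S}}}][A'][\complementOf{\boxRegion{S}}]
	= \gibbs[\activity_{\currentPointset{t} \cap \boxRegion{S}} \ind{\complementOf{\boxRegion{S}}}][A][\subregion].
\]
Concatenating the two displays proves the claimed identity.

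For the ``in particular'' assertion, the kernel $(\omega, A) \mapsto \gibbs[\activity_{\currentPointset{t}(\omega) \cap \boxRegion{S}} \ind{\complementOf{\boxRegion{S}}}][A][\subregion]$ depends on $\omega$ only through the $\sigma(\currentPointset{t} \cap \boxRegion{S})$-measurable variable $\currentPointset{t}(\omega) \cap \boxRegion{S}$, so combined with the identity just established it satisfies the defining properties of a regular conditional distribution (cf.\ \Cref{appendix:rcd}). The main obstacle I anticipate is the measure-theoretic bookkeeping in the first reduction: formally justifying that the independence of $\boxChosen{t}$ from $\currentPointset{t}$ given $\boxesToFix{t}$ transfers to the equality of conditional expectations under the two distinct conditional measures obtained by conditioning on $E$ and on $\set{\boxesToFix{t} = S}$. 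This should be routine given the machinery set up in \Cref{appendix:measures}, so the substantive content of the proof is the two-step reduction above.
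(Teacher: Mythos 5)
Your proposal is correct and follows essentially the same approach as the paper: decouple the conditioning on $\boxChosen{t}$ using the independence of $\boxChosen{t}$ from $\currentPointset{t}$ given $\boxesToFix{t}$, and reduce the projection to $\subregion$ to the hypothesis via the preimage event $A' = \preProjection[\complementOf{\boxRegion{S}}]{\subregion}[A]$. The paper performs these two reductions in the opposite order (projection first under $\PrSymbol[\set{\boxesToFix{t}=S}]$, then change of measure to $\PrSymbol[E]$ via \Cref{lemma:adding_conditions} part~\ref{lemma:adding_conditions:2}), and the step you flag as requiring measure-theoretic bookkeeping is precisely what \Cref{lemma:adding_conditions} part~\ref{lemma:adding_conditions:2} supplies, verified there by the observation $\E{\ind{\boxChosen{t} = \vectorize{v}}}[\currentPointset{t} \cap \boxRegion{S}][\boxesToFix{t} = S] = \E{\ind{\boxChosen{t} = \vectorize{v}}}[\currentPointset{t} \cap \boxRegion{S}, \currentPointset{t} \cap \subregion][\boxesToFix{t} = S]$.
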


\begin{proof}
	Fix some measurable region $\subregion \subseteq \complementOf{\boxRegion{S}}$ and note that  $\gibbs[S^{\comp}][\currentPointset{t} \cap S] \circ \projection{\region'}^{-1}$ is a probability distribution on $(\pointsets, \pointsetEvents)$.
	Thus, it suffices to show that  $\gibbs[S^{\comp}][\currentPointset{t} \cap S] \circ \projection{\region'}^{-1}(A)$ is also a version of the conditional expectation $\Pr{\currentPointset{t} \cap \region' \in A}[\currentPointset{t} \cap S; E]$ for all events $A \in \pointsetEvents$.
	By the assumptions of the lemma, we have
	\begin{align*}
		\Pr{\currentPointset{t} \cap \region' \in A}[\currentPointset{t} \cap S; \boxesToFix{t} = S] 
		&= \Pr{\currentPointset{t} \cap S^{\comp} \in \projection{\region'}^{-1}(A)}[\currentPointset{t} \cap S; \boxesToFix{t} = S] \\
		&= \gibbs[S^{\comp}][\currentPointset{t} \cap S] \circ \projection{\region'}^{-1}(A) .
	\end{align*}
	Next, we use \Cref{lemma:adding_conditions} part \ref{lemma:adding_conditions:2} to argue that this still holds if we change the probability measure from $\PrSymbol[\set{\boxesToFix{t} = S}]$ to $\PrSymbol[E]$.
	Note that, given $\boxesToFix{t} = S$, $\boxChosen{t}$ is chosen uniformly from $S$ independent of $\currentPointset{t}$. 
	Therefore, we have
	\begin{align*}
		\Pr{\boxChosen{t} = \vectorize{v}}[\currentPointset{t} \cap S; \boxesToFix{t} = S] 
		&= 
		\Pr{\boxChosen{t} = \vectorize{v}}[ \boxesToFix{t} = S] \\
		&= \Pr{\boxChosen{t} = \vectorize{v}}[\currentPointset{t} \cap S, \currentPointset{t} \cap \region';\boxesToFix{t} = S],
	\end{align*}
    and applying \Cref{lemma:adding_conditions} part \ref{lemma:adding_conditions:2} proves the claim.
\end{proof}

We use the next lemma to prove \Cref{lemma:invariant} in the case that $\bayesFilter{t} = 1$. 
\begin{lemma} \label{lemma:invariant_case1}
	Under the assumptions of \Cref{lemma:invariant_projection}, let $E_1 = \set{\boxesToFix{t} = S, \boxChosen{t} = \vectorize{v}, \bayesFilter{t} = 1}$ for some $S \in \powerset{\boxIds} \setminus \set{\emptyset}$ and $\vectorize{v} \in S$ such that $\Pr{E_1} > 0$, and set $R = S \setminus \vectorize{v}$.
	For all $A \in \pointsetEvents$
	it holds that
	\[
		\Pr{\currentPointset{t+1} \cap R^{\comp} \in A}[\currentPointset{t+1} \cap R; E_1] = \gibbs[R^{\comp}][\currentPointset{t+1} \cap R](A) .
	\]
\end{lemma}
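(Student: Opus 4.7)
The plan is to verify the claim by a direct density computation exploiting how the Bayes filter's probability of success is designed to cancel the natural bias. First I would observe that $R \cap \boxesToUpdate = \emptyset$: since $\vectorize{v} \in \boxesToUpdate$ and $\boxesToUpdate \setminus \{\vectorize{v}\} \subseteq \complementOf{S}$ by construction, we have $\boxRegion{R} \subseteq \complementOf{\boxRegion{\boxesToUpdate}}$. Consequently, on $E_1$ the update rule in \Cref{algo:sampling} gives $\currentPointset{t+1} \cap \boxRegion{R} = \currentPointset{t} \cap \boxRegion{R}$, while $\currentPointset{t+1} \cap \complementOf{\boxRegion{R}}$ consists of the fresh sample $\resampledPointset$ on $\boxRegion{\boxesToUpdate}$ together with the untouched $\currentPointset{t} \cap \boxRegion{H}$, where $H = \complementOf{S \cup \boxesToUpdate}$.

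To establish the invariant at time $t+1$, I would test the claim against arbitrary bounded measurable $g \colon \pointsets[\boxRegion{R}] \to \R$ and $h \colon \pointsets[\complementOf{\boxRegion{R}}] \to \R$ and expand $\E{g(\currentPointset{t+1} \cap \boxRegion{R})\, h(\currentPointset{t+1} \cap \complementOf{\boxRegion{R}})}[][E_1]$ via iterated conditioning with respect to $\PrSymbol[E]$. The computation has three ingredients: the conditional distribution of $\currentPointset{t}$ supplied by \Cref{lemma:invariant_projection}; the conditional probability $p$ of $\bayesFilter{t}=1$ given $\currentPointset{t}$, which by the spatial Markov property equals $\bayesFilterCorrection[S][\vectorize{v}][\gamma_S] \cdot \partitionFunction[\activity_{\gamma \cup \zeta}][\boxRegion{\boxesToUpdate}] / \partitionFunction[\activity_{\gamma_S \cup \zeta}][\boxRegion{\boxesToUpdate \setminus \{\vectorize{v}\}}]$ (with $\gamma = \currentPointset{t} \cap \boxRegion{R}$, $\gamma_v = \currentPointset{t} \cap \boxRegion{\{\vectorize{v}\}}$, $\gamma_S = \gamma \cup \gamma_v$, $\zeta = \currentPointset{t} \cap \boxRegion{H}$); and the density of $\resampledPointset$ under $\gibbs[\activity_{\gamma \cup \zeta} \ind{\boxRegion{\boxesToUpdate}}][][\boxRegion{\boxesToUpdate}]$, whose normalizer is exactly $\partitionFunction[\activity_{\gamma \cup \zeta}][\boxRegion{\boxesToUpdate}]$. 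The crucial cancellation is that this normalizer matches the numerator of $p$.

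After this cancellation, I would integrate out $\gamma_v$ and $\beta_0 = \currentPointset{t} \cap \boxRegion{\boxesToUpdate \setminus \{\vectorize{v}\}}$ against the Gibbs measure furnished by \Cref{lemma:invariant_projection}. By the projection identity for $\gibbs[\activityFunction][][\subregion]$ recorded in \Cref{secPrelim}, this integration produces a factor whose normalizer is precisely $\partitionFunction[\activity_{\gamma_S \cup \zeta}][\boxRegion{\boxesToUpdate \setminus \{\vectorize{v}\}}]$, which cancels the denominator of $p$. What remains is a joint density on $(\gamma, \zeta, \resampledPointset)$ that factors as a function of $\gamma$ alone times precisely the density of $\gibbs[\activity_{\gamma} \ind{\complementOf{\boxRegion{R}}}][][\complementOf{\boxRegion{R}}]$ evaluated on $\zeta \cup \resampledPointset$ (the measure decomposes naturally as its projection to $\boxRegion{H}$ combined with the conditional Gibbs kernel to $\boxRegion{\boxesToUpdate}$). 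The residual integral of $\bayesFilterCorrection[S][\vectorize{v}][\gamma_S]$ over $\gamma_v$ yields a function of $\gamma$ alone, using that $\bayesFilterCorrection[S][\vectorize{v}][\cdot]$ is $\pointsetEvents[\boxRegion{S}]$-measurable, so it absorbs into the normalization when we divide by $\Pr{\bayesFilter{t}=1}[E]$ to pass from $\PrSymbol[E]$ to $\PrSymbol[E_1]$.

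The main obstacle is the careful bookkeeping of nested partition functions and Gibbs measures: one has to repeatedly apply the spatial Markov property and the projection formula of \Cref{secPrelim} to align terms so that the two key cancellations (numerator of $p$ with the resample normalizer, and denominator of $p$ with the $\beta_0$-integral normalizer) become visible. The fact that $\bayesFilterCorrection[S][\vectorize{v}][\cdot]$ depends only on the configuration in $\boxRegion{S}$ is precisely what allows the final $\gamma_v$-integral to produce a function of $\gamma$ alone, which is what closes the invariant.
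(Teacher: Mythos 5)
Your proposal takes essentially the same route as the paper's proof: decompose $\currentPointset{t+1} \cap \complementOf{\boxRegion{R}}$ into the untouched part on $\boxRegion{H}$ and the freshly resampled $\resampledPointset$ on $\boxRegion{\boxesToUpdate}$, use the regular conditional distribution from \Cref{lemma:invariant_projection} to write down a joint density, and exploit exactly the two cancellations you identify (the resample normalizer $\partitionFunction[\activity_{\gamma \cup \zeta}][\boxRegion{\boxesToUpdate}]$ against the numerator of the Bayes-filter probability, and the projection normalizer $\partitionFunction[\activity_{\gamma_S \cup \zeta}][\boxRegion{\boxesToUpdate \setminus \set{\vectorize{v}}}]$ against its denominator). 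Two small imprecisions are worth flagging. First, the correct normalizing factor when passing from $\PrSymbol[E]$ to $\PrSymbol[E_1]$ is the conditional expectation $\E{\bayesFilter{t}}[\currentPointset{t} \cap \boxRegion{S}][E]$, which is a $\sigma(\currentPointset{t}\cap\boxRegion{S})$-measurable random variable rather than the scalar $\Pr{\bayesFilter{t}=1 \mid E}$ you write; one needs \Cref{lemma:conditioning_on_event} here, and the cancellation of $\bayesFilterCorrection[S][\vectorize{v}][\currentPointset{t}\cap\boxRegion{S}]$ and $\partitionFunction[\activity_{\currentPointset{t}\cap\boxRegion{R}}][\complementOf{\boxRegion{R}}]/\partitionFunction[\activity_{\currentPointset{t}\cap\boxRegion{S}}][\complementOf{\boxRegion{S}}]$ happens between this numerator and denominator. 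Second, \Cref{lemma:invariant_projection} only furnishes the conditional distribution of $\currentPointset{t} \cap \complementOf{\boxRegion{S}}$ \emph{given} $\currentPointset{t}\cap\boxRegion{S}$, so one cannot literally integrate out $\gamma_v$ against it as you suggest: $\gamma_v$ sits inside the conditioning. The paper sidesteps this by keeping $\sigma(\currentPointset{t}\cap\boxRegion{S})$ as the conditioning $\sigma$-field throughout the computation, observing at the end that the resulting expression is $\sigma(\currentPointset{t}\cap\boxRegion{R})$-measurable (this is where the restricted dependence of $\bayesFilterCorrection$ and the two cancellations do their work), and only then replacing the conditioning by the tower property. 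Your plan would succeed if rearranged in this order.
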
 

\begin{proof} 
    An example that might help to keep track of the relevant regions throughout this proof is given in \Cref{fig:regions_F=1}. 
    \begin{figure}[h]
	   \centering
	   \includegraphics[width=0.7\textwidth]{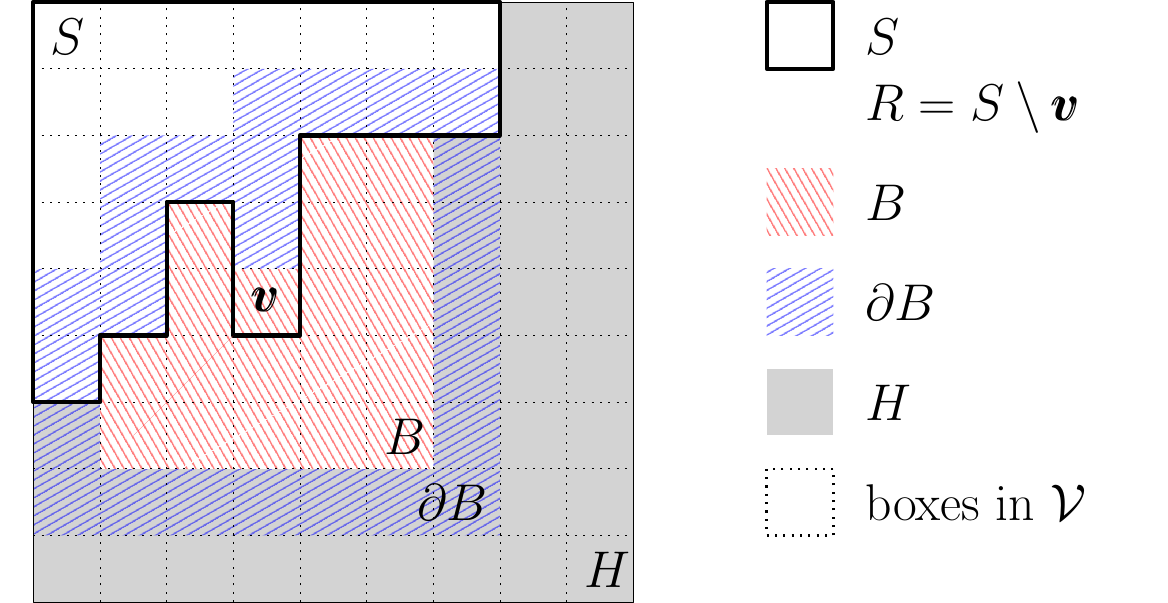}
	   \caption{Example for the regions considered throughout the proof of \Cref{lemma:invariant_case1} for $\updateRadius = 2$.}
	   \label{fig:regions_F=1}
    \end{figure} 

	Let $E = \set{\boxesToFix{t} = S, \boxChosen{t} = \vectorize{v}}$ for $S$ and $\vectorize{v}$ as in the definition of $E_1$, and set $\boxesToUpdate = \boxesToUpdate[S][\vectorize{v}][\updateRadius]$ and $H = \boxIds \setminus (S \cup \boxesToUpdate)$.
    Note that it suffices to consider $A \in \pointsetEvents_{R^{\comp}}$.
    Further, observe that $R^{\comp}$ is the disjoint union of $H$ and $\boxesToUpdate$.
    Hence, by \Cref{lemma:pi_system}, it suffices to prove our claim for events of the form $\{\currentPointset{t+1} \cap H \in A_{H}, \currentPointset{t+1} \cap \boxesToUpdate \in A_{\boxesToUpdate}\}$ for pairs $A_H \in \pointsetEvents[H]$, $A_{\boxesToUpdate} \in \pointsetEvents[\boxesToUpdate]$.
    Formally, we need to show
    \begin{align}
        \Pr{\currentPointset{t+1} \cap H \in A_H, \currentPointset{t+1} \cap \boxesToUpdate \in A_{\boxesToUpdate}}[\currentPointset{t+1} \cap S; E_1] = \gibbs[R^{\comp}][\currentPointset{t} \cap R](\projection{H}^{-1}(A_H) \cap \projection{\boxesToUpdate}^{-1}(A_{\boxesToUpdate})). \label{lemma:invariant:eq:4}
    \end{align}
	
    To establish \eqref{lemma:invariant:eq:4}, we derive a suitable version of the conditional expectation 
    \begin{align}
        \Pr{\currentPointset{t} \cap H \in A_H, \currentPointset{t+1} \cap \boxesToUpdate \in A_{\boxesToUpdate}, \bayesFilter{t} = 1}[\currentPointset{t} \cap S; E]. \label{lemma:invariant:eq6}
    \end{align}
    Our first step is to obtain an expression for $\Pr{\currentPointset{t+1} \cap \boxesToUpdate \in A_{\boxesToUpdate},\bayesFilter{t} = 1}[\currentPointset{t} \cap (\boxesToUpdate \setminus \vectorize{v})^{\comp}; E]$.
    This is give by the following claim.
    \begin{claim} \label{claim:invariant_case1:step1}
        For
        \[
            g: \omega \mapsto \gibbs[\boxesToUpdate][\currentPointset{t}(\omega) \cap \boxesToUpdate^{\comp}](A_{\boxesToUpdate})
            \,\text{ and }\\
            h: \omega \mapsto \bayesFilterCorrection[S][\vectorize{v}][\currentPointset{t}(\omega) \cap S] \cdot
        		  \frac{
        			 \partitionFunction[\boxesToUpdate][\currentPointset{t}(\omega) \cap \partial \boxesToUpdate] 
        		}{ 
        			 \partitionFunction[\boxesToUpdate \setminus \vectorize{v}][\currentPointset{t}(\omega) \cap (\partial \boxesToUpdate \cup \vectorize{v})]
        		} 
        \]
        it holds that
    	\[
    		\Pr{\currentPointset{t+1} \cap \boxesToUpdate \in A_{\boxesToUpdate}, \bayesFilter{t} = 1}[\currentPointset{t} \cap (\boxesToUpdate \setminus \vectorize{v})^{\comp}; E](\omega) = g(\omega) \cdot h(\omega)
    	\]
	    for $\PrSymbol[E]$-almost all $\omega \in \statespace$.
    \end{claim}
    
	Our next step is to derive an expression for the conditional expectation in \eqref{lemma:invariant:eq6} for every $A_{H} \in \pointsetEvents[H]$.
	To this end, note that by \Cref{claim:invariant_case1:step1} we have
	\begin{align}
		&\Pr{\currentPointset{t} \cap H \in A_H, \currentPointset{t+1} \cap \boxesToUpdate \in A_{\boxesToUpdate}, \bayesFilter{t} = 1}[\currentPointset{t} \cap S; E] \nonumber\\
		&\quad\quad= \E{\ind{\currentPointset{t} \cap H \in A_H} \cdot \Pr{\currentPointset{t+1} \cap \boxesToUpdate \in A_{\boxesToUpdate}, \bayesFilter{t} = 1}[\currentPointset{t} \cap (\boxesToUpdate \setminus \vectorize{v})^{\comp}; E]}[\currentPointset{t} \cap S][E] \nonumber\\
		&\quad\quad= \E{\ind{\currentPointset{t} \cap H \in A_H} \cdot g \cdot h}[\currentPointset{t} \cap S][E]  \label{lemma:invariant:eq:1},
	\end{align}
	where the first equality comes from the fact that $\sigma(\currentPointset{t} \cap S) \subseteq \sigma(\currentPointset{t} \cap (\boxesToUpdate \setminus \vectorize{v})^{\comp})$.
    Our goal is now to compute \eqref{lemma:invariant:eq:1} by using \Cref{thm:integration_RCP}, which tells us that the conditional expectation can be calculated by integrating $g \cdot h$ over $A_{H}$ against a regular conditional distribution for $\currentPointset{t} \cap H$ given $\sigma(\currentPointset{t} \cap S)$ under the measure $\PrSymbol[E]$.
	By \Cref{lemma:invariant_projection}, the map
	\[
		(\omega, A) \mapsto 
		\gibbs[S^{\comp}][\currentPointset{t}(\omega) \cap S] \circ \projection{H}^{-1}(A)  
		\quad \text{ for $\omega \in \statespace$, $A \in \pointsetEvents$}
	\] 
	is such a regular conditional distribution.
    We obtain the following claim.
    \begin{claim} \label{claim:invariant_case1:step2}
        It holds that 
        \begin{align*}
    		&\Pr{\currentPointset{t} \cap H \in A_H, \currentPointset{t+1} \cap \boxesToUpdate \in A_{\boxesToUpdate}, \bayesFilter{t} = 1}[\currentPointset{t} \cap S; E] \\
    		&\quad\quad= \bayesFilterCorrection[S][\vectorize{v}][\currentPointset{t} \cap S] \cdot \frac{
    			\partitionFunction[R^{\comp}][\currentPointset{t} \cap R] 
    		}{
    			\partitionFunction[S^{\comp}][\currentPointset{t} \cap S]
    		} \cdot \gibbs[R^{\comp}][\currentPointset{t} \cap R](\projection{H}^{-1}(A_H) \cap \projection{\boxesToUpdate}^{-1}(A_{\boxesToUpdate}))
	\end{align*}
    \end{claim}
    
	Given \Cref{claim:invariant_case1:step2}, we proceed by using \Cref{lemma:conditioning_on_event} to switch from $\PrSymbol[E]$ to $\PrSymbol[E_1]$.
	To this end, note that
	\begin{align*}
		\Pr{\bayesFilter{t} = 1}[\currentPointset{t} \cap S; E]
		&= \Pr{\currentPointset{t} \cap H \in \pointsets_{H}, \currentPointset{t+1} \cap \boxesToUpdate \in \pointsets_{\boxesToUpdate}, \bayesFilter{t} = 1}[\currentPointset{t} \cap S; E] \\
		&= \bayesFilterCorrection[S][\vectorize{v}][\currentPointset{t} \cap S] \cdot \frac{
			\partitionFunction[R^{\comp}][\currentPointset{t} \cap R] 
		}{
			\partitionFunction[S^{\comp}][\currentPointset{t} \cap S]
		}.
	\end{align*}
	Thus, by \Cref{lemma:conditioning_on_event} it holds that
	\begin{align*}
		&\Pr{\currentPointset{t} \cap H \in A_H, \currentPointset{t+1} \cap \boxesToUpdate \in A_{\boxesToUpdate}}[\currentPointset{t} \cap S; E_1] \\
		& \quad \quad = \frac{\Pr{\currentPointset{t} \cap H \in A_H, \currentPointset{t+1} \cap \boxesToUpdate \in A_{\boxesToUpdate}, \bayesFilter{t} = 1}[\currentPointset{t} \cap S; E]}{\Pr{\bayesFilter{t} = 1}[\currentPointset{t} \cap S; E]} \\
		& \quad \quad = \gibbs[R^{\comp}][\currentPointset{t} \cap R](\projection{H}^{-1}(A_H) \cap \projection{\boxesToUpdate}^{-1}(A_{\boxesToUpdate})) .
	\end{align*} 
	As the right-hand side is $\sigma(\currentPointset{t} \cap R)$-measurable and $\sigma(\currentPointset{t} \cap R) \subseteq \sigma(\currentPointset{t} \cap S)$ since $R \subseteq S$, we have
    \[
        \Pr{\currentPointset{t} \cap H \in A_H, \currentPointset{t+1} \cap \boxRegion{\boxesToUpdate} \in A_{\boxesToUpdate}}[\currentPointset{t} \cap R; E_1] 
        = \gibbs[R^{\comp}][\currentPointset{t} \cap R](\projection{H}^{-1}(A_H) \cap \projection{\boxesToUpdate}^{-1}(A_{\boxesToUpdate}))
    \]
    Moreover, given $E_1$, it holds that $\currentPointset{t+1} \cap H = \currentPointset{t} \cap H$ and $\currentPointset{t+1} \cap R = \currentPointset{t} \cap R$, which proves \eqref{lemma:invariant:eq:4}.
    Finally, applying \Cref{lemma:pi_system} as discussed earlier proves the lemma.
\end{proof}

Before we proceed with our main argument, we prove \Cref{claim:invariant_case1:step1} and \Cref{claim:invariant_case1:step2}.
\begin{proof}[Proof of \Cref{claim:invariant_case1:step1}]
    Note that given the event $E_1$, the point set $\currentPointset{t+1} \cap \boxesToUpdate$ is sampled from $\gibbs[\boxesToUpdate][\currentPointset{t} \cap \boxesToUpdate^{\comp}]$.
	Therefore we have for all $A_{\boxesToUpdate} \in \pointsetEvents[\boxesToUpdate]$ that
	\[
		\Pr{\currentPointset{t+1} \cap \boxesToUpdate \in A_{\boxesToUpdate}}[\currentPointset{t} \cap \boxesToUpdate^{\comp}; E_1] 
		= \gibbs[\boxesToUpdate][\currentPointset{t} \cap \boxesToUpdate^{\comp}](A_{\boxesToUpdate}) = g .
	\]
	Since the above is true independently of the partial configuration $\currentPointset{t} \cap \vectorize{v}$, 
 we obtain
	\begin{align*}	
		\Pr{\currentPointset{t+1} \cap \boxesToUpdate \in A_{\boxesToUpdate}}[\currentPointset{t} \cap (\boxesToUpdate \setminus \vectorize{v})^{\comp}; E_1]
		&= \Pr{\currentPointset{t+1} \cap \boxesToUpdate \in A_{\boxesToUpdate}}[\currentPointset{t} \cap \boxesToUpdate^{\comp}, \currentPointset{t} \cap \vectorize{v}; E_1] \\
        &= \Pr{\currentPointset{t+1} \cap \boxesToUpdate \in A_{\boxesToUpdate}}[\currentPointset{t} \cap \boxesToUpdate^{\comp}; E_1] \\
		&= g 
	\end{align*}
	Moreover, given $E$ and $\currentPointset{t} \cap \boxesToUpdate^{\comp}$, $\bayesFilter{t}$ is drawn as a Bernoulli random variable with success probability
	\[
		\bayesFilterCorrection[S][\vectorize{v}][\currentPointset{t}] \cdot
		\frac{
			\partitionFunction[\boxesToUpdate][\currentPointset{t} \cap \partial \boxesToUpdate]
		}{ 
			\partitionFunction[\boxesToUpdate \setminus \vectorize{v}][\currentPointset{t} \cap (\partial \boxesToUpdate \cup \vectorize{v})]
		} ,
	\]
	where $\bayesFilterCorrection$ is a Bayes filter correction as in \Cref{def:bayes_filter}.
	Thus, it holds that
	\begin{align*}
		\Pr{\bayesFilter{t} = 1}[\currentPointset{t} \cap (\boxesToUpdate \setminus \vectorize{v})^{\comp};E] 
		&= \bayesFilterCorrection[S][\vectorize{v}][\currentPointset{t}] \cdot 
		\frac{
			\partitionFunction[\boxesToUpdate][\currentPointset{t} \cap \partial \boxesToUpdate] 
		}{ 
			\partitionFunction[\boxesToUpdate \setminus \vectorize{v}][\currentPointset{t} \cap (\partial \boxesToUpdate \cup \vectorize{v})]
		} \\
		&=  \bayesFilterCorrection[S][\vectorize{v}][\currentPointset{t} \cap S] \cdot
		\frac{
			 \partitionFunction[\boxesToUpdate][\currentPointset{t} \cap \partial \boxesToUpdate] 
		}{ 
			 \partitionFunction[\boxesToUpdate \setminus \vectorize{v}][\currentPointset{t} \cap (\partial \boxesToUpdate \cup \vectorize{v})]
		} ,
	\end{align*}
	where the second equality holds since $\bayesFilterCorrection[S][\vectorize{v}][\cdot] = \bayesFilterCorrection[S][\vectorize{v}][\cdot \cap S]$ by definition.
	We note that the right-hand side of the last equality is precisely $h$.
	By \Cref{lemma:conditioning_on_event}, we then have
	\[
		\Pr{\currentPointset{t+1} \cap \boxesToUpdate \in A_{\boxesToUpdate}, \bayesFilter{t} = 1}[\currentPointset{t} \cap (\boxesToUpdate \setminus \vectorize{v})^{\comp}; E](\omega) = g(\omega) \cdot h(\omega)
	\]
	for $\PrSymbol[E]$-almost all $\omega \in \statespace$.
\end{proof}

\begin{proof}[Proof of \Cref{claim:invariant_case1:step2}]
   Recall that by \Cref{claim:invariant_case1:step1} and the fact that $\sigma(\currentPointset{t} \cap S) \subseteq \sigma(\currentPointset{t} \cap (\boxesToUpdate \setminus \vectorize{v})^{\comp})$ we have
	\begin{align*}
		&\Pr{\currentPointset{t} \cap H \in A_H, \currentPointset{t+1} \cap \boxesToUpdate \in A_{\boxesToUpdate}, \bayesFilter{t} = 1}[\currentPointset{t} \cap S; E] \nonumber\\
		&\quad\quad= \E{\ind{\currentPointset{t} \cap H \in A_H} \cdot \Pr{\currentPointset{t+1} \cap \boxesToUpdate \in A_{\boxesToUpdate}, \bayesFilter{t} = 1}[\currentPointset{t} \cap (\boxesToUpdate \setminus \vectorize{v})^{\comp}; E]}[\currentPointset{t} \cap S][E] \nonumber\\
		&\quad\quad= \E{\ind{\currentPointset{t} \cap H \in A_H} \cdot g \cdot h}[\currentPointset{t} \cap S][E]  .
	\end{align*}
    Next, we define functions $\hat{g}, \hat{h}: \pointsets[H] \times \pointsets[S] \to \R_{\ge 0}$ via
	\begin{align*}
		\hat{g}(\eta_1, \eta_2) 
		&= \gibbs[\boxesToUpdate][(\eta_1 \cup \eta_2) \cap \boxesToUpdate^{\comp}](A_{\boxesToUpdate})\\ 
		\hat{h}(\eta_1, \eta_2) 
		&= \bayesFilterCorrection[S][\vectorize{v}][\eta_2] \cdot \frac{
			\partitionFunction[\boxesToUpdate][(\eta_1 \cup \eta_2) \cap \partial \boxesToUpdate] 
		}{ 
			\partitionFunction[\boxesToUpdate \setminus \vectorize{v}][(\eta_1 \cup \eta_2) \cap (\partial \boxesToUpdate \cup \vectorize{v})]
		} ,
	\end{align*}
	and observe that for all $\omega \in \statespace$ it holds that $\hat{g}(\currentPointset{t}(\omega) \cap H, \currentPointset{t}(\omega) \cap S) = g(\omega)$ and $\hat{h}(\currentPointset{t}(\omega) \cap H, \currentPointset{t}(\omega) \cap S) = h(\omega)$.
	Thus, we have
    \begin{align*}
        &\E{\ind{\currentPointset{t} \cap H \in A_H} \cdot g \cdot h}[\currentPointset{t} \cap S][E] \\
		&= \E{\ind{\currentPointset{t} \cap H \in A_H} \cdot \hat{g}(\currentPointset{t} \cap H, \currentPointset{t} \cap S) \cdot \hat{h}(\currentPointset{t} \cap H, \currentPointset{t} \cap S)}[\currentPointset{t} \cap S][E] .
    \end{align*}
    Further, by \Cref{lemma:invariant_projection}, the map
	\[
		(\omega, A) \mapsto 
		\gibbs[S^{\comp}][\currentPointset{t}(\omega) \cap S] \circ \projection{H}^{-1}(A)  
		\quad \text{ for $\omega \in \statespace$, $A \in \pointsetEvents$}
	\] 
	is a regular conditional distribution for $\currentPointset{t} \cap H$ given $\currentPointset{t} \cap S$ under the measure $\PrSymbol_{E}$.
    Using \Cref{thm:integration_RCP} then yields
	\begin{align}
		&\E{\ind{\currentPointset{t} \cap H \in A_H} \cdot g \cdot h}[\currentPointset{t} \cap S][E] (\omega) \nonumber\\
		&= \int_{A_H} \hat{g}(\eta, \currentPointset{t}(\omega) \cap S) \cdot \hat{h}(\eta, \currentPointset{t}(\omega) \cap S) \,\gibbs[S^{\comp}][\currentPointset{t}(\omega) \cap S] \circ \projection{H}^{-1} (\intD \eta) \nonumber\\
		&= \frac{1}{\partitionFunction[S^{\comp}][\currentPointset{t}(\omega) \cap S]} \sum_{n \ge 0} \frac{1}{n!} \int_{\boxRegion{H}^n} \ind{\setFromTuple{\vectorize{x}} \in A_H} \cdot \hat{g}(\setFromTuple{\vectorize{x}}, \currentPointset{t}(\omega) \cap S) \cdot  \hat{h}(\setFromTuple{\vectorize{x}}, \currentPointset{t}(\omega) \cap S)\nonumber\\
		& \hspace*{10em}\cdot (\activity_{\currentPointset{t}(\omega) \cap S})^{\vectorize{x}} \cdot \eulerE^{-\hamiltonian[\vectorize{x}]} \cdot \partitionFunction[S^{\comp} \cap H^{\comp}][\setFromTuple{\vectorize{x}} \cup (\currentPointset{t}(\omega) \cap S)] \intD \vectorize{x} \nonumber\\
		&= \frac{1}{\partitionFunction[S^{\comp}][\currentPointset{t}(\omega) \cap S]} \sum_{n \ge 0} \frac{1}{n!} \int_{\boxRegion{H}^n} \ind{\setFromTuple{\vectorize{x}} \in A_H} \cdot \hat{g}(\setFromTuple{\vectorize{x}}, \currentPointset{t}(\omega) \cap S) \cdot  \hat{h}(\setFromTuple{\vectorize{x}}, \currentPointset{t}(\omega) \cap S) \nonumber\\
		& \hspace*{10em}\cdot (\activity_{\currentPointset{t}(\omega) \cap S})^{\vectorize{x}} \cdot \eulerE^{-\hamiltonian[\vectorize{x}]} \cdot \partitionFunction[\boxesToUpdate \setminus \vectorize{v}][\setFromTuple{\vectorize{x}} \cup (\currentPointset{t}(\omega) \cap S)] \intD \vectorize{x}. \label{lemma:invariant:eq:2}
	\end{align}
	
	We proceed by simplifying (\ref{lemma:invariant:eq:2}). 
	Observe that 
	\begin{align*}
		\hat{h}(\setFromTuple{\vectorize{x}}, \currentPointset{t} \cap S) 
		&= \bayesFilterCorrection[S][\vectorize{v}][\currentPointset{t} \cap S] \cdot \frac{
			\partitionFunction[\boxesToUpdate][(\setFromTuple{\vectorize{x}} \cup (\currentPointset{t} \cap S)) \cap \partial \boxesToUpdate] 
		}{ 
			\partitionFunction[\boxesToUpdate \setminus \vectorize{v}][(\setFromTuple{\vectorize{x}} \cup (\currentPointset{t} \cap S)) \cap (\partial \boxesToUpdate \cup \vectorize{v})]
		} \\
		&= \bayesFilterCorrection[S][\vectorize{v}][\currentPointset{t} \cap S] \cdot \frac{
			\partitionFunction[\boxesToUpdate][(\setFromTuple{\vectorize{x}} \cup (\currentPointset{t} \cap S)) \cap \boxesToUpdate^{\comp}]
		}{ 
			\partitionFunction[\boxesToUpdate \setminus \vectorize{v}][(\setFromTuple{\vectorize{x}} \cup (\currentPointset{t} \cap S)) \cap (\boxesToUpdate \setminus \vectorize{v})^{\comp}]
		} \\
		&= \bayesFilterCorrection[S][\vectorize{v}][\currentPointset{t} \cap S] \cdot \frac{
			\partitionFunction[\boxesToUpdate][\setFromTuple{\vectorize{x}} \cup (\currentPointset{t} \cap R)] 
		}{ 
			\partitionFunction[\boxesToUpdate \setminus \vectorize{v}][\setFromTuple{\vectorize{x}} \cup (\currentPointset{t} \cap S)]
		}.
	\end{align*}
	Here, the second equality follows from \Cref{lemma:spatial_markov} and the fact that, for all $\eta \in \pointsets[\region]$, it holds that the distance between $\boxRegion{\boxesToUpdate}^{\comp} \setminus \boxRegion{\partial \boxesToUpdate}$ and $\boxRegion{\boxesToUpdate}$, and the distance between $\boxRegion{\boxesToUpdate \setminus \vectorize{v}}^{\comp} \setminus \boxRegion{\partial \boxesToUpdate \cup \vectorize{v}}$ and $\boxRegion{\boxesToUpdate \setminus \vectorize{v}}$ are at least $\range$ (the range of the potential).
	Moreover, the last equality follows from the fact that $S \cap \boxesToUpdate^{\comp} = S \setminus \vectorize{v} = R$, $\setFromTuple{\vectorize{x}} \subseteq \boxRegion{H} \subseteq \boxRegion{B}^{\comp} \subseteq \boxRegion{B \setminus \vectorize{v}}^{\comp}$ and $S \subseteq (\boxesToUpdate \setminus \vectorize{v})^{\comp}$. 
	Further, note that
	\begin{align*}
		\hat{g}(\setFromTuple{\vectorize{x}}, \currentPointset{t} \cap S) 
		&= \gibbs[\boxesToUpdate][(\setFromTuple{\vectorize{x}} \cup (\currentPointset{t} \cap S)) \cap \boxesToUpdate^{\comp}](A_{\boxesToUpdate}) 
        = \gibbs[\boxesToUpdate][\setFromTuple{\vectorize{x}} \cup (\currentPointset{t} \cap R)](A_{\boxesToUpdate}) \\
		&= \frac{\sum_{m \ge 0} \frac{1}{m!} \int_{\boxRegion{\boxesToUpdate}^m} \ind{\setFromTuple{\vectorize{y}} \in A_{\boxesToUpdate}} \cdot (\activity_{\setFromTuple{\vectorize{x}} \cup (\currentPointset{t} \cap R)})^{\vectorize{y}} \cdot \eulerE^{- \hamiltonian[\vectorize{y}]} \intD \vectorize{y}}{\partitionFunction[\boxesToUpdate][\setFromTuple{\vectorize{x}} \cup (\currentPointset{t} \cap R)] } .
	\end{align*}
	Substituting both back into (\ref{lemma:invariant:eq:2}) and canceling $\partitionFunction[\boxesToUpdate][\setFromTuple{\vectorize{x}} \cup (\currentPointset{t} \cap R)] $ and $\partitionFunction[\boxesToUpdate \setminus \vectorize{v}][\setFromTuple{\vectorize{x}} \cup (\currentPointset{t} \cap S)]$ yields
	\begin{align*}
		&\Pr{\currentPointset{t} \cap H \in A_H, \currentPointset{t+1} \cap \boxesToUpdate \in A_{\boxesToUpdate}, \bayesFilter{t} = 1}[\currentPointset{t} \cap S; E] \\
		&\hspace*{5em}= 
		\frac{
			\bayesFilterCorrection[S][\vectorize{v}][\currentPointset{t} \cap S]
		}{
			\partitionFunction[S^{\comp}][\currentPointset{t} \cap S]
		}
		\sum_{n \ge 0} \frac{1}{n!} \int_{\boxRegion{H}^n} \ind{\setFromTuple{\vectorize{x}} \in A_H} \cdot (\activity_{\currentPointset{t} \cap S})^{\vectorize{x}} \cdot \eulerE^{-\hamiltonian[\vectorize{x}]} \\
		&\hspace*{10em}\cdot \left[\sum_{m \ge 0} \frac{1}{m!} \int_{\boxRegion{\boxesToUpdate}^m} \ind{\setFromTuple{\vectorize{y}} \in A_{\boxesToUpdate}} \cdot (\activity_{\setFromTuple{\vectorize{x}} \cup (\currentPointset{t} \cap R)})^{\vectorize{y}} \cdot \eulerE^{- \hamiltonian[\vectorize{y}]} \intD \vectorize{y}\right] \intD \vectorize{x} .
	\end{align*}
	Moreover, note that $\dist{\boxRegion{\vectorize{v}}}{\boxRegion{H}} \ge \range$. 
	Thus, it holds for $n \in \N_{0}$ and $\vectorize{x} = (x_1, \dots, x_n) \in \boxRegion{H}^n$ that
	\begin{align*}
		(\activity_{\currentPointset{t} \cap S})^{\vectorize{x}} 
		&= \activity^{n} \eulerE^{- \sum_{i=1}^{n} \sum_{z \in \currentPointset{t} \cap S} \potential[x_i][z]} \\
		&= \activity^{n} \eulerE^{- \sum_{i=1}^{n} \sum_{z \in \currentPointset{t} \cap R} \potential[x_i][z]} \\
		&= (\activity_{\currentPointset{t} \cap R})^{\vectorize{x}}.
	\end{align*}
	Consequently, we have
	\begin{align*}
		&\Pr{\currentPointset{t} \cap H \in A_H, \currentPointset{t+1} \cap \boxesToUpdate \in A_{\boxesToUpdate}, \bayesFilter{t} = 1}[\currentPointset{t} \cap S; E] \\
		&\quad= \frac{
			\bayesFilterCorrection[S][\vectorize{v}][\currentPointset{t} \cap S]
		}{
			\partitionFunction[S^{\comp}][\currentPointset{t} \cap S]
		} 
		\sum_{n \ge 0} \frac{1}{n!} \int_{\boxRegion{H}^n} \ind{\setFromTuple{\vectorize{x}} \in A_H} \cdot (\activity_{\currentPointset{t} \cap R})^{\vectorize{x}} \cdot \eulerE^{-\hamiltonian[\vectorize{x}]} \\
		&\hspace*{10em}\cdot \left[\sum_{m \ge 0} \frac{1}{m!} \int_{\boxRegion{\boxesToUpdate}^m} \ind{\setFromTuple{\vectorize{y}} \in A_{\boxesToUpdate}} \cdot (\activity_{\setFromTuple{\vectorize{x}} \cup (\currentPointset{t} \cap R)})^{\vectorize{y}} \cdot \eulerE^{- \hamiltonian[\vectorize{y}]} \intD \vectorize{y}\right] \intD \vectorize{x}.
	\end{align*}
	Next, note that, since $H \cup \boxesToUpdate = \complementOf{R}$, we it holds that
	\begin{align*}
		&\sum_{n \ge 0} \frac{1}{n!} \int_{\boxRegion{H}^n} \ind{\setFromTuple{\vectorize{x}} \in A_H} \cdot (\activity_{\currentPointset{t} \cap R})^{\vectorize{x}} \eulerE^{-\hamiltonian[\vectorize{x}]} \left[\sum_{m \ge 0} \frac{1}{m!} \int_{\boxRegion{\boxesToUpdate}^m} \ind{\setFromTuple{\vectorize{y}} \in A_{\boxesToUpdate}} (\activity_{\setFromTuple{\vectorize{x}} \cup (\currentPointset{t} \cap R)})^{\vectorize{y}} \eulerE^{- \hamiltonian[\vectorize{y}]} \intD \vectorize{y}\right] \intD \vectorize{x} \\
		&\quad\quad = \partitionFunction[R^{\comp}][\currentPointset{t} \cap R] 
		\cdot \gibbs[R^{\comp}][\currentPointset{t} \cap R](\projection{H}^{-1}(A_H) \cap \projection{\boxesToUpdate}^{-1}(A_{\boxesToUpdate}))
	\end{align*}
	This finally yields the expression 
	\begin{align*}
		&\Pr{\currentPointset{t} \cap H \in A_H, \currentPointset{t+1} \cap \boxesToUpdate \in A_{\boxesToUpdate}, \bayesFilter{t} = 1}[\currentPointset{t} \cap S; E] \\
		&\quad\quad= \bayesFilterCorrection[S][\vectorize{v}][\currentPointset{t} \cap S] \cdot \frac{
			\partitionFunction[R^{\comp}][\currentPointset{t} \cap R] 
		}{
			\partitionFunction[S^{\comp}][\currentPointset{t} \cap S]
		} \cdot \gibbs[R^{\comp}][\currentPointset{t} \cap R](\projection{H}^{-1}(A_H) \cap \projection{\boxesToUpdate}^{-1}(A_{\boxesToUpdate}))
	\end{align*}
\end{proof}

We now continue with our main argument.
The next lemma is the counterpart of \Cref{lemma:invariant_case1} for the case $\bayesFilter{t} = 0$.

\begin{lemma} \label{lemma:invariant_case0}
	Under the assumptions of \Cref{lemma:invariant_projection}, let $E_0 = \set{\boxesToFix{t} = S, \boxChosen{t} = \vectorize{v}, \bayesFilter{t} = 0}$ for some $S \in \powerset{\boxIds} \setminus \set{\emptyset}$ and $\vectorize{v} \in S$ such that $\Pr{E_0} > 0$, and set $\boxesToUpdate = \boxesToUpdate[S][\vectorize{v}][\updateRadius]$ and $R = S \cup \partial \boxesToUpdate$.
	For all $A \in \pointsetEvents$
	it holds that
	\[
		\Pr{\currentPointset{t+1} \cap R^{\comp} \in A}[\currentPointset{t+1} \cap R; E_0] 
		= \gibbs[R^{\comp}][\currentPointset{t+1} \cap R](A) . 
	\]
\end{lemma}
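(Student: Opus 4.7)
The plan is to parallel the proof of \Cref{lemma:invariant_case1}, exploiting the fact that in the case $\bayesFilter{t} = 0$ the algorithm leaves the configuration untouched, which makes the argument considerably shorter. Let $E = \set{\boxesToFix{t} = S, \boxChosen{t} = \vectorize{v}}$, so that $E_0 = E \cap \set{\bayesFilter{t} = 0}$. Since on $E_0$ we have $\currentPointset{t+1} = \currentPointset{t}$, the target identity is equivalent to
\[
    \E{\ind{\currentPointset{t} \cap \complementOf{\boxRegion{R}} \in A}}[\currentPointset{t} \cap \boxRegion{R}][E_0] = \gibbs[\activity_{\currentPointset{t} \cap \boxRegion{R}} \ind{\complementOf{\boxRegion{R}}}][A][\complementOf{\boxRegion{R}}].
\]

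First I would work under $\PrSymbol[E]$ and combine \Cref{lemma:invariant_projection} with the spatial Markov property of Gibbs point processes (the same \Cref{lemma:spatial_markov} used in the proof of \Cref{lemma:invariant_case1}) to establish that, already under $\PrSymbol[E]$, the conditional distribution of $\currentPointset{t} \cap \complementOf{\boxRegion{R}}$ given $\currentPointset{t} \cap \boxRegion{R}$ equals $\gibbs[\activity_{\currentPointset{t} \cap \boxRegion{R}} \ind{\complementOf{\boxRegion{R}}}][][\complementOf{\boxRegion{R}}]$. Concretely, \Cref{lemma:invariant_projection} gives the distribution of $\currentPointset{t} \cap \complementOf{\boxRegion{S}}$ given $\currentPointset{t} \cap \boxRegion{S}$ under $\PrSymbol[E]$, and conditioning further on the part of $\currentPointset{t}$ lying in $\boxRegion{R \setminus S}$ updates the activity from $\activity_{\currentPointset{t} \cap \boxRegion{S}}$ to $\activity_{\currentPointset{t} \cap \boxRegion{R}}$ by spatial Markov, since the finite range of $\potential$ restricts the dependence of the Gibbs density on $\boxRegion{H}$ to the boundary boxes neighbouring $\boxRegion{H}$.

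Next I would pass from $\PrSymbol[E]$ to $\PrSymbol[E_0]$ via \Cref{lemma:conditioning_on_event}, obtaining
\[
    \E{\ind{\currentPointset{t} \cap \complementOf{\boxRegion{R}} \in A}}[\currentPointset{t} \cap \boxRegion{R}][E_0] = \frac{\E{\ind{\currentPointset{t} \cap \complementOf{\boxRegion{R}} \in A} \cdot (1 - \bayesFilter{t})}[\currentPointset{t} \cap \boxRegion{R}][E]}{\E{1 - \bayesFilter{t}}[\currentPointset{t} \cap \boxRegion{R}][E]}.
\]
Using the tower property together with the fact that $\bayesFilter{t}$ is conditionally Bernoulli with success probability $p(\currentPointset{t})$ (the argument of $\Ber{\cdot}$ in line~\ref{algo:sampling:filter}), I replace $1 - \bayesFilter{t}$ by $1 - p(\currentPointset{t})$ in both the numerator and the denominator. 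Applying \Cref{lemma:spatial_markov} to rewrite the two partition functions in $p$—exactly as in the proof of \Cref{lemma:invariant_case1}—shows that $p$ is $\sigma(\currentPointset{t} \cap \boxRegion{R})$-measurable: the Bayes filter correction $\bayesFilterCorrection[S][\vectorize{v}][\cdot]$ is $\sigma(\currentPointset{t} \cap \boxRegion{S})$-measurable by \Cref{def:bayes_filter}, and the surviving partition-function ratio depends only on $\currentPointset{t}$ restricted to boxes within range $\range$ of $\boxRegion{\boxesToUpdate}$ and $\boxRegion{\boxesToUpdate \setminus \set{\vectorize{v}}}$, all of which lie inside $\boxRegion{R}$. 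Consequently $1 - p$ factors out of both conditional expectations and cancels, leaving exactly the identity from the previous step.

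The main technical obstacle will be executing the spatial Markov step carefully—checking that conditioning the Gibbs process on $\complementOf{\boxRegion{S}}$ on its restriction to $\boxRegion{R \setminus S}$ produces the Gibbs process on $\complementOf{\boxRegion{R}}$ with the expected updated activity—and verifying the $\sigma(\currentPointset{t} \cap \boxRegion{R})$-measurability of $p$ via the partition-function manipulation from the proof of \Cref{lemma:invariant_case1}. Once those two pieces are in place, the remainder is strictly easier than in the $\bayesFilter{t} = 1$ case, because there is no freshly resampled configuration over $\boxRegion{\boxesToUpdate}$ to integrate against and no appeal to $\pi$-systems or \Cref{lemma:extension_RCD} is needed.
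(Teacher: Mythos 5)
Your high-level structure matches the paper's proof exactly: first establish that under $\PrSymbol[E]$ the conditional law of $\currentPointset{t} \cap \complementOf{\boxRegion{R}}$ given $\currentPointset{t} \cap \boxRegion{R}$ is already $\gibbs[\activity_{\currentPointset{t} \cap \boxRegion{R}} \ind{\complementOf{\boxRegion{R}}}][][\complementOf{\boxRegion{R}}]$; then switch from $\PrSymbol[E]$ to $\PrSymbol[E_0]$ by exploiting that the Bayes filter success probability is $\sigma(\currentPointset{t} \cap \boxRegion{R})$-measurable; finally swap $\currentPointset{t}$ for $\currentPointset{t+1}$ since they agree on $E_0$. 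Your route for the second step—cancel the $(1-p)$ factor through \Cref{lemma:conditioning_on_event}—is a valid alternative to the paper's invocation of \Cref{lemma:adding_conditions}~\ref{lemma:adding_conditions:2}; both turn on the same measurability observation, which you identify and justify correctly (everything the filter probability sees lies in $\boxRegion{R}$, using \Cref{lemma:spatial_markov} to shrink the boundary conditions to $\boxRegion{\partial\boxesToUpdate}$ and $\boxRegion{\partial\boxesToUpdate \cup \{\vectorize{v}\}}$, both inside $\boxRegion{R}$).

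The genuine gap is in your first step, and it is compounded by your closing assertion that ``no appeal to $\pi$-systems or \Cref{lemma:extension_RCD} is needed.'' The ``conditioning further on $\boxRegion{R \setminus S}$ by spatial Markov'' that you describe is precisely where the paper does the bulk of the work, and it does require a $\pi$-system. To verify that $\gibbs[\activity_{\currentPointset{t} \cap \boxRegion{R}} \ind{\complementOf{\boxRegion{R}}}][A][\complementOf{\boxRegion{R}}]$ is a version of $\E{\ind{\currentPointset{t} \cap \complementOf{\boxRegion{R}} \in A}}[\currentPointset{t} \cap \boxRegion{R}][E]$, one must check the defining equality $\E{\ind{D}\,f}[][E] = \E{\ind{D}\,\ind{\currentPointset{t}\cap\complementOf{\boxRegion{R}}\in A}}[][E]$ for all $D \in \sigma(\currentPointset{t} \cap \boxRegion{R})$, and the only way to make this tractable is to restrict to a generating $\pi$-system. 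The paper uses the product $\pi$-system $\{\preProjection[]{\boxRegion{\partial\boxesToUpdate \setminus S}}[D_1] \cap \preProjection[]{\boxRegion{S}}[D_2]\}$ (via \Cref{lemma:pi_system}), applies \Cref{thm:integration_RCP} against the regular conditional distribution supplied by \Cref{lemma:invariant_projection}, and then explicitly cancels the partition function $\partitionFunction[\activity_{(\currentPointset{t}\cap\boxRegion{S})\cup\setFromTuple{\vectorize{x}}}][\complementOf{\boxRegion{R}}]$ inside the integral. Your proposal treats this as a routine ``spatial Markov'' manipulation, but there is no off-the-shelf lemma in the paper that delivers it; \Cref{lemma:spatial_markov} is purely about invariance of partition functions under boundary changes at distance $\ge \range$, not about conditional distributions. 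Without spelling out the $\pi$-system reduction and the regular-conditional-distribution integration, this step is missing, and the claim that it can be done without $\pi$-systems is false.
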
 

\begin{proof}
    An example that might help to keep track of the relevant regions throughout this proof is given in \Cref{fig:regions_F=0}. 
    \begin{figure}[h]
	   \centering
	   \includegraphics[width=0.7\textwidth]{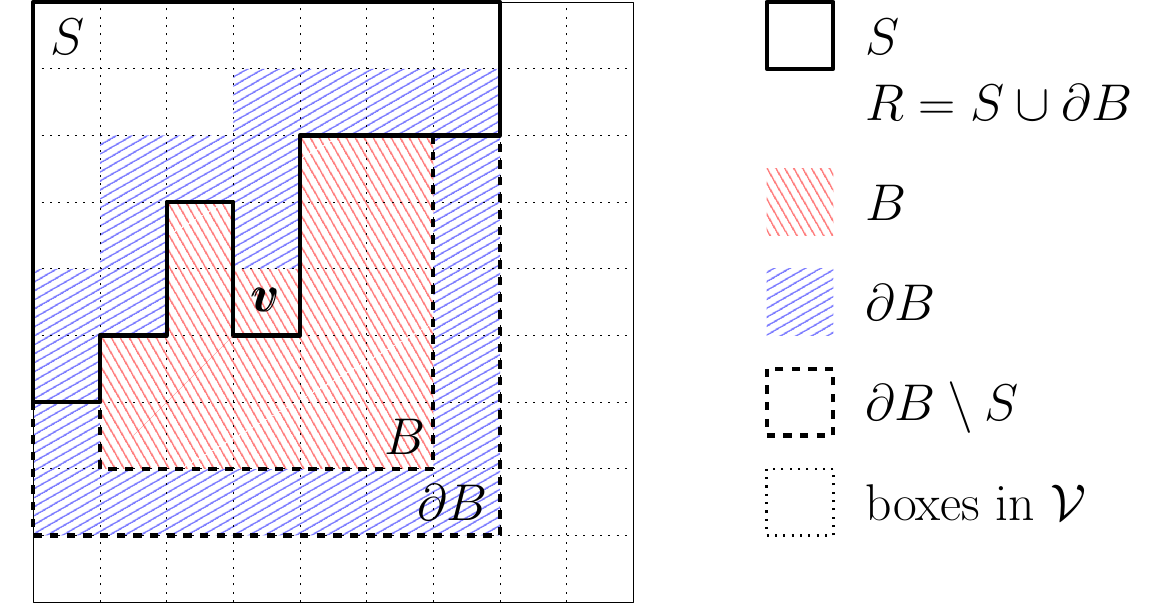}
	   \caption{Example for the regions considered throughout the proof of \Cref{lemma:invariant_case0} for $\updateRadius = 2$.}
	   \label{fig:regions_F=0}
    \end{figure}
    
	Let $E = \set{\boxesToFix{t} = S, \boxChosen{t} = \vectorize{v}}$ for $S$ and $\vectorize{v}$ as in the definition of $E_0$.	
    Since $\currentPointset{t+1} \cap R^{\comp} \in \pointsets_{R^{\comp}}$ and $\gibbs[R^{\comp}][\currentPointset{t+1} \cap R]$ is only supported on $\pointsets_{R^{\comp}}$, it again suffices to prove the statement for $A \in \pointsetEvents_{R^{\comp}}$.
	Our first step is to show that the following claim
	\begin{claim} \label{claim:invariant_case0}
	    It holds that
        \[
            \Pr{\currentPointset{t} \cap R^{\comp} \in A}[\currentPointset{t} \cap R; E] 
		      = \gibbs[R^{\comp}][\currentPointset{t} \cap R](A)
        \]
	\end{claim}
	
	Given \Cref{claim:invariant_case0}, we aim for changing the probability measure from $\PrSymbol[E]$ to $\PrSymbol[E_0]$.
	To achieve this, note that by definition $\bayesFilter{t}$ does only depend on $\currentPointset{t}$ via $\currentPointset{t} \cap R$.
	Thus, it holds that
	\begin{align*}
		\Pr{\bayesFilter{t} = 0}[\currentPointset{t} \cap R; E]
		&= \Pr{\bayesFilter{t} = 0}[\currentPointset{t} \cap R, \currentPointset{t} \cap R^{\comp}; E] .
	\end{align*}
	Therefore, \Cref{lemma:adding_conditions} implies
	\begin{align*}
		\Pr{\currentPointset{t} \cap R^{\comp} \in A}[\currentPointset{t} \cap R; E_0]
		&= \Pr{\currentPointset{t} \cap R^{\comp} \in A}[\currentPointset{t} \cap R; E] \\
		&= \gibbs[R^{\comp}][\currentPointset{t} \cap R](A) .
	\end{align*}
	Finally, observe that, given $E_0$, it holds that $\currentPointset{t+1} = \currentPointset{t}$.
	Thus, we have
	\[
		\Pr{\currentPointset{t+1} \cap R^{\comp} \in A}[\currentPointset{t+1} \cap R; E_0] 
		=\gibbs[R^{\comp}][\currentPointset{t+1} \cap R](A) 
	\]
	as desired.
\end{proof}

Before using \Cref{lemma:invariant_case1,lemma:invariant_case0} to show \Cref{lemma:invariant}, we first prove \Cref{claim:invariant_case0}.
\begin{proof}[Proof of \Cref{claim:invariant_case0}]
	Define $f: \statespace \to [0, 1], \omega \mapsto \gibbs[R^{\comp}][\currentPointset{t}(\omega) \cap R](A)$.
	Since $f$ is $\sigma(\currentPointset{t} \cap R)$-measurable, it suffices to prove that
	\begin{align} \label{lemma:invariant_case0:eq1}
		\E{\ind{\currentPointset{t} \cap R \in D} \cdot f}[][E] 
		= \E{\ind{\currentPointset{t} \cap R \in D} \cdot \ind{\currentPointset{t} \cap R^{\comp} \in A}}[][E]
	\end{align}
	for all events $D \in \pointsetEvents[R]$.
	Further, since $R$ is the union of the disjoint sets $S$ and $\partial \boxesToUpdate \setminus S$, \Cref{lemma:pi_system} allows us to focus on events of the form $\{\currentPointset{t} \cap (\partial \boxesToUpdate \setminus S) \in D_1, \currentPointset{t} \cap S \in D_2\}$ for $D_1 \in \pointsetEvents[\partial \boxesToUpdate \setminus S]$ and $D_2 \in \pointsetEvents[S]$.
	Writing 
	\begin{align*}
		\E{\ind{\currentPointset{t} \cap (\partial \boxesToUpdate \setminus S) \in D_{1}} \cdot \ind{\currentPointset{t} \cap S \in D_{2}} \cdot f}[][E] 
		&= \E{\ind{\currentPointset{t} \cap S \in D_{2}} \cdot \E{\ind{\currentPointset{t} \cap (\partial \boxesToUpdate \setminus S) \in D_{1}} \cdot f}[\currentPointset{t} \cap S][E]}[][E]
	\end{align*}
	shows that, for proving (\ref{lemma:invariant_case0:eq1}), it suffices to show that
	\begin{align*} 
		\E{\ind{\currentPointset{t} \cap (\partial \boxesToUpdate \setminus S) \in D_{1}} \cdot f}[\currentPointset{t} \cap S][E] 
		&= \E{\ind{\currentPointset{t} \cap (\partial \boxesToUpdate \setminus S) \in D_{1}} \cdot \ind{\currentPointset{t} \cap R^{\comp} \in A}}[\currentPointset{t} \cap S][E] .
	\end{align*}
	To this end, define $\hat{f}: \pointsets[\partial \boxesToUpdate \setminus S] \times \pointsets[S] \to [0, 1]$ by
	\[
		\hat{f}(\eta_1, \eta_2) = \gibbs[R^{\comp}][\eta_1 \cup \eta_2](A) 
	\]
	and note that $\hat{f}(\currentPointset{t}(\omega) \cap (\partial \boxesToUpdate \setminus S), \currentPointset{t}(\omega) \cap S) = f(\omega)$ for all $\omega \in \statespace$.
	Therefore, we have
	\begin{align*}
		&\E{\ind{\currentPointset{t} \cap (\partial \boxesToUpdate \setminus S) \in D_{1}} \cdot f}[\currentPointset{t} \cap S][E] \\
		&\quad\quad = \E{\ind{\currentPointset{t} \cap (\partial \boxesToUpdate \setminus S) \in D_{1}} \cdot \hat{f}(\currentPointset{t} \cap (\partial \boxesToUpdate \setminus S), \currentPointset{t} \cap S)}[\currentPointset{t} \cap S][E] .
	\end{align*}
	We proceed by deriving an explicit expression for the right-hand side.
	By \Cref{lemma:invariant_projection} and the assumption of the lemma, we know that
	\[
	   (\omega, \cdot) \mapsto \gibbs[S^{\comp}][\currentPointset{t}(\omega) \cap S] \circ \projection{\partial \boxesToUpdate \setminus S}^{-1}(\cdot) 
	\]
	is regular conditional distribution for $\currentPointset{t} \cap (\partial \boxesToUpdate \setminus S)$ given $\currentPointset{t} \cap S$ under $\PrSymbol[E]$. 
	Thus, using \Cref{thm:integration_RCP}, we obtain 
	\begin{align*}
		&\E{\ind{\currentPointset{t} \cap (\partial \boxesToUpdate \setminus S) \in D_{1}} \cdot \hat{f}(\currentPointset{t} \cap (\partial \boxesToUpdate \setminus S), \currentPointset{t} \cap S)}[\currentPointset{t} \cap S][E](\omega) \\
		&\quad= \int_{D_{1}} \hat{f}(\eta, \currentPointset{t}(\omega) \cap S) \,\gibbs[S^{\comp}][\currentPointset{t}(\omega) \cap S] \circ \projection{\partial \boxesToUpdate \setminus S}^{-1}(\intD \eta) \\
		&\quad= \frac{1}{\partitionFunction[S^{\comp}][\currentPointset{t}(\omega) \cap S]}\sum_{n \ge 0} \frac{1}{n!} \int_{(\boxRegion{\partial \boxesToUpdate \setminus S})^n} \ind{\setFromTuple{\vectorize{x}} \in D_{1}} \cdot \hat{f}(\setFromTuple{\vectorize{x}}, \currentPointset{t}(\omega) \cap S)\\
		&\hspace*{10em}\cdot (\activity_{\currentPointset{t}(\omega) \cap S})^{\vectorize{x}} \cdot \eulerE^{- \hamiltonian[\vectorize{x}]} \cdot \partitionFunction[(\partial\boxesToUpdate \setminus S)^{\comp} \cap S^{\comp}][\setFromTuple{\vectorize{x}} \cup (\currentPointset{t}(\omega) \cap S)] \intD \vectorize{x} \\
		&\quad= \frac{1}{\partitionFunction[S^{\comp}][\currentPointset{t}(\omega) \cap S]} \sum_{n \ge 0} \frac{1}{n!} \int_{(\boxRegion{\partial \boxesToUpdate \setminus S})^n} \ind{\setFromTuple{\vectorize{x}} \in D_{1}} \cdot \hat{f}(\setFromTuple{\vectorize{x}}, \currentPointset{t}(\omega) \cap S)\\
		&\hspace*{10em}\cdot (\activity_{\currentPointset{t}(\omega) \cap S})^{\vectorize{x}} \cdot \eulerE^{- \hamiltonian[\vectorize{x}]} \cdot \partitionFunction[R^{\comp}][\setFromTuple{\vectorize{x}} \cup (\currentPointset{t}(\omega) \cap S)] \intD \vectorize{x}
	\end{align*}
	for $\PrSymbol[E]$-almost all $\omega \in \statespace$.
	Further, note that
	\begin{align*}
		&\hat{f}(\setFromTuple{\vectorize{x}}, \currentPointset{t}(\omega) \cap S) \\
		&\quad\quad = \gibbs[R^{\comp}][\setFromTuple{\vectorize{x}} \cup (\currentPointset{t}(\omega) \cap S)] \\
		&\quad\quad = \frac{1}{\partitionFunction[R^{\comp}][\setFromTuple{\vectorize{x}} \cup (\currentPointset{t}(\omega) \cap S)]} \sum_{m \ge 0} \frac{1}{m!} \int_{\left(\boxRegion{R}^{\comp}\right)^m} \ind{\setFromTuple{\vectorize{y}} \in A} \cdot (\activity_{\setFromTuple{\vectorize{x}} \cup (\currentPointset{t}(\omega) \cap S)})^{\vectorize{y}}\cdot \eulerE^{-\hamiltonian[\vectorize{y}]} \intD \vectorize{y} .
	\end{align*}
	Thus, after canceling $\partitionFunction[R^{\comp}][\setFromTuple{\vectorize{x}} \cup (\currentPointset{t}(\omega) \cap S)]$, we obtain
	\begin{align*}
		&\E{\ind{\currentPointset{t} \cap (\partial \boxesToUpdate \setminus S) \in D_{1}} \cdot \hat{f}(\currentPointset{t} \cap (\partial \boxesToUpdate \setminus S), \currentPointset{t} \cap S)}[\currentPointset{t} \cap S][E](\omega) \\
		&\quad= \frac{1}{\partitionFunction[S^{\comp}][\currentPointset{t}(\omega) \cap S]} \sum_{n \ge 0} \frac{1}{n!} \int_{(\boxRegion{\partial \boxesToUpdate \setminus S})^n} \ind{\setFromTuple{\vectorize{x}} \in D_{1}} \cdot (\activity_{\currentPointset{t}(\omega) \cap S})^{\vectorize{x}} \cdot \eulerE^{- \hamiltonian[\vectorize{x}]} \\
		&\hspace*{10em}\cdot \left[\sum_{m \ge 0} \frac{1}{m!} \int_{\left(\boxRegion{R}^{\comp}\right)^m} \ind{\setFromTuple{\vectorize{y}} \in A} \cdot (\activity_{\setFromTuple{\vectorize{x}} \cup (\currentPointset{t}(\omega) \cap S)})^{\vectorize{y}} \cdot \eulerE^{-\hamiltonian[\vectorize{y}]} \intD \vectorize{y} \right] \intD \vectorize{x} \\
		&\quad= \gibbs[S^{\comp}][\currentPointset{t}(\omega) \cap S]\left(\projection{\partial \boxesToUpdate \setminus S}^{-1}(D_1) \cap \projection{R^{\comp}}^{-1}(A)\right) \\
		&\quad= \Pr{\currentPointset{t} \cap (\partial \boxesToUpdate \setminus S) \in D_{1}, \currentPointset{t} \cap R^{\comp} \in A}[\currentPointset{t} \cap S; E](\omega)
	\end{align*}
	for $\PrSymbol[E]$-almost all $\omega \in \statespace$, where the last equality is once again due to \Cref{lemma:invariant_projection}.
	This shows (\ref{lemma:invariant_case0:eq1}) and consequently proves the claim.
\end{proof}

We proceed to show \Cref{lemma:invariant} by combining \Cref{lemma:invariant_case1,lemma:invariant_case0}.
\begin{proof}[Proof of \Cref{lemma:invariant}]
	First, note that is suffices to show that, for all $t \in \N_{0}$ and $S \subseteq \boxIds$ with $\Pr{\boxesToFix{t} = S} > 0$, it holds that $\gibbs[S^{\comp}][\currentPointset{t} \cap S](A)$ is a version of the conditional expectation $\Pr{\currentPointset{t} \cap S^{\comp} \in A}[\currentPointset{t} \cap S, \boxesToFix{t} = S]$. 
	The second part of the statement then follows as $\gibbs[S^{\comp}][\eta]$ is a probability distribution for all $\eta \in \pointsets[S]$.
	
	We proceed by proving our claim by an induction over $t$.
	For $t = 0$, the statement is trivially true since $\boxesToFix{0} = \boxIds$ with probability $1$.
	
	Now, assume the lemma holds for some fixed $t \in N$ and let $R \subseteq \boxIds$ be such that $\Pr{\boxesToFix{t+1} = R} > 0$.
	We start with the case $R \neq \emptyset$.
	We consider two sets of event $\mathcal{C}_{0}$ and $\mathcal{C}_1$, where $\mathcal{C}_{i}$ for $i \in \set{0, 1}$ consists of all events of the form $C = \set{\boxesToFix{t} = S, \boxChosen{t} = \vectorize{v}, \bayesFilter{t} = i}$ with $S \in \powerset{\boxIds} \setminus\set{\emptyset}$ and $\vectorize{v} \in S$ such that $\Pr{C \cap \set{\boxesToFix{t+1} = R}} > 0$.
	Set $\mathcal{C} = \mathcal{C}_0 \cup \mathcal{C}_1$ and note that all events in $\mathcal{C}$ are pairwise disjoint.
	Moreover, it is easy to check that
	\begin{align*}
		\Pr{\bigcup\nolimits_{C \in \mathcal{C}} C}[\boxesToFix{t+1} = R] 
		= 1 .
	\end{align*}
	Thus, given we show that
	\begin{align} \label{lemma:invariant:eq1}
		\Pr{\currentPointset{t+1} \cap R^{\comp} \in A}[\currentPointset{t} \cap R; C] 
		= \gibbs[R^{\comp}][\currentPointset{t+1} \cap R](A)
	\end{align}
	for all $C \in \mathcal{C}$, then \Cref{lemma:law_of_total_probability} implies that
	$\gibbs[R^{\comp}][\currentPointset{t+1} \cap R](A)$ is also a version of the conditional expectation $\Pr{\currentPointset{t+1} \cap R^{\comp} \in A}[\currentPointset{t} \cap R; \boxesToFix{t+1} = R]$ as desired.
	Suppose that $C \in \mathcal{C}_1$.
	Then $C$ must have the form $\set{\boxesToFix{t} = S, \boxChosen{t} = \vectorize{v}, \bayesFilter{t} = 1}$ for some $S$ and $\vectorize{v}$ with $R = S \setminus \vectorize{v}$.
	Thus, using the induction hypothesis and applying \Cref{lemma:invariant_case1} proves \eqref{lemma:invariant:eq1}.
	Otherwise, if $C \in \mathcal{C}_0$, then $C$ is of the form $\set{\boxesToFix{t} = S, \boxChosen{t} = \vectorize{v}, \bayesFilter{t} = 0}$ with $R = S \cup \partial \boxesToUpdate[S][\vectorize{v}][\updateRadius]$. 
	Using the induction hypothesis and \Cref{lemma:invariant_case0} shows \eqref{lemma:invariant:eq1}.
	
	It remains to consider the case $R = \emptyset$.
	We construct $\mathcal{C}_0$ and $\mathcal{C}_1$ as before, but we set $\mathcal{C} = \mathcal{C}_0 \cup \mathcal{C}_1 \cup \set{\set{\boxesToFix{t} = \emptyset}}$.
	Again, by \Cref{lemma:law_of_total_probability} it suffices to argue \eqref{lemma:invariant:eq1} for all $C \in \mathcal{C}$.
	The cases $C \in \mathcal{C}_{1}$ and $C \in \mathcal{C}_{0}$ are handled as before and we focus on $C = \set{\boxesToFix{t} = \emptyset}$.
	By our definition of the process, we fixed $\currentPointset{t}$ and $\boxesToFix{t}$ to remain constant once $\boxesToFix{t} = \emptyset$.
	Thus, \eqref{lemma:invariant:eq1} follows directly from the induction hypothesis, concluding the proof. 
\end{proof}

\section{Strong spatial mixing and success probabilities of Bayes filters}
\label{sec:point_densities}
Recall that for \Cref{algo:sampling} to terminate rapidly, we need to ensure that the success probability of the Bayes filter is close to $1$.
In this section, we prove a general statement that allows us to control the success probabilities of the Bayes filters  we will construct in the upcoming sections.
Readers only interested in the actual construction of the Bayes filter may skip this section for now and return to it later for the running time analysis. 

The main technical lemma of this section states that, under strong spatial mixing, a certain fraction of partition functions that is central for the construction of our Bayes filters can be brought arbitrarily close to $1$ by increasing the update radius $\updateRadius$.
\begin{lemma} \label{lemma:ssm_partition_function}
	Let $S \subseteq \boxIds$ be non-empty, $\vectorize{v} \in S$, $\updateRadius \in \N$, and set $\boxesToUpdate = \boxesToUpdate[S][\vectorize{v}][\updateRadius]$ and $H = \complementOf{S \cup \boxesToUpdate}$.
	Suppose $\potential$ exhibits $(a, b)$-strong spatial mixing up to $\activity$.
	Then, for all feasible $\eta \in \pointsets[\region]$ and all $\xi_1, \xi_2 \in \pointsets[H]$ it holds that 
	\begin{align*}
		\exponential{- a 3^{\dimensions} \range^{\dimensions} \eulerE^{2 b \range} \left(\activity \range^{\dimensions} + \eulerE^{\activity 3^{\dimensions} \range^{\dimensions}}\right) \eulerE^{-b \range \updateRadius}} 
		&\le
		\frac{ 
			\partitionFunction[\boxesToUpdate \setminus \vectorize{v}][\xi_1 \cup (\eta \cap S)]
		}{
			\partitionFunction[\boxesToUpdate][\xi_1 \cup (\eta \cap (S \setminus \vectorize{v}))]
		}
		\cdot
		\frac
		{
			\partitionFunction[\boxesToUpdate][\xi_2 \cup (\eta \cap (S \setminus \vectorize{v}))]
		}{ 
			\partitionFunction[\boxesToUpdate \setminus \vectorize{v}][\xi_2 \cup (\eta \cap S)]
		} \\
		&\le 
		\exponential{a 3^{\dimensions} \range^{\dimensions} \eulerE^{2 b \range} \left(\activity \range^{\dimensions} + \eulerE^{\activity 3^{\dimensions} \range^{\dimensions}}\right) \eulerE^{-b \range \updateRadius}} .
	\end{align*}
\end{lemma}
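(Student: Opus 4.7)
Both inequalities follow once I bound $\size{\log(R(\xi_1)/R(\xi_2))}$, where
\[
R(\xi) := \frac{\partitionFunction[\activity_{\xi \cup (\eta \cap \boxRegion{S})}][\boxRegion{\boxesToUpdate \setminus \set{\vectorize{v}}}]}{\partitionFunction[\activity_{\xi \cup (\eta \cap \boxRegion{S \setminus \set{\vectorize{v}}})}][\boxRegion{\boxesToUpdate}]}.
\]
My plan is: (i)~rewrite $R(\xi)$ as the expectation of a bounded \emph{local} observable under a Gibbs measure on $\boxRegion{\boxesToUpdate}$; (ii)~apply $(a,b)$-strong spatial mixing to bound $\size{R(\xi_1) - R(\xi_2)}$; (iii)~lower bound $R(\xi)$ via stochastic domination by a Poisson point process; (iv)~combine using the mean value theorem for $\log$.

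For step~(i), set $\nu := \eta \cap \boxRegion{\vectorize{v}}$ and $\zeta := \eta \cap \boxRegion{S \setminus \set{\vectorize{v}}}$, and let $\mu_\xi := \gibbs[\activity_{\xi \cup \zeta}][][\boxRegion{\boxesToUpdate}]$. Using $\activity_{\xi \cup (\eta \cap \boxRegion{S})}(y) = \activity_{\xi \cup \zeta}(y)\,\eulerE^{-\sum_{z \in \nu} \potential[z][y]}$ and recognising the resulting integral over $\boxRegion{\boxesToUpdate \setminus \set{\vectorize{v}}}$ as an integral over $\boxRegion{\boxesToUpdate}$ with an empty-$\boxRegion{\vectorize{v}}$ indicator yields
\[
R(\xi) = \E{F(X)}[][][X \sim \mu_\xi], \qquad F(X) := \ind{X \cap \boxRegion{\vectorize{v}} = \emptyset} \prod_{x \in X} \eulerE^{-\sum_{z \in \nu} \potential[z][x]}.
\]
Since $\potential \ge 0$, $F$ takes values in $[0,1]$; and because the indicator forces $X$ to avoid $\boxRegion{\vectorize{v}}$ while $\potential$ has range $\range$, only $x \in X$ within distance $\range$ of $\nu \subseteq \boxRegion{\vectorize{v}}$ contribute to the product, and those lie in $\boxRegion{\ball{\vectorize{v}}{1}}$. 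Hence $F$ is a function of $X \cap \boxRegion{\ball{\vectorize{v}}{1}}$ only.

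For step~(ii), apply the SSM hypothesis with ambient region $\boxRegion{\boxesToUpdate}$, activity functions $\activity_{\xi_1 \cup \zeta}, \activity_{\xi_2 \cup \zeta}$ (each pointwise $\le \activity$ since $\potential$ is repulsive), and subregion $\subregion := \boxRegion{\ball{\vectorize{v}}{1}} \cap \boxRegion{\boxesToUpdate}$ of volume at most $3^{\dimensions}\range^{\dimensions}$. The two activities agree outside the $\range$-neighborhood of $\xi_1 \cup \xi_2 \subseteq \boxRegion{H}$, and since $H \cap \ball{\vectorize{v}}{\updateRadius} = \emptyset$ I get $\dist{\subregion}{\boxRegion{H}} \ge (\updateRadius-1)\range$, hence $\dist{\subregion}{\supp{\activity_{\xi_1 \cup \zeta} - \activity_{\xi_2 \cup \zeta}}} \ge (\updateRadius-2)\range$. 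Because $F$ is $\sigma(X \cap \subregion)$-measurable with $\|F\|_\infty \le 1$, SSM yields
\[
\size{R(\xi_1) - R(\xi_2)} \le \dtv{\mu_{\xi_1}|_{\subregion}}{\mu_{\xi_2}|_{\subregion}} \le a \cdot 3^{\dimensions}\range^{\dimensions}\,\eulerE^{2b\range}\,\eulerE^{-b\range\updateRadius} =: \delta.
\]

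For step~(iii), $F(X) = 1$ on $\set{X \cap \boxRegion{\ball{\vectorize{v}}{1}} = \emptyset}$, so $R(\xi) \ge \mu_\xi\!\left[X \cap \boxRegion{\ball{\vectorize{v}}{1}} = \emptyset\right]$; stochastic domination of a repulsive Gibbs point process with activities $\le \activity$ by a Poisson point process of intensity $\activity$ (see Appendix~\ref{appendix:gpps}) then gives $R(\xi) \ge \eulerE^{-\activity 3^{\dimensions}\range^{\dimensions}}$. For step~(iv), the mean value theorem applied to $\log$ on the interval between $R(\xi_1)$ and $R(\xi_2)$ yields
\[
\size{\log R(\xi_1) - \log R(\xi_2)} \le \frac{\size{R(\xi_1) - R(\xi_2)}}{\min(R(\xi_1), R(\xi_2))} \le \delta\,\eulerE^{\activity 3^{\dimensions}\range^{\dimensions}},
\]
which is dominated by $a \cdot 3^{\dimensions}\range^{\dimensions}\,\eulerE^{2b\range}(\activity\range^{\dimensions} + \eulerE^{\activity 3^{\dimensions}\range^{\dimensions}})\,\eulerE^{-b\range\updateRadius}$; exponentiating produces the two-sided bound in the statement. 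The main technical obstacle is step~(i): the partition-function rewriting must be done carefully enough that $F$ is genuinely \emph{local} (depending only on $X \cap \boxRegion{\ball{\vectorize{v}}{1}}$), so that the SSM volume factor is $3^{\dimensions}\range^{\dimensions}$ rather than $\size{\boxRegion{\boxesToUpdate}}$, which would grow with $\updateRadius$.
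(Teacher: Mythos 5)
Your proposal is correct, and it takes a genuinely different and more direct route than the paper's. The paper develops a chain of auxiliary results about $k$-point density functions: Lemma \ref{lemma:point_density_ssm_additive} and Corollary \ref{cor:point_density_ssm} convert strong spatial mixing into additive and multiplicative bounds on density functions, the Michelen--Perkins identity (Lemma \ref{lemma:density_to_correlation}) is then invoked to express the relevant partition-function ratio via an integral of one-point densities, and Lemma \ref{lemma:density_ssm} assembles these pieces before the final proof of Lemma \ref{lemma:ssm_partition_function} performs the same reparametrization you do (activity functions $\activity_{\xi_i\cup\zeta}\ind{\boxRegion{\boxesToUpdate}}$) and cites Lemma \ref{lemma:density_ssm}. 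You instead rewrite the ratio $R(\xi)$ directly as $\E{F(X)}$ for a local, $[0,1]$-valued observable $F$ under the Gibbs measure with activity $\activity_{\xi\cup\zeta}$ on $\boxRegion{\boxesToUpdate}$, apply the total-variation form of strong spatial mixing straight to $F$, and close with Poisson domination plus the mean value theorem for $\log$. This bypasses the entire $k$-point-density machinery, is shorter, and in fact produces the marginally tighter constant $\exponential{a 3^{\dimensions}\range^{\dimensions}\eulerE^{2b\range}\eulerE^{\activity 3^{\dimensions}\range^{\dimensions}}\eulerE^{-b\range\updateRadius}}$, which you then dominate by the paper's bound. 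The one place where the paper's longer route buys something is that Lemmas \ref{lemma:point_density_ssm_additive}--\ref{lemma:density_ssm} are of independent structural interest (correlation decay for $k$-point densities as an equivalent formulation of SSM, mirroring \cite{michelen2022strong}), whereas your argument is tailored to the specific ratio $R(\xi)$; but for the purposes of proving Lemma \ref{lemma:ssm_partition_function} alone, your route is cleaner. Two small points: your citation of Appendix~\ref{appendix:gpps} for Poisson stochastic domination of a repulsive Gibbs measure points to a lemma that isn't actually recorded there (the paper also uses this fact without formal statement, in the proof of Corollary \ref{cor:point_density_ssm}), and $\mu_\xi$ should be understood as the Gibbs measure \emph{on} $\boxRegion{\boxesToUpdate}$ with activity $\activity_{\xi\cup\zeta}$ (i.e., $\gibbs[\activity_{\xi\cup\zeta}\ind{\boxRegion{\boxesToUpdate}}][][\boxRegion{\boxesToUpdate}]$) rather than a projection of a measure on all of $\region$; with that reading, your identity $R(\xi)=\E{F(X)}$, the locality of $F$ on $\subregion=\boxRegion{\ball{\vectorize{v}}{1}}\cap\boxRegion{\boxesToUpdate}$, and the distance bound $\dist{\subregion}{\supp{\activity_{\xi_1\cup\zeta}-\activity_{\xi_2\cup\zeta}}}\ge(\updateRadius-2)\range$ are all verified correctly.
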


Before we prove \Cref{lemma:ssm_partition_function}, we briefly sketch how  it helps  control the success probability of the Bayes filter. Recall \Cref{def:bayes_filter} and assume we would directly use 
\begin{align} \label{eq:example_correction}
	\bayesFilterCorrection[S][\vectorize{v}][\eta] 
	= \inf_{\substack{\xi \in \pointsets[H]\\ \xi \cup (\eta \cap S) \text{ is feasible}}}  \left\{\frac{\partitionFunction[\boxesToUpdate \setminus \vectorize{v}][\xi \cup (\eta \cap S)]}{\partitionFunction[\boxesToUpdate][\xi \cup (\eta \cap (S \setminus \vectorize{v}))]}\right\}
\end{align}
as Bayes filter correction.
Assuming $\currentPointset{t} = \eta$, $\boxesToFix{t} = S$ and $\boxChosen{t} = \vectorize{v} \in S$, \Cref{lemma:spatial_markov} yields that the probability that $\bayesFilter{t} = 1$ is 
\begin{align*}
	&\inf_{\substack{\xi \in \pointsets[H]\\ \xi \cup (\eta \cap S) \text{ is feasible}}}  \left\{\frac{
        \partitionFunction[\boxesToUpdate \setminus \vectorize{v}][\xi \cup (\eta \cap S)]
    }{
        \partitionFunction[\boxesToUpdate][\xi \cup (\eta \cap (S \setminus \vectorize{v}))]}
    \right\} \cdot \frac{
		\partitionFunction[\boxesToUpdate][\eta \cap \partial \boxesToUpdate] 
	}{ 
		\partitionFunction[\boxesToUpdate \setminus \vectorize{v}][\eta \cap (\partial \boxesToUpdate \cup \vectorize{v})]
	} \\
	&\quad\quad= \inf_{\substack{\xi \in \pointsets[H]\\ \xi \cup (\eta \cap S) \text{ is feasible}}}  \left\{\frac{
        \partitionFunction[\boxesToUpdate \setminus \vectorize{v}][\xi \cup (\eta \cap S)]
    }{
        \partitionFunction[\boxesToUpdate][\xi \cup (\eta \cap (S \setminus \vectorize{v}))]}
	\cdot
	\frac
	{
		\partitionFunction[\boxesToUpdate][(\eta \cap H) \cup (\eta \cap (S \setminus \vectorize{v}))]
	}{ 
		\partitionFunction[\boxesToUpdate \setminus \vectorize{v}][(\eta \cap H) \cup (\eta \cap S)]
	} \right\} .
\end{align*}
Applying \Cref{lemma:ssm_partition_function} with $\xi_1 = \xi$ and $\xi_2 = \eta \cap H$ allows us to lower bound the probability that $\bayesFilter{t} = 1$ by $\exponential{- a 3^{\dimensions} \range^{\dimensions} \eulerE^{2 b \range} \left(\activity \range^{\dimensions} + \eulerE^{\activity 3^{\dimensions} \range^{\dimensions}}\right) \eulerE^{-b \range \updateRadius}}$.
Thus, by increasing the update radius $\updateRadius$ we could bring the success probability of the Bayes filter arbitrary close to $1$.
While we will not use exactly \eqref{eq:example_correction} as Bayes filter correction, we can apply \Cref{lemma:ssm_partition_function} in a similar fashion when using a suitable approximation.
More on that in \Cref{secHardSphere} and \Cref{secRepulsive}.

To prove \Cref{lemma:ssm_partition_function}, we first show that strong spatial mixing implies correlation decay in terms of $k$-point density functions.
The converse of this statement was previously shown in \cite{michelen2022strong}.
We then use an identity from \cite{michelen2022strong,mp-CC} to derive our lemma.

\subsection{Strong spatial mixing and point density functions}
Let $\region \subset \R^{\dimensions}$ be a bounded measurable region and $\activityFunction$ be an activity function.
For every $k \in \N$ the $k$-point density of the Gibbs point process $\gibbs[\activityFunction, \region]$ at $\vectorize{x} \in \region^{k}$ is defined as
\[
	\density[\activityFunction][\vectorize{x}] = \activityFunction^{\vectorize{x}} \frac{\partitionFunction[\region](\activityFunction_{\vectorize{x}})}{\partitionFunction[\region](\activityFunction)} \eulerE^{-\hamiltonian[\vectorize{x}]} ,
\]
where $\activityFunction_{\vectorize{x}}$ denotes the activity function $y \mapsto \activityFunction(y) \eulerE^{-\sum_{i \in [k]} \potential(y, x_i)}$.

Recall \Cref{def:ssmHS} and note that, for a measurable space $(\Omega, \mathcal{A})$ and probability measures $P$ and $Q$, an equivalent definition of  total variation distance is
\[
	\dtv{P}{Q} = \sup_{f: \Omega \to [-1, 1]}\absolute{\E{f}[][][P] - \E{f}[][][Q]},
\]
where the supremum is taken over measurable functions.
Using this definition we obtain the following statement.

\begin{lemma} \label{lemma:point_density_ssm_additive}
	Let $\potential$ be a repulsive pair potential of range $\range$ and let $\activity, a, b \in \R_{>0}$ such that $\potential$ exhibits $(a, b)$-strong spatial mixing up to $\activity$.
	Let $\region \subset \R^{\dimensions}$ be bounded and measurable, and consider activity functions $\activityFunction, \activityFunction' < \activity$.
	For all measurable $\subregion \subseteq \region$ disjoint from $\supp{\activityFunction - \activityFunction'}$, all $k \in \N$ and all $\vectorize{x} \in \subregion^{k}$ it holds that
	\[
	\absolute{\density[\activityFunction][\vectorize{x}] - \density[\activityFunction'][\vectorize{x}]} \le \activityFunction^{\vectorize{x}} \eulerE^{-\hamiltonian[\vectorize{x}]} a \volume{\subregion^{(\range)}} \eulerE^{b \range} \eulerE^{- b \cdot \dist{\subregion}{\supp{\activityFunction - \activityFunction'}}} ,
	\]
	where $\subregion^{(\range)} = \set{y \in \region}[\dist{y}{\subregion} < \range]$, and $\density[\activityFunction]$ and $\density[\activityFunction']$ are the $k$-point densities of $\gibbs[\activityFunction, \region]$ and $\gibbs[\activityFunction', \region]$.
\end{lemma}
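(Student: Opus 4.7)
The plan is to rewrite $\density[\activityFunction][\vectorize{x}] - \density[\activityFunction'][\vectorize{x}]$ as a difference of expectations of a single $[0,1]$-valued functional that is measurable with respect to the configuration inside $\subregion^{(\range)}$, and then bound that difference via $(a,b)$-strong spatial mixing applied on $\subregion^{(\range)}$. Since $\vectorize{x} \in \subregion^{k}$ and $\subregion$ is disjoint from $\supp{\activityFunction - \activityFunction'}$, one has $\activityFunction^{\vectorize{x}} = (\activityFunction')^{\vectorize{x}}$, so unfolding the definition of $\density$ gives
\[
\density[\activityFunction][\vectorize{x}] - \density[\activityFunction'][\vectorize{x}] = \activityFunction^{\vectorize{x}} \eulerE^{-\hamiltonian[\vectorize{x}]} \left(\frac{\partitionFunction[\activityFunction_{\vectorize{x}}][\region]}{\partitionFunction[\activityFunction][\region]} - \frac{\partitionFunction[\activityFunction'_{\vectorize{x}}][\region]}{\partitionFunction[\activityFunction'][\region]}\right).
\]
A routine expansion of the numerator partition function as a series over the number of particles, and pulling out the shift factor coming from $\activityFunction \mapsto \activityFunction_{\vectorize{x}}$, then yields the standard identity
\[
\frac{\partitionFunction[\activityFunction_{\vectorize{x}}][\region]}{\partitionFunction[\activityFunction][\region]} = \mathds{E}_{\gibbs[\activityFunction]}\!\left[\prod_{y \in \eta} \eulerE^{-\sum_{i \in [k]} \potential[x_i][y]}\right],
\]
and likewise for $\activityFunction'$; this is a standard Gibbs-point-process computation presumably covered in the appendix on GPPs.

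Next, set $f(\eta) \coloneqq \prod_{y \in \eta} \eulerE^{-\sum_{i \in [k]} \potential[x_i][y]}$. By repulsiveness, $f \in [0,1]$, and since $\potential$ has range $\range$ and each $x_i \in \subregion$, every $y \notin \subregion^{(\range)}$ satisfies $\dist{y}{x_i} \ge \range$ for all $i$, so the associated factor equals $1$ (the measure-zero boundary at $\dist{y}{x_i} = \range$ contains no point of $\eta$ almost surely under the GPP, since the GPP is absolutely continuous with respect to Poisson). Hence $f$ is $\pointsetEvents[\subregion^{(\range)}]$-measurable up to a null event, and the difference of expectations depends only on the projected measures $\gibbs[\activityFunction][][\subregion^{(\range)}]$ and $\gibbs[\activityFunction'][][\subregion^{(\range)}]$. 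Applying the dual characterization of total variation distance recalled just above the lemma to the $[-1,1]$-valued function $2f - 1$ then gives
\[
\left| \mathds{E}_{\gibbs[\activityFunction]} f - \mathds{E}_{\gibbs[\activityFunction']} f \right| \le \tfrac{1}{2} \dtv{\gibbs[\activityFunction][][\subregion^{(\range)}]}{\gibbs[\activityFunction'][][\subregion^{(\range)}]},
\]
where the harmless factor $\tfrac{1}{2}$ can be absorbed into the constant $a$.

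Finally, invoking $(a,b)$-strong spatial mixing on the subregion $\subregion^{(\range)} \subseteq \region$, and using $\dist{\subregion^{(\range)}}{\supp{\activityFunction - \activityFunction'}} \ge \dist{\subregion}{\supp{\activityFunction - \activityFunction'}} - \range$, yields
\[
\dtv{\gibbs[\activityFunction][][\subregion^{(\range)}]}{\gibbs[\activityFunction'][][\subregion^{(\range)}]} \le a \volume{\subregion^{(\range)}} \eulerE^{b \range} \eulerE^{-b \cdot \dist{\subregion}{\supp{\activityFunction - \activityFunction'}}},
\]
and chaining the three displays produces exactly the claimed bound. The only mildly nontrivial step is the choice of the test function $f$ and the observation that it factors through $\subregion^{(\range)}$: this is what lets us apply SSM while paying only an $\eulerE^{b\range}$ penalty for inflating $\subregion$ to $\subregion^{(\range)}$. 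The partition-function identity, the distance bookkeeping, and the TV-to-expectation reduction are all routine.
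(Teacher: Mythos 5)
Your proof is correct and follows essentially the same route as the paper: the same choice of test function $f_{\vectorize{x}}(\eta) = \eulerE^{-\sum_{i} \sum_{y \in \eta} \potential[x_i][y]}$, the same observation that $\activityFunction^{\vectorize{x}} = \activityFunction'^{\vectorize{x}}$ and that the ratio $\partitionFunction[\activityFunction_{\vectorize{x}}][\region]/\partitionFunction[\activityFunction][\region]$ is $\E{f_{\vectorize{x}}}[][][\gibbs[\activityFunction]]$, the same locality of $f_{\vectorize{x}}$ on $\subregion^{(\range)}$, and the same final application of strong spatial mixing plus the triangle-inequality correction $\dist{\subregion^{(\range)}}{\supp{\activityFunction - \activityFunction'}} \ge \dist{\subregion}{\supp{\activityFunction - \activityFunction'}} - \range$. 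The one superfluous step is your detour through $2f_{\vectorize{x}} - 1$: since the paper's TV characterization uses $[-1,1]$-valued functions and $f_{\vectorize{x}}$ already takes values in $[0,1] \subset [-1,1]$, one can apply it directly with no factor of $\tfrac12$ to discard, which is exactly what the paper does.
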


\begin{proof}
	For every $k \in \N$ and $\vectorize{x} = (x_1, \dots, x_k) \in \subregion^{k}$ define $f_{\vectorize{x}}: \pointsets[\region] \to [0, 1]$ by
	\[
	f_{\vectorize{x}}(\eta) = \eulerE^{-\sum_{i \in [k]} \sum_{y \in \eta} \potential[x_i][y]}.
	\]
	By definition, it holds that
	\begin{align*}
		\density[\activityFunction][\vectorize{x}] 
		&= \activityFunction^{\vectorize{x}} \frac{\partitionFunction[\region](\activityFunction_{\vectorize{x}})}{\partitionFunction[\region](\activityFunction)} \eulerE^{-\hamiltonian[\vectorize{x}]} \\
		&= \activityFunction^{\vectorize{x}} \eulerE^{-\hamiltonian[\vectorize{x}]}
		\E{f_{\vectorize{x}}}[][][\gibbs[\activityFunction, \region]] .
	\end{align*} 
	Since the range of $\potential$ is bounded by $\range$, it holds that $f_{\vectorize{x}}$ is local on $\subregion^{(\range)} = \set{y \in \region}[\dist{y}{\subregion} < \range]$.
    Applying the same reasoning to $\density[\activityFunction'][\vectorize{x}]$ and observing that $\activityFunction = \activityFunction'$ on $\subregion$ yields
    \[
        \density[\activityFunction'][\vectorize{x}] 
		= \activityFunction^{\vectorize{x}} \eulerE^{-\hamiltonian[\vectorize{x}]}
		\E{f_{\vectorize{x}}}[][][\gibbs[\activityFunction', \region]].
    \]
	Next, note that for every $\eta \in \pointsets[\region]$ it holds that $f_{\vectorize{x}}(\eta) = f_{\vectorize{x}} \circ \projection{\subregion^{(\range)}} (\eta)$, where $\projection{\subregion^{(\range)}}$ is the projection $\eta \mapsto \eta \cap \subregion^{(\range)}$.
    Using the change-of-variables formula for Lebesgue integration, we get
	\begin{align*}
		\absolute{\density[\activityFunction][\vectorize{x}] - \density[\activityFunction'][\vectorize{x}]}
		&= \activityFunction^{\vectorize{x}} \eulerE^{-\hamiltonian[\vectorize{x}]} \absolute{\E{f_{\vectorize{x}}}[][][\gibbs[\activityFunction, \region] \circ \projection{\subregion^{(\range)}}^{-1}] - \E{f_{\vectorize{x}}}[][][\gibbs[\activityFunction', \region] \circ \projection{\subregion^{(\range)}}^{-1}]} \\
		&\le \activityFunction^{\vectorize{x}} \eulerE^{-\hamiltonian[\vectorize{x}]} \dtvProjected{\gibbs[\activityFunction, \region]}{\gibbs[\activityFunction', \region]}{\subregion^{(\range)}} ,
	\end{align*} 
	where the inequality follows from the definition of the total variation distance given above and the fact that $f_{\vectorize{x}}$ has domain $[0, 1]$.
	Finally, applying $(a, b)$-strong spatial mixing and noting that $\dist{\subregion^{(\range)}}{\supp{\activityFunction - \activityFunction'}} \ge \dist{\subregion}{\supp{\activityFunction - \activityFunction'}} - \range$ concludes the proof.
\end{proof}

\begin{remark}
	Note that, without fixing a particular region $\subregion \subseteq \region$ that contains $x_1, \dots, x_k$ in \Cref{lemma:point_density_ssm_additive}, we can always set $\subregion = \{x_1, \dots, x_k\}$, which yields $\volume{\subregion^{(\range)}} \le k v_{\dimensions} \range^{\dimensions}$ with $v_{\dimensions}$ being the volume of a unit ball in $\dimensions$ dimensions.
\end{remark}

The following multiplicative bound for $k$-point densities with different activity functions follows immediately.
\begin{corollary}\label{cor:point_density_ssm}
	Consider the setting of \Cref{lemma:point_density_ssm_additive}. 
	It holds that
	\[
	\density[\activityFunction][\vectorize{x}] \le \left(1 + a \volume{\subregion^{(\range)}} \eulerE^{b \range + \activity \volume{\subregion^{(\range)}}} \eulerE^{- b \cdot \dist{\subregion}{\supp{\activityFunction - \activityFunction'}}}\right)\density[\activityFunction'][\vectorize{x}] .
	\]
\end{corollary}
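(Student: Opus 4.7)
The plan is to upgrade the additive bound of \Cref{lemma:point_density_ssm_additive} to a multiplicative one by dividing by $\density[\activityFunction'][\vectorize{x}]$ and controlling the resulting quotient uniformly. Writing $V = \volume{\subregion^{(\range)}}$ and $d = \dist{\subregion}{\supp{\activityFunction - \activityFunction'}}$, \Cref{lemma:point_density_ssm_additive} yields $\density[\activityFunction][\vectorize{x}] \le \density[\activityFunction'][\vectorize{x}] + \activityFunction^{\vectorize{x}} \eulerE^{-\hamiltonian[\vectorize{x}]} a V \eulerE^{b \range - b d}$. Because $\setFromTuple{\vectorize{x}} \subseteq \subregion$ is disjoint from $\supp{\activityFunction - \activityFunction'}$, we have $\activityFunction^{\vectorize{x}} = (\activityFunction')^{\vectorize{x}}$; combining this with the definition $\density[\activityFunction'][\vectorize{x}] = (\activityFunction')^{\vectorize{x}} \eulerE^{-\hamiltonian[\vectorize{x}]} \partitionFunction[\activityFunction'_{\vectorize{x}}][\region]/\partitionFunction[\activityFunction'][\region]$ reduces the claim to the single partition-function inequality
\[
	\partitionFunction[\activityFunction'][\region] \le \eulerE^{\activity V} \cdot \partitionFunction[\activityFunction'_{\vectorize{x}}][\region],
\]
since multiplying the additive bound by $\partitionFunction[\activityFunction'_{\vectorize{x}}][\region]/\partitionFunction[\activityFunction'][\region]$ then produces the claimed multiplicative factor $a V \eulerE^{b \range + \activity V} \eulerE^{- b d}$.

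To establish this ratio bound, I would set $f(y) = \eulerE^{-\sum_i \potential[x_i][y]} \in [0, 1]$ and interpolate via $\activityFunction'_t(y) = \activityFunction'(y)\bigl(1 - t(1 - f(y))\bigr)$ for $t \in [0, 1]$, so that $\activityFunction'_0 = \activityFunction'$ and $\activityFunction'_1 = \activityFunction'_{\vectorize{x}}$. Termwise differentiation of the series defining $\partitionFunction[\activityFunction'_t][\region]$ gives
\[
	\frac{\mathrm{d}}{\mathrm{d} t} \log \partitionFunction[\activityFunction'_t][\region] = -\int_{\region} \activityFunction'(y)\bigl(1 - f(y)\bigr) \cdot \frac{\density[\activityFunction'_t][y]}{\activityFunction'_t(y)} \intD y,
\]
where $\density[\activityFunction'_t]$ denotes the one-point density of $\gibbs[\activityFunction'_t]$. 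Because $\potential \ge 0$, the standard identity $\density[\activityFunction'_t][y] = \activityFunction'_t(y) \cdot \partitionFunction[(\activityFunction'_t)_y][\region] / \partitionFunction[\activityFunction'_t][\region]$ together with $(\activityFunction'_t)_y \le \activityFunction'_t$ yields $\density[\activityFunction'_t][y]/\activityFunction'_t(y) \in [0, 1]$. Since $f(y) = 1$ outside $\bigcup_i \ball{x_i}{\range} \subseteq \subregion^{(\range)}$, and $\activityFunction' \le \activity$, the integrand has modulus at most $\activity$ on $\subregion^{(\range)}$ and vanishes elsewhere, so the derivative has absolute value at most $\activity V$. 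Integrating over $t \in [0, 1]$ produces $\log\!\bigl(\partitionFunction[\activityFunction'_{\vectorize{x}}][\region] / \partitionFunction[\activityFunction'][\region]\bigr) \ge -\activity V$, as required.

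No step is a substantive obstacle: the reduction to the partition-function quotient is algebraic, and the quotient bound is essentially the differentiated form of Poisson stochastic domination applied to the decreasing functional $\eta \mapsto \prod_{y \in \eta} f(y)$. A  point worth flagging is that the argument integrates $1 - f$ rather than $\potential$ itself, so it remains valid in the hard-sphere case where $\potential$ may take the value $+\infty$; this is why a naive Jensen bound on $\mathbb{E}_{\gibbs[\activityFunction']}\bigl[\eulerE^{-\sum_{y \in \eta}\sum_i \potential[x_i][y]}\bigr]$ via $\mathbb{E}[\eulerE^{-X}] \ge \eulerE^{-\mathbb{E}[X]}$ fails, but the interpolation approach still works.
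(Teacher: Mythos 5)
Your proof is correct, and both arguments make the same reduction to the single inequality $\partitionFunction[\activityFunction'][\region] \le \eulerE^{\activity \volume{\subregion^{(\range)}}} \partitionFunction[\activityFunction'_{\vectorize{x}}][\region]$ (equivalently $\density[\activityFunction'][\vectorize{x}] \ge \activityFunction^{\vectorize{x}} \eulerE^{-\hamiltonian[\vectorize{x}]} \eulerE^{-\activity \volume{\subregion^{(\range)}}}$), but the way you establish that key step is genuinely different from the paper's. The paper observes that $\partitionFunction[\activityFunction'_{\vectorize{x}}][\region] / \partitionFunction[\activityFunction'][\region]$ is the $\gibbs[\activityFunction']$-expectation of the $[0,1]$-valued, $\subregion^{(\range)}$-local function $f_{\vectorize{x}}$ with $f_{\vectorize{x}}(\emptyset)=1$, lower-bounds this by $\gibbs[\activityFunction'][\set{\emptyset}][\subregion^{(\range)}]$, and then invokes Poisson domination to get $\gibbs[\activityFunction'][\set{\emptyset}][\subregion^{(\range)}] \ge \eulerE^{-\activity \volume{\subregion^{(\range)}}}$ — a three-line argument. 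You instead interpolate between $\activityFunction'$ and $\activityFunction'_{\vectorize{x}}$ in $t$ and integrate the logarithmic derivative of the partition function, bounding the density-to-activity ratio by $1$ via repulsiveness. Your computation is the same mechanism that underlies the Michelen--Perkins identity (\Cref{lemma:density_to_correlation}) used later, so it is a natural and valid route; but it carries more overhead (justifying termwise differentiation of the series, handling points $y$ where $\activityFunction'_t(y)=0$) to arrive at the same crude one-sided bound. Your closing observation that a naive Jensen bound on $\mathds{E}\bigl[\eulerE^{-X}\bigr]$ fails when $\potential$ takes the value $+\infty$ is apt, but note the paper sidesteps this just as cleanly without interpolating, by never passing through an expectation of the unbounded energy at all.
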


\begin{proof}
	Since $\activityFunction = \activityFunction'$ on $\subregion$, it holds that $\activityFunction^{\vectorize{x}} \eulerE^{-\hamiltonian[\vectorize{x}]} = \activityFunction'^{\vectorize{x}} \eulerE^{-\hamiltonian[\vectorize{x}]}$.
	If $\activityFunction^{\vectorize{x}} \eulerE^{-\hamiltonian[\vectorize{x}]} = 0$, then the desired inequality holds trivially since both sides are $0$.
	
	Assume $\activityFunction^{\vectorize{x}} \eulerE^{-\hamiltonian[\vectorize{x}]} > 0$. 
	Defining $f_{\vectorize{x}}$ as in the proof of \Cref{lemma:point_density_ssm_additive} and following the same arguments we have
	\[
	\density[\activityFunction'][\vectorize{x}] 
	= \activityFunction'^{\vectorize{x}} \eulerE^{-\hamiltonian[\vectorize{x}]} \E{f_{\vectorize{x}}}[][][\gibbs[\activityFunction', \region] \circ \projection{\subregion^{(\range)}}^{-1}] 
	= \activityFunction^{\vectorize{x}} \eulerE^{-\hamiltonian[\vectorize{x}]} \E{f_{\vectorize{x}}}[][][\gibbs[\activityFunction', \region] \circ \projection{\subregion^{(\range)}}^{-1}] .
	\]
	Observe that $f_{\vectorize{x}}$ is non-negative and $f_{\vectorize{x}}(\emptyset) = 1$.
	Combined with Poisson domination, we obtain
	\[
	   \E{f_{\vectorize{x}}}[][][\gibbs[\activityFunction', \region] \circ \projection{\subregion^{(\range)}}^{-1}] 
	   \ge \gibbs[\activityFunction', \region] \circ \projection{\subregion^{(\range)}}^{-1}(\set{\emptyset})
	   \ge \eulerE^{-\activity \volume{\subregion^{(\range)}}},
	\]
	which implies $\density[\activityFunction'][\vectorize{x}] \ge \activityFunction^{\vectorize{x}} \eulerE^{-\hamiltonian[\vectorize{x}]} \eulerE^{-\activity \volume{\subregion^{(\range)}}}$.
	In particular, we have $\density[\activityFunction'][\vectorize{x}] > 0$, and applying \Cref{lemma:point_density_ssm_additive} yields
	\begin{align*}
		\density[\activityFunction][\vectorize{x}] 
		&\le \left(1 + \frac{\absolute{\density[\activityFunction][\vectorize{x}] - \density[\activityFunction'][\vectorize{x}]}}{\density[\activityFunction'][\vectorize{x}]}\right)\density[\activityFunction'][\vectorize{x}] \\
		&\le \left(1 + a \volume{\subregion^{(\range)}} \eulerE^{b \range + \activity \volume{\subregion^{(\range)}}} \eulerE^{- b \cdot \dist{\subregion}{\supp{\activityFunction - \activityFunction'}}}\right)\density[\activityFunction'][\vectorize{x}] . \qedhere
	\end{align*}
\end{proof}

\subsection{Proof of \Cref{lemma:ssm_partition_function}}
To prove the main lemma of the section, we use the following identity by Michelen and Perkins \cite{michelen2022strong,mp-CC}. 
\begin{lemma}[{\cite[Lemma 12]{michelen2022strong}}] \label{lemma:density_to_correlation}
	Let $\subregion \subseteq \region$ be measurable.
	Fix a point in $z \in \subregion$ and, for any given activity function $\activityFunction$ and any point $y \in \region$, let 
	\[
	\widehat{\activityFunction}_{y}(w) = \begin{cases}
		0 \text{ if } \widetilde{\distSymbol}(z, w) < \widetilde{\distSymbol}(z, y) \\
		\activityFunction[w] \text{ otherwise} 
	\end{cases} ,
	\]
	where $\widetilde{\distSymbol}(u, v) = \dist{u}{v} + diam(\subregion) \cdot \ind{\set{u, v} \nsubseteq \subregion}$ for $u, v \in \R^{\dimensions}$. 
	For all $k \in \N$ and $\vectorize{x} \in \subregion^k$, it holds that
	\[
	\activityFunction^{\vectorize{x}}\frac{\partitionFunction[\region \setminus \subregion](\activityFunction_{\vectorize{x}})}{\partitionFunction[\region](\activityFunction)} \eulerE^{- \hamiltonian[\vectorize{x}]} 
	= \density[\activityFunction][\vectorize{x}] \exponential{-\int_{\subregion} \density[\widehat{(\activityFunction_{\vectorize{x}})}_{y}][y] \intD y}.
	\]
	Moreover, we have
	\[
	\frac{\partitionFunction[\region \setminus \subregion](\activityFunction)}{\partitionFunction[\region](\activityFunction)}  
	=  \exponential{-\int_{\subregion} \density[\widehat{\activityFunction}_{y}][y] \intD y}.
	\]
\end{lemma}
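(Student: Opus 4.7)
The plan is to reduce the first identity to the second, and then prove the second by a continuous growing-region interpolation driven by the $\widetilde{\distSymbol}$-order around $z$. Unfolding the definition $\density[\activityFunction][\vectorize{x}] = \activityFunction^{\vectorize{x}} \frac{\partitionFunction[\activityFunction_{\vectorize{x}}][\region]}{\partitionFunction[\activityFunction][\region]} \eulerE^{-\hamiltonian[\vectorize{x}]}$ immediately gives
\[
	\activityFunction^{\vectorize{x}} \frac{\partitionFunction[\activityFunction_{\vectorize{x}}][\complementOf{\subregion}]}{\partitionFunction[\activityFunction][\region]} \eulerE^{-\hamiltonian[\vectorize{x}]}
	= \density[\activityFunction][\vectorize{x}] \cdot \frac{\partitionFunction[\activityFunction_{\vectorize{x}}][\complementOf{\subregion}]}{\partitionFunction[\activityFunction_{\vectorize{x}}][\region]},
\]
so the first identity reduces to $\partitionFunction[\activityFunction_{\vectorize{x}}][\complementOf{\subregion}] / \partitionFunction[\activityFunction_{\vectorize{x}}][\region] = \exponential{-\int_\subregion \density[\widehat{(\activityFunction_{\vectorize{x}})}_y][y] \intD y}$, which is exactly the second identity applied with $\activityFunction_{\vectorize{x}}$ in place of $\activityFunction$. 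It therefore suffices to prove the second identity.

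For this, fix $R = \sup_{y \in \subregion} \dist{z}{y}$, let $A_r = \set{w \in \subregion}[\dist{z}{w} < r]$ for $r \in [0, R]$, and set $Z_r = \partitionFunction[\activityFunction][\complementOf{A_r}]$; then $Z_0 = \partitionFunction[\activityFunction][\region]$ and $Z_R = \partitionFunction[\activityFunction][\complementOf{\subregion}]$, so the target becomes $\log Z_0 - \log Z_R = \int_\subregion \density[\widehat{\activityFunction}_y][y] \intD y$. Writing $Z_r = \sum_{k \ge 0} \frac{1}{k!} \int_{(\complementOf{A_r})^{k}} \activityFunction^{\vectorize{x}} \eulerE^{-\hamiltonian[\vectorize{x}]} \intD \vectorize{x}$ and differentiating term by term in $r$, the symmetry of the integrand in its $k$ variables together with the coarea formula for the $1$-Lipschitz map $y \mapsto \dist{z}{y}$ produce
\[
	-\frac{d}{dr} Z_r
	= \int_{\subregion \cap \partial B_r(z)} \activityFunction(y) \, \partitionFunction[\activityFunction_y][\complementOf{A_r}] \intD \sigma(y),
\]
where $\sigma$ denotes the $(\dimensions - 1)$-dimensional surface measure on $\partial B_r(z)$.

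For $y \in \subregion \cap \partial B_r(z)$ we have $\widetilde{\distSymbol}(z, y) = r$, hence $\widehat{\activityFunction}_y = \activityFunction \ind{\complementOf{A_r}}$ Lebesgue-almost everywhere; this gives $\partitionFunction[\widehat{\activityFunction}_y][\region] = Z_r$ and $\partitionFunction[(\widehat{\activityFunction}_y)_y][\region] = \partitionFunction[\activityFunction_y][\complementOf{A_r}]$, so $\density[\widehat{\activityFunction}_y][y] = \activityFunction(y) \, \partitionFunction[\activityFunction_y][\complementOf{A_r}] / Z_r$. Dividing the previous display by $Z_r$, integrating $r$ from $0$ to $R$, and invoking coarea once more,
\[
	\log Z_0 - \log Z_R
	= \int_{0}^{R} \int_{\subregion \cap \partial B_r(z)} \density[\widehat{\activityFunction}_y][y] \intD \sigma(y) \intD r
	= \int_\subregion \density[\widehat{\activityFunction}_y][y] \intD y,
\]
which is the second identity. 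The main obstacle is justifying the pointwise $r$-derivative and the surface-integral representation when $\subregion$ is merely measurable; the cleanest workaround is to phrase the entire computation as an integrated Fubini/coarea identity from the start, so that only integrated quantities ever appear and no regularity of $\partial \subregion$ is used. Control of the series expansion and the exchange of sum and integral is afforded by Poisson domination ($\density[\activityFunction][\vectorize{x}] \le \activityFunction^{\vectorize{x}}$ for repulsive potentials) together with boundedness of $\activityFunction$ on the bounded set $\region$.
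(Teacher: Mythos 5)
This lemma is not proved in the paper at all: it is imported verbatim from Michelen and Perkins \cite[Lemma 12]{michelen2022strong}, so there is no in-paper proof to compare against. Your argument is essentially the derivation used in that cited work, and it is sound: the reduction of the first identity to the second by substituting $\activityFunction_{\vectorize{x}}$ for $\activityFunction$ is immediate from the definition of $\density[\activityFunction]$, and the second identity follows from sweeping $\subregion$ by distance from $z$, identifying $-\frac{\d}{\d r}\log Z_r$ with the surface integral of the one-point density under the truncated activity, and converting back with the coarea formula. Two points deserve explicit care, both of which you essentially flag. First, for $y \in \subregion \cap \partial B_r(z)$ one has $\widehat{\activityFunction}_y = \activityFunction \ind{\complementOf{A_r}}$ \emph{exactly} (since $z \in \subregion$ forces $\widetilde{\distSymbol}(z,w) \ge \mathrm{diam}(\subregion) \ge r$ for $w \notin \subregion$), and the strict inequality in the definition gives $\widehat{\activityFunction}_y(y) = \activityFunction(y)$; the latter pointwise value matters for $\density[\widehat{\activityFunction}_y][y]$ and is not covered by an a.e. statement, so it should be stated rather than absorbed into "Lebesgue-almost everywhere." Second, the differentiation step should indeed be run in integrated form: boundedness of $\activityFunction$ and of $\region$ gives $\absolute{Z_r - Z_{r'}} \le C\,\volume{A_{r'} \setminus A_r}$, the coarea formula for $y \mapsto \dist{z}{y}$ makes $r \mapsto \volume{A_r}$ Lipschitz, hence $r \mapsto \log Z_r$ is absolutely continuous (using $Z_r \ge 1$) and the fundamental theorem of calculus applies with the almost-everywhere derivative you compute; the exchange of the sum over $k$ with this limit is justified by the same domination. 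With these routine justifications written out (and noting that $\complementOf{A_R}$ and $\complementOf{\subregion}$ differ only by a null set, so $Z_R = \partitionFunction[\activityFunction][\complementOf{\subregion}]$), your proof is complete and matches the source's approach.
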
 

We use  \Cref{cor:point_density_ssm} to show the following intermediate statement.
\begin{lemma}\label{lemma:density_ssm}
	Consider the setting of \Cref{lemma:point_density_ssm_additive}. 
	For all $k \in \N_{0}$ and all $\vectorize{x} \in \subregion^{k}$ it holds that
	\[
	\activityFunction^{\vectorize{x}}\frac{\partitionFunction[\region \setminus \subregion](\activityFunction_{\vectorize{x}})}{\partitionFunction[\region](\activityFunction)} \eulerE^{- \hamiltonian[\vectorize{x}]} 
	\le \exponential{a \volume{\subregion^{(\range)}} e^{b \range} \left(\activity \volume{\subregion} + \eulerE^{\activity \volume{\subregion^{(\range)}}}\right) \eulerE^{-b \cdot \dist{\subregion}{\supp{\activityFunction - \activityFunction'}}}} \activityFunction'^{\vectorize{x}}\frac{\partitionFunction[\region \setminus \subregion](\activityFunction'_{\vectorize{x}})}{\partitionFunction[\region](\activityFunction')} \eulerE^{- \hamiltonian[\vectorize{x}]},
	\]
	where $\subregion^{(\range)} = \set{y \in \region}[\dist{y}{\subregion} < \range]$, and $\partitionFunction[\region](\activityFunction)$ and $\partitionFunction[\region](\activityFunction')$ are the partition functions on $\region$ for the potential $\potential$.
\end{lemma}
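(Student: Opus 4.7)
The plan is to apply \Cref{lemma:density_to_correlation} to rewrite both sides of the target inequality as a product of a $k$-point density and an exponential of a single-point-density integral over $\subregion$, and then bound each of the two resulting factors separately. Concretely, \Cref{lemma:density_to_correlation} reduces the claim to showing that
\[
    \frac{\density[\activityFunction][\vectorize{x}]}{\density[\activityFunction'][\vectorize{x}]} \cdot \exponential{\int_{\subregion} \bigl(\density[\widehat{(\activityFunction'_{\vectorize{x}})}_y][y] - \density[\widehat{(\activityFunction_{\vectorize{x}})}_y][y]\bigr) \intD y}
\]
is bounded by the constant stated in the lemma; the $k=0$ case is handled by the second identity of \Cref{lemma:density_to_correlation} applied to $\widehat{\activityFunction}_y$ and $\widehat{\activityFunction'}_y$ directly.

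For the density-ratio factor I would directly invoke \Cref{cor:point_density_ssm} and use $1 + \alpha \le \eulerE^{\alpha}$, which contributes a factor of $\exponential{a \volume{\subregion^{(\range)}} \eulerE^{b\range + \activity \volume{\subregion^{(\range)}}} \eulerE^{-b \cdot \dist{\subregion}{\supp{\activityFunction - \activityFunction'}}}}$. This accounts for the $\eulerE^{\activity \volume{\subregion^{(\range)}}}$ summand inside the exponent in the target bound.

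For the integrand in the exponential factor I plan to establish a uniform pointwise bound. For each fixed $y \in \subregion$, I would invoke \Cref{lemma:point_density_ssm_additive} with $k = 1$, point $y$, and subregion equal to $\subregion$ itself, applied to the truncated activities $\widehat{(\activityFunction_{\vectorize{x}})}_y$ and $\widehat{(\activityFunction'_{\vectorize{x}})}_y$. Since $\hamiltonian[(y)] = 0$ and the prefactor $\widehat{(\activityFunction_{\vectorize{x}})}_y(y) = \activityFunction_{\vectorize{x}}(y)$ is bounded by $\activity$, this yields the uniform estimate
\[
    \bigl|\density[\widehat{(\activityFunction_{\vectorize{x}})}_y][y] - \density[\widehat{(\activityFunction'_{\vectorize{x}})}_y][y]\bigr| \le \activity \, a \volume{\subregion^{(\range)}} \eulerE^{b\range} \eulerE^{-b \cdot \dist{\subregion}{\supp{\activityFunction - \activityFunction'}}}.
\]
Integrating over $y \in \subregion$ contributes the $\activity \volume{\subregion}$ summand inside the target exponent, and multiplying by the density-ratio bound from the previous step recovers the constant exactly as stated.

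The main obstacle is verifying the hypotheses of \Cref{lemma:point_density_ssm_additive} for the truncated activities. First, both $\widehat{(\activityFunction_{\vectorize{x}})}_y$ and $\widehat{(\activityFunction'_{\vectorize{x}})}_y$ must stay bounded by $\activity$, which uses repulsiveness of $\potential$ to ensure $\eulerE^{-\sum_i \potential[x_i][w]} \le 1$, combined with $\activityFunction, \activityFunction' \le \activity$. Second, the difference-support $\supp{\widehat{(\activityFunction_{\vectorize{x}})}_y - \widehat{(\activityFunction'_{\vectorize{x}})}_y}$ is contained in $\supp{\activityFunction - \activityFunction'}$, so its distance to $\subregion$ is at least $\dist{\subregion}{\supp{\activityFunction - \activityFunction'}}$. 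Third, one needs that the truncation rule $\widetilde{\distSymbol}(z, w) < \widetilde{\distSymbol}(z, y)$ is strict, so $\widehat{(\activityFunction_{\vectorize{x}})}_y$ does not vanish at $w = y$; this is what makes the prefactor equal to $\activityFunction_{\vectorize{x}}(y) \le \activity$ rather than $0$. Once these checks are in place, the rest is bookkeeping.
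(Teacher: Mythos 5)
Your proposal matches the paper's proof essentially step for step: you apply \Cref{lemma:density_to_correlation} to express the left- and right-hand sides as a $k$-point density times an exponential of an integral of single-point densities, then bound the density ratio via \Cref{cor:point_density_ssm} (giving the $\eulerE^{\activity\volume{\subregion^{(\range)}}}$ summand) and the integral difference via a pointwise application of \Cref{lemma:point_density_ssm_additive} with $k=1$ to the truncated activities (giving the $\activity\volume{\subregion}$ summand), with the same hypothesis checks (boundedness by $\activity$, containment of the difference-support, and the strict inequality in the truncation rule ensuring $\widehat{(\activityFunction_{\vectorize{x}})}_y(y)=(\activityFunction_{\vectorize{x}})(y)$). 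This is precisely the argument in the paper.
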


\begin{proof}
	Using \Cref{lemma:density_to_correlation} we have
	\[
	   \activityFunction^{\vectorize{x}}\frac{\partitionFunction[\region \setminus \subregion](\activityFunction_{\vectorize{x}})}{\partitionFunction[\region](\activityFunction)} \eulerE^{- \hamiltonian[\vectorize{x}]} 
	   = \density[\activityFunction][\vectorize{x}] \exponential{-\int_{\subregion} \density[\widehat{(\activityFunction_{\vectorize{x}})}_{y}][y] \intD y} ,
	\]
	where we treat case that $\vectorize{x}$ is the empty tuple by setting $\density[\activityFunction][\vectorize{x}] = 1$.
	By \Cref{cor:point_density_ssm}, we have
	\begin{align*}
		\density[\activityFunction][\vectorize{x}] 
		&\le \left(1 + a \volume{\subregion^{(\range)}} \eulerE^{b \range + \activity \volume{\subregion^{(\range)}}} \eulerE^{- b \cdot \dist{\subregion}{\supp{\activityFunction - \activityFunction'}}}\right)\density[\activityFunction'][\vectorize{x}] \\
		&\le \exponential{a \volume{\subregion^{(\range)}} \eulerE^{b \range + \activity \volume{\subregion^{(\range)}}} \eulerE^{- b \cdot \dist{\subregion}{\supp{\activityFunction - \activityFunction'}}}} \density[\activityFunction'][\vectorize{x}].
	\end{align*}
	Moreover, observe that $\supp{\widehat{(\activityFunction_{\vectorize{x}})}_{y} - \widehat{(\activityFunction'_{\vectorize{x}})}_{y}} \subseteq \supp{\activityFunction - \activityFunction'}$, $\widehat{(\activityFunction_{\vectorize{x}})}_{y} \le \activityFunction < \activity$ and $\widehat{(\activityFunction'_{\vectorize{x}})}_{y} \le \activityFunction' < \activity $ for all $y \in \subregion$.
	Thus, \Cref{lemma:point_density_ssm_additive} implies 
	\begin{align*}
		-\int_{\subregion} \density[\widehat{(\activityFunction_{\vectorize{x}})}_{y}][y] \intD y
		&\le -\int_{\subregion} \density[\widehat{(\activityFunction'_{\vectorize{x}})}_{y}][y] \intD y + \int_{\subregion} \absolute{\density[\widehat{(\activityFunction_{\vectorize{x}})}_{y}][y] -  \density[\widehat{(\activityFunction'_{\vectorize{x}})}_{y}][y]} \intD y \\
		&\le -\int_{\subregion} \density[\widehat{(\activityFunction'_{\vectorize{x}})}_{y}][y] \intD y + a \volume{\subregion^{(\range)}} \eulerE^{b \range} \eulerE^{- b \cdot \dist{\subregion}{\supp{\activityFunction - \activityFunction'}}} \cdot \int_{\subregion} \widehat{(\activityFunction'_{\vectorize{x}})}_{y}(y)  \intD y \\
		&\le -\int_{\subregion} \density[\widehat{(\activityFunction'_{\vectorize{x}})}_{y}][y] \intD y +  a \volume{\subregion^{(\range)}} \eulerE^{b \range} \eulerE^{- b \cdot \dist{\subregion}{\supp{\activityFunction - \activityFunction'}}} \cdot \activity \volume{\subregion} .
	\end{align*}
	We conclude that
	\begin{align*}
&\activityFunction^{\vectorize{x}}\frac{\partitionFunction[\region \setminus \subregion](\activityFunction_{\vectorize{x}})}{\partitionFunction[\region](\activityFunction)} \eulerE^{- \hamiltonian[\vectorize{x}]} \\
		&\quad\quad\le \exponential{a \volume{\subregion^{(\range)}} e^{b \range} \left(\activity \volume{\subregion} + \eulerE^{\activity \volume{\subregion^{(\range)}}}\right) \eulerE^{-b \cdot \dist{\subregion}{\supp{\activityFunction - \activityFunction'}}}} \density[\activityFunction'][\vectorize{x}] \exponential{-\int_{\subregion} \density[\widehat{(\activityFunction'_{\vectorize{x}})}_{y}][y] \intD y} .
	\end{align*} 
	Applying \Cref{lemma:density_to_correlation} again concludes the proof.
\end{proof}

With \Cref{lemma:density_ssm}, we can  prove \Cref{lemma:ssm_partition_function}.
\begin{proof}[Proof of \Cref{lemma:ssm_partition_function}]
	We aim to apply \Cref{lemma:density_ssm}.
	To this end, we start by writing the involved partition functions in terms of two new activity functions $\activityFunction, \activityFunction'$.
	
	Set $\activityFunction = \activity_{\xi_1 \cup (\eta \cap (S \setminus \vectorize{v}))} \ind{\boxRegion{\boxesToUpdate}}$ and $\activityFunction' = \activity_{\xi_2 \cup (\eta \cap (S \setminus \vectorize{v}))} \ind{\boxRegion{\boxesToUpdate}}$, and observe that  $\activityFunction, \activityFunction' \le \activity$.
	Set $k = \size{\eta \cap \vectorize{v}}$ and let  $\vectorize{x} \in \boxRegion{\vectorize{v}}^{k}$ be such that $\setFromTuple{\vectorize{x}} = \eta \cap \vectorize{v}$ (i.e., any tuple containing exactly the points in $\eta \cap \vectorize{v}$).
	Note that 
	\[
	   \activity_{\xi_1 \cup (\eta \cap S)} \ind{\boxRegion{\boxesToUpdate \setminus \vectorize{v}}}
	   = \activityFunction_{\eta \cap \boxRegion{\vectorize{v}}} \ind{\region \setminus \boxRegion{\vectorize{v}}}
	   = \activityFunction_{\vectorize{x}} \ind{\region \setminus \boxRegion{\vectorize{v}}}
	\]  
	and analogously $\activity_{\xi_2 \cup (\eta \cap S)} \ind{\boxRegion{\boxesToUpdate \setminus \vectorize{v}}} = \activityFunction'_{\vectorize{x}} \ind{\region \setminus \boxRegion{\vectorize{v}}}$.
	Thus, we obtain
	\begin{align} \label{lemma:ssm_partition_function:eq1}
		\frac{ 
			\partitionFunction[\boxesToUpdate \setminus \vectorize{v}][\xi_1 \cup (\eta \cap S)]
		}{
			\partitionFunction[\boxesToUpdate][\xi_1 \cup (\eta \cap (S \setminus \vectorize{v}))]
		}
		\cdot
		\frac
		{
			\partitionFunction[\boxesToUpdate][\xi_2 \cup (\eta \cap (S \setminus \vectorize{v}))]
		}{ 
			\partitionFunction[\boxesToUpdate \setminus \vectorize{v}][\xi_2 \cup (\eta \cap S)]
		}
		= \frac{
			\partitionFunction[\region](\activityFunction') 
			\partitionFunction[\region \setminus \boxRegion{\vectorize{v}}](\activityFunction_{\vectorize{x}})
		}{ 
			\partitionFunction[\region](\activityFunction) 
			\partitionFunction[\region \setminus \boxRegion{\vectorize{v}}](\activityFunction'_{\vectorize{x}})
		}
	\end{align}
	We proceed by lower bounding the distance between $\supp{\activityFunction - \activityFunction'}$ and $\boxRegion{\vectorize{v}}$.
	Note that $\xi_1 \cup (\eta \cap (S \setminus \vectorize{v}))$ and $\xi_2 \cup (\eta \cap (S \setminus \vectorize{v}))$ agree on $\boxRegion{S \setminus \vectorize{v}}$ and can only disagree on $\boxRegion{H}$. 
	By construction, it holds that $\dist{\boxRegion{H}}{\boxRegion{\vectorize{v}}}\ge \updateRadius \range$.
	As the potential range is bounded by $\range$, it follows that $\dist{\supp{\activityFunction - \activityFunction'}}{\boxRegion{\vectorize{v}}} \ge (\updateRadius - 1) \range$.
	
	Now, note that in particular $\activityFunction$ and $\activityFunction'$ agree on $\boxRegion{\vectorize{v}}$ and $\activityFunction^{\vectorize{x}} = \activityFunction'^{\vectorize{x}}$.
	We may assume $\activity>0$, since otherwise all involved partition functions are $1$ and the statement holds trivially.
	Since further $\eta$ is feasible we have $\activityFunction^{\vectorize{x}} \eulerE^{- \hamiltonian[\vectorize{x}]} = \activityFunction'^{\vectorize{x}} \eulerE^{- \hamiltonian[\vectorize{x}]} > 0$.
	Thus, multiplying  (\ref{lemma:ssm_partition_function:eq1}) with $\frac{\activityFunction^{\vectorize{x}} \eulerE^{- \hamiltonian[\vectorize{x}]}}{\activityFunction'^{\vectorize{x}} \eulerE^{- \hamiltonian[\vectorize{x}]}} = 1$ yields
	\begin{align*}
		\frac{ 
			\partitionFunction[\boxesToUpdate \setminus \vectorize{v}][\xi_1 \cup (\eta \cap S)]
		}{
			\partitionFunction[\boxesToUpdate][\xi_1 \cup (\eta \cap (S \setminus \vectorize{v}))]
		}
		\cdot
		\frac
		{
			\partitionFunction[\boxesToUpdate][\xi_2 \cup (\eta \cap (S \setminus \vectorize{v}))]
		}{ 
			\partitionFunction[\boxesToUpdate \setminus \vectorize{v}][\xi_2 \cup (\eta \cap S)]
		}
		& = \frac{
			\activityFunction^{\vectorize{x}}
			\partitionFunction[\region](\activityFunction') 
			\partitionFunction[\region \setminus \boxRegion{\vectorize{v}}](\activityFunction_{\vectorize{x}})
			\eulerE^{- \hamiltonian[\vectorize{x}]}
		}{ 
			\activityFunction'^{\vectorize{x}}
			\partitionFunction[\region](\activityFunction) 
			\partitionFunction[\region \setminus \boxRegion{\vectorize{v}}](\activityFunction'_{\vectorize{x}})
			\eulerE^{- \hamiltonian[\vectorize{x}]}
		} \\
		&= \frac{ 
			\activityFunction^{\vectorize{x}}
			\frac{\partitionFunction[\region \setminus \boxRegion{\vectorize{v}}](\activityFunction'_{\vectorize{x}})}{\partitionFunction[\region](\activityFunction)}
			\eulerE^{- \hamiltonian[\vectorize{x}]}
		}{
			\activityFunction'^{\vectorize{x}}
			\frac{\partitionFunction[\region \setminus \boxRegion{\vectorize{v}}](\activityFunction'_{\vectorize{x}})}{\partitionFunction[\region](\activityFunction')}
			\eulerE^{- \hamiltonian[\vectorize{x}]}
		} .
	\end{align*}
	Finally, by \Cref{lemma:density_ssm} we have the upper bound
	\[
	\frac{ 
		\partitionFunction[\boxesToUpdate \setminus \vectorize{v}][\xi_1 \cup (\eta \cap S)]
	}{
		\partitionFunction[\boxesToUpdate][\xi_1 \cup (\eta \cap (S \setminus \vectorize{v}))]
	}
	\cdot
	\frac
	{
		\partitionFunction[\boxesToUpdate][\xi_2 \cup (\eta \cap (S \setminus \vectorize{v}))]
	}{ 
		\partitionFunction[\boxesToUpdate \setminus \vectorize{v}][\xi_2 \cup (\eta \cap S)]
	}
	\le \exponential{a 3^{\dimensions} \range^{\dimensions} \eulerE^{2 b \range} \left(\activity \range^{\dimensions} + \eulerE^{\activity 3^{\dimensions} \range^{\dimensions}}\right) \eulerE^{-b \range \updateRadius}}, 
	\]
	and applying the same reasoning after swapping the roles of $\activityFunction$ and $\activityFunction'$ results in the corresponding lower bound, which proves the claim.
\end{proof}

\section{Hard-sphere model}
\label{secHardSphere}
In this section we focus on the  hard-sphere model.
Recall that for an interaction range $\range>0$, the hard-sphere model is defined by the potential 
\[
	\potentialHS[x][y] = 
	\begin{cases}
		\infty \text{ if } \dist{x}{y} < \range \\
		0 \text{ otherwise }
	\end{cases} .
\]

To simplify notation, define for every $\vectorize{x} \in \R^{k}$ and $\vectorize{y} \in \R^{m}$ 
\begin{align*}
	\validHS[\vectorize{x}] &= \eulerE^{-\sum_{\set{i, j} \in \binom{[k]}{2}} \potentialHS[x_i][x_j]} = \prod_{\set{i, j} \in \binom{[k]}{2}} \ind{\dist{x_i}{x_j} \ge \range} \text{ and } \\
	\validHS[\vectorize{x}][\vectorize{y}] &= \eulerE^{-\sum_{i \in [k]} \sum_{j \in [m]} \potentialHS[x_i][y_j]} = \prod_{i \in [k]} \prod_{j \in [m]} \ind{\dist{x_i}{y_j} \ge \range} .
\end{align*}
We extend this definition from tuples of points to finite point sets in the obvious way.
This allows us to write 
\[
	\partitionFunction[\subregion][\eta](\activity) = \sum_{k \ge 0} \frac{\activity^k}{k!} \int_{\subregion^k} \validHS[\vectorize{x}] \validHS[\vectorize{x}][\eta] \intD \vectorize{x},
\]
for all measurable $\subregion \subseteq \region$ and all $\eta \in \pointsets[\region]$.

\subsection{Constructing the Bayes filter}
We start by constructing a suitable Bayes filter correction for the hard-sphere model.
The key ingredient will be  computing such a correction by enumerating a finite set of boundary configurations that closely approximates all possible boundary conditions.
This is made precise by the following lemma.

\begin{lemma} \label{lemma:approximate_boundary}
	Let $S \subseteq \boxIds$ be non-empty, $\vectorize{v} \in S$, $\eta \in \pointsets[\region]$, $\boxesToUpdate = \boxesToUpdate[S][\vectorize{v}][\updateRadius]$ and $H = (S \cup \boxesToUpdate)^{\comp}$.
	For all $\varepsilon > 0$, $\xi \in \pointsets[H]$ and 
	\[
		\delta \le \varepsilon \cdot \left(\size{\xi \cap \partial \boxesToUpdate} 2^{\dimensions} \range^{\dimensions - 1} \dimensions^{3/2} \activity \eulerE^{\activity (2\updateRadius + 1)^{\dimensions} \range^{\dimensions}}\right)^{-1}
	\]
	there is some $\gamma \subseteq (\delta \Z)^{\dimensions} \cap \boxRegion{H \cap \partial \boxesToUpdate}$ such that
	\begin{align*}
		\eulerE^{-\varepsilon} \cdot \frac{\partitionFunction[\boxesToUpdate \setminus \vectorize{v}][\xi \cup (\eta \cap S)]}{\partitionFunction[\boxesToUpdate][\xi \cup (\eta \cap (S \setminus \vectorize{v}))]}
		\le \frac{\partitionFunction[\boxesToUpdate \setminus \vectorize{v}][\gamma \cup (\eta \cap S)]}{\partitionFunction[\boxesToUpdate][\gamma \cup (\eta \cap (S \setminus \vectorize{v}))]}
		\le \eulerE^{\varepsilon} \cdot \frac{\partitionFunction[\boxesToUpdate \setminus \vectorize{v}][\xi \cup (\eta \cap S)]}{\partitionFunction[\boxesToUpdate][\xi \cup (\eta \cap (S \setminus \vectorize{v}))]} .
	\end{align*} 
\end{lemma}   

\begin{proof}
	Fix some $\xi \in \pointsets[H]$, let $\Phi$ map every point in $\boxRegion{H \cap \partial \boxesToUpdate}$ to its closest point in $(\delta \Z)^{\dimensions} \cap \boxRegion{H \cap \partial \boxesToUpdate}$ in $\ell_{\infty}$-distance and set $\gamma = \set{\Phi(x) \mid x \in \xi \cap \partial \boxesToUpdate}$.
	We first prove that
	\[
		\eulerE^{-\frac{\varepsilon}{2}} \cdot \partitionFunction[\boxesToUpdate \setminus \vectorize{v}][\xi \cup (\eta \cap S)]
		\le \partitionFunction[\boxesToUpdate \setminus \vectorize{v}][\gamma \cup (\eta \cap S)]
		\le \eulerE^{\frac{\varepsilon}{2}} \cdot \partitionFunction[\boxesToUpdate \setminus \vectorize{v}][\xi \cup (\eta \cap S)] .
	\]
	 To this end, note that by \Cref{lemma:spatial_markov} 
	 \[
	 	\partitionFunction[\boxesToUpdate \setminus \vectorize{v}][\xi \cup (\eta \cap S)]
	 	= 
	 	\partitionFunction[\boxesToUpdate \setminus \vectorize{v}][(\xi \cap \partial \boxesToUpdate) \cup (\eta \cap S)] .
	 \] 
	 Thus, we have
	 \begin{align*}
	 	&\absolute{\partitionFunction[\boxesToUpdate \setminus \vectorize{v}][\xi \cup (\eta \cap S)] - \partitionFunction[\boxesToUpdate \setminus \vectorize{v}][\gamma \cup (\eta \cap S)]} \\
	 	& \hspace{4em} \le \sum_{k \ge 1} \frac{\activity^{k}}{k!} \int_{(\boxRegion{\boxesToUpdate \setminus \vectorize{v}})^k} \validHS[\vectorize{x}] \cdot \absolute{\validHS[\vectorize{x}][(\xi \cap \partial \boxesToUpdate) \cup (\eta \cap S)] - \validHS[\vectorize{x}][\gamma \cup (\eta \cap S)]} \intD \vectorize{x} \\
	 	& \hspace{4em} \le \sum_{k \ge 1} \frac{\activity^{k}}{k!} \int_{(\boxRegion{\boxesToUpdate \setminus \vectorize{v}})^k} \absolute{\validHS[\vectorize{x}][(\xi \cap \partial \boxesToUpdate) \cup (\eta \cap S)] - \validHS[\vectorize{x}][\gamma \cup (\eta \cap S)]} \intD \vectorize{x} .
	 \end{align*}
 	Next, observe that for every $\vectorize{x} \in (\boxRegion{\boxesToUpdate \setminus \vectorize{v}})^k$ it holds that 
    \[
        \validHS[\vectorize{x}][(\xi \cap \partial \boxesToUpdate) \cup (\eta \cap S)] \neq \validHS[\vectorize{x}][\gamma \cup (\eta \cap S)]
    \] 
    implies that there is some $i \in [k]$ and some $y \in \xi \cap \partial \boxesToUpdate$ such that either $\dist{x_i}{y} \ge \range > \dist{x_i}{\Phi(y)}$ or $\dist{x_i}{y} < \range \le \dist{x_i}{\Phi(y)}$.
 	Further, note that for every $y \in \boxRegion{H \cap \partial \boxesToUpdate}$ (and particular $y \in \xi \cap \partial \boxesToUpdate$) it holds that $\dist{y}{\Phi(y)} \le \sqrt{\dimensions} \delta$.
 	Using union bound and observing that $\validHS[\cdot][(\xi \cap \partial \boxesToUpdate) \cup (\eta \cap S)]$ and $\validHS[\cdot][\gamma \cup (\eta \cap S)]$ are symmetric functions, we obtain
 	\begin{align*}
 		&\absolute{\partitionFunction[\boxesToUpdate \setminus \vectorize{v}][\xi \cup (\eta \cap S)] - \partitionFunction[\boxesToUpdate \setminus \vectorize{v}][\gamma \cup (\eta \cap S)]} \\
 		&\hspace{4em}\le \size{\xi \cap \partial \boxesToUpdate} \cdot \left[(\range + \sqrt{\dimensions} \delta)^{\dimensions} - (\range + \sqrt{\dimensions} \delta)^{\dimensions}\right] \cdot \sum_{k \ge 1} \frac{\activity^k}{k!} k \volume{\boxRegion{\boxesToUpdate \setminus \vectorize{v}}}^{k-1} \\
 		&\hspace{4em}\le \size{\xi \cap \partial \boxesToUpdate} \cdot \left[(\range + \sqrt{\dimensions} \delta)^{\dimensions} - (\range + \sqrt{\dimensions} \delta)^{\dimensions}\right] \cdot \activity \eulerE^{\activity \volume{\boxRegion{\boxesToUpdate \setminus \vectorize{v}}}}.
 	\end{align*}
 	Elementary calculations yield
 	\[
 		\left[(\range + \sqrt{\dimensions} \delta)^{\dimensions} - (\range + \sqrt{\dimensions} \delta)^{\dimensions}\right] 
 		\le \dimensions^{3/2} \delta (\range + \sqrt{\dimensions} \delta)^{\dimensions - 1}
 		\le \dimensions^{3/2} \delta 2^{\dimensions - 1} \range^{\dimensions - 1}.
 	\]
 	Further, it holds that
 	\[
 		\volume{\boxRegion{\boxesToUpdate \setminus \vectorize{v}}} 
 		\le \volume{\boxRegion{\boxesToUpdate}} 
 		\le (2\updateRadius + 1)^{\dimensions} \range^{\dimensions} .
 	\]
 	Thus, for our choice of $\delta$ we obtain
 	\[
 		\absolute{\partitionFunction[\boxesToUpdate \setminus \vectorize{v}][\xi \cup (\eta \cap S)] - \partitionFunction[\boxesToUpdate \setminus \vectorize{v}][\gamma \cup (\eta \cap S)]} \le \frac{\varepsilon}{2}, 
 	\]
 	and, since 
 	\[
 		\min\set{\partitionFunction[\boxesToUpdate \setminus \vectorize{v}][\xi \cup (\eta \cap S)], \partitionFunction[\boxesToUpdate \setminus \vectorize{v}][\gamma \cup (\eta \cap S)]} \ge 1,
 	\]
 	this proves the desired multiplicative bound.
 	
 	It remains to show
 	\[
 		\eulerE^{-\frac{\varepsilon}{2}} \cdot \partitionFunction[\boxesToUpdate][\xi \cup (\eta \cap (S \setminus \vectorize{v}))]
 		\le \partitionFunction[\boxesToUpdate][\gamma \cup (\eta \cap (S \setminus \vectorize{v}))]
 		\le \eulerE^{\frac{\varepsilon}{2}} \cdot \partitionFunction[\boxesToUpdate][\xi \cup (\eta \cap (S \setminus \vectorize{v}))],
 	\]
 	which is done analogously, concluding the proof.
\end{proof}

In particular, we obtain the following corollary.
\begin{corollary}\label{cor:approximate_boundary}
	Let $S \subseteq \boxIds$ be non-empty, $\vectorize{v} \in S$, $\eta \in \pointsets[\region]$, $\boxesToUpdate = \boxesToUpdate[S][\vectorize{v}][\updateRadius]$ and $H = (S \cup \boxesToUpdate)^{\comp}$.
	For all $\varepsilon > 0$ and
	\[
		\delta \le \varepsilon \cdot \left(4^{\dimensions} \range^{-1} \dimensions^{(\dimensions + 3)/2} (2\updateRadius + 3)^{\dimensions} \range^{\dimensions} \activity \eulerE^{\activity (2\updateRadius + 1)^{\dimensions} \range^{\dimensions}}\right)^{-1}
	\]
	it holds that
	\begin{align*}
		\eulerE^{-\varepsilon} \cdot \min_{\gamma \subseteq (\delta \Z)^{\dimensions} \cap \boxRegion{H \cap \partial \boxesToUpdate}} \frac{\partitionFunction[\boxesToUpdate \setminus \vectorize{v}][\gamma \cup (\eta \cap S)]}{\partitionFunction[\boxesToUpdate][\gamma \cup (\eta \cap (S \setminus \vectorize{v}))]}
		\le \inf_{\substack{\xi \in \pointsets[H]\\ \xi \cup (\eta \cap S) \text{ is feasible}}}  \frac{\partitionFunction[\boxesToUpdate \setminus \vectorize{v}][\xi \cup (\eta \cap S)]}{\partitionFunction[\boxesToUpdate][\xi \cup (\eta \cap (S \setminus \vectorize{v}))]} .
	\end{align*}
\end{corollary}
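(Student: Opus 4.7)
My plan is to deduce Corollary~\ref{cor:approximate_boundary} directly from Lemma~\ref{lemma:approximate_boundary} by upgrading the per-$\xi$ threshold on $\delta$ to one that is uniform over all admissible $\xi$. For any fixed feasible $\xi$, Lemma~\ref{lemma:approximate_boundary} already delivers a grid configuration $\gamma \subseteq (\delta\Z)^{\dimensions} \cap \boxRegion{H \cap \partial \boxesToUpdate}$ whose ratio of partition functions agrees with the ratio at $\xi$ up to a factor of $\eulerE^{\pm\varepsilon}$, provided $\delta$ does not exceed $\varepsilon$ divided by a dimensional constant times $\size{\xi \cap \boxRegion{\partial \boxesToUpdate}}$. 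The only obstacle to reading off the corollary is that this per-$\xi$ threshold still depends on $\xi$.

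The main step is to bound $\size{\xi \cap \boxRegion{\partial \boxesToUpdate}}$ uniformly in $\xi$ via hard-sphere packing. Any $\xi$ admissible in the right-hand infimum is itself feasible, since otherwise $\xi \cup (\eta \cap \boxRegion{S})$ would fail to be feasible. Hence the points of $\xi \cap \boxRegion{\partial \boxesToUpdate}$ are pairwise at Euclidean distance at least $\range$, so the open balls of radius $\range/2$ around them are pairwise disjoint and contained in the $\range/2$-neighbourhood of $\boxRegion{\partial \boxesToUpdate}$. Since $\boxRegion{\boxesToUpdate \cup \partial \boxesToUpdate}$ fits inside a cube of side $(2\updateRadius+3)\range$, a volume comparison yields
\[
\size{\xi \cap \boxRegion{\partial \boxesToUpdate}} \le \frac{((2\updateRadius+3)\range + \range)^{\dimensions}}{v_{\dimensions} (\range/2)^{\dimensions}} = \frac{2^{\dimensions} (2\updateRadius+4)^{\dimensions}}{v_{\dimensions}},
\]
where $v_{\dimensions}$ denotes the volume of the unit $\dimensions$-ball. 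Combining this with the elementary estimate $2^{\dimensions} v_{\dimensions}^{-1} \le \dimensions^{\dimensions/2}$, verifiable directly from $v_{\dimensions} = \pi^{\dimensions/2}/\Gamma(\dimensions/2+1)$ via Stirling, and substituting into the threshold from Lemma~\ref{lemma:approximate_boundary} produces the uniform $\delta$-bound stated in the corollary, with the remaining constants absorbed into the stated factors $4^{\dimensions}$, $(2\updateRadius+3)^{\dimensions}$, and $\dimensions^{(\dimensions+3)/2}$.

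With this uniform $\delta$ in place, the inequality follows at once. For every feasible $\xi$, Lemma~\ref{lemma:approximate_boundary} supplies some $\gamma_{\xi} \subseteq (\delta\Z)^{\dimensions} \cap \boxRegion{H \cap \partial \boxesToUpdate}$ with
\[
\frac{\partitionFunctionHS[\activity_{\gamma_{\xi} \cup (\eta \cap \boxRegion{S})}][\boxRegion{\boxesToUpdate \setminus \set{\vectorize{v}}}]}{\partitionFunctionHS[\activity_{\gamma_{\xi} \cup (\eta \cap \boxRegion{S \setminus \set{\vectorize{v}}})}][\boxRegion{\boxesToUpdate}]} \le \eulerE^{\varepsilon} \cdot \frac{\partitionFunctionHS[\activity_{\xi \cup (\eta \cap \boxRegion{S})}][\boxRegion{\boxesToUpdate \setminus \set{\vectorize{v}}}]}{\partitionFunctionHS[\activity_{\xi \cup (\eta \cap \boxRegion{S \setminus \set{\vectorize{v}}})}][\boxRegion{\boxesToUpdate}]}.
\]
Since $\gamma_{\xi}$ belongs to the finite family indexing the left-hand minimum, the minimum is at most the ratio at $\gamma_{\xi}$, hence at most $\eulerE^{\varepsilon}$ times the ratio at $\xi$; taking the infimum over admissible $\xi$ on the right and multiplying through by $\eulerE^{-\varepsilon}$ delivers the corollary. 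I expect the only mildly subtle step to be checking that the dimensional constants from the packing estimate are genuinely absorbed by those stated in the corollary; the remainder is direct rearrangement.
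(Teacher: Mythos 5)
Your proposal is correct and follows the same route as the paper's proof: reduce to Lemma~\ref{lemma:approximate_boundary} by producing a uniform-in-$\xi$ packing bound on $\size{\xi \cap \boxRegion{\partial \boxesToUpdate}}$ for feasible $\xi$, then pick $\delta$ small enough to absorb that bound. The only cosmetic difference is that you use disjoint Euclidean balls of radius $\range/2$ together with the inscribed-cube estimate $v_{\dimensions} \ge (2/\sqrt{\dimensions})^{\dimensions}$, whereas the paper argues directly with disjoint axis-aligned cubes of side $\range/(2\sqrt{\dimensions})$; both yield a bound on $\size{\xi \cap \boxRegion{\partial \boxesToUpdate}}$ that is dominated by $2^{\dimensions}\dimensions^{\dimensions/2}(2\updateRadius+3)^{\dimensions}$, which is exactly what the stated $\delta$-threshold requires.
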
 

\begin{proof}
	Note that, if $\xi \in \pointsets[H]$ is such that $\xi \cup (\eta \cap S)$ is feasible, this implies in particular that $\xi' = \xi \cap \partial \boxesToUpdate$ is feasible.
	Thus, the claim follows from \Cref{lemma:approximate_boundary} by arguing that every feasible configuration $\xi' \in \pointsets[H \cap \partial \boxesToUpdate]$ satisfies $\size{\xi'} \le \left(\frac{2 \sqrt{\dimensions}}{\range}\right)^{\dimensions} (2 \updateRadius + 3)^{\dimensions} \range^{\dimensions}$.
	To see this, note that every point in $\xi'$ blocks at least a volume of $\left(\frac{\range}{2 \sqrt{\dimensions}}\right)^{\dimensions}$, where no other point can be placed. 
	Moreover, it holds that
	\[
		\volume{\boxRegion{H \cap \partial \boxesToUpdate}} \le \volume{\boxRegion{\boxesToUpdate \cup \partial \boxesToUpdate}} 
		\le (2 \updateRadius + 3)^{\dimensions} \range^{\dimensions},
	\]
	which concludes the proof.
\end{proof}

\Cref{cor:approximate_boundary} allows us to replace the minimization over the uncountable set of boundary conditions $\pointsets[H]$ by a minimization over the finite set $(\delta \Z)^{\dimensions} \cap \boxRegion{H \cap \partial \boxesToUpdate}$.
The second ingredient that we need for computing a suitable Bayes filter correction is a way to approximate the involved partition functions.

To this end, for every non-empty $S \subseteq \boxIds$, $\eta \in \pointsets[\region]$ and $\delta > 0$, define
\begin{align} \label{eq:approx_HS}
	\partitionFunctionApprox[S][\eta][\delta] = \sum_{\gamma \subseteq (\delta \Z)^{\dimensions} \cap \boxRegion{S}} \activity^{\size{\gamma}} \delta^{\size{\gamma}} \cdot \validHS[\gamma] \cdot \validHS[\gamma][\eta \cap \partial S] .
\end{align}
The follow lemma justifies using $\partitionFunctionApprox[S][\eta][\delta]$ as an approximation for the hard-sphere partition function $\partitionFunction[S][\eta \cap S^{\comp}]$, given that $\delta$ is chosen sufficiently small.

\begin{lemma}\label{lemma:approx_hardspheres}
	Let $S \subseteq \boxIds$ be non-empty and $\eta \in \pointsets[\region]$.
	For all $\varepsilon > 0$ and 
	\[
		\delta \le \varepsilon \left[\dimensions^{3/2} 2^{\dimensions} \max\set{\range^{-1}, \range^{\dimensions-1}} \max\set{\activity, \activity^2} (\size{\eta \cap \partial S} + \volume{\boxRegion{S \cup \partial S}}) \eulerE^{\activity \volume{\boxRegion{S \cup \partial S}}} \right]^{-1}
	\]
	it holds that
	\[
		\eulerE^{-\varepsilon} \cdot \partitionFunction[S][\eta \cap S^{\comp}] 	
		\le \partitionFunctionApprox[S][\eta][\delta]
		\le \eulerE^{\varepsilon} \cdot \partitionFunction[S][\eta \cap S^{\comp}] ,
	\]
	where $\partitionFunctionApprox[S][\eta][\delta]$ is defined as in (\ref{eq:approx_HS}).
\end{lemma}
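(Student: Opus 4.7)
The plan is to display $\partitionFunctionApprox[S][\eta][\delta]$ as a Riemann sum discretization of the cluster expansion of $\partitionFunctionHS[\activity_{\eta \cap \complementOf{\boxRegion{S}}}][\boxRegion{S}]$ and to bound the discretization error by the measure of configurations with some pair of coordinates near the hard-core distance $\range$. The first step is to reduce the boundary condition via the spatial Markov property (\Cref{lemma:spatial_markov}): any point of $\eta \cap \complementOf{\boxRegion{S \cup \partial S}}$ lies at distance at least $\range$ from $\boxRegion{S}$ and so contributes no hard-core constraint, giving
\[
\partitionFunctionHS[\activity_{\eta \cap \complementOf{\boxRegion{S}}}][\boxRegion{S}] = \partitionFunctionHS[\activity_{\eta \cap \boxRegion{\partial S}}][\boxRegion{S}] = \sum_{k \ge 0} \frac{\activity^{k}}{k!} \int_{\boxRegion{S}^k} \validHS[\vectorize{x}] \validHS[\vectorize{x}][\eta \cap \boxRegion{\partial S}] \intD \vectorize{x}.
\]
Setting $T = (\delta \Z)^{\dimensions} \cap \boxRegion{S}$ and noting that tuples with repeated coordinates satisfy $\validHS[\cdot]=0$, the approximation can be written as
\[
\partitionFunctionApprox[S][\eta][\delta] = \sum_{k \ge 0} \frac{\activity^{k} \delta^{k\dimensions}}{k!} \sum_{\vectorize{v} \in T^k} \validHS[\vectorize{v}] \validHS[\vectorize{v}][\eta \cap \boxRegion{\partial S}],
\]
exhibiting its level-$k$ term as the corner-point Riemann sum of the level-$k$ integral, over cubes of per-coordinate volume $\delta^{\dimensions}$.

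The main step is the per-level error bound. Tile $\boxRegion{S}^k$ by cubes of side $\delta$ based at $T^k$. Since $f_k(\vectorize{x}) := \validHS[\vectorize{x}]\validHS[\vectorize{x}][\eta \cap \boxRegion{\partial S}]$ is $\{0,1\}$-valued, each cube contributes at most $\delta^{k\dimensions}$ to the Riemann error and nothing if $f_k$ is constant on it. Inside one cube each coordinate varies by at most $\sqrt{\dimensions}\delta$ in Euclidean norm, so $\dist{x_i}{x_j}$ varies by at most $2\sqrt{\dimensions}\delta$ and $\dist{x_i}{y}$ (for $y \in \eta \cap \boxRegion{\partial S}$) by at most $\sqrt{\dimensions}\delta$; hence $f_k$ can fail to be constant only on cubes in which some $\dist{x_i}{x_j}$ or $\dist{x_i}{y}$ falls in a shell of width $O(\sqrt{\dimensions}\delta)$ around $\range$. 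A union bound over the $\binom{k}{2}$ internal pairs and the $k\size{\eta \cap \boxRegion{\partial S}}$ coordinate-boundary pairs, combined with the shell-volume estimate $(\range + c\sqrt{\dimensions}\delta)^{\dimensions}-(\range - c\sqrt{\dimensions}\delta)^{\dimensions} \le 2c\dimensions^{3/2}(2\range)^{\dimensions-1}\delta$ (valid once $\delta \le \range$), bounds the ambiguous-tuple volume at level $k$ by $O\bigl(\dimensions^{3/2} 2^{\dimensions}\range^{\dimensions-1}\delta\bigr)\bigl(k^{2} + k\size{\eta \cap \boxRegion{\partial S}}\bigr)\volume{\boxRegion{S \cup \partial S}}^{k-1}$. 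Summing against the Poisson weights $\activity^{k}/k!$ produces a total additive error of order $\dimensions^{3/2} 2^{\dimensions}\range^{\dimensions-1}\delta\bigl(\activity^{2}\volume{\boxRegion{S \cup \partial S}} + \activity\size{\eta \cap \boxRegion{\partial S}}\bigr)\eulerE^{\activity\volume{\boxRegion{S \cup \partial S}}}$, matching (up to an absolute constant) the reciprocal of the hypothesis on $\delta$.

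Finally I convert the additive bound into the stated multiplicative one: both partition functions are at least $1$ from the $k=0$ term, so an additive error of at most $\varepsilon/2$ pins the ratio $\partitionFunctionApprox[S][\eta][\delta] / \partitionFunctionHS[\activity_{\eta \cap \complementOf{\boxRegion{S}}}][\boxRegion{S}]$ into $[1-\varepsilon/2,1+\varepsilon/2] \subseteq [\eulerE^{-\varepsilon},\eulerE^{\varepsilon}]$ for $\varepsilon \in (0,1]$ (using that $\eulerE^{-\varepsilon} \le 1 - \varepsilon/2$ on this interval, with the case $\varepsilon > 1$ handled by monotonicity). The $\range^{-1}$ appearing in the hypothesis on $\delta$ can be read as a side requirement $\delta \le \range$ that validates the shell-volume expansion. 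I expect the main obstacle to be the careful bookkeeping of constants in the shell-volume and Poisson-sum estimates so that the exponent structure and the $\max\{\range^{-1},\range^{\dimensions-1}\}$, $\max\{\activity,\activity^{2}\}$ terms line up exactly with the statement; the underlying mathematics is a routine discretization of a Poisson-weighted integral against a $\{0,1\}$-valued integrand whose boundary is a union of thin spherical shells.
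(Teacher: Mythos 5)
Your Riemann-discretization plus shell-volume argument matches the core of the paper's proof of the Riemann-error term, but the argument as written has a genuine gap: you cannot ``tile $\boxRegion{S}^k$ by cubes of side $\delta$ based at $T^k$,'' because the union of $\delta$-cells around grid points $(\delta\Z)^{\dimensions}\cap\boxRegion{S}$ --- the region the paper calls $\boxRegion{S,\delta}$ --- does not coincide with $\boxRegion{S}$ unless $\range$ is an integer multiple of $\delta$, which is generically false. Your Riemann sum therefore approximates the integral over $\boxRegion{S,\delta}^k$, not over $\boxRegion{S}^k$, and you have not accounted for the discrepancy. The symmetric difference $\boxRegion{S}\symmDiff\boxRegion{S,\delta}$ is a thin shell around each box boundary, of volume up to about $\volume{\boxRegion{S}}\,\dimensions^{3/2}\,2^{\dimensions}\,\range^{-1}\,\delta$, and it perturbs the partition function by a multiplicative factor of order $\eulerE^{\pm\activity\volume{\boxRegion{S}\symmDiff\boxRegion{S,\delta}}}$. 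This is precisely where the $\range^{-1}$ in $\max\{\range^{-1},\range^{\dimensions-1}\}$ comes from, and for $\range < 1$ it strictly dominates the $\range^{\dimensions-1}$ that your shell estimate produces, so without it your bound is too weak.

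Consequently your reading of $\range^{-1}$ as a harmless side condition ``$\delta\le\range$'' is incorrect; it is a bona fide error term. The paper fixes this by splitting the $\varepsilon$ budget: first it bounds $\partitionFunctionHS[\activity_{\eta\cap\boxRegion{\partial S}}][\boxRegion{S}]$ against $\partitionFunctionHS[\activity_{\eta\cap\boxRegion{\partial S}}][\boxRegion{S,\delta}]$ within a factor $\eulerE^{\pm\varepsilon/2}$ using the symmetric-difference volume (yielding the constraint with $\range^{-1}$), and only then runs your shell-volume comparison between $\partitionFunctionHS[\activity_{\eta\cap\boxRegion{\partial S}}][\boxRegion{S,\delta}]$ and $\partitionFunctionApprox[S][\eta][\delta]$, where the grid cells genuinely tile $\boxRegion{S,\delta}$ by construction. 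The rest of your reasoning --- the union bound over $\binom{k}{2}$ internal pairs and $k\size{\eta\cap\boxRegion{\partial S}}$ boundary pairs, the Poisson-weight summation, and the additive-to-multiplicative conversion using that both quantities are at least $1$ --- lines up with the paper and would close the argument once the domain-mismatch step is inserted.
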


\begin{proof}
	Define $\boxRegion{S, \delta} = \bigcup_{x \in  \subseteq (\delta \Z)^{\dimensions} \cap \boxRegion{S}} \ball{x}{\delta/2}[\infty]$, where $\ball{x}{\delta/2}[\infty]$ is the closed $\delta/2$-ball around $x$ in infinity norm.
	Moreover, let $\Phi$ map every point in $\boxRegion{S, \delta}$ to its closest point in $(\delta \Z)^{\dimensions} \cap \boxRegion{S}$ in $\ell_{\infty}$-distance, breaking ties arbitrarily.
	With some abuse of notation, we extend $\Phi$ to tuples $\vectorize{x} \in \boxRegion{S, \delta}$ by setting $\Phi(\vectorize{x}) = (\Phi(x_1), \dots, \Phi(x_k))$.
	Now, note that
	\begin{align*}
		\partitionFunctionApprox[S][\eta][\delta] 
		&= \sum_{k \ge 0} \frac{\activity^{k}}{k!} \delta^{k} \sum_{\vectorize{x} \in (\delta \Z)^{\dimensions k} \cap \boxRegion{S}^k} \validHS[\vectorize{x}] \cdot \validHS[\vectorize{x}][\eta \cap \partial S] \\
		&= \sum_{k \ge 0} \frac{\activity^{k}}{k!} \int_{\boxRegion{S, \delta}^k} \validHS[\Phi(\vectorize{x})] \cdot \validHS[\Phi(\vectorize{x})][\eta \cap \partial S] \intD \vectorize{x} ,\\
	\end{align*}
	where the first equality uses the fact $\validHS[\vectorize{x}] = 0$ whenever $\vectorize{x}$ contains the same point more than once.
	
	We proceed by relating $\partitionFunctionApprox[S][\eta][\delta]$ to $\partitionFunction[S][\eta \cap S^{\comp}]$ in two steps. 
	First, we compare $\partitionFunction[S][\eta \cap S^{\comp}]$ with $\partitionFunction[\boxRegion{S, \delta}][\eta \cap \partial S]$, and then we compare $\partitionFunction[\boxRegion{S, \delta}][\eta \cap \partial S]$ with $\partitionFunctionApprox[S][\eta][\delta]$ using the expression above.
	
	For the first part, note that by \Cref{lemma:spatial_markov} it holds that $\partitionFunction[S][\eta \cap S^{\comp}] = \partitionFunction[S][\eta \cap \partial S]$.
	Moreover, we have
	\begin{align*}
		\partitionFunction[S][\eta \cap \partial S]
		&\le \partitionFunction[\boxRegion{S} \setminus \boxRegion{S, \delta}][\eta \cap \partial S] \cdot \partitionFunction[\boxRegion{S} \cap \boxRegion{S, \delta}][\eta \cap \partial S] \\
		&\le \eulerE^{\activity \volume{\boxRegion{S} \symmDiff \boxRegion{S, \delta}}} \cdot \partitionFunction[\boxRegion{S, \delta}][\eta \cap \partial S] ,
	\end{align*}
	where $\symmDiff$ denotes the symmetric difference.
	Analogously, it holds that
	\[
		\partitionFunction[\boxRegion{S, \delta}][\eta \cap \partial S]
		\le \eulerE^{\activity \volume{\boxRegion{S} \symmDiff \boxRegion{S, \delta}}} \cdot  \partitionFunction[S][\eta \cap \partial S] .
	\]
	Thus, if we show that $\activity \volume{\boxRegion{S} \symmDiff \boxRegion{S, \delta}} \le \varepsilon/2$ for our choice of $\delta$, then 
	\begin{align}
		\eulerE^{-\varepsilon/2}\partitionFunction[S][\eta \cap S^{\comp}]
		\le \partitionFunction[\boxRegion{S, \delta}][\eta \cap \partial S]
		\le \eulerE^{\varepsilon/2}\partitionFunction[S][\eta \cap S^{\comp}] \label{eq:approx_HS:bound1}.
	\end{align}
	To this end, note that, if $x \in \boxRegion{S, \delta} \setminus \boxRegion{S}$, then $x \notin \boxRegion{S}$ but $\dist{x}{\boxRegion{S}} \le \frac{\sqrt{\dimensions} \delta}{2}$.
	Similarly, if $x \in \boxRegion{S} \setminus \boxRegion{S, \delta}$, then $x \in \boxRegion{S}$ but $\dist{x}{\boxRegion{S}^{\comp}} \le \frac{\sqrt{\dimensions} \delta}{2}$.
	Taking the union bound over boxes $\vectorize{v} \in S$ yields
	\begin{align*}
		\volume{\boxRegion{S} \symmDiff \boxRegion{S, \delta}} 
		&\le \size{S} \cdot \left[(\range + \sqrt{\dimensions} \delta /2)^{\dimensions} - (\range - \sqrt{\dimensions} \delta /2)^{\dimensions}\right] \\
		&\le \size{S}  \dimensions^{3/2} \delta \cdot (\range + \sqrt{\dimensions} \delta /2)^{\dimensions - 1} \\
		&\le \volume{\boxRegion{S}} \dimensions^{3/2} 2^{\dimensions-1} \range^{-1} \delta .
	\end{align*}
	Thus, for $\delta \le \varepsilon \left(\activity \volume{\boxRegion{S}} \dimensions^{3/2} 2^{\dimensions} \range^{- 1}\right)^{-1}$ the desired inequality is satisfied.
	
	We proceed by relating $\partitionFunction[\boxRegion{S, \delta}][\eta \cap \partial S]$ to $\partitionFunctionApprox[S][\eta][\delta]$.
	First, note that
	\begin{align*}
		&\absolute{\partitionFunction[\boxRegion{S, \delta}][\eta \cap \partial S] - \partitionFunctionApprox[S][\eta][\delta]} \\
		&\le \sum_{k \ge 0} \frac{\activity^{k}}{k!} \int_{\boxRegion{S, \delta}^k} \absolute{\validHS[\vectorize{x}] \cdot \validHS[\vectorize{x}][\eta \cap \partial S] - \validHS[\Phi(\vectorize{x})] \cdot \validHS[\Phi(\vectorize{x})][\eta \cap \partial S]} \intD \vectorize{x} \\
		&\le \sum_{k \ge 2} \frac{\activity^{k}}{k!} \int_{\boxRegion{S, \delta}^k} \absolute{\validHS[\vectorize{x}] - \validHS[\Phi(\vectorize{x})]} \intD \vectorize{x} 
		+ \sum_{k \ge 1} \frac{\activity^{k}}{k!} \int_{\boxRegion{S, \delta}^k} \absolute{ \validHS[\vectorize{x}][\eta \cap \partial S] -  \validHS[\Phi(\vectorize{x})][\eta \cap \partial S]} \intD \vectorize{x}.
	\end{align*}
	We bound each of the terms in this sum separately.
	To this end, note that for any $\vectorize{x} \in \boxRegion{S, \delta}^k$ it holds that $\validHS[\vectorize{x}] \neq \validHS[\Phi(\vectorize{x})]$ implies that there are $i<j$ such that either $\dist{x_i}{x_j} < \range \le \dist{\Phi(x_i)}{\Phi(x_j)}$ or $\dist{x_i}{x_j} \ge \range > \dist{\Phi(x_i)}{\Phi(x_j)}$.
	Since $\dist{x}{\Phi(x)} \le \frac{\sqrt{\dimensions} \delta}{2}$, applying union bound over $1 \le i < j \le k$ yields
	\begin{align*}
		\sum_{k \ge 2} \frac{\activity^{k}}{k!} \int_{\boxRegion{S, \delta}^k} \absolute{\validHS[\vectorize{x}] - \validHS[\Phi(\vectorize{x})]} \intD \vectorize{x}
		&\le \left[(\range + \sqrt{\dimensions} \delta)^{\dimensions} - (\range - \sqrt{\dimensions} \delta)^{\dimensions}\right] \sum_{k \ge 2} \frac{\activity^{k}}{k!} k (k-1) \volume{\boxRegion{S, \delta}}^{k-1} \\
		&\le \dimensions^{3/2} \delta \cdot (\range + \sqrt{\dimensions} \delta)^{\dimensions-1} \activity^2 \volume{\boxRegion{S, \delta}} \eulerE^{\activity \volume{\boxRegion{S, \delta}}} \\
		&\le \dimensions^{3/2} \delta 2^{\dimensions-1} \range^{\dimensions-1} \activity^2 \volume{\boxRegion{S, \delta}} \eulerE^{\activity \volume{\boxRegion{S, \delta}}} .
	\end{align*}
	Similarly, we have 
	\begin{align*}
		\sum_{k \ge 1} \frac{\activity^{k}}{k!} \int_{\boxRegion{S, \delta}^k} \absolute{ \validHS[\vectorize{x}][\eta \cap \partial S] -  \validHS[\Phi(\vectorize{x})][\eta \cap \partial S]} \intD \vectorize{x}
		\le \dimensions^{3/2} \delta 2^{\dimensions-1} \range^{\dimensions-1} \activity \size{\eta \cap \partial S} \eulerE^{\activity \volume{\boxRegion{S, \delta}}}.
	\end{align*}
	Combining both and noting that $\boxRegion{S, \delta} \subseteq \bigcup_{x \in \boxRegion{S}} \ball{x}{\delta/2}[\infty] \subseteq \boxRegion{S \cup \partial S}$ yields
	\[
		\absolute{\partitionFunction[\boxRegion{S, \delta}][\eta \cap \partial S] - \partitionFunctionApprox[S][\eta][\delta]}
		\le  \dimensions^{3/2} \delta 2^{\dimensions-1} \range^{\dimensions - 1} \max\set{\activity, \activity^2} (\size{\eta \cap \partial S}+\volume{\boxRegion{S \cup \partial S}}) \eulerE^{\activity \volume{\boxRegion{S \cup \partial S}}} .
	\]
	For $\delta \le \varepsilon \left(\dimensions^{3/2} 2^{\dimensions} \range^{\dimensions-1} \max\set{\activity, \activity^2} (\size{\eta \cap \partial S + \volume{\boxRegion{S \cup \partial S}}}) \eulerE^{\activity \volume{\boxRegion{S \cup \partial S}}}\right)^{-1}$ this gives
	\[
		\absolute{\partitionFunction[\boxRegion{S, \delta}][\eta \cap \partial S] - \partitionFunctionApprox[S][\eta][\delta]} \le \varepsilon/2 ,
	\]
	and, since $\partitionFunction[\boxRegion{S, \delta}][\eta \cap \partial S] \ge 1$ and $\partitionFunctionApprox[S][\eta][\delta] \ge 1$, 
	\[
		\eulerE^{-\varepsilon/2} \cdot \partitionFunction[\boxRegion{S, \delta}][\eta \cap \partial S]
		\le \partitionFunctionApprox[S][\eta][\delta]
		\le \eulerE^{\varepsilon/2} \cdot \partitionFunction[\boxRegion{S, \delta}][\eta \cap \partial S] .
	\]
	Combining this with (\ref{eq:approx_HS:bound1}) concludes the proof.
\end{proof}

We now combine \Cref{cor:approximate_boundary} and \Cref{lemma:approx_hardspheres} to obtain our Bayes filter correction for the hard-sphere model.

\begin{lemma} \label{lemma:correction_HS}
	For $\varepsilon > 0$, non-empty $S \subseteq \boxIds$, $\vectorize{v} \in S$ and feasible $\eta \in \pointsets[\region]$ set
	\begin{align*}
		\delta_1 &= \delta_1(\varepsilon) \coloneqq \frac{\varepsilon}{2} \cdot \left(4^{\dimensions} \range^{-1} \dimensions^{(\dimensions + 3)/2} (2\updateRadius + 3)^{\dimensions} \range^{\dimensions} \activity \eulerE^{\activity (2\updateRadius + 1)^{\dimensions} \range^{\dimensions}}\right)^{-1} \text{ and } \\
		\delta_2 &= \delta_2(\varepsilon) \coloneqq \frac{\varepsilon}{4} \cdot \left(\dimensions^{3/2} 2^{\dimensions} \max\set{\range^{-1}, \range^{\dimensions-1}} \max\set{\activity, \activity^2} m \eulerE^{\activity (2\updateRadius + 3)^{\dimensions} \range^{\dimensions}} \right)^{-1},
	\end{align*}
	where $m = 2^{\dimensions}(2\updateRadius + 3)^{\dimensions} \range^{\dimensions} \left(\delta_1^{-\dimensions} + \dimensions^{\dimensions/2} \range^{-\dimensions}\right) + (2\updateRadius + 3)^{\dimensions} \range^{\dimensions}$, and define
	\[
		\bayesFilterCorrectionHS{\varepsilon}[S][\vectorize{v}][\eta] \coloneqq \eulerE^{-\varepsilon} \cdot \min_{\gamma \subseteq (\delta_1 \Z)^{\dimensions} \cap \boxRegion{H \cap \partial \boxesToUpdate}} 
			\frac{\partitionFunctionApprox[\boxesToUpdate \setminus \vectorize{v}][\gamma \cup (\eta \cap S)][\delta_2]}
			{\partitionFunctionApprox[\boxesToUpdate][\gamma \cup (\eta \cap \boxRegion{S \setminus \vectorize{v}})][\delta_2]} ,
	\]
	where $\boxesToUpdate = \boxesToUpdate[S][\vectorize{v}][\updateRadius]$ and $H = (S \cup \boxesToUpdate)^{\comp}$.
	Then $\bayesFilterCorrectionHS{\varepsilon}[S][\vectorize{v}][\eta]$ is a Bayes filter correction as in \Cref{def:bayes_filter}.
\end{lemma}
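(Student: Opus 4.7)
I would verify the two properties of \Cref{def:bayes_filter} separately. The first, $\pointsetEvents[\boxRegion{S}]$-measurability, is immediate: each of $\partitionFunctionApprox[\boxesToUpdate \setminus \set{\vectorize{v}}][\gamma \cup (\eta \cap \boxRegion{S})][\delta_2]$ and $\partitionFunctionApprox[\boxesToUpdate][\gamma \cup (\eta \cap \boxRegion{S \setminus \vectorize{v}})][\delta_2]$ depends on $\eta$ only through the intersections $\eta \cap \boxRegion{\partial(\boxesToUpdate \setminus \set{\vectorize{v}})} \cap \boxRegion{S}$ and $\eta \cap \boxRegion{\partial \boxesToUpdate} \cap \boxRegion{S \setminus \set{\vectorize{v}}}$, both contained in $\boxRegion{S}$. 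Since the outer minimum ranges over the finite collection of subsets $\gamma \subseteq (\delta_1 \Z)^{\dimensions} \cap \boxRegion{H \cap \partial \boxesToUpdate}$, the resulting function is measurable.

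For the second condition I would establish the upper bound against the infimum in two stages. First, \Cref{lemma:approx_hardspheres} applied with tolerance $\varepsilon/4$ to both partition function approximations replaces each $\partitionFunctionApprox$ by the corresponding $\partitionFunctionHS$, introducing a total multiplicative error of at most $\eulerE^{\varepsilon/2}$ on the ratio. Second, \Cref{cor:approximate_boundary} applied with tolerance $\varepsilon/2$ converts the minimum over the finite grid $\gamma$ into the infimum over $\xi \in \pointsets[\boxRegion{H}]$ with $\xi \cup (\eta \cap \boxRegion{S})$ feasible, at the cost of another factor $\eulerE^{\varepsilon/2}$. Together these two errors are absorbed by the $\eulerE^{-\varepsilon}$ prefactor built into the definition of $\bayesFilterCorrectionHS{\varepsilon}$, yielding exactly the upper bound in \Cref{def:bayes_filter}.

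The main technical work is verifying that the prescribed $\delta_1, \delta_2$ satisfy the smallness hypotheses of the two cited lemmas. For $\delta_1$ this is a direct match against the bound in \Cref{cor:approximate_boundary}. For $\delta_2$ one must show $m$ upper bounds $\size{(\gamma \cup (\eta \cap \boxRegion{S})) \cap \boxRegion{\partial T}} + \volume{\boxRegion{T \cup \partial T}}$ uniformly for $T \in \set{\boxesToUpdate, \boxesToUpdate \setminus \set{\vectorize{v}}}$, for every feasible $\eta$ and every $\gamma$ in the grid. I would split this into three contributions. The volume term is bounded by $\volume{\boxRegion{\boxesToUpdate \cup \partial \boxesToUpdate}} \le (2\updateRadius + 3)^{\dimensions}\range^{\dimensions}$, which matches the final summand of $m$. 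The grid contribution $\size{\gamma}$ is bounded by the number of lattice points of $(\delta_1 \Z)^{\dimensions}$ inside the cube containing $\boxRegion{H \cap \partial \boxesToUpdate}$, which is at most $((2\updateRadius + 3)\range/\delta_1 + 1)^{\dimensions} \le 2^{\dimensions}(2\updateRadius + 3)^{\dimensions}\range^{\dimensions}\delta_1^{-\dimensions}$. Finally, for points of $\eta \cap \boxRegion{S}$ lying in $\boxRegion{\partial T}$, feasibility enforces minimum pairwise distance $\range$, and the standard packing argument (already used in the proof of \Cref{cor:approximate_boundary}) yields at most $(2\sqrt{\dimensions}/\range)^{\dimensions}\volume{\boxRegion{\partial T}} \le 2^{\dimensions}\dimensions^{\dimensions/2}(2\updateRadius + 3)^{\dimensions}$ such points. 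Summing these three bounds gives exactly $m$.

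For the lower bound in \Cref{def:bayes_filter}, I would observe that $\partitionFunctionApprox[\boxesToUpdate \setminus \set{\vectorize{v}}][\gamma \cup (\eta \cap \boxRegion{S})][\delta_2] \ge 1$ (the empty summand contributes $1$), while \Cref{lemma:approx_hardspheres} combined with Poisson domination $\partitionFunctionHS[\activity_{\cdot}][\boxRegion{\boxesToUpdate}] \le \eulerE^{\activity \volume{\boxRegion{\boxesToUpdate}}}$ bounds the denominator by $\eulerE^{\varepsilon/4 + \activity(2\updateRadius + 1)^{\dimensions}\range^{\dimensions}}$. This produces the uniform positive lower bound $\bayesFilterCorrectionHS{\varepsilon}[S][\vectorize{v}][\eta] \ge \exponential{-5\varepsilon/4 - \activity(2\updateRadius + 1)^{\dimensions}\range^{\dimensions}}$, depending only on $\dimensions, \activity, \updateRadius, \range, \varepsilon$. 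The principal obstacle throughout is simply the bookkeeping of constants required to align the prescribed $\delta_1, \delta_2$ with the exact forms of the hypotheses in \Cref{cor:approximate_boundary} and \Cref{lemma:approx_hardspheres}; no new conceptual input beyond the two cited approximation results and the packing bound is needed.
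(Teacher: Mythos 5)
Your proposal is correct and follows essentially the same two-step approach as the paper: first pass from $\partitionFunctionApprox$ to $\partitionFunctionHS$ via \Cref{lemma:approx_hardspheres} (tolerance $\varepsilon/4$ each, $\eulerE^{\varepsilon/2}$ total on the ratio), then pass from the finite grid minimum to the infimum over boundary configurations via \Cref{cor:approximate_boundary} (tolerance $\varepsilon/2$), with the same three-term decomposition of $m$ (volume, grid cardinality, packing). The one place you diverge from the paper is the lower bound: the paper bounds the ratio of approximate partition functions directly, observing the numerator is $\ge 1$ (empty $\gamma'$) and the denominator is $\le \sum_{\gamma'}(\activity\delta_2)^{\size{\gamma'}} = (1+\activity\delta_2)^{\size{(\delta_2\Z)^\dimensions\cap\boxRegion{\boxesToUpdate}}}$, without invoking \Cref{lemma:approx_hardspheres} at all; your route via $\partitionFunctionApprox \le \eulerE^{\varepsilon/4}\partitionFunctionHS \le \eulerE^{\varepsilon/4+\activity\volume{\boxRegion{\boxesToUpdate}}}$ also works but is slightly more circuitous, since the direct combinatorial bound on $\partitionFunctionApprox$ sidesteps the need to re-verify the hypotheses of \Cref{lemma:approx_hardspheres} for the lower-bound side.
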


\begin{proof}
	We start by arguing that $\bayesFilterCorrectionHS{\varepsilon}$ is a Bayes filter correction.
	For the measurability, note that for every fixed non-empty $S \subseteq \boxIds$ and $\vectorize{v} \in S$ it holds that $\bayesFilterCorrectionHS{\varepsilon}[S][\vectorize{v}][\cdot]$ is a minimum of a finite set of $\pointsetEvents$-measurable functions. 
	Moreover, it can be easily seen that $\bayesFilterCorrectionHS{\varepsilon}[S][\vectorize{v}][\eta] = \bayesFilterCorrectionHS{\varepsilon}[S][\vectorize{v}][\eta']$ for every two configurations $\eta, \eta' \in \pointsets$ that agree on $\boxRegion{S}$.
	
	Next, we argue that for all feasible $\eta \in \pointsets[\region]$ it holds that $\bayesFilterCorrectionHS{\varepsilon}[S][\vectorize{v}][\eta]$ is bounded away from $0$ and 
	\[
		\bayesFilterCorrectionHS{\varepsilon}[S][\vectorize{v}][\eta] 
		\le \inf_{\substack{\xi \in \pointsets[H]\\ \xi \cup (\eta \cap S) \text{ is feasible}}}  \frac{\partitionFunction[\boxesToUpdate \setminus \vectorize{v}][\xi \cup (\eta \cap S)]}{\partitionFunction[\boxesToUpdate][\xi \cup (\eta \cap (S \setminus \vectorize{v}))]} .
	\]
	For the lower bound, note that for all $\gamma \subseteq (\delta_1 \Z)^{\dimensions} \cap \boxRegion{H \cap \partial \boxesToUpdate}$ it holds that
	\[
		\frac{\partitionFunctionApprox[\boxesToUpdate \setminus \vectorize{v}][\gamma \cup (\eta \cap S)][\delta_2]}
		{\partitionFunctionApprox[\boxesToUpdate][\gamma \cup (\eta \cap (S \setminus \vectorize{v}))][\delta_2]}
		\ge \left(1 + \activity \delta_2\right)^{- \size{(\delta_2 \Z)^{\dimensions} \cap \boxRegion{\boxesToUpdate}}} > 0
	\]  
	independent of $\eta$.
	For the upper bound, we start by observing that, for our choice of $\delta_1$, \Cref{cor:approximate_boundary} yields
	\begin{align*}
		\eulerE^{-\varepsilon/2} \cdot \min_{\gamma \subseteq (\delta_1 \Z)^{\dimensions} \cap \boxRegion{H \cap \partial \boxesToUpdate}} \frac{\partitionFunction[\boxesToUpdate \setminus \vectorize{v}][\gamma \cup (\eta \cap S)]}{\partitionFunction[\boxesToUpdate][\gamma \cup (\eta \cap (S \setminus \vectorize{v}))]}
		\le \inf_{\substack{\xi \in \pointsets[H]\\ \xi \cup (\eta \cap S) \text{ is feasible}}} \frac{\partitionFunction[\boxesToUpdate \setminus \vectorize{v}][\xi \cup (\eta \cap S)]}{\partitionFunction[\boxesToUpdate][\xi \cup (\eta \cap (S \setminus \vectorize{v}))]} .
	\end{align*}
	Note that $\volume{\boxRegion{\boxesToUpdate \setminus \vectorize{v}}} \le \volume{\boxRegion{\boxesToUpdate}} \le (2\updateRadius + 1)^{\dimensions} \range^{\dimensions}$ and $\volume{\boxRegion{\boxesToUpdate \setminus \vectorize{v} \cup \partial(\boxesToUpdate \setminus \vectorize{v})}} \le \volume{\boxRegion{\boxesToUpdate \cup \partial \boxesToUpdate}} \le (2 \updateRadius + 3)^{\dimensions} \range^{\dimensions}$.
	Moreover, we have the crude bound $\size{(\delta_1 \Z)^{\dimensions} \cap \boxRegion{H \cap \partial \boxesToUpdate}} \le 2^{\dimensions} \delta_1^{-\dimensions} (2 \updateRadius + 3)^{\dimensions} \range^{\dimensions}$, and, for every feasible $\eta \in \pointsets[\region]$, it holds that $\size{\eta \cap \boxRegion{S \cap \partial (\boxesToUpdate \setminus \vectorize{v})}} \le 2^{\dimensions} \dimensions^{\dimensions/2} \range^{-\dimensions} (2 \updateRadius + 3)^{\dimensions} \range^{\dimensions}$.
	Therefore, for all $\gamma \subseteq (\delta_1 \Z)^{\dimensions} \cap \boxRegion{H \cap \partial \boxesToUpdate}$ we have $\size{(\gamma \cup (\eta \cap S)) \cap \partial (\boxesToUpdate \setminus \vectorize{v})} \le 2^{\dimensions}(2\updateRadius + 3)^{\dimensions} \range^{\dimensions} \left(\delta_1^{-\dimensions} + \dimensions^{\dimensions/2} \range^{-\dimensions}\right)$.
	Analogously, it holds that $\size{(\gamma \cup (\eta \cap (S \setminus \vectorize{v}))) \cap \partial \boxesToUpdate} \le 2^{\dimensions}(2\updateRadius + 3)^{\dimensions} \range^{\dimensions} \left(\delta_1^{-\dimensions} + \dimensions^{\dimensions/2} \range^{-\dimensions}\right)$.
	Thus, \Cref{lemma:approx_hardspheres} yields for our choice of $\delta_2$
	\[
		\eulerE^{-\varepsilon/2} \cdot \frac{\partitionFunction[\boxesToUpdate \setminus \vectorize{v}][\gamma \cup (\eta \cap S)]}{\partitionFunction[\boxesToUpdate][\gamma \cup (\eta \cap (S \setminus \vectorize{v}))]}
		\le \frac{\partitionFunctionApprox[\boxesToUpdate \setminus \vectorize{v}][\gamma \cup (\eta \cap S)][\delta_2]}
		{\partitionFunctionApprox[\boxesToUpdate][\gamma \cup (\eta \cap (S \setminus \vectorize{v}))][\delta_2]}
		\le \eulerE^{\varepsilon/2} \cdot \frac{\partitionFunction[\boxesToUpdate \setminus \vectorize{v}][\gamma \cup (\eta \cap S)]}{\partitionFunction[\boxesToUpdate][\gamma \cup (\eta \cap (S \setminus \vectorize{v}))]}.
	\]
	In particular, this proves
	\[
		\eulerE^{-\varepsilon} \cdot \min_{\gamma \subseteq (\delta_1 \Z)^{\dimensions} \cap \boxRegion{H \cap \partial \boxesToUpdate}} \frac{\partitionFunctionApprox[\boxesToUpdate \setminus \vectorize{v}][\gamma \cup (\eta \cap S)][\delta_2]}{\partitionFunctionApprox[\boxesToUpdate][\gamma \cup (\eta \cap \boxRegion{S \setminus \vectorize{v}})][\delta_2]}
		\le \inf_{\substack{\xi \in \pointsets[H]\\ \xi \cup (\eta \cap S) \text{ is feasible}}}  \frac{\partitionFunction[\boxesToUpdate \setminus \vectorize{v}][\xi \cup (\eta \cap S)]}{\partitionFunction[\boxesToUpdate][\xi \cup (\eta \cap (S \setminus \vectorize{v}))]},
	\]
	implying that $\bayesFilterCorrectionHS{\varepsilon}$ is a Bayes filter correction.
\end{proof}

\subsection{Efficiency of the algorithm}

We now argue that under the assumption of strong spatial mixing we can use $\bayesFilterCorrectionHS{\varepsilon}$ to obtain an efficient implementation of \Cref{algo:sampling}.
Our argument will consist of two steps. 
First, we discuss how to implement each step of the algorithm efficiently.
In particular, we argue that we can efficiently update the configuration (line~\ref{algo:sampling:update}), and that we can efficiently sample a Bayes filter based on $\bayesFilterCorrectionHS{\varepsilon}(\cdot)$ (line~\ref{algo:sampling:filter}).
For the latter, we make use of a Bernoulli factory to circumvent the lack of an algorithm for exact computation of partition functions.
In the second part, we focus on the overall number of iterations of the algorithm.
This is where the assumption of strong spatial mixing comes into play to ensure that the success probability of our Bayes filter is sufficiently large, which implies rapid termination of the algorithm.

We start with discussing the running time of each iteration of \Cref{algo:sampling}.
For updating the configuration, we will use a rejection sampling method which, as long as the updated region $\boxRegion{\boxesToUpdate}$ has constant volume, will be efficient enough for our setting.
Since we apply the same argument for more general repulsive potentials, the following lemma is stated in this general setting.  
\begin{lemma}\label{lemma:exp_perfect_sampling}
	Let $\boxesToUpdate \subseteq \boxIds$ and $\eta \in \pointsets[\region]$. 
	For any repulsive finite-range potential $\potential$ we can sample from $\gibbs[\boxesToUpdate][\eta \cap (\boxesToUpdate)^{\comp}]$ in expected time $(\activity \volume{\boxRegion{\boxesToUpdate}} + \size{\eta \cap \partial \boxesToUpdate}) \cdot \activity \volume{\boxRegion{\boxesToUpdate}} \eulerE^{\activity \volume{\boxRegion{\boxesToUpdate}}}$. 
\end{lemma}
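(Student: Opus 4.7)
The plan is to use rejection sampling against an underlying Poisson point process. Specifically, I would sample a candidate $\xi$ from a Poisson point process of intensity $\activity$ on $\boxRegion{\boxesToUpdate}$ and accept it with probability
\[
	w(\xi) \coloneqq \exponential{-\hamiltonian[\xi]} \cdot \exponential{-\sum_{x \in \xi}\sum_{y \in \eta \cap \boxRegion{\partial \boxesToUpdate}} \potential[x][y]}.
\]
Because $\potential$ is repulsive, both exponentials lie in $[0,1]$, so $w(\xi) \in [0,1]$ and the acceptance step is well-defined. By \Cref{lemma:spatial_markov} the points of $\eta$ outside of $\boxRegion{\partial \boxesToUpdate}$ do not influence any term in $w(\xi)$, so this precisely matches the activity function $\activity_{\eta \cap \complementOf{\boxRegion{\boxesToUpdate}}} \ind{\boxRegion{\boxesToUpdate}}$.

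For correctness I would compute the Radon--Nikodym derivative of $\gibbs[\activity_{\eta \cap \complementOf{\boxRegion{\boxesToUpdate}}} \ind{\boxRegion{\boxesToUpdate}}][][\boxRegion{\boxesToUpdate}]$ with respect to the Poisson process of intensity $\activity$ on $\boxRegion{\boxesToUpdate}$: a direct calculation from the series definition of $\partitionFunction$ shows that this derivative equals $w(\xi) \cdot \eulerE^{\activity \volume{\boxRegion{\boxesToUpdate}}}/\partitionFunction[\activity_{\eta \cap \complementOf{\boxRegion{\boxesToUpdate}}}][\boxRegion{\boxesToUpdate}]$, which is a constant multiple of $w(\xi)$. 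Hence accepted samples are distributed exactly according to the target measure.

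For the running time, the acceptance probability is
\[
	\int w(\xi) \,\d\mathrm{Pois}_{\activity}(\xi) = \eulerE^{-\activity \volume{\boxRegion{\boxesToUpdate}}} \partitionFunction[\activity_{\eta \cap \complementOf{\boxRegion{\boxesToUpdate}}}][\boxRegion{\boxesToUpdate}] \ge \eulerE^{-\activity \volume{\boxRegion{\boxesToUpdate}}},
\]
since $\partitionFunction[\activity_{\eta \cap \complementOf{\boxRegion{\boxesToUpdate}}}][\boxRegion{\boxesToUpdate}] \ge 1$ from the empty-set term. So the number of trials is geometric with mean at most $\eulerE^{\activity \volume{\boxRegion{\boxesToUpdate}}}$. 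Each trial requires (i) drawing $\size{\xi} \sim \mathrm{Pois}(\activity \volume{\boxRegion{\boxesToUpdate}})$ uniform points, taking expected time $\bigO{\activity \volume{\boxRegion{\boxesToUpdate}}}$, and (ii) evaluating $w(\xi)$, which requires iterating over the at most $\binom{\size{\xi}}{2} + \size{\xi}\size{\eta \cap \boxRegion{\partial \boxesToUpdate}}$ interacting pairs; using $\E{\size{\xi}^2} = \activity \volume{\boxRegion{\boxesToUpdate}} + (\activity \volume{\boxRegion{\boxesToUpdate}})^2$, the expected time per trial is $\bigO{(\activity \volume{\boxRegion{\boxesToUpdate}})^2 + \activity \volume{\boxRegion{\boxesToUpdate}} \cdot \size{\eta \cap \boxRegion{\partial \boxesToUpdate}}}$. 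Since the number of trials is a stopping time independent of the per-trial randomness, Wald's identity multiplies these two bounds and yields the claimed expected running time.

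There is no real obstacle here; the only subtlety is verifying that the pairwise interactions with $\eta \cap \boxRegion{\partial \boxesToUpdate}$ capture all of $\eta \cap \complementOf{\boxRegion{\boxesToUpdate}}$, which is exactly the content of the finite range of $\potential$ combined with \Cref{lemma:spatial_markov}, and checking that the bound on trials applies uniformly for all feasible $\eta$.
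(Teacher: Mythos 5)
Your proposal is correct and matches the paper's proof essentially line for line: rejection sampling against a Poisson point process of intensity $\activity$ on $\boxRegion{\boxesToUpdate}$, acceptance weight given by the pairwise interaction term plus the boundary interaction with $\eta \cap \boxRegion{\partial\boxesToUpdate}$, acceptance probability bounded below by $\eulerE^{-\activity\volume{\boxRegion{\boxesToUpdate}}}$, and Wald's identity (observing that $\ind{N\ge n}$ depends only on rounds $1,\dots,n-1$ while the cost of round $n$ depends only on that round) to combine the per-trial cost with the expected number of trials. The only cosmetic differences are that the paper lower-bounds the acceptance probability by the probability of the empty configuration rather than integrating $w$ exactly, and your wording "stopping time independent of the per-trial randomness" would be better phrased as the measurability property you actually need for Wald, but the substance is identical.
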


\begin{proof}
	Let $\PPP{\boxesToUpdate}[\activity]$ denote a Poisson point process on $\boxRegion{\boxesToUpdate}$ with intensity $\activity$.
	We consider the rejection sampling algorithm given in \Cref{algo:rejection_sampler}.
	\begin{algorithm}
		\caption{Sample from $\gibbs[\boxesToUpdate][\eta \cap (\boxesToUpdate)^{\comp}]$ }\label{algo:rejection_sampler}
		\Repeat{$W=1$}{
			Draw $Y \sim \PPP{\boxesToUpdate}[\activity]$ \\
			Compute $w = \eulerE^{-\sum_{\set{x, y} \in \binom{Y}{2}} \potential[x][y]} \cdot \eulerE^{-\sum_{x \in \eta \cap \partial \boxesToUpdate, y \in Y} \potential[x][y]}$ \\
			Draw $W \sim \Ber{w}$
		}
		\Return $Y$
	\end{algorithm}
	
	To prove that this rejection sampling method produces the correct out put distribution, it suffices to argue that $w$ as computed in the algorithm is proportional to the density of $\gibbs[\boxesToUpdate][\eta \cap (\boxesToUpdate)^{\comp}]$ with respect to $\PPP{\boxesToUpdate}[\activity]$ for $\PPP{\boxesToUpdate}[\activity]$-almost all $\xi \in \pointsets[\boxesToUpdate]$.
	This is true since for all $\xi \in \pointsets[\boxesToUpdate]$
	\begin{align*}
		\sum_{x \in \eta \cap \boxesToUpdate^{\comp}, y \in \xi} \potential[x][y]
		&= \sum_{x \in \eta \cap \partial \boxesToUpdate, y \in \xi} \potential[x][y] + \sum_{x \in \eta \cap (\boxesToUpdate \cup \partial \boxesToUpdate)^{\comp}, y \in \xi} \potential[x][y] \\
		&= \sum_{x \in \eta \cap \partial \boxesToUpdate, y \in \xi} \potential[x][y],
	\end{align*}
	where the last equality follows from the fact that $\dist{\boxRegion{\boxesToUpdate}}{(\boxRegion{\boxesToUpdate \cup \partial \boxesToUpdate})^{\comp}} \ge \range$ and therefore $\potential[x][y] = 0$ for all $x \in (\boxRegion{\boxesToUpdate \cup \partial \boxesToUpdate})^{\comp}, y \in \boxRegion{\boxesToUpdate}$.

	We proceed by using Wald's identity as given in \Cref{lemma:walds_equation} to bound the expected running time of the procedure above.
	To this end, let $(Y_n)_{n \in \N}$ be a sequence of independent samples from $\PPP{\boxesToUpdate}[\activity]$.
	Assume the algorithm draws $Y = Y_n$ at iteration $n \in \N$.
	Let $S_n$ denote the running time of the rejection sampler in iteration $n$ and let $N$ be the (random) number of iterations until the algorithm terminates.
	We aim for bounding $\E{\sum_{n = 1}^{N} S_n}$.
	
	First, observe that $S_n$ is dominated by the time for computing $w$, implying $S_n \le \size{Y_n}^2 + \size{Y_n} \cdot \size{\eta \cap \boxesToUpdate}$.
	Since further $\size{Y_n}$ follows a Poisson distribution with parameter $\activity \volume{\boxRegion{\boxesToUpdate}}$, we have $\E{S_n} \le \activity^2 \volume{\boxRegion{\boxesToUpdate}}^2 + \activity \volume{\boxRegion{\boxesToUpdate}} \cdot \size{\eta \cap \partial \boxesToUpdate} = (\activity \volume{\boxRegion{\boxesToUpdate}} + \size{\eta \cap \partial \boxesToUpdate}) \cdot \activity \volume{\boxRegion{\boxesToUpdate}}$. 
	
	Moreover, observe that the random variable $\ind{N \ge n}$ only depends on $(Y_i)_{i \le n-1}$, whereas $S_n$ only depends on $Y_n$.
	Therefore, $S_n$ and $\ind{N \ge n}$ are independent and $\E{S_n \ind{N \ge n}} = \E{S_n} \E{\ind{N \ge n}}$.
	
	Applying \Cref{lemma:walds_equation} yields $\E{\sum_{n=1}^{N} S_n} \le (\activity \volume{\boxRegion{\boxesToUpdate}} + \size{\eta \cap \partial \boxesToUpdate}) \cdot \activity \volume{\boxRegion{\boxesToUpdate}} \cdot \E{N}$.
	To obtain a bound $\E{N}$ on, observe that the algorithm always terminates if $Y = \emptyset$, which happens in every iteration independently with a probability of $\eulerE^{- \activity \volume{\boxRegion{\boxesToUpdate}}}$.
	Thus, $N$ is dominated by a geometric random variable with success probability $\eulerE^{- \activity \volume{\boxRegion{\boxesToUpdate}}}$ and $\E{N} \le \eulerE^{\activity \volume{\boxRegion{\boxesToUpdate}}}$, which concludes the proof.
\end{proof}
Note that, in the case of the hard-sphere model with $\range > 0$, if $\eta \in \pointsets[\region]$ is feasible, then $\size{\eta \cap \partial \boxesToUpdate}$ is a linear function of the volume $\volume{\boxRegion{\partial \boxesToUpdate}}$.

We proceed with bounding the running time for sampling the Bayes filter for the hard-sphere model in each step.
To this end, we start with the following observation. 
\begin{observation} \label{obs:compute_correction_HS}
	Consider the setting of \Cref{lemma:correction_HS}.
	The required running time for computing $\bayesFilterCorrectionHS{\varepsilon}[S][\vectorize{v}][\eta]$ does only depend on $\varepsilon$, $\updateRadius$, $\range$, $\activity$ and $\dimensions$.
\end{observation}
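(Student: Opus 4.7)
The plan is to inspect the explicit formula for $\bayesFilterCorrectionHS{\varepsilon}[S][\vectorize{v}][\eta]$ given in \Cref{lemma:correction_HS} and bound the cost of each subcomputation by a quantity depending only on $\varepsilon, \updateRadius, \range, \activity, \dimensions$. The key point is that, although $S$ may be large and $\eta$ may have many points globally, the formula only reads off finitely many local quantities, all of which live in a region of bounded volume around $\boxRegion{\vectorize{v}}$.

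First, I would observe that the grid spacings $\delta_1 = \delta_1(\varepsilon)$ and $\delta_2 = \delta_2(\varepsilon)$ defined in \Cref{lemma:correction_HS} are explicit functions of $\varepsilon, \updateRadius, \range, \activity, \dimensions$ alone. The sets $\boxesToUpdate$, $\boxesToUpdate \setminus \set{\vectorize{v}}$ and $\partial \boxesToUpdate$ contain at most $(2\updateRadius+1)^{\dimensions}$, respectively $(2\updateRadius+3)^{\dimensions}$, boxes of side length $\range$, so the associated regions have volume bounded in $\updateRadius, \range, \dimensions$. Consequently the enumeration set $(\delta_1 \Z)^{\dimensions} \cap \boxRegion{H \cap \partial \boxesToUpdate}$ has cardinality at most $(2\updateRadius+3)^{\dimensions}\range^{\dimensions}/\delta_1^{\dimensions}$, bounded purely in terms of our five parameters, and the outer minimum ranges over at most $2^{(2\updateRadius+3)^{\dimensions}\range^{\dimensions}/\delta_1^{\dimensions}}$ choices of $\gamma$.

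Next, for each such $\gamma$ I would bound the cost of computing the numerator and denominator. By the defining formula \eqref{eq:approx_HS}, $\partitionFunctionApprox[\boxesToUpdate \setminus \set{\vectorize{v}}][\,\cdot\,][\delta_2]$ is a sum over subsets of the finite grid $(\delta_2 \Z)^{\dimensions} \cap \boxRegion{\boxesToUpdate \setminus \set{\vectorize{v}}}$, whose cardinality is at most $(2\updateRadius+1)^{\dimensions}\range^{\dimensions}/\delta_2^{\dimensions}$, and the external configuration enters only through its trace on $\boxRegion{\partial(\boxesToUpdate \setminus \set{\vectorize{v}})}$ (and analogously for the denominator on $\boxRegion{\partial \boxesToUpdate}$). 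Evaluating each summand requires multiplying $\bigO{1}$ hard-sphere indicators per pair of points involved, so the total cost is bounded by a function of $\varepsilon, \updateRadius, \range, \activity, \dimensions$ once we bound the relevant point count.

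The one step that genuinely requires an argument is that the number of points of $\eta \cap \boxRegion{S}$ lying inside $\boxRegion{\partial(\boxesToUpdate \setminus \set{\vectorize{v}})}$ or $\boxRegion{\partial \boxesToUpdate}$ is bounded independently of $S$ and of $\size{\eta}$; this is where the hypothesis that $\bayesFilterCorrectionHS{\varepsilon}$ is only ever evaluated on feasible $\eta$ enters. A standard packing argument shows that any feasible hard-sphere configuration in a region of volume $V$ contains at most $V \cdot (2\sqrt{\dimensions}/\range)^{\dimensions}$ points, so $\size{\eta \cap \boxRegion{\partial \boxesToUpdate}}$ is bounded by $(2\updateRadius+3)^{\dimensions}(2\sqrt{\dimensions})^{\dimensions}$, again a constant in our sense. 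Assuming the standard convention that a spatial data structure supports constant-time retrieval of the $\bigO{1}$ points of $\eta$ inside a box of constant volume, combining these bounds shows the whole evaluation of $\bayesFilterCorrectionHS{\varepsilon}[S][\vectorize{v}][\eta]$ runs in time depending only on $\varepsilon, \updateRadius, \range, \activity, \dimensions$. The main (mild) obstacle is being explicit about this data-structure assumption and the feasibility-based packing bound; everything else is bookkeeping over the explicit formula.
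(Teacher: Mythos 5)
Your proposal is correct and follows essentially the same approach the paper sketches in the sentence after the observation (enumerate the finitely many $\gamma$, brute-force compute the two truncated partition sums). You correctly spell out the one non-obvious ingredient — the feasibility-based packing bound $\size{\eta \cap \boxRegion{\partial \boxesToUpdate}} \le (2\updateRadius+3)^{\dimensions}(2\sqrt{\dimensions})^{\dimensions}$ needed so the point count in the local formulas is parameter-bounded — which the paper uses implicitly (the same packing estimate already appears in the proof of the preceding lemma when defining $\delta_2$, but is not restated for the observation); the data-structure caveat you raise is also implicit in the paper.
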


This follows directly from enumerating all subsets $\gamma \subseteq (\delta_1 \Z)^{\dimensions} \cap \boxRegion{H \cap \partial \boxesToUpdate}$ and brute-force computation of $\partitionFunctionApprox[\boxesToUpdate \setminus \vectorize{v}][\gamma \cup (\eta \cap S)][\delta_2]$ and $\partitionFunctionApprox[\boxesToUpdate][\gamma \cup (\eta \cap \boxRegion{S \setminus \vectorize{v}})][\delta_2]$, where $\delta_1, \delta_2$ are as in \Cref{lemma:correction_HS}.

In fact, we will not use $\bayesFilterCorrectionHS{\varepsilon}$ directly for our Bayes filter, but rather a slightly scaled version $\eulerE^{-\varepsilon} \bayesFilterCorrectionHS{\varepsilon}$, which is again a Bayes filter correction.
The slack due to the additional scaling allows us to efficiently sample the Bayes filter by using a Bernoulli factory, as we argue in the next lemma.

\begin{lemma} \label{lemma:sampling_bayes_filter_HS}
	Let $S \subseteq \boxIds$ be non-empty, $\vectorize{v} \in S$ and $\eta \in \pointsets[\region]$ be feasible, and set $\boxesToUpdate = \boxesToUpdate[S][\vectorize{v}][\updateRadius]$.
	For all $\varepsilon > 0$ we can sample a Bernoulli random variable with success probability
	\[
	\eulerE^{-\varepsilon}\bayesFilterCorrectionHS{\varepsilon}[S][\vectorize{v}][\eta] \cdot
	\frac{
		\partitionFunction[\boxesToUpdate][\eta \cap \partial \boxesToUpdate] 
	}{ 
		\partitionFunction[\boxesToUpdate \setminus \vectorize{v}][\eta \cap (\partial \boxesToUpdate \cup \vectorize{v})]
	}
	\] 
	with expected running time only depending on $\varepsilon$, $\updateRadius$, $\range$, $\activity$ and $\dimensions$.
\end{lemma}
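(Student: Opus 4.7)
The plan is to reduce sampling the required Bernoulli random variable to an application of Lemma~\ref{lemma:bernoulli_frac}, using that the two partition functions in the denominator and numerator can be written as reciprocals of probabilities of drawing the empty set from hard-sphere processes on bounded regions. For any bounded measurable $\subregion \subseteq \R^{\dimensions}$ and any boundary configuration $\eta' \in \pointsets[\region]$, the definition of $\gibbs[\activity_{\eta'} \ind{\subregion}][\cdot][\subregion]$ yields $\gibbs[\activity_{\eta'} \ind{\subregion}][\set{\emptyset}][\subregion] = 1/\partitionFunctionHS[\activity_{\eta'}][\subregion]$, so Lemma~\ref{lemma:exp_perfect_sampling} provides a sampler for $\Ber{1/\partitionFunctionHS[\activity_{\eta'}][\subregion]}$ by drawing from $\gibbs[\activity_{\eta' \cap \complementOf{\subregion}} \ind{\subregion}][][\subregion]$ and testing whether the output is empty (here we use the spatial Markov property to identify $\gibbs[\activity_{\eta'} \ind{\subregion}][][\subregion]$ with the distribution of Lemma~\ref{lemma:exp_perfect_sampling}).

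Write $Z_1 = \partitionFunctionHS[\activity_{\eta \cap \boxRegion{\partial \boxesToUpdate}}][\boxRegion{\boxesToUpdate}]$, $Z_2 = \partitionFunctionHS[\activity_{\eta \cap \boxRegion{\partial \boxesToUpdate \cup \set{\vectorize{v}}}}][\boxRegion{\boxesToUpdate \setminus \set{\vectorize{v}}}]$, and $c = \eulerE^{-\varepsilon} \bayesFilterCorrectionHS{\varepsilon}[S][\vectorize{v}][\eta]$; note that $c$ can be computed deterministically in time depending only on the stated parameters by Observation~\ref{obs:compute_correction_HS}, and $c$ is itself a Bayes filter correction (scaling by $\eulerE^{-\varepsilon} \le 1$ preserves the inequality in Definition~\ref{def:bayes_filter}). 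Setting $p_1 = 1/Z_1$ and $p_2 = 1/Z_2$, the target success probability equals $c \cdot Z_1/Z_2 = c p_2 / p_1$. I will apply Lemma~\ref{lemma:bernoulli_frac} with $p := c p_2$ and $q := p_1$: oracle access to $\Ber{q}$ is provided directly by the empty-set test described above, while oracle access to $\Ber{p}$ is obtained by independently sampling $\Ber{c}$ (by comparing a uniform random variable on $[0,1]$ against the computed value $c$) and $\Ber{p_2}$ and taking their logical AND. Successive queries use fresh, independent copies of all randomness, so the independence assumption of Lemma~\ref{lemma:bernoulli_frac} holds.

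It remains to verify the two quantitative hypotheses of Lemma~\ref{lemma:bernoulli_frac}. For the gap, apply Lemma~\ref{lemma:bayes_filter_bounds} to the Bayes filter correction $c$ to obtain $c \cdot Z_1/Z_2 \le \eulerE^{-\varepsilon}$, equivalently $c p_2 \le \eulerE^{-\varepsilon} p_1$, so
\[
    q - p = p_1 - c p_2 \;\ge\; (1 - \eulerE^{-\varepsilon}) p_1 \;\ge\; (1 - \eulerE^{-\varepsilon}) \eulerE^{-\activity (2\updateRadius+1)^{\dimensions} \range^{\dimensions}},
\]
where the last step uses the trivial upper bound $Z_1 \le \eulerE^{\activity \volume{\boxRegion{\boxesToUpdate}}} \le \eulerE^{\activity (2\updateRadius+1)^{\dimensions} \range^{\dimensions}}$. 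This lower bound depends only on $\varepsilon$, $\updateRadius$, $\range$, $\activity$, $\dimensions$. For the per-sample running time, Lemma~\ref{lemma:exp_perfect_sampling} bounds the expected time for one sample of $\Ber{p_1}$ or $\Ber{p_2}$ by a function of $\activity$, $\volume{\boxRegion{\boxesToUpdate}}$ and $\size{\eta \cap \boxRegion{\partial \boxesToUpdate}}$; the first two are functions of $\updateRadius, \range, \dimensions$, and since $\eta$ is feasible the hard-core constraint forces $\size{\eta \cap \boxRegion{\partial \boxesToUpdate}} \le \bigO{(2\updateRadius+3)^{\dimensions}}$ (each hard sphere occupies a ball of radius $\range/2$). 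Sampling $\Ber{c}$ takes time depending only on the stated parameters by Observation~\ref{obs:compute_correction_HS}.

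Combining these bounds and invoking Lemma~\ref{lemma:bernoulli_frac} produces a sample of $\Ber{p/q} = \Ber{c p_2 / p_1}$ in expected time $\bigO{t \epsilon^{-2}}$, where both $t$ and $\epsilon^{-2}$ are bounded by functions of $\varepsilon, \updateRadius, \range, \activity, \dimensions$, which is the claim. The only subtle point that needs care is ensuring that the AND-construction for $\Ber{c p_2}$ indeed delivers the independence required by Lemma~\ref{lemma:bernoulli_frac}; but since each factor is resampled freshly, this follows immediately.
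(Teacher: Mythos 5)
Your proposal is correct and follows essentially the same route as the paper: rewrite each partition function as the reciprocal of the probability of drawing the empty set, obtain independent $\Ber{1/Z_1}$ and $\Ber{1/Z_2}$ samples via the rejection sampler of Lemma~\ref{lemma:exp_perfect_sampling}, fold in the deterministically-computed correction $c$ by an AND with an extra $\Ber{c}$ flip, and invoke Lemma~\ref{lemma:bernoulli_frac} after bounding $q-p$ below via $Z_1 \le \eulerE^{\activity\volume{\boxRegion{\boxesToUpdate}}}$. The one small imprecision is in the gap bound: applying Lemma~\ref{lemma:bayes_filter_bounds} directly to $c = \eulerE^{-\varepsilon}\bayesFilterCorrectionHS{\varepsilon}$ only yields $c\, Z_1/Z_2 \le 1$, not $\le \eulerE^{-\varepsilon}$; to get the stronger bound you should apply it to the \emph{unscaled} correction $\bayesFilterCorrectionHS{\varepsilon}$ (which is a Bayes filter correction by Lemma~\ref{lemma:correction_HS}, so $\bayesFilterCorrectionHS{\varepsilon}\, Z_1/Z_2 \le 1$) and then multiply by the prefactor $\eulerE^{-\varepsilon}$ — this is exactly what the $\eulerE^{-\varepsilon}$ scaling is there for. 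With that correction the argument is complete and matches the paper's.
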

\begin{proof}
	Our goal is to use a Bernoulli factory of the form $\frac{p}{q}$ to perform this task.
	To bring the desired success probability into such a form, note that
	\begin{align*}
		\partitionFunction[\boxesToUpdate][\eta \cap \partial \boxesToUpdate] &= \left(\gibbs[\boxesToUpdate][\eta \cap \partial \boxesToUpdate](\set{\emptyset})\right)^{-1} \\
		\partitionFunction[\boxesToUpdate \setminus \vectorize{v}][\eta \cap (\partial \boxesToUpdate \cup \vectorize{v})] &= \left(\gibbs[\boxesToUpdate \setminus \vectorize{v}][\eta \cap (\partial \boxesToUpdate \cup \vectorize{v})](\set{\emptyset})\right)^{-1} .
	\end{align*}
	Moreover, note that $0 \le \eulerE^{-\varepsilon}\bayesFilterCorrectionHS{\varepsilon}[S][\vectorize{v}][\eta] \le 1$.
	Thus, by setting $p = \eulerE^{-\varepsilon}\bayesFilterCorrectionHS{\varepsilon}[S][\vectorize{v}][\eta] \cdot \gibbs[\boxesToUpdate \setminus \vectorize{v}][\eta \cap (\partial \boxesToUpdate \cup \vectorize{v})](\set{\emptyset})$ and $q = \gibbs[\boxesToUpdate][\eta \cap \partial \boxesToUpdate](\set{\emptyset})$ we have $p \in [0, 1], q \in [0, 1]$ and 
	\[
		\eulerE^{-\varepsilon}\bayesFilterCorrectionHS{\varepsilon}[S][\vectorize{v}][\eta] \cdot
		\frac{
			\partitionFunction[\boxesToUpdate][\eta \cap \partial \boxesToUpdate] 
		}{ 
			\partitionFunction[\boxesToUpdate \setminus \vectorize{v}][\eta \cap (\partial \boxesToUpdate \cup \vectorize{v})]
		}
		= \frac{p}{q} .
	\]
	
	We are now going to use \Cref{lemma:bernoulli_frac} to prove that we can obtain a sample from $\Ber{\frac{p}{q}}$ within the desired expected running time.
	To this end, we need to provide a positive lower bound on $q - p$ and we need an efficient way for generating independent samples from $\Ber{q}$ and $\Ber{p}$.
	
	For the lower bound, note that by \Cref{lemma:spatial_markov} $\partitionFunction[\boxesToUpdate][\eta \cap \partial \boxesToUpdate] = \partitionFunction[\boxesToUpdate][\eta \cap \boxesToUpdate^{\comp}]$ and $\partitionFunction[\boxesToUpdate \setminus \vectorize{v}][\eta \cap (\partial \boxesToUpdate \cup \vectorize{v})] = \partitionFunction[\boxesToUpdate \setminus \vectorize{v}][\eta \cap (\boxesToUpdate \setminus \vectorize{v})^{\comp}]$. 
	Moreover, since $\bayesFilterCorrectionHS{\varepsilon}[S][\vectorize{v}][\eta]$ is a Bayes filter correction by \Cref{lemma:correction_HS} and $\eta$ is feasible, we have
	\[
		\bayesFilterCorrectionHS{\varepsilon}[S][\vectorize{v}][\eta] \cdot
		\frac{
			\partitionFunction[\boxesToUpdate][\eta \cap \partial \boxesToUpdate] 
		}{ 
			\partitionFunction[\boxesToUpdate \setminus \vectorize{v}][\eta \cap (\partial \boxesToUpdate \cup \vectorize{v})]
		} \le 1 .
	\]
	Consequently, $\frac{p}{q} \le \eulerE^{-\varepsilon}$ and 
	\begin{align*}
		q - p 
		\ge \left(1 - \eulerE^{-\varepsilon}\right) q
		= \left(1 - \eulerE^{-\varepsilon}\right) \cdot(\partitionFunction[\boxesToUpdate][\eta \cap \partial \boxesToUpdate])^{-1}
		\ge \left(1 - \eulerE^{-\varepsilon}\right) \eulerE^{-\activity \volume{\boxRegion{\boxesToUpdate}}} .
	\end{align*}
	Using the upper bound $\volume{\boxRegion{\boxesToUpdate}} \le (2\updateRadius + 1)^{\dimensions} \range^{\dimensions}$ yields $q - p \ge \left(1 - \eulerE^{-\varepsilon}\right) \eulerE^{-\activity (2\updateRadius + 1)^{\dimensions} \range^{\dimensions}}$.
	
	We proceed by arguing that we can obtain an oracle for $\Ber{p}$ and $\Ber{q}$ as required by \Cref{lemma:bernoulli_frac}.
	Firstly, note that by \Cref{obs:compute_correction_HS} we can compute $\bayesFilterCorrectionHS{\varepsilon}[S][\vectorize{v}][\eta]$ with running time only depending on $\varepsilon$, $\updateRadius$, $\range$, $\activity$ and $\dimensions$.
	After computing $\bayesFilterCorrectionHS{\varepsilon}[S][\vectorize{v}][\eta]$, each independent sample from  $\Ber{\eulerE^{-\varepsilon}\bayesFilterCorrectionHS{\varepsilon}[S][\vectorize{v}][\eta]}$ can be obtained in constant time. 
	Thus, it remains to argue that we can efficiently sample independent Bernoulli random variables with success probabilities $\gibbs[\boxesToUpdate \setminus \vectorize{v}][\eta \cap (\partial \boxesToUpdate \cup \vectorize{v})](\set{\emptyset})$ and $\gibbs[\boxesToUpdate][\eta \cap \partial \boxesToUpdate](\set{\emptyset})$.
    By \Cref{lemma:exp_perfect_sampling}, we can obtain independent samples from $\gibbs[\boxesToUpdate][\eta \cap \partial \boxesToUpdate]$, each in expected time at most $(\activity \volume{\boxRegion{\boxesToUpdate}} + \size{\eta \cap \partial \boxesToUpdate}) \cdot \activity \volume{\boxRegion{\boxesToUpdate}} \eulerE^{\activity \volume{\boxRegion{\boxesToUpdate}}}$.
	Note that $\volume{\boxRegion{\boxesToUpdate}} \le (2 \updateRadius + 1)^{\dimensions} \range^{\dimensions}$ and that, for feasible $\eta \in \pointsets[\region]$, $\size{\eta \cap \partial \boxesToUpdate} \le (2 \sqrt{\dimensions}/\range)^{\dimensions} \volume{\boxRegion{\partial \boxesToUpdate}} \le (2 \sqrt{\dimensions}/\range)^{\dimensions} \cdot (2\updateRadius + 3)^{\dimensions} \range^{\dimensions}$.
	Therefore, the expected running time for obtaining independent Bernoulli samples with success probability $\gibbs[\boxesToUpdate][\eta \cap \partial \boxesToUpdate](\set{\emptyset})$ is bounded by some function of $\varepsilon$, $\updateRadius$, $\range$, $\activity$ and $\dimensions$.
	Treating $\gibbs[\boxesToUpdate \setminus \vectorize{v}][\eta \cap (\partial \boxesToUpdate \cup \vectorize{v})](\set{\emptyset})$ analogously and applying \Cref{lemma:bernoulli_frac} now proves our claim.
\end{proof}

We conclude the following bound on the running time of each iteration.
\begin{corollary} \label{cor:sampling_bayes_filter_HS}
	Suppose we run \Cref{algo:sampling} on a hard-sphere model with $\bayesFilterCorrection(\cdot) = \eulerE^{-\varepsilon}\bayesFilterCorrectionHS{\varepsilon}(\cdot)$ as Bayes filter correction in line~\ref{algo:sampling:filter} for some $\varepsilon > 0$, and let $\iterationTime{t}$ denote the running time of iteration $t \in \N$.
	Then, for all $t \in \N$, $\E{\iterationTime{t}}[\currentPointset{t-1}, \boxesToFix{t-1}, \boxChosen{t-1}]$ is almost surely bounded by some function of $\varepsilon$, $\updateRadius$, $\range$, $\activity$ and $\dimensions$.
\end{corollary}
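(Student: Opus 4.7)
My plan is to decompose the cost of a single iteration into its constituent random operations and then invoke the two key lemmas that were already proved, namely \Cref{lemma:sampling_bayes_filter_HS} (for the Bayes filter step) and \Cref{lemma:exp_perfect_sampling} (for the resampling step), while exploiting the packing bound that comes for free from feasibility of $\currentPointset{t}$.

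First, I would reduce to the case $\boxesToFix{t} \neq \emptyset$, since otherwise $\iterationTime{t} = 0$ by termination. Next, I would split the work done in iteration $t$ into four pieces: (i) drawing $\boxChosen{t}$ uniformly from $\boxesToFix{t}$ and computing $\boxesToUpdate = \boxesToUpdate[\boxesToFix{t}][\boxChosen{t}][\updateRadius]$, which is $\bigO{\size{\boxIds}}$ but depends only on the box structure and in fact is absorbed by the other terms once we account for the per-iteration bookkeeping; (ii) sampling $\bayesFilter{t}$ via the Bernoulli factory from \Cref{lemma:sampling_bayes_filter_HS}; (iii) on the event $\bayesFilter{t} = 1$, drawing the resampled configuration $\resampledPointset$ via the rejection sampler from \Cref{lemma:exp_perfect_sampling}; and (iv) the constant-time updates of $\currentPointset{t+1}$ and $\boxesToFix{t+1}$.

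The crux is that both (ii) and (iii) have expected cost that depends on $\currentPointset{t}$ only through feasibility and through $\size{\currentPointset{t} \cap \boxRegion{\partial \boxesToUpdate}}$. By \Cref{lemma:feasible_bounded}\ref{lemma:feasible_bounded:1}, $\currentPointset{t}$ is almost surely feasible, which for the hard-sphere model with interaction range $\range>0$ yields the standard packing bound
\[
	\size{\currentPointset{t} \cap \boxRegion{\partial \boxesToUpdate}} \le \left(\tfrac{2\sqrt{\dimensions}}{\range}\right)^{\dimensions} \volume{\boxRegion{\partial \boxesToUpdate}} \le (2\sqrt{\dimensions})^{\dimensions} (2\updateRadius + 3)^{\dimensions},
\]
since each point blocks an $\ell_\infty$-ball of radius $\range/(2\sqrt{\dimensions})$ and $\volume{\boxRegion{\boxesToUpdate \cup \partial \boxesToUpdate}} \le (2\updateRadius+3)^{\dimensions} \range^{\dimensions}$. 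Thus on the almost-sure event that $\currentPointset{t}$ is feasible, \Cref{lemma:sampling_bayes_filter_HS} bounds the conditional expected cost of (ii) by a function of $\varepsilon, \updateRadius, \range, \activity, \dimensions$ only, and \Cref{lemma:exp_perfect_sampling} combined with the packing bound above and $\volume{\boxRegion{\boxesToUpdate}} \le (2\updateRadius+1)^{\dimensions} \range^{\dimensions}$ bounds the conditional expected cost of (iii) likewise.

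Finally, to get the stated conditioning on $(\currentPointset{t-1}, \boxesToFix{t-1}, \boxChosen{t-1})$ rather than on $(\currentPointset{t}, \boxesToFix{t})$, I would use the tower property: the bounds just derived hold pointwise as a function of $(\currentPointset{t}, \boxesToFix{t})$ whenever $\currentPointset{t}$ is feasible, and since $\currentPointset{t}$ is almost surely feasible regardless of the value of $(\currentPointset{t-1}, \boxesToFix{t-1}, \boxChosen{t-1}, \bayesFilter{t-1})$, averaging over the transition preserves the bound almost surely. There is no real obstacle here; the only thing to be mildly careful about is that the bound from \Cref{lemma:sampling_bayes_filter_HS} is already stated uniformly in $\eta$ (for feasible $\eta$), so the conditioning can be handled simply by taking the essential supremum over feasible configurations.
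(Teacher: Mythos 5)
Your decomposition and choice of lemmas — \Cref{lemma:feasible_bounded} for almost-sure feasibility, the packing bound, \Cref{lemma:sampling_bayes_filter_HS} for the Bayes filter, and \Cref{lemma:exp_perfect_sampling} for the resampling — matches the paper's proof exactly in substance. However, there are two points worth tightening.

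First, you have an off-by-one in the indexing throughout. The corollary defines $\iterationTime{t}$ for $t \in \N$ to be the running time of the iteration that begins with state $(\currentPointset{t-1}, \boxesToFix{t-1})$ and draws $\boxChosen{t-1}$; that is why the conditioning in the statement is on $(\currentPointset{t-1}, \boxesToFix{t-1}, \boxChosen{t-1})$. You instead take iteration $t$ to mean the loop pass that starts from $(\currentPointset{t}, \boxesToFix{t})$ and produces $(\currentPointset{t+1}, \boxesToFix{t+1})$, and then bolt on a tower-property argument at the end to recover the stated conditioning. That argument is not wrong given your indexing convention, but it is unnecessary — if you simply use $(\currentPointset{t-1}, \boxesToFix{t-1}, \boxChosen{t-1})$ throughout, then conditioning on these three quantities already fixes $\boxesToUpdate$, the only remaining randomness in the iteration is the Bernoulli factory and the rejection sampler, and the bound drops out immediately without any further averaging. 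The extra tower step is a symptom of the index shift, not a genuine requirement.

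Second, your treatment of piece (i) — drawing $\boxChosen{t-1}$ uniformly and computing $\boxesToUpdate$ — is glossed in a way that would actually break the claim if taken literally. You say this costs $\bigO{\size{\boxIds}}$ and is ``absorbed by the other terms,'' but $\bigO{\size{\boxIds}}$ grows with $\volume{\region}$ and is not dominated by any of the other per-iteration costs, all of which are $O(1)$ in the volume. A $\bigO{\size{\boxIds}}$ per-iteration cost would destroy the linear total running time proved in \Cref{thm:sampling_HS}. The resolution is that uniform sampling from $\boxesToFix{t-1}$ and the subsequent set operations can be implemented in $O(1)$ amortized time with standard data structures, so this cost is genuinely constant, not merely ``absorbed.'' The paper sidesteps this by asserting that the bulk of the running time is due to the Bayes filter step and the resampling step, which is correct but assumes the reader accepts that the bookkeeping is $O(1)$; you should make the same claim explicitly rather than attributing an $\bigO{\size{\boxIds}}$ cost and then waving it away.
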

\begin{proof}
	Set $\boxesToUpdate = \boxesToUpdate[\boxesToFix{t-1}][\boxChosen{t-1}][\range][\updateRadius]$, and note that the bulk of the running time in each iteration of \Cref{algo:sampling} is due to sampling the Bayes filter in \cref{algo:sampling:filter} and updating the point configuration on $\boxRegion{\boxesToUpdate}$ in \cref{algo:sampling:update}.
	
	For \cref{algo:sampling:filter}, note that $\currentPointset{t-1}$ is almost surely feasible by \Cref{lemma:feasible_bounded}.
	Thus, \Cref{lemma:sampling_bayes_filter_HS} yields that the expected time for sampling the Bayes filter, conditioned on $\currentPointset{t-1}, \boxesToFix{t-1}$ and $\boxChosen{t-1}$, almost surely bounded by some function of $\varepsilon$, $\updateRadius$, $\range$, $\activity$ and $\dimensions$.
	
	For \cref{algo:sampling:update}, we can use \Cref{lemma:exp_perfect_sampling} to bound the expected time for sampling from $\gibbs[\boxesToUpdate][\currentPointset{t-1} \cap \boxesToUpdate^{\comp}]$ is bounded by $(\activity \volume{\boxRegion{\boxesToUpdate}} + \size{\currentPointset{t-1} \cap \partial \boxesToUpdate}) \cdot \activity \volume{\boxRegion{\boxesToUpdate}} \eulerE^{\activity \volume{\boxRegion{\boxesToUpdate}}}$.
	Note that $\volume{\boxRegion{\boxesToUpdate}} \le (2 \updateRadius + 1)^{\dimensions} \range^{\dimensions}$ and, if $\currentPointset{t-1}$ is feasible, $\size{\currentPointset{t-1} \cap \partial \boxesToUpdate} \le (2 \sqrt{\dimensions}/\range)^{\dimensions} \volume{\boxRegion{\partial \boxesToUpdate}} \le (2 \sqrt{\dimensions}/\range)^{\dimensions} \cdot (2\updateRadius + 3)^{\dimensions} \range^{\dimensions}$.
	Since $\currentPointset{t-1}$ is indeed almost surely feasible, the expected running time for \cref{algo:sampling:update}, conditioned on $\currentPointset{t-1}, \boxesToFix{t-1}$ and $\boxChosen{t-1}$, is almost surely bounded by some function of $\varepsilon$, $\updateRadius$, $\range$, $\activity$ and $\dimensions$ as well, which concludes the proof.
\end{proof}

We proceed by bounding the expected number of iterations of \Cref{algo:sampling}, running on a hard-sphere model.
To this end, we start with the following lower bound on the success probability of the Bayes filter with correction $\eulerE^{-\varepsilon} \bayesFilterCorrectionHS{\varepsilon}(\cdot)$ for a particular choice of $\varepsilon$.
\begin{lemma} \label{lemma:lower_bound_filter_HS}
	Consider a hard-sphere model that exhibits $(a, b)$-strong spatial mixing up to $\activity$.
	Then there are constants $a', b'$, only depending on $a$, $b$, $\range$, $\activity$ and $\dimensions$, such that for all non-empty $S \subseteq \boxIds$, $\vectorize{v} \in S$ and feasible $\eta \in \pointsets[\region]$ it holds that
	\[
		\eulerE^{-\eulerE^{-\updateRadius}}\bayesFilterCorrectionHS{\eulerE^{-\updateRadius}}[S][\vectorize{v}][\eta] \cdot
		\frac{
			\partitionFunction[\boxesToUpdate][\eta \cap \partial \boxesToUpdate] 
		}{ 
			\partitionFunction[\boxesToUpdate \setminus \vectorize{v}][\eta \cap (\partial \boxesToUpdate \cup \vectorize{v})]
		}
		\ge 1 - a' \eulerE^{-b' \updateRadius} .
	\]
\end{lemma}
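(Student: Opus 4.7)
The plan is to combine three existing ingredients: the construction of $\bayesFilterCorrectionHS{\varepsilon}$ in \Cref{lemma:correction_HS}, the partition-function approximation in \Cref{lemma:approx_hardspheres}, and the SSM-based ratio bound in \Cref{lemma:ssm_partition_function}. I will specialize $\varepsilon = \eulerE^{-\updateRadius}$ only at the end; all intermediate bounds hold uniformly in $(S, \vectorize{v}, \eta)$. Fix such data, set $\boxesToUpdate = \boxesToUpdate[S][\vectorize{v}][\updateRadius]$ and $H = \complementOf{S \cup \boxesToUpdate}$, and let $\xi_0 \coloneqq \eta \cap \boxRegion{H}$.

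\textit{Step 1: Reduce the target ratio to one with the ``true'' boundary $\xi_0$.} Since $\complementOf{\boxRegion{\boxesToUpdate}} = \boxRegion{H} \cup \boxRegion{S \setminus \set{\vectorize{v}}}$ and $\complementOf{\boxRegion{\boxesToUpdate \setminus \set{\vectorize{v}}}} = \boxRegion{H} \cup \boxRegion{S}$, two applications of the spatial Markov property (\Cref{lemma:spatial_markov}, exactly as used in the proof of \Cref{lemma:bayes_filter_bounds}) yield
\[
	\frac{\partitionFunctionHS[\activity_{\eta \cap \boxRegion{\partial \boxesToUpdate}}][\boxRegion{\boxesToUpdate}]}{\partitionFunctionHS[\activity_{\eta \cap \boxRegion{\partial \boxesToUpdate \cup \set{\vectorize{v}}}}][\boxRegion{\boxesToUpdate \setminus \set{\vectorize{v}}}]}
	= \frac{\partitionFunctionHS[\activity_{\xi_0 \cup (\eta \cap \boxRegion{S \setminus \set{\vectorize{v}}})}][\boxRegion{\boxesToUpdate}]}{\partitionFunctionHS[\activity_{\xi_0 \cup (\eta \cap \boxRegion{S})}][\boxRegion{\boxesToUpdate \setminus \set{\vectorize{v}}}]}.
\]

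\textit{Step 2: Lower-bound $\bayesFilterCorrectionHS{\varepsilon}$ in terms of exact Gibbs partition functions over grid boundaries.} In the definition of $\bayesFilterCorrectionHS{\varepsilon}$ in \Cref{lemma:correction_HS}, the grid width $\delta_2$ is chosen so that \Cref{lemma:approx_hardspheres} applies with slack $\varepsilon/4$, meaning each $\partitionFunctionApprox$ factor inside the minimum differs from the corresponding exact hard-sphere partition function by at most a multiplicative factor $\eulerE^{\pm \varepsilon/4}$. Taking ratios and absorbing the $\eulerE^{-\varepsilon}$ prefactor, this yields
\[
	\bayesFilterCorrectionHS{\varepsilon}[S][\vectorize{v}][\eta] \ge \eulerE^{-3\varepsilon/2} \min_{\gamma} \frac{\partitionFunctionHS[\activity_{\gamma \cup (\eta \cap \boxRegion{S})}][\boxRegion{\boxesToUpdate \setminus \set{\vectorize{v}}}]}{\partitionFunctionHS[\activity_{\gamma \cup (\eta \cap \boxRegion{S \setminus \set{\vectorize{v}}})}][\boxRegion{\boxesToUpdate}]},
\]
where the minimum runs over admissible grid configurations $\gamma \subseteq (\delta_1 \Z)^{\dimensions} \cap \boxRegion{H \cap \partial \boxesToUpdate}$.

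\textit{Step 3: Apply SSM pointwise in $\gamma$ and conclude.} Multiplying the previous display by the true-boundary ratio from Step~1 and then invoking \Cref{lemma:ssm_partition_function} with $\xi_1 = \gamma$ and $\xi_2 = \xi_0$ (valid since $\gamma, \xi_0 \in \pointsets[\boxRegion{H}]$ and $\eta$ is feasible), applied inside the minimum, gives
\[
	\eulerE^{-\varepsilon} \bayesFilterCorrectionHS{\varepsilon}[S][\vectorize{v}][\eta] \cdot \frac{\partitionFunctionHS[\activity_{\eta \cap \boxRegion{\partial \boxesToUpdate}}][\boxRegion{\boxesToUpdate}]}{\partitionFunctionHS[\activity_{\eta \cap \boxRegion{\partial \boxesToUpdate \cup \set{\vectorize{v}}}}][\boxRegion{\boxesToUpdate \setminus \set{\vectorize{v}}}]} \ge \eulerE^{-5\varepsilon/2} \exponential{-\alpha \eulerE^{-b \range \updateRadius}},
\]
where $\alpha = a \cdot 3^{\dimensions} \range^{\dimensions} \eulerE^{2b\range}\bigl(\activity \range^{\dimensions} + \eulerE^{\activity 3^{\dimensions} \range^{\dimensions}}\bigr)$ depends only on $a, b, \range, \activity, \dimensions$. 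Specializing $\varepsilon = \eulerE^{-\updateRadius}$ and applying $\eulerE^{-x} \ge 1 - x$ to both factors gives the claim with $b' = \min(1, b \range)$ and $a' = \tfrac{5}{2} + \alpha$.

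The only fiddly point I anticipate is Step~2: \Cref{lemma:correction_HS} certifies $\bayesFilterCorrectionHS{\varepsilon}$ only as a one-sided upper bound on the infimum over boundary configurations $\xi$, so the lower bound we need must be extracted by re-running \Cref{lemma:approx_hardspheres} directly on the grid minimizer of the $\partitionFunctionApprox$-ratio rather than invoking the Bayes-filter-correction property of Lemma~\ref{lemma:correction_HS} as a black box. Beyond this, no probabilistic input is needed other than the SSM hypothesis carried by \Cref{lemma:ssm_partition_function}.
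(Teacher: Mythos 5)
Your proposal is correct and takes essentially the same route as the paper's proof: rewrite the target ratio via the spatial Markov property to bring in the actual boundary $\xi_0 = \eta \cap \boxRegion{H}$, lower-bound $\bayesFilterCorrectionHS{\varepsilon}$ by unwrapping the $\partitionFunctionApprox$ approximation into exact partition functions over grid boundaries, apply \Cref{lemma:ssm_partition_function} with $\xi_1 = \gamma$ and $\xi_2 = \xi_0$, and specialize $\varepsilon = \eulerE^{-\updateRadius}$. Your constant-tracking ($\eulerE^{-5\varepsilon/2}$, hence $a' = \tfrac{5}{2} + \alpha$) is actually slightly more careful than the paper's write-up, which contains a small sign/constant typo ($\exponential{2\eulerE^{-\updateRadius}}$ where $\exponential{-\tfrac{5}{2}\eulerE^{-\updateRadius}}$ should appear), though the conclusion is the same since $a'$ need only depend on the allowed parameters. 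Your closing caveat about Step 2 — that the Bayes-filter-correction property from \Cref{lemma:correction_HS} only gives the one-sided upper bound, so the lower bound must be re-derived by applying \Cref{lemma:approx_hardspheres} directly to the grid minimizer — is exactly the point the paper handles with the phrase ``following the same arguments as in \Cref{lemma:correction_HS}.''
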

\begin{proof}
	Set $\delta_{1} = \delta_{1}\left(\eulerE^{-\updateRadius}\right)$ and $\delta_{2} = \delta_{2}\left(\eulerE^{-\updateRadius}\right)$ as defined in \Cref{lemma:correction_HS}.
	Note that by \Cref{lemma:spatial_markov} we have
	\[
		\partitionFunction[\boxesToUpdate][\eta \cap \partial \boxesToUpdate] 
		= \partitionFunction[\boxesToUpdate][\eta \cap \boxesToUpdate^{\comp}] 
		= \partitionFunction[\boxesToUpdate][(\eta \cap H) \cup (\eta \cap (S \setminus \vectorize{v}))],
	\] 
	where the last equality comes from the fact that $H$ and $S \setminus \vectorize{v}$ form a partitioning of $\boxesToUpdate^{\comp}$.
	Similarly, we obtain
	\[
		\partitionFunction[\boxesToUpdate \setminus \vectorize{v}][\eta \cap (\partial \boxesToUpdate \cup \vectorize{v})] 
		= \partitionFunction[\boxesToUpdate \setminus \vectorize{v}][\eta \cap (\boxesToUpdate \setminus \vectorize{v})^{\comp}]
		= \partitionFunction[\boxesToUpdate \setminus \vectorize{v}][(\eta \cap H) \cup (\eta \cap S)].
	\] 
	Thus, applying \Cref{lemma:ssm_partition_function} with $\xi_1 = \gamma$ and $\xi_2 = \eta \cap H$ yields
	\begin{align*}
		&\min_{\gamma \subseteq (\delta_1 \Z)^{\dimensions} \cap \boxRegion{H \cap \partial \boxesToUpdate}} \left\{\frac{\partitionFunction[\boxesToUpdate \setminus \vectorize{v}][\gamma \cup (\eta \cap S)]}{\partitionFunction[\boxesToUpdate][\gamma \cup (\eta \cap (S \setminus \vectorize{v}))]}\right\} \cdot \frac{
			\partitionFunction[\boxesToUpdate][\eta \cap \partial \boxesToUpdate] 
		}{ 
			\partitionFunction[\boxesToUpdate \setminus \vectorize{v}][\eta \cap (\partial \boxesToUpdate \cup \vectorize{v})]
		}\\
		&\hspace*{4em}\ge \exponential{- a 3^{\dimensions} \range^{\dimensions} \eulerE^{2 b \range} \left(\activity \range^{\dimensions} + \eulerE^{\activity 3^{\dimensions} \range^{\dimensions}}\right) \eulerE^{-b \range \updateRadius}} .
	\end{align*}
	Moreover, following the same arguments as in \Cref{lemma:correction_HS}, we have
	\[
		\eulerE^{-\eulerE^{-\updateRadius}/2} \cdot \min_{\gamma \subseteq (\delta_1 \Z)^{\dimensions} \cap \boxRegion{H \cap \partial \boxesToUpdate}} \frac{\partitionFunction[\boxesToUpdate \setminus \vectorize{v}][\gamma \cup (\eta \cap S)]}{\partitionFunction[\boxesToUpdate][\gamma \cup (\eta \cap (S \setminus \vectorize{v}))]}
		\le \min_{\gamma \subseteq (\delta_1 \Z)^{\dimensions} \cap \boxRegion{H \cap \partial \boxesToUpdate}} \frac{\partitionFunctionApprox[\boxesToUpdate \setminus \vectorize{v}][\gamma \cup (\eta \cap S)][\delta_2]}
		{\partitionFunctionApprox[\boxesToUpdate][\gamma \cup (\eta \cap (S \setminus \vectorize{v}))][\delta_2]}.
	\]
	Recalling the definition of $\bayesFilterCorrectionHS{\eulerE^{-\updateRadius}}[S][\vectorize{v}][\eta]$ in \Cref{lemma:correction_HS} and noting that for $\updateRadius \ge 2$ it holds that $b \cdot (\updateRadius-1)\range \ge \frac{b \range}{2} \updateRadius$, this implies
	\begin{align*}
		\eulerE^{-\eulerE^{-\updateRadius}}\bayesFilterCorrectionHS{\eulerE^{-\updateRadius}}[S][\vectorize{v}][\eta] \cdot
		\frac{
			\partitionFunction[\boxesToUpdate][\eta \cap \partial \boxesToUpdate] 
		}{ 
			\partitionFunction[\boxesToUpdate \setminus \vectorize{v}][\eta \cap (\partial \boxesToUpdate \cup \vectorize{v})]
		}
		&\ge \exponential{- a 3^{\dimensions} \range^{\dimensions} \eulerE^{2 b \range} \left(\activity \range^{\dimensions} + \eulerE^{\activity 3^{\dimensions} \range^{\dimensions}}\right) \eulerE^{-b \range \updateRadius}} \cdot \exponential{2\eulerE^{- \updateRadius}} \\
		&\ge 1 - a' \eulerE^{- b' \updateRadius} 
	\end{align*}
	for $b' = \min\set{1, b \range}$ and $a' = a 3^{\dimensions} \range^{\dimensions} \eulerE^{2b \range} \left(\activity \range^{\dimensions} + \eulerE^{\activity 3^{\dimensions} \range^{\dimensions}}\right) + 2$, which concludes the proof.
\end{proof}

\Cref{lemma:lower_bound_filter_HS} allows us to control the success probability of the Bayes filter in terms of $\updateRadius$.
This leads to the following statement.

\begin{lemma} \label{lemma:iterations_HS}
	Consider a hard-sphere model that exhibits $(a, b)$-strong spatial mixing up to $\activity$.
	Suppose we run \Cref{algo:sampling} with $\bayesFilterCorrection(\cdot) = \eulerE^{-\eulerE^{-\updateRadius}}\bayesFilterCorrectionHS{\eulerE^{-\updateRadius}}(\cdot)$ as Bayes filter correction in line~\ref{algo:sampling:filter}, and let $\iterationNumber = \inf\set{t \in \N_{0} \mid \boxesToFix{t} = \emptyset}$ denote the number of iterations until the algorithm terminates.
	Then, for $\updateRadius$ sufficiently large depending on $a$, $b$, $\range$, $\activity$ and $\dimensions$, it holds that $\E{T}[] \le 2 \size{\boxIds}$.
\end{lemma}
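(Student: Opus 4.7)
The plan is to view $M_t \coloneqq \size{\boxesToFix{t}}$ as a one-dimensional stochastic process on $\{0, 1, \dots, \size{\boxIds}\}$ and exhibit it (after a linear drift correction) as a supermartingale, from which the claim will follow by optional stopping.

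First, I would unfold what a single iteration does to $M_t$. Conditioned on $(\currentPointset{t}, \boxesToFix{t}, \boxChosen{t})$ with $\boxesToFix{t} \neq \emptyset$, the Bayes filter $\bayesFilter{t}$ succeeds with probability $p_t \coloneqq \eulerE^{-\eulerE^{-\updateRadius}} \bayesFilterCorrectionHS{\eulerE^{-\updateRadius}}[\boxesToFix{t}][\boxChosen{t}][\currentPointset{t}] \cdot \partitionFunctionHS[\activity_{\currentPointset{t} \cap \boxRegion{\partial \boxesToUpdate}}][\boxRegion{\boxesToUpdate}]/\partitionFunctionHS[\activity_{\currentPointset{t} \cap \boxRegion{\partial \boxesToUpdate \cup \set{\boxChosen{t}}}}][\boxRegion{\boxesToUpdate \setminus \set{\boxChosen{t}}}]$. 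When the filter succeeds, $M_{t+1} = M_t - 1$; when it fails, $M_{t+1} - M_t \le \size{\partial \boxesToUpdate \setminus \boxesToFix{t}} \le \size{\boxesToUpdate \cup \partial \boxesToUpdate}$, which is uniformly bounded by $K_{\updateRadius} \coloneqq (2\updateRadius + 3)^{\dimensions}$. Since $\currentPointset{t}$ is almost surely feasible by \Cref{lemma:feasible_bounded}, \Cref{lemma:lower_bound_filter_HS} yields $p_t \ge 1 - a' \eulerE^{-b' \updateRadius}$ almost surely, for constants $a', b'$ depending only on $a, b, \range, \activity, \dimensions$.

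Combining these two observations, on $\{\boxesToFix{t} \neq \emptyset\}$ we have the almost-sure conditional drift bound
\[
	\E{M_{t+1} - M_t}[\currentPointset{t}, \boxesToFix{t}, \boxChosen{t}]
	\le -p_t + K_{\updateRadius} (1 - p_t)
	\le -1 + (1 + K_{\updateRadius}) a' \eulerE^{-b' \updateRadius} .
\]
The right-hand side is a deterministic function of $\updateRadius$ alone, and since $K_{\updateRadius}$ grows only polynomially in $\updateRadius$ while $\eulerE^{-b' \updateRadius}$ decays exponentially, we may fix $\updateRadius$ (depending on $a, b, \range, \activity, \dimensions$) large enough that $(1 + K_{\updateRadius}) a' \eulerE^{-b' \updateRadius} \le 1/2$, giving $\E{M_{t+1} - M_t}[\currentPointset{t}, \boxesToFix{t}, \boxChosen{t}] \le -1/2$ on $\{\boxesToFix{t} \neq \emptyset\}$. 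Taking a further expectation over $\boxChosen{t}$ (and over the event that the algorithm has not yet terminated), the process $N_t \coloneqq M_{t \wedge \iterationNumber} + \tfrac{1}{2}(t \wedge \iterationNumber)$ is a nonnegative supermartingale with respect to the natural filtration of the algorithm.

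Finally, I would invoke the optional stopping theorem (or equivalently monotone convergence applied to the bounded stopping times $t \wedge \iterationNumber$) to conclude $\E{N_{\iterationNumber}} \le \E{N_0} = \size{\boxIds}$. Since $M_{\iterationNumber} = 0$ by definition of $\iterationNumber$ and \Cref{cor:finite_iterations} guarantees $\iterationNumber < \infty$ almost surely, this rearranges to $\tfrac{1}{2} \E{\iterationNumber} \le \size{\boxIds}$, i.e.\ $\E{\iterationNumber} \le 2 \size{\boxIds}$. The one subtlety to double-check is that the drift bound genuinely holds conditionally on the full history up to time $t$ (not merely on $(\currentPointset{t}, \boxesToFix{t}, \boxChosen{t})$), which follows because $\boxesToFix{t+1}$ depends on prior randomness only through these three variables plus the fresh Bernoulli $\bayesFilter{t}$; the main obstacle is really just keeping the measurability and conditioning bookkeeping clean, as the probabilistic core is the simple drift argument above.
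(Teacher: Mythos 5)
Your argument is correct and is essentially the same additive-drift argument the paper uses: the paper applies \Cref{thm:additive_drift} to the process $\size{\boxesToFix{t}}$ with the filtration $\mathcal{F}_t = \sigma\left((\currentPointset{j}, \boxesToFix{j}, \boxChosen{j})_{j \le t}\right)$, verifying exactly the drift condition $\E{\size{\boxesToFix{t}} - \size{\boxesToFix{t+1}}}[\mathcal{F}_t]\,\ind{\iterationNumber > t} \ge \tfrac{1}{2}\ind{\iterationNumber > t}$ via the same computation ($-p_t + (2\updateRadius+3)^{\dimensions}(1-p_t) \le -1/2$, using \Cref{lemma:lower_bound_filter_HS} and $\updateRadius$ large). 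The only presentational difference is that you unpack the drift theorem into the underlying supermartingale and optional-stopping (Fatou) argument, whereas the paper cites the theorem as a black box; the content is identical.
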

\begin{proof}
	We aim for applying \Cref{thm:additive_drift} to prove our claim.
	To this end, consider the process $(\size{\boxesToFix{t}})_{t \in \N_{0}}$ and the filtration $(\mathcal{F}_t)_{t \in \N_{0}}$ defined by $\mathcal{F}_t = \sigma\left((\currentPointset{j}, \boxesToFix{j}, \boxChosen{j})_{j \le t}\right)$. 
	Further, observe that our desired hitting time can equivalently be written as $\iterationNumber = \inf\set{t \in \N_{0}}[\size{\boxesToFix{t}} \le 0]$.
	Since we are interested in the expectation of $\iterationNumber$, we only need to check assumptions \ref{thm:additive_drift:non_negative} and \ref{thm:additive_drift:drift} of \Cref{thm:additive_drift}.
		
	For \ref{thm:additive_drift:non_negative} of \Cref{thm:additive_drift}, observe that $\size{\boxesToFix{t}} \ge 0$ for all $t \in \N_{0}$.
	For \ref{thm:additive_drift:drift}, we prove that $\E{(\size{\boxesToFix{t}} - \size{\boxesToFix{t+1}}) \ind{T > t}}[\mathcal{F}_t] \ge \frac{1}{2} \ind{\iterationNumber > t}$ if $\updateRadius$ is sufficiently large.
	Since
	\[
		\E{(\size{\boxesToFix{t}} - \size{\boxesToFix{t+1}}) \ind{\iterationNumber > t}}[\mathcal{F}_t] 
		= \left(\size{\boxesToFix{t}} - \E{\size{\boxesToFix{t+1}}}[\mathcal{F}_t]\right) \ind{\iterationNumber > t}
	\]
	it suffices to show that $\E{\size{\boxesToFix{t+1}}}[\mathcal{F}_t] \ind{\iterationNumber > t} \le \left(\size{\boxesToFix{t}} - \frac{1}{2} \right)\ind{\iterationNumber > t}$.
		
	To simplify notation, we will omit the indicator of $\iterationNumber >t$ while still restricting ourselves to the setting where $\boxesToFix{t} \neq \emptyset$.
	Next, observe that, if $\bayesFilter{t} = 1$, then $\size{\boxesToFix{t+1}} = \size{\boxesToFix{t}} - 1$.
	On the other hand, if $\bayesFilter{t} = 0$, then
	\[
		\size{\boxesToFix{t+1}} 
		= \size{\boxesToFix{t}} + \size{\partial \boxesToUpdate}
		\le \size{\boxesToFix{t}} + \size{\boxesToUpdate \cup \partial \boxesToUpdate}  
		\le \size{\boxesToFix{t}} + \left(2 \updateRadius + 3\right)^{\dimensions},
	\]
	where $\boxesToUpdate = \boxesToUpdate[\boxesToFix{t}][\boxChosen{t}][\updateRadius]$.
	Thus, it suffices if
	\[
		\left(2 \updateRadius + 3\right)^{\dimensions} \cdot (1-\E{\bayesFilter{t}}[\mathcal{F}_t]) - \E{\bayesFilter{t}}[\mathcal{F}_t] \le - \frac{1}{2} .
	\]
	Since, by \Cref{lemma:lower_bound_filter_HS}, 
	\begin{align*}
		\E{\bayesFilter{t}}[\mathcal{F}_t] \ge 1 - a' \eulerE^{-b' \updateRadius},
	\end{align*}
		for $a'$ and $b'$ only depending on $a$, $b$, $\activity$, $\range$ and $\dimensions$, this is satisfied for $\updateRadius$ sufficiently large, depending on $a$, $b$, $\activity$, $\range$ and $\dimensions$.
	Applying \Cref{thm:additive_drift} then yields $\E{\iterationNumber}[] \le 2 \size{\boxIds}$, which concludes the proof.
\end{proof}

We conclude the following theorem.
\begin{theorem} \label{thm:sampling_HS}
	Consider \Cref{algo:sampling} on a hard-sphere model with $\bayesFilterCorrection(\cdot) = \eulerE^{-\eulerE^{-\updateRadius}}\bayesFilterCorrectionHS{\eulerE^{-\updateRadius}}(\cdot)$ as Bayes filter correction in line~\ref{algo:sampling:filter}.
	We can run the algorithm in almost-surely finite running time and, on termination, it outputs a sample from the hard-sphere Gibbs measure on $\region$.
	Moreover, if the hard-sphere model satisfies $(a, b)$-strong spatial mixing and if $\updateRadius$ is chosen as a sufficiently large constant, depending on $a$, $b$, $\range$, $\activity$ and $\dimensions$, then we can run the algorithm in expected time $\bigO{\volume{\region}}$. 
\end{theorem}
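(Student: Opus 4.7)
The plan is to split the argument into three parts: verifying that the specified function is a valid Bayes filter correction so that \Cref{algo:sampling} is well-defined, invoking the correctness results from \Cref{secCorrectness}, and then combining the iteration bound of \Cref{lemma:iterations_HS} with the per-iteration bound of \Cref{cor:sampling_bayes_filter_HS} via a Wald-style exchange.

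First I would observe that $\eulerE^{-\eulerE^{-\updateRadius}}\bayesFilterCorrectionHS{\eulerE^{-\updateRadius}}(\cdot)$ is itself a Bayes filter correction in the sense of \Cref{def:bayes_filter}. \Cref{lemma:correction_HS} already establishes that $\bayesFilterCorrectionHS{\eulerE^{-\updateRadius}}(\cdot)$ satisfies both conditions of the definition; multiplying by the positive constant $\eulerE^{-\eulerE^{-\updateRadius}} \in (0,1]$ preserves $\pointsetEvents[\boxRegion{S}]$-measurability, leaves the function uniformly bounded away from zero, and only shrinks its value, so it still lies below the infimum of ratios of partition functions required by \Cref{def:bayes_filter}. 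With this verified, \Cref{algo:sampling} is well-defined; almost-sure termination in finitely many iterations follows from \Cref{cor:finite_iterations}, and the fact that the output is distributed as $\gibbsHS[\activity]$ is an immediate application of \Cref{thm:correctness}.

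For the running time bound under $(a,b)$-strong spatial mixing, I would choose $\updateRadius$ large enough (as a function of $a,b,\range,\activity,\dimensions$) for \Cref{lemma:iterations_HS} to apply, giving $\E{\iterationNumber} \le 2\size{\boxIds}$. By \Cref{cor:sampling_bayes_filter_HS}, which rests on the almost-sure feasibility of $\currentPointset{t-1}$ guaranteed by \Cref{lemma:feasible_bounded}, there is a constant $C$ depending only on $a,b,\range,\activity,\dimensions$ with $\E{\iterationTime{t}}[\currentPointset{t-1}, \boxesToFix{t-1}, \boxChosen{t-1}] \le C$ almost surely for every $t \ge 1$. Since the event $\set{\iterationNumber \ge t}$ is measurable with respect to $\sigma\big((\currentPointset{j}, \boxesToFix{j}, \boxChosen{j})_{j \le t-1}\big)$, the tower property yields $\E{\iterationTime{t} \ind{\iterationNumber \ge t}} \le C \Pr{\iterationNumber \ge t}$, and summing over $t$ gives
\[
	\E{\sum_{t=1}^{\iterationNumber} \iterationTime{t}}
	= \sum_{t \ge 1} \E{\iterationTime{t} \ind{\iterationNumber \ge t}}
	\le C \sum_{t \ge 1} \Pr{\iterationNumber \ge t}
	= C \cdot \E{\iterationNumber}
	\le 2C \size{\boxIds}.
\]
Finally, $\size{\boxIds} = \lceil \sidelength/\range \rceil^{\dimensions} = \bigO{\volume{\region}}$ with the implied constant depending only on $\range$ and $\dimensions$, which completes the $\bigO{\volume{\region}}$ bound.

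The main obstacle is really just careful bookkeeping: making sure the almost-sure conditional bound on $\iterationTime{t}$ from \Cref{cor:sampling_bayes_filter_HS} applies on the event $\set{\iterationNumber \ge t}$, and that the conditioning $\sigma$-field is large enough for the tower-property manipulation to go through cleanly without invoking an independence hypothesis of the classical Wald form. Everything else is essentially a matter of assembling the pieces already proven in \Cref{secCorrectness}, \Cref{sec:point_densities}, and the earlier parts of \Cref{secHardSphere}.
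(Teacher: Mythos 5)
Your argument follows the same route as the paper's proof: verify that the scaled function is a valid Bayes filter correction, invoke \Cref{thm:correctness} for the output distribution, use \Cref{lemma:iterations_HS} for the iteration count, \Cref{cor:sampling_bayes_filter_HS} for the per-iteration conditional bound, and a tower-property/Wald-style exchange using the $\sigma(\boxesToFix{t-1})$-measurability of $\set{\iterationNumber \ge t}$ to assemble the expected running-time bound. The scaling remark that $\eulerE^{-\eulerE^{-\updateRadius}}\bayesFilterCorrectionHS{\eulerE^{-\updateRadius}}$ inherits the Bayes-filter-correction properties is made explicit here where the paper takes it as understood, and the exchange in the display is correct.

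There is one small gap in the first part. The theorem claims the algorithm has \emph{almost-surely finite running time} (without assuming strong spatial mixing), but you only argue for almost-sure termination in finitely many iterations via \Cref{cor:finite_iterations}. Finitely many iterations does not by itself rule out that some individual iteration takes infinite time, since sampling the Bayes filter via the Bernoulli factory and the rejection-sampling update in \cref{algo:sampling:update} each have random, unbounded running times. The paper closes this by appealing to \Cref{cor:sampling_bayes_filter_HS} already in this first part: since $\E{\iterationTime{t}}[\currentPointset{t-1}, \boxesToFix{t-1}, \boxChosen{t-1}]$ is almost surely bounded by a constant (a fact that holds for every $\updateRadius$ and $\varepsilon > 0$, with no spatial-mixing hypothesis), each $\iterationTime{t}$ is almost surely finite, and combined with $\iterationNumber < \infty$ a.s.\ the total running time is finite a.s.\ You do invoke \Cref{cor:sampling_bayes_filter_HS} later, but only in the paragraph assuming $(a,b)$-strong spatial mixing; moving that invocation forward to cover the unconditional claim would close the gap. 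Everything else is correct and matches the paper's argument.
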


\begin{proof}	
	For the first part of the statement, note that the correct output distribution follows directly from \Cref{thm:correctness} and the fact that $\eulerE^{-\eulerE^{-\updateRadius}}\bayesFilterCorrectionHS{\eulerE^{-\updateRadius}}(\cdot)$ is a Bayes filter correction by \Cref{lemma:correction_HS}. 
	Let $\iterationNumber$ denote that number of iterations of \Cref{algo:sampling}, and let $R_t$ denote the running time in iterations $t \in \N$.
	By \Cref{cor:finite_iterations} we know that $\iterationNumber$ is almost surely finite.
	Moreover, it holds that $\E{R_t}[] = \E{\E{R_t}[\currentPointset{t-1}, \boxesToFix{t-1}, \boxChosen{t-1}]}[]$.
	Since by \Cref{cor:sampling_bayes_filter_HS} $\E{R_t}[\currentPointset{t-1}, \boxesToFix{t-1}, \boxChosen{t-1}] \le t(\updateRadius, \range, \activity, \dimensions)$ for some function $t: \R_{\ge 0}^{3} \times \N_{0} \to \R_{\ge 0}$, it also holds that $\E{R_t}[] \le t(\updateRadius, \range, \activity, \dimensions)$. 
	Consequently, $R_t$ must be finite almost surely, and \Cref{algo:sampling} has almost surely finite running time.
	
	For the second part of the statement, suppose the hard-sphere model satisfies $(a, b)$-strong spatial mixing up to $\activity$.
	Observe that the expected running time of \Cref{algo:sampling} can be expressed as
	\begin{align*}
		\E{\sum_{t = 1}^{\iterationNumber} R_t}[] 
		&= \E{\sum_{t \ge 1} \ind{t \ge \iterationNumber} \E{R_t}[\currentPointset{t-1}, \boxesToFix{t-1}, \boxChosen{t-1}]}[] \\
		&\le t(\updateRadius, \range, \activity, \dimensions)  \E{\sum_{t \ge 1} \ind{t \ge \iterationNumber}}[] \\
		&= t(\updateRadius, \range, \activity, \dimensions) \E{\iterationNumber}[],
	\end{align*}
	where the first equality uses the fact that $\ind{t \ge \iterationNumber} = \ind{\boxesToFix{t-1} \neq \emptyset}$ is $\sigma(\boxesToFix{t-1})$-measurable.
	By \Cref{lemma:iterations_HS}, we can choose $\updateRadius$ sufficiently large, depending on $a$, $b$, $\range$, $\activity$ and $\dimensions$ only, such that $\E{\iterationNumber}[] \le 2 \size{\boxIds} \in \bigO{\volume{\region}}$, proving linear expected running time of the algorithm for any such choice of $\updateRadius$.
\end{proof}

\section{General repulsive potentials}
\label{secRepulsive}
We now extend our perfect sampling algorithm to the setting of more general bounded-range repulsive potentials $\potential$.
In contrast to the hard-sphere model, it is not clear how to perform the minimization task in involved in constructing the Bayes filter in this setting.
We will instead assume knowledge of the rate of strong spatial mixing for constructing the Bayes filter.

\begin{lemma} \label{lemma:filter_repsulsive}
	Let $S \subseteq \boxIds$ be non-empty and $\vectorize{v} \in S$. 
	Let $\eta \in \pointsets[\region]$ be feasible and set $\boxesToUpdate = \boxesToUpdate[S][\vectorize{v}][\updateRadius]$.
	Suppose $a, b > 0$ are such that $\potential$ satisfies $(a, b)$-strong spatial mixing up to $\activity$, and set
	\[
		\correction = \correction[a][b]  \coloneqq \exponential{- a 3^{\dimensions} \range^{\dimensions} \eulerE^{2 b \range} \left(\activity \range^{\dimensions} + \eulerE^{\activity 3^{\dimensions} \range^{\dimensions}}\right) \eulerE^{-b \range \updateRadius}} 
	\]
	and 
	\[
		\bayesFilterCorrectionR{a}{b}[S][\vectorize{v}][\eta] \coloneqq \correction \cdot \frac{\partitionFunction[\boxesToUpdate \setminus \vectorize{v}][\eta \cap S]}{\partitionFunction[\boxesToUpdate][\eta \cap (S \setminus \vectorize{v})]} .
	\]
	Then $\bayesFilterCorrectionR{a}{b}[S][\vectorize{v}][\eta]$ is a Bayes filter correction as in \Cref{def:bayes_filter}.
	Moreover, it holds that 
	\[
		\exponential{- 2 a 3^{\dimensions} \range^{\dimensions} \eulerE^{2 b \range} \left(\activity \range^{\dimensions} + \eulerE^{\activity 3^{\dimensions} \range^{\dimensions}}\right) \eulerE^{-b \range \updateRadius}}
		\le 
		\bayesFilterCorrectionR{a}{b}[S][\vectorize{v}][\eta] \cdot \frac{
			\partitionFunction[\boxesToUpdate][\eta \cap \partial \boxesToUpdate] 
		}{ 
			\partitionFunction[\boxesToUpdate \setminus \vectorize{v}][\eta \cap (\partial \boxesToUpdate \cup \vectorize{v})]
		}
		\le 1 .
	\]
\end{lemma}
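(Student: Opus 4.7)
The plan is to verify the two requirements of Definition~\ref{def:bayes_filter} for $\bayesFilterCorrectionR{a}{b}$, and then derive the two-sided bound on the Bayes-filter success probability as a corollary of Lemma~\ref{lemma:ssm_partition_function} applied with carefully chosen boundary configurations.

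First, measurability is immediate: since $\bayesFilterCorrectionR{a}{b}[S][\vectorize{v}][\eta]$ depends on $\eta$ only through $\eta \cap \boxRegion{S}$ and $\eta \cap \boxRegion{S \setminus \set{\vectorize{v}}}$ (both measurable with respect to $\pointsetEvents[\boxRegion{S}]$), it is $\pointsetEvents[\boxRegion{S}]$-measurable. Next, I would verify the upper bound in \Cref{def:bayes_filter}: for any feasible $\xi \in \pointsets[\boxRegion{H}]$ with $\xi \cup (\eta \cap \boxRegion{S})$ feasible, applying \Cref{lemma:ssm_partition_function} with $\xi_1 = \xi$ and $\xi_2 = \emptyset$ yields
\[
	\correction = \exponential{-a 3^{\dimensions} \range^{\dimensions} \eulerE^{2b \range}\left(\activity \range^{\dimensions} + \eulerE^{\activity 3^{\dimensions} \range^{\dimensions}}\right)\eulerE^{-b \range \updateRadius}} \le \frac{\partitionFunction[\activity_{\xi \cup (\eta \cap \boxRegion{S})}][\boxRegion{\boxesToUpdate \setminus \set{\vectorize{v}}}]}{\partitionFunction[\activity_{\xi \cup (\eta \cap \boxRegion{S \setminus \set{\vectorize{v}}})}][\boxRegion{\boxesToUpdate}]} \cdot \frac{\partitionFunction[\activity_{\eta \cap \boxRegion{S \setminus \set{\vectorize{v}}}}][\boxRegion{\boxesToUpdate}]}{\partitionFunction[\activity_{\eta \cap \boxRegion{S}}][\boxRegion{\boxesToUpdate \setminus \set{\vectorize{v}}}]} ,
\]
which rearranges to exactly the required inequality on $\bayesFilterCorrectionR{a}{b}[S][\vectorize{v}][\eta]$.

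Second, I would show that $\bayesFilterCorrectionR{a}{b}[S][\vectorize{v}][\eta]$ is uniformly bounded away from $0$. The numerator $\partitionFunction[\activity_{\eta \cap \boxRegion{S}}][\boxRegion{\boxesToUpdate \setminus \set{\vectorize{v}}}] \ge 1$ from its $k=0$ term. Since $\potential$ is repulsive, $\activity_{\eta \cap \boxRegion{S \setminus \set{\vectorize{v}}}} \le \activity$ pointwise, so by Poisson domination $\partitionFunction[\activity_{\eta \cap \boxRegion{S \setminus \set{\vectorize{v}}}}][\boxRegion{\boxesToUpdate}] \le \eulerE^{\activity \volume{\boxRegion{\boxesToUpdate}}} \le \eulerE^{\activity (2\updateRadius+1)^{\dimensions} \range^{\dimensions}}$. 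Hence $\bayesFilterCorrectionR{a}{b}[S][\vectorize{v}][\eta] \ge \correction \cdot \eulerE^{-\activity (2\updateRadius+1)^{\dimensions} \range^{\dimensions}} > 0$, uniformly in $\eta$. This completes the verification that $\bayesFilterCorrectionR{a}{b}$ is a Bayes filter correction.

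Finally, for the two-sided bound on the Bayes-filter success probability, the upper bound $\le 1$ is already furnished by \Cref{lemma:bayes_filter_bounds} applied to the correction just verified. For the matching lower bound, I would use the spatial Markov property (\Cref{lemma:spatial_markov}) to rewrite the boundary partition functions as
\[
	\partitionFunction[\activity_{\eta \cap \boxRegion{\partial \boxesToUpdate}}][\boxRegion{\boxesToUpdate}] = \partitionFunction[\activity_{(\eta \cap \boxRegion{H}) \cup (\eta \cap \boxRegion{S \setminus \set{\vectorize{v}}})}][\boxRegion{\boxesToUpdate}] , \qquad \partitionFunction[\activity_{\eta \cap \boxRegion{\partial \boxesToUpdate \cup \set{\vectorize{v}}}}][\boxRegion{\boxesToUpdate \setminus \set{\vectorize{v}}}] = \partitionFunction[\activity_{(\eta \cap \boxRegion{H}) \cup (\eta \cap \boxRegion{S})}][\boxRegion{\boxesToUpdate \setminus \set{\vectorize{v}}}] ,
\]
using that $\complementOf{\boxesToUpdate} = H \cup (S \setminus \set{\vectorize{v}})$ and $\complementOf{\boxesToUpdate \setminus \set{\vectorize{v}}} = H \cup S$ as disjoint unions. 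Then \Cref{lemma:ssm_partition_function} applied with $\xi_1 = \emptyset$ and $\xi_2 = \eta \cap \boxRegion{H}$ bounds the product of the four relevant partition-function ratios by $\eulerE^{\pm K}$ with $K = a 3^{\dimensions} \range^{\dimensions} \eulerE^{2b \range}(\activity \range^{\dimensions} + \eulerE^{\activity 3^{\dimensions} \range^{\dimensions}})\eulerE^{-b \range \updateRadius}$. Multiplying by $\correction = \eulerE^{-K}$ yields the claimed lower bound $\eulerE^{-2K}$.

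No step here is a real obstacle; the work is essentially bookkeeping to match the terms in the conclusion of \Cref{lemma:ssm_partition_function} against the ratio $\bayesFilterCorrectionR{a}{b}[S][\vectorize{v}][\eta] \cdot \partitionFunction[\activity_{\eta \cap \boxRegion{\partial \boxesToUpdate}}][\boxRegion{\boxesToUpdate}] / \partitionFunction[\activity_{\eta \cap \boxRegion{\partial \boxesToUpdate \cup \set{\vectorize{v}}}}][\boxRegion{\boxesToUpdate \setminus \set{\vectorize{v}}}]$ in two different substitutions for $(\xi_1, \xi_2)$, once to check the Bayes-filter-correction inequality and once to obtain the exponential lower bound. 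The only mildly delicate point is tracking the partitioning $\boxIds = S \sqcup (\boxesToUpdate \setminus \set{\vectorize{v}}) \sqcup H$ to justify the rewrites via \Cref{lemma:spatial_markov}.
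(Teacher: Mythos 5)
Your proof is correct and follows essentially the same route as the paper: measurability from $\pointsetEvents[\boxRegion{S}]$-dependence, the lower bound on $\bayesFilterCorrectionR{a}{b}$ by Poisson domination, the upper bound by \Cref{lemma:ssm_partition_function} with one of $\xi_1, \xi_2$ set to $\emptyset$, and the final two-sided bound via \Cref{lemma:spatial_markov} plus \Cref{lemma:ssm_partition_function} with $\xi_1 = \emptyset$, $\xi_2 = \eta \cap \boxRegion{H}$. The only cosmetic differences are that you swapped the roles of $\xi_1, \xi_2$ in the first application of \Cref{lemma:ssm_partition_function} (equivalent by the symmetry of its two-sided bound) and use the slightly tighter volume bound $\volume{\boxRegion{\boxesToUpdate}} \le (2\updateRadius+1)^{\dimensions}\range^{\dimensions}$ where the paper uses $\volume{\boxRegion{\boxesToUpdate \cup \partial \boxesToUpdate}} \le (2\updateRadius+3)^{\dimensions}\range^{\dimensions}$; both suffice.
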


\begin{proof}
	We start by checking that $\bayesFilterCorrectionR{a}{b}(\cdot)$ is a Bayes filter correction.
	For the measurability, note that for all non-empty $S \subseteq \boxIds$, $\vectorize{v} \in S$ and $\eta \in \pointsets[\region]$ it holds that $\bayesFilterCorrectionR{a}{b}[S][\vectorize{v}][\eta]$ does only depend on $\eta \cap S$.
	Moreover, observe that
	\begin{align*}
		\bayesFilterCorrectionR{a}{b}[S][\vectorize{v}][\eta] 
		&\ge \exponential{- a 3^{\dimensions} \range^{\dimensions} \eulerE^{2 b \range} \left(\activity \range^{\dimensions} + \eulerE^{\activity 3^{\dimensions} \range^{\dimensions}}\right) \eulerE^{-b \range \updateRadius}} \cdot \eulerE^{- \activity \volume{\boxRegion{\boxesToUpdate \cup \partial \boxesToUpdate}}} \\
		&\ge \exponential{- a 3^{\dimensions} \range^{\dimensions} \eulerE^{2 b \range} \left(\activity \range^{\dimensions} + \eulerE^{\activity 3^{\dimensions} \range^{\dimensions}}\right) \eulerE^{-b \range \updateRadius}} \cdot \exponential{- \activity (2 \updateRadius + 3)^{\dimensions} \range^{\dimensions}}
	\end{align*}
	uniformly in $\eta$.
	For the upper bound, we apply \Cref{lemma:ssm_partition_function} with $\xi_1 = \emptyset$ and $\xi_2 = \xi$ to obtain
	\begin{align*}
		\frac{\partitionFunction[\boxesToUpdate \setminus \vectorize{v}][\eta \cap S]}{\partitionFunction[\boxesToUpdate][\eta \cap (S \setminus \vectorize{v})]} \cdot \inf_{\substack{\xi \in \pointsets[H]\\ \xi \cup (\eta \cap S) \text{ is feasible}}}  \frac{\partitionFunction[\boxesToUpdate][\xi \cup (\eta \cap (S \setminus \vectorize{v}))]}{\partitionFunction[\boxesToUpdate \setminus \vectorize{v}][\xi \cup (\eta \cap S)]} \le \exponential{a 3^{\dimensions} \range^{\dimensions} \eulerE^{2 b \range} \left(\activity \range^{\dimensions} + \eulerE^{\activity 3^{\dimensions} \range^{\dimensions}}\right) \eulerE^{-b \range \updateRadius}} .
	\end{align*}
	Therefore, multiplying both sides with $\correction$ yields
	\[
			\bayesFilterCorrectionR{a}{b}[S][\vectorize{v}][\eta] \le \inf_{\substack{\xi \in \pointsets[H]\\ \xi \cup (\eta \cap S) \text{ is feasible}}}  \frac{\partitionFunction[\boxesToUpdate \setminus \vectorize{v}][\xi \cup (\eta \cap S)]}{\partitionFunction[\boxesToUpdate][\xi \cup (\eta \cap (S \setminus \vectorize{v}))]}
	\] 
	as desired.
	
	For the second part of the statement, note that by \Cref{lemma:spatial_markov} we have
	\[
		\partitionFunction[\boxesToUpdate][\eta \cap \partial \boxesToUpdate] 
		= \partitionFunction[\boxesToUpdate][\eta \cap \boxesToUpdate^{\comp}] 
		= \partitionFunction[\boxesToUpdate][(\eta \cap H) \cup (\eta \cap (S \setminus \vectorize{v}))],
	\] 
	where the last equality comes from the fact that $H$ and $S \setminus \vectorize{v}$ form a partitioning of $B^{\comp}$.
	Similarly, we obtain
	\[
		\partitionFunction[\boxesToUpdate \setminus \vectorize{v}][\eta \cap (\partial \boxesToUpdate \cup \vectorize{v})] 
		= \partitionFunction[\boxesToUpdate \setminus \vectorize{v}][\eta \cap (\boxesToUpdate \setminus \vectorize{v})^{\comp}]
		= \partitionFunction[\boxesToUpdate \setminus \vectorize{v}][(\eta \cap H) \cup (\eta \cap S)].
	\] 
	Since $\bayesFilterCorrectionR{a}{b}[S][\vectorize{v}][\eta]$ is a Bayes filter correction, it follows that
	\[
		\bayesFilterCorrectionR{a}{b}[S][\vectorize{v}][\eta] \cdot \frac{
			\partitionFunction[\boxesToUpdate][\eta \cap \partial \boxesToUpdate] 
		}{ 
			\partitionFunction[\boxesToUpdate \setminus \vectorize{v}][\eta \cap (\partial \boxesToUpdate \cup \vectorize{v})]
		}
		\le 1 .
	\]
	Moreover, applying \Cref{lemma:ssm_partition_function} with $\xi_1 = \emptyset$ and $\xi_2 = \eta \cap \boxRegion{H}$ yields
	\begin{align*}
		\frac{\partitionFunction[\boxesToUpdate \setminus \vectorize{v}][\eta \cap S]}{\partitionFunction[\boxesToUpdate][\eta \cap (S \setminus \vectorize{v})]} \cdot \frac{
			\partitionFunction[\boxesToUpdate][\eta \cap \partial \boxesToUpdate] 
		}{ 
			\partitionFunction[\boxesToUpdate \setminus \vectorize{v}][\eta \cap (\partial \boxesToUpdate \cup \vectorize{v})]
		}
		\ge \exponential{-a 3^{\dimensions} \range^{\dimensions} \eulerE^{2 b \range} \left(\activity \range^{\dimensions} + \eulerE^{\activity 3^{\dimensions} \range^{\dimensions}}\right) \eulerE^{-b \range \updateRadius}}
	\end{align*} 
	and multiplying both sides with $\correction$ proves the claim.
\end{proof}

While the first part of \Cref{lemma:filter_repsulsive} is sufficient to guarantee correctness of \Cref{algo:sampling}, the second part of the statement will be useful for bounding the running by allowing us to control the probability of the event $\bayesFilter{t} = 0$ for each iteration $t \in \N_{0}$.
Similarly as in the setting of the hard-sphere model, we will not work directly with $\bayesFilterCorrectionR{a}{b}[S][\vectorize{v}][\eta]$, but rather use a slightly scaled version, which is a Bayes filter correction in its own right.

\begin{corollary} \label{cor:scaled_filter_repsulsive}
	In the setting of \Cref{lemma:filter_repsulsive}, it holds that $\eulerE^{-\eulerE^{-\updateRadius}} \bayesFilterCorrectionR{a}{b}[S][\vectorize{v}][\range]$ is a Bayes filter correction, and there are constants $a', b'$, only depending on $a$, $b$, $\range$, $\activity$ and $\dimensions$, such that
	\[
		1 - a' \eulerE^{-b' \updateRadius}
		\le 
		\eulerE^{-\eulerE^{-\updateRadius}} \bayesFilterCorrectionR{a}{b}[S][\vectorize{v}][\eta] \cdot \frac{
			\partitionFunction[\boxesToUpdate][\eta \cap \partial \boxesToUpdate] 
		}{ 
			\partitionFunction[\boxesToUpdate \setminus \vectorize{v}][\eta \cap (\partial \boxesToUpdate \cup \vectorize{v})]
		}
		\le \eulerE^{-\eulerE^{-\updateRadius}}.
	\]
\end{corollary}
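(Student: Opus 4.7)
The plan is to derive both assertions as immediate consequences of \Cref{lemma:filter_repsulsive}.

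First I would verify that $\eulerE^{-\eulerE^{-\updateRadius}} \bayesFilterCorrectionR{a}{b}(\cdot)$ is itself a Bayes filter correction in the sense of \Cref{def:bayes_filter}: multiplying by the deterministic constant $\eulerE^{-\eulerE^{-\updateRadius}} \in (0, 1]$ preserves $\pointsetEvents[\boxRegion{S}]$-measurability; it preserves the upper bound by the infimum required by \Cref{def:bayes_filter}, because $\bayesFilterCorrectionR{a}{b}$ already satisfies that bound by \Cref{lemma:filter_repsulsive} and scaling by a number at most $1$ can only shrink the left-hand side; and it preserves the strictly positive uniform lower bound, since multiplying the strictly positive $\varepsilon$ exhibited in the proof of \Cref{lemma:filter_repsulsive} by $\eulerE^{-\eulerE^{-\updateRadius}} > 0$ again produces a strictly positive constant.

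For the upper bound on the product, the second part of \Cref{lemma:filter_repsulsive} directly gives
\[
	\bayesFilterCorrectionR{a}{b}[S][\vectorize{v}][\eta] \cdot \frac{\partitionFunction[\activity_{\eta \cap \boxRegion{\partial \boxesToUpdate}}][\boxRegion{\boxesToUpdate}]}{\partitionFunction[\activity_{\eta \cap \boxRegion{\partial \boxesToUpdate \cup \set{\vectorize{v}}}}][\boxRegion{\boxesToUpdate \setminus \set{\vectorize{v}}}]} \le 1,
\]
so multiplying both sides by $\eulerE^{-\eulerE^{-\updateRadius}}$ gives the claimed $\le \eulerE^{-\eulerE^{-\updateRadius}}$.

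For the lower bound, I would introduce the shorthand $A \coloneqq 2 a 3^{\dimensions} \range^{\dimensions} \eulerE^{2 b \range}(\activity \range^{\dimensions} + \eulerE^{\activity 3^{\dimensions} \range^{\dimensions}})$ and $B \coloneqq b \range$, so that the matching inequality in \Cref{lemma:filter_repsulsive} bounds the unscaled product below by $\eulerE^{-A \eulerE^{-B \updateRadius}}$. Including the scaling factor and merging exponentials yields a lower bound of $\eulerE^{-\eulerE^{-\updateRadius} - A \eulerE^{-B\updateRadius}}$. Setting $b' \coloneqq \min\set{1, b\range}$ and factoring $\eulerE^{-b'\updateRadius}$ out of the exponent bounds it below by $-(A+1)\eulerE^{-b'\updateRadius}$, and applying the elementary inequality $\eulerE^{-x} \ge 1 - x$ for $x \ge 0$ produces the claimed $1 - a'\eulerE^{-b'\updateRadius}$ with $a' \coloneqq A + 1$ and $b' = \min\set{1, b\range}$, each visibly depending only on $a, b, \range, \activity, \dimensions$. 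I do not foresee a real obstacle; the corollary is an algebraic unwinding of \Cref{lemma:filter_repsulsive}, and the point of selecting the scaling $\eulerE^{-\eulerE^{-\updateRadius}}$ (mirroring the choice used for the hard-sphere setting in \Cref{lemma:lower_bound_filter_HS}) is precisely to make the two exponential decay terms combine under a single $b'$-rate.
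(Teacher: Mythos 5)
Your proposal is correct and matches the paper's proof essentially step for step: it scales the lower bound $\exp(-A\,\eulerE^{-b\range\updateRadius})$ from \Cref{lemma:filter_repsulsive} by $\eulerE^{-\eulerE^{-\updateRadius}}$, uses $b' = \min\set{1, b\range}$ to combine the two decay terms into one, and concludes via $\eulerE^{-x} \ge 1 - x$, arriving at the identical constants $a' = 2a3^{\dimensions}\range^{\dimensions}\eulerE^{2b\range}(\activity\range^{\dimensions} + \eulerE^{\activity 3^{\dimensions}\range^{\dimensions}}) + 1$ and $b' = \min\set{1,b\range}$ that the paper uses.
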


\begin{proof}
	Since $0 < \eulerE^{-\eulerE^{-\updateRadius}} \le 1$, it is obvious that $\eulerE^{-\eulerE^{-\updateRadius}} \bayesFilterCorrectionR{a}{b}[S][\vectorize{v}][\range]$ satisfies \Cref{def:bayes_filter}.
	Now, note that for $\updateRadius \ge 2$ it holds that $b (\updateRadius - 1) \range \ge \frac{b \range}{2} \updateRadius$.
	The statement directly follows from \Cref{lemma:filter_repsulsive} by setting $b' = \min\set{1, b \range}$ and $a' = 2 a 3^{\dimensions} \range^{\dimensions} \eulerE^{2 b \range} \left(\activity \range^{\dimensions} + \eulerE^{\activity 3^{\dimensions} \range^{\dimensions}}\right) + 1$, and observing that $\exponential{-a' \eulerE^{-b' \updateRadius}} \ge 1 - a' \eulerE^{-b' \updateRadius}$.
\end{proof}

Next, we focus on how to sample the Bayes filter, using $\eulerE^{-\eulerE^{-\updateRadius}} \bayesFilterCorrectionR{a}{b}(\cdot)$ as in \Cref{cor:scaled_filter_repsulsive} as Bayes filter correction.
In contrast to our approach for the hard-sphere model, we do not know how to compute $\bayesFilterCorrectionR{a}{b}(\cdot)$ directly.
Again, we solve this problem using a Bernoulli factory.

\begin{lemma} \label{lemma:sampling_bayes_filter}
	Let $S \subseteq \boxIds$ be non-empty, $\vectorize{v} \in S$ and $\eta \in \pointsets[\region]$ be feasible, and set $\boxesToUpdate = \boxesToUpdate[S][\vectorize{v}][\updateRadius]$.
	Suppose $a, b > 0$ are such that $\potential$ satisfies $(a, b)$-strong spatial mixing up to $\activity$ and let $\bayesFilterCorrectionR{a}{b}(\cdot)$ be as in \Cref{lemma:filter_repsulsive}. 
	We can sample a Bernoulli random variable with success probability
	\[
		\eulerE^{-\eulerE^{-\updateRadius}} \bayesFilterCorrectionR{a}{b}[S][\vectorize{v}][\eta] \cdot \frac{
			\partitionFunction[\boxesToUpdate][\eta \cap \partial \boxesToUpdate] 
		}{ 
			\partitionFunction[\boxesToUpdate \setminus \vectorize{v}][\eta \cap (\partial \boxesToUpdate \cup \vectorize{v})]
		}
	\] 
	with expected running time in $\bigO{\size{\eta \cap (\partial \boxesToUpdate \cup \vectorize{v}}}$, where the constants only depend on $\updateRadius$, $\range$, $\activity$ and $\dimensions$.
\end{lemma}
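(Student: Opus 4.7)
The plan is to follow the template of \Cref{lemma:sampling_bayes_filter_HS}: rewrite the target success probability as a ratio $p/q$ with $p, q \in [0, 1]$, supply efficient oracles for $\Ber{p}$ and $\Ber{q}$, verify a uniform positive lower bound on $q - p$, and conclude by invoking \Cref{lemma:bernoulli_frac}. The new twist compared to the hard-sphere case is that $\bayesFilterCorrectionR{a}{b}[S][\vectorize{v}][\eta]$ is itself a ratio of partition functions we cannot evaluate exactly, so $p/q$ will combine \emph{four} partition-function inverses rather than two; the rest is bookkeeping.

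To construct the decomposition, note that for any activity function of the form $\activity_{\eta'} \ind{\boxRegion{S'}}$ one has $\gibbs[\activity_{\eta'} \ind{\boxRegion{S'}}][\set{\emptyset}] = 1/\partitionFunction[\activity_{\eta'}][\boxRegion{S'}]$, since the only nonzero contribution to the Gibbs weight of the empty configuration comes from the $k = 0$ term. Substituting the definition of $\bayesFilterCorrectionR{a}{b}$ from \Cref{lemma:filter_repsulsive} into the target probability and inverting each of the four partition functions in this way yields
\[
    \frac{p}{q} = \eulerE^{-\eulerE^{-\updateRadius}}\, \correction \cdot \frac{\gibbs[\activity_{\eta \cap \boxRegion{S \setminus \set{\vectorize{v}}}} \ind{\boxRegion{\boxesToUpdate}}][\set{\emptyset}] \cdot \gibbs[\activity_{\eta \cap \boxRegion{\partial \boxesToUpdate \cup \set{\vectorize{v}}}} \ind{\boxRegion{\boxesToUpdate \setminus \set{\vectorize{v}}}}][\set{\emptyset}]}{\gibbs[\activity_{\eta \cap \boxRegion{S}} \ind{\boxRegion{\boxesToUpdate \setminus \set{\vectorize{v}}}}][\set{\emptyset}] \cdot \gibbs[\activity_{\eta \cap \boxRegion{\partial \boxesToUpdate}} \ind{\boxRegion{\boxesToUpdate}}][\set{\emptyset}]},
\]
with $p$ equal to the first line of the product and $q$ to the denominator of the fraction. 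Since $0 \le \correction \le 1$ and $0 \le \eulerE^{-\eulerE^{-\updateRadius}} \le 1$, both $p$ and $q$ lie in $[0, 1]$.

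For the oracles, each factor $\gibbs[\activity_{\eta'} \ind{\boxRegion{S'}}][\set{\emptyset}]$ is simulated by drawing $Z$ from $\gibbs[\activity_{\eta'} \ind{\boxRegion{S'}}]$ using the rejection sampler of \Cref{lemma:exp_perfect_sampling} and returning $\ind{Z = \emptyset}$; a sample from $\Ber{p}$ is then the AND of the two such Bernoullis together with a fresh independent $\Ber{\eulerE^{-\eulerE^{-\updateRadius}} \correction}$, and a sample from $\Ber{q}$ is the AND of its two factors. Every region involved has volume at most $(2\updateRadius + 1)^{\dimensions} \range^{\dimensions}$, a constant, and the corresponding boundaries satisfy $\partial \boxesToUpdate \subseteq \partial \boxesToUpdate$ and $\partial(\boxesToUpdate \setminus \set{\vectorize{v}}) \subseteq \set{\vectorize{v}} \cup \partial \boxesToUpdate$, so \Cref{lemma:exp_perfect_sampling} bounds the expected time per call by $\bigO{\size{\eta \cap \boxRegion{\set{\vectorize{v}} \cup \partial \boxesToUpdate}}}$ with constants depending only on $\updateRadius, \range, \activity, \dimensions$. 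Independence of successive queries is immediate from drawing fresh Poisson samples each time.

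Finally, \Cref{cor:scaled_filter_repsulsive} gives $p/q \le \eulerE^{-\eulerE^{-\updateRadius}}$, hence $q - p \ge q\bigl(1 - \eulerE^{-\eulerE^{-\updateRadius}}\bigr)$. Poisson domination provides $\gibbs[\activity_{\eta'} \ind{\boxRegion{S'}}][\set{\emptyset}] \ge \eulerE^{-\activity \volume{\boxRegion{S'}}}$, so $q$ is uniformly bounded below by a strictly positive constant in the problem parameters, and so is the factor $1 - \eulerE^{-\eulerE^{-\updateRadius}}$. Applying \Cref{lemma:bernoulli_frac} with these ingredients yields the claimed $\bigO{\size{\eta \cap \boxRegion{\set{\vectorize{v}} \cup \partial \boxesToUpdate}}}$ expected running time. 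No step is really an obstacle; the only point needing attention is recognizing the correct partition-function algebra so that $p, q \in [0, 1]$ and the boundaries of both subregions of $\boxesToUpdate$ stay inside $\set{\vectorize{v}} \cup \partial \boxesToUpdate$.
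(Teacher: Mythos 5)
Your proposal is correct and follows essentially the same route as the paper: rewrite the target success probability as a ratio $p/q$ of products of empty-configuration probabilities under local Gibbs measures, supply oracles for $\Ber{p}$ and $\Ber{q}$ by rejection sampling via \Cref{lemma:exp_perfect_sampling}, lower-bound $q - p$ using \Cref{cor:scaled_filter_repsulsive} together with Poisson domination, and conclude with \Cref{lemma:bernoulli_frac}. The only cosmetic difference is that the paper first applies the spatial Markov property (\Cref{lemma:spatial_markov}) to replace the boundary conditions $\eta \cap \boxRegion{S}$ and $\eta \cap \boxRegion{S \setminus \set{\vectorize{v}}}$ by their restrictions to $\boxRegion{(\partial \boxesToUpdate \cap S) \cup \set{\vectorize{v}}}$ and $\boxRegion{\partial \boxesToUpdate \cap S}$, which you skip; but the resulting measures coincide, and the rejection sampler in \Cref{lemma:exp_perfect_sampling} already reads only the boundary points of the conditioning configuration, so your running-time accounting still closes within $\bigO{\size{\eta \cap \boxRegion{\set{\vectorize{v}} \cup \partial \boxesToUpdate}}}$.
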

\begin{proof}
	Our goal is to use a Bernoulli factory of the form $\frac{p}{q}$ to perform this task.
	To bring the desired success probability into such a form, note that
	\begin{align*}
		\partitionFunction[\boxesToUpdate][\eta \cap \partial \boxesToUpdate] &= \left(\gibbs[\boxesToUpdate][\eta \cap \partial \boxesToUpdate](\set{\emptyset})\right)^{-1} \\
		\partitionFunction[\boxesToUpdate \setminus \vectorize{v}][\eta \cap (\partial \boxesToUpdate \cup \vectorize{v})] &= \left(\gibbs[\boxesToUpdate \setminus \vectorize{v}][\eta \cap (\partial \boxesToUpdate \cup \vectorize{v})](\set{\emptyset})\right)^{-1} .
	\end{align*}
	Moreover, using \Cref{lemma:spatial_markov} we have
	\begin{align*}
		\partitionFunction[\boxesToUpdate \setminus \vectorize{v}][\eta \cap S]
		&= \partitionFunction[\boxesToUpdate \setminus \vectorize{v}][\eta \cap ((\partial \boxesToUpdate \cap S) \cup \vectorize{v})] 
		= \left(\gibbs[\boxesToUpdate \setminus \vectorize{v}][\eta \cap ((\partial \boxesToUpdate \cap S) \cup \vectorize{v})](\set{\emptyset})\right)^{-1}\\
		\partitionFunction[\boxesToUpdate][\eta \cap (S \setminus \vectorize{v})]
		&= \partitionFunction[\boxesToUpdate][\eta \cap (\partial \boxesToUpdate \cap S)]
		= \left(\gibbs[\boxesToUpdate][\eta \cap (\partial \boxesToUpdate \cap S)](\set{\emptyset})\right)^{-1}.
	\end{align*}
	Finally, note that $\eulerE^{-\eulerE^{-\updateRadius}} \in [0, 1]$ and, for $\correction = \correction[\updateRadius]$ as in \Cref{lemma:filter_repsulsive}, $\correction \in [0, 1]$. 
	Thus, by setting 
	\begin{align*}
		p &\coloneqq \eulerE^{-\eulerE^{-\updateRadius}} \correction \cdot \gibbs[\boxesToUpdate][\eta \cap (\partial \boxesToUpdate \cap S)](\set{\emptyset}) \cdot \gibbs[\boxesToUpdate \setminus \vectorize{v}][\eta \cap (\partial \boxesToUpdate \cup \vectorize{v})](\set{\emptyset}) \\
		q &\coloneqq \gibbs[\boxesToUpdate \setminus \vectorize{v}][\eta \cap ((\partial \boxesToUpdate \cap S) \cup \vectorize{v})](\set{\emptyset}) \cdot \gibbs[\boxesToUpdate][\eta \cap \partial \boxesToUpdate](\set{\emptyset})
	\end{align*}
	we have $p \in [0, 1], q \in [0, 1]$ and 
	\[
		\eulerE^{-\eulerE^{-\updateRadius}} \bayesFilterCorrectionR{a}{b}[S][\vectorize{v}][\eta] \cdot \frac{
		\partitionFunction[\boxesToUpdate][\eta \cap \partial \boxesToUpdate] 
		}{ 
		\partitionFunction[\boxesToUpdate \setminus \vectorize{v}][\eta \cap (\partial \boxesToUpdate \cup \vectorize{v})]
		}
		= \frac{p}{q} .
	\]
	
	We are now going to use \Cref{lemma:bernoulli_frac} to prove that we can obtain a sample from $\Ber{\frac{p}{q}}$ within the desired expected running time.
	To this end, we need to provide a positive lower bound on $q - p$ and we need an efficient way for generating independent samples from $\Ber{q}$ and $\Ber{p}$.
	
	For the lower bound, note that by \Cref{cor:scaled_filter_repsulsive} it holds that 
	\[
		\eulerE^{-\eulerE^{-\updateRadius}} \bayesFilterCorrectionR{a}{b}[S][\vectorize{v}][\eta] \cdot \frac{
			\partitionFunction[\boxesToUpdate][\eta \cap \partial \boxesToUpdate] 
		}{ 
			\partitionFunction[\boxesToUpdate \setminus \vectorize{v}][\eta \cap (\partial \boxesToUpdate \cup \vectorize{v})]
		}
		\le \eulerE^{-\eulerE^{-\updateRadius}}.
	\]
	Consequently, we have 
	\begin{align*}
		q - p 
		&\ge \left(1 - \eulerE^{-\eulerE^{-\updateRadius}}\right) q \\
		&= \left(1 - \eulerE^{-\eulerE^{-\updateRadius}}\right) \cdot \left(\partitionFunction[\boxesToUpdate \setminus \vectorize{v}][\eta \cap ((\partial \boxesToUpdate \cap S) \cup \vectorize{v})] \cdot \partitionFunction[\boxesToUpdate][\eta \cap \partial \boxesToUpdate]\right)^{-1} \\
		&\ge \left(1 - \eulerE^{-\eulerE^{-\updateRadius}}\right) \eulerE^{-2\activity \volume{\boxRegion{\boxesToUpdate}}} .
	\end{align*}
	Using the upper bound $\volume{\boxRegion{\boxesToUpdate}} \le (2\updateRadius + 1)^{\dimensions} \range^{\dimensions}$ yields $q - p \ge \left(1 - \eulerE^{-\eulerE^{-\updateRadius}}\right) \eulerE^{-\activity (2\updateRadius + 1)^{\dimensions} \range^{\dimensions}}$.
	
	We proceed by arguing that we can obtain an oracle for $\Ber{p}$ and $\Ber{q}$ as required by \Cref{lemma:bernoulli_frac}.
	In particular, we focus on $\Ber{p}$ since $\Ber{q}$ can be treated analogously. 
	Firstly, note that we can sample a Bernoulli random variable with success probability $\eulerE^{-\eulerE^{-\updateRadius}} \correction$ in constant time, since we can compute it explicitly.
	It remains to argue that we can sample Bernoulli random variables with success probability $\gibbs[\boxesToUpdate][\eta \cap (\partial \boxesToUpdate \cap S)](\set{\emptyset})$ and $\gibbs[\boxesToUpdate \setminus \vectorize{v}][\eta \cap (\partial \boxesToUpdate \cup \vectorize{v})](\set{\emptyset})$ in the desired running time.
	Again, we focus on $\gibbs[\boxesToUpdate][\eta \cap (\partial \boxesToUpdate \cap S)](\set{\emptyset})$ and treat $\gibbs[\boxesToUpdate \setminus \vectorize{v}][\eta \cap (\partial \boxesToUpdate \cup \vectorize{v})](\set{\emptyset})$ analogously. 
	By \Cref{lemma:exp_perfect_sampling}, we can obtain independent samples from $\gibbs[\boxesToUpdate][\eta \cap \partial \boxesToUpdate]$, each in expected time at most $(\activity \volume{\boxRegion{\boxesToUpdate}} + \size{\eta \cap \partial \boxesToUpdate}) \cdot \activity \volume{\boxRegion{\boxesToUpdate}} \eulerE^{\activity \volume{\boxRegion{\boxesToUpdate}}}$.
	Noting that $\volume{\boxRegion{\boxesToUpdate}} \le (2 \updateRadius + 1)^{\dimensions} \range^{\dimensions}$ and that $\size{\eta \cap \partial \boxesToUpdate} \le \size{\eta \cap (\partial \boxesToUpdate \cup \vectorize{v})}$ yields an expected running time of $\bigO{\size{\eta \cap (\partial \boxesToUpdate \cup \vectorize{v})}}$.
	Applying the same argument to sample a Bernoulli random variable with success probability $\gibbs[\boxesToUpdate \setminus \vectorize{v}][\eta \cap (\partial \boxesToUpdate \cup \vectorize{v})](\set{\emptyset})$ yields an oracle for $\Ber{p}$ with expected running time in $\bigO{\size{\eta \cap (\partial \boxesToUpdate \cup \vectorize{v})}}$.
	Finally, applying the same procedure for $\Ber{q}$ and using \Cref{lemma:bernoulli_frac} concludes the proof.
\end{proof}

We obtain the following bound for the running time of each iteration.

\begin{corollary}\label{cor:sampling_bayes_filter}
	Let $a, b > 0$ be such that $\potential$ satisfies $(a, b)$-strong spatial mixing up to $\activity$.
	Suppose we run \Cref{algo:sampling} on $\potential$ with $\bayesFilterCorrection(\cdot) = \eulerE^{-\eulerE^{-\updateRadius}} \bayesFilterCorrectionR{a}{b}(\cdot)$ as Bayes filter correction in line~\ref{algo:sampling:filter}, and let $\iterationTime{t}$ denote the running time for iteration $t \in \N$.
	Then, for all $t \in \N$, it holds that
	\[
		\E{\iterationTime{t}}[\currentPointset{t-1}, \boxesToFix{t-1}, \boxChosen{t-1}] \le \bigO{\size{\currentPointset{t-1} \cap (\partial \boxesToUpdate \cup \boxChosen{t-1})}} ,
	\]
	where $\boxesToUpdate = \boxesToUpdate[\boxesToFix{t-1}][\boxChosen{t-1}][\updateRadius]$, and the constants in the asymptotic notation only depend on $\updateRadius$, $\range$, $\activity$ and $\dimensions$.
\end{corollary}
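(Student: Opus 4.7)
The plan is to decompose the expected running time of a single iteration into the cost of sampling the Bayes filter in line~\ref{algo:sampling:filter} and the cost of the conditional update in line~\ref{algo:sampling:update}, then bound each of these using results already in hand. The bookkeeping in the other lines of the algorithm (drawing $\boxChosen{t-1}$ uniformly from $\boxesToFix{t-1}$, computing $\boxesToUpdate$, updating the bookkeeping sets $\boxesToFix{t}$) is clearly dominated by these two pieces.

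First, I would invoke Lemma~\ref{lemma:feasible_bounded} to assert that $\currentPointset{t-1}$ is almost surely feasible. This is the key hypothesis needed to apply both Lemma~\ref{lemma:sampling_bayes_filter} and Lemma~\ref{lemma:exp_perfect_sampling}. Since the conclusion of Corollary~\ref{cor:sampling_bayes_filter} is an almost-sure bound on a conditional expectation given $(\currentPointset{t-1}, \boxesToFix{t-1}, \boxChosen{t-1})$, it suffices to argue the bound on the event that $\currentPointset{t-1}$ is feasible.

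Second, by Lemma~\ref{lemma:sampling_bayes_filter}, the expected time required to sample $\bayesFilter{t-1}$ from the Bernoulli distribution specified in line~\ref{algo:sampling:filter}, conditioned on $(\currentPointset{t-1}, \boxesToFix{t-1}, \boxChosen{t-1})$, is $\bigO{\size{\currentPointset{t-1} \cap \boxRegion{\set{\boxChosen{t-1}} \cup \partial \boxesToUpdate}}}$ with constants depending only on $\updateRadius, \range, \activity, \dimensions$. For the update in line~\ref{algo:sampling:update}, I will upper bound its expected cost by the unconditional expected cost of sampling from $\gibbs[\activity_{\currentPointset{t-1} \cap \complementOf{\boxRegion{\boxesToUpdate}}} \ind{\boxRegion{\boxesToUpdate}}][][\boxRegion{\boxesToUpdate}]$ (multiplied by at most $1$, the probability that $\bayesFilter{t-1} = 1$). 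Lemma~\ref{lemma:exp_perfect_sampling} bounds this expected cost by $(\activity \volume{\boxRegion{\boxesToUpdate}} + \size{\currentPointset{t-1} \cap \boxRegion{\partial \boxesToUpdate}}) \cdot \activity \volume{\boxRegion{\boxesToUpdate}} \eulerE^{\activity \volume{\boxRegion{\boxesToUpdate}}}$, and using $\volume{\boxRegion{\boxesToUpdate}} \le (2\updateRadius + 1)^{\dimensions} \range^{\dimensions}$ collapses this to $\bigO{1 + \size{\currentPointset{t-1} \cap \boxRegion{\partial \boxesToUpdate}}}$ with constants depending only on $\updateRadius, \range, \activity, \dimensions$.

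Finally, by additivity of expectation the two contributions combine into $\bigO{1 + \size{\currentPointset{t-1} \cap \boxRegion{\set{\boxChosen{t-1}} \cup \partial \boxesToUpdate}}}$, which is of the form claimed once the additive constant is absorbed into the asymptotic notation (following the same convention used in Lemma~\ref{lemma:sampling_bayes_filter}). There is no substantial obstacle here; the statement is essentially a bookkeeping corollary of the two previously established running-time lemmas. The one point requiring mild care is that line~\ref{algo:sampling:update} is executed only conditionally on $\bayesFilter{t-1} = 1$, so I must apply the tower property to bound its contribution by the unconditional cost of a single invocation of the rejection sampler of Lemma~\ref{lemma:exp_perfect_sampling}; the Bernoulli factory inside the filter step is already accounted for by Lemma~\ref{lemma:sampling_bayes_filter}.
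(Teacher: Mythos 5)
Your proposal is correct and follows essentially the same route as the paper: decompose the iteration cost into the Bayes-filter step and the resampling step, invoke Lemma~\ref{lemma:feasible_bounded} to ensure $\currentPointset{t-1}$ is almost surely feasible, then apply Lemma~\ref{lemma:sampling_bayes_filter} and Lemma~\ref{lemma:exp_perfect_sampling} respectively and add. The only minor stylistic difference is that you make the tower-property remark about line~\ref{algo:sampling:update} being executed only when $\bayesFilter{t-1}=1$ explicit, which the paper leaves implicit.
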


\begin{proof}
	Set $\boxesToUpdate = \boxesToUpdate[\boxesToFix{t-1}][\boxChosen{t-1}][\updateRadius]$, and note that the bulk of the running time in each iteration of \Cref{algo:sampling} is due to sampling the Bayes filter in \cref{algo:sampling:filter} and updating the point configuration on $\boxRegion{\boxesToUpdate}$ in \cref{algo:sampling:update}.
	
	For \cref{algo:sampling:filter}, note that $\currentPointset{t-1}$ is almost surely feasible by \Cref{lemma:feasible_bounded}.
	Thus, \Cref{lemma:sampling_bayes_filter} yields that the expected time for sampling the Bayes filter, conditioned on $\currentPointset{t-1}, \boxesToFix{t-1}$ and $\boxChosen{t-1}$, almost surely bounded by some function in $\bigO{\size{\currentPointset{t-1} \cap (\partial \boxesToUpdate \cup \boxChosen{t-1})}}$.
	Further, for \cref{algo:sampling:update}, we can use \Cref{lemma:exp_perfect_sampling} to bound the expected time for sampling from $\gibbs[\boxesToUpdate][\currentPointset{t-1} \cap \boxesToUpdate^{\comp}]$ is bounded by $(\activity \volume{\boxRegion{\boxesToUpdate}} + \size{\currentPointset{t-1} \cap \partial \boxesToUpdate}) \cdot \activity \volume{\boxRegion{\boxesToUpdate}} \eulerE^{\activity \volume{\boxRegion{\boxesToUpdate}}} \in \bigO{\size{\currentPointset{t-1} \cap (\partial \boxesToUpdate \cup \boxChosen{t-1})}}$ as desired.
\end{proof}

Next, we derive a bound on the expected number of iterations of \Cref{algo:sampling} given strong spatial mixing.

\begin{lemma} \label{lemma:iterations}
	Let $a, b > 0$ be such that $\potential$ satisfies $(a, b)$-strong spatial mixing up to $\activity$.
	Suppose we run \Cref{algo:sampling} with $\bayesFilterCorrection(\cdot) = \eulerE^{-\eulerE^{-\updateRadius}}\bayesFilterCorrectionR{a}{b}(\cdot)$ as Bayes filter correction in line~\ref{algo:sampling:filter}, and let $\iterationNumber = \inf\set{t \in \N_{0} \mid \boxesToFix{t} = \emptyset}$ denote the number of iterations until the algorithm terminates.
	Then, for $\updateRadius$ sufficiently large depending on $a$, $b$, $\range$, $\activity$ and $\dimensions$, it holds that $\E{\iterationNumber}[] \le  2 \size{\boxIds}$, and for all $k \ge 4 \size{\boxIds}$ it holds that $\Pr{\iterationNumber \ge k} \le \exponential{-\frac{k}{\alpha}}$ for some constant $\alpha \in \R_{>0}$ that only depends on $\updateRadius$, $\dimensions$ and $\range$.
\end{lemma}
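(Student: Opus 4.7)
The plan is to mirror the hard-sphere analysis of \Cref{lemma:iterations_HS}, but where we used \Cref{lemma:lower_bound_filter_HS} to control the Bayes filter success probability there, here we instead use \Cref{cor:scaled_filter_repsulsive}. The expected-value bound comes from additive drift (\Cref{thm:additive_drift}), and for the tail bound I would set up a stopped supermartingale with bounded increments and apply a one-sided Azuma--Hoeffding inequality for supermartingales.

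First, I would consider $(\size{\boxesToFix{t}})_{t \in \N_0}$ together with the filtration $\mathcal{F}_t = \sigma\big((\currentPointset{j}, \boxesToFix{j}, \boxChosen{j})_{j \le t}\big)$ and $\iterationNumber = \inf\set{t \in \N_0}[\size{\boxesToFix{t}} \le 0]$. On $\set{\iterationNumber > t}$, either $\bayesFilter{t} = 1$, in which case $\size{\boxesToFix{t+1}} = \size{\boxesToFix{t}} - 1$, or $\bayesFilter{t} = 0$, in which case $\size{\boxesToFix{t+1}} \le \size{\boxesToFix{t}} + \size{\partial \boxesToUpdate} \le \size{\boxesToFix{t}} + M$ with $M = (2\updateRadius+3)^{\dimensions}$. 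By \Cref{lemma:feasible_bounded} $\currentPointset{t}$ is almost surely feasible, so \Cref{cor:scaled_filter_repsulsive} gives $\E{\bayesFilter{t}}[\mathcal{F}_t] \ge 1 - a' \eulerE^{-b' \updateRadius}$ almost surely, with $a', b'$ depending only on $a, b, \range, \activity, \dimensions$. Combining yields
\[
\E{\size{\boxesToFix{t}} - \size{\boxesToFix{t+1}}}[\mathcal{F}_t] \ind{\iterationNumber > t} \ge \Big((1 - a' \eulerE^{-b' \updateRadius}) - M \cdot a' \eulerE^{-b' \updateRadius}\Big) \ind{\iterationNumber > t},
\]
which is at least $\tfrac{1}{2} \ind{\iterationNumber > t}$ for $\updateRadius$ sufficiently large as a function of $a, b, \range, \activity, \dimensions$. \Cref{thm:additive_drift} then gives $\E{\iterationNumber}[] \le 2\size{\boxIds}$.

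For the tail bound, define $Z_t = \size{\boxesToFix{t \wedge \iterationNumber}} + (t \wedge \iterationNumber)/2$. The drift computation above shows that $(Z_t)_{t \in \N_0}$ is a supermartingale with respect to $(\mathcal{F}_t)_{t \in \N_0}$, and the increment $Z_t - Z_{t-1}$ lies in $[-1/2, M + 1/2]$ deterministically (zero after $\iterationNumber$, and in $\set{-1/2, M + 1/2}$ before). Applying the one-sided Azuma--Hoeffding inequality for supermartingales with bounded increments yields, for every $s > 0$,
\[
\Pr{Z_k - Z_0 \ge s} \le \exponential{- \frac{s^2}{2 k (M + 1/2)^2}}.
\]
On $\set{\iterationNumber \ge k}$ we have $Z_k \ge 1 + k/2$ whereas $Z_0 = \size{\boxIds}$, so $\set{\iterationNumber \ge k} \subseteq \set{Z_k - Z_0 \ge k/2 - \size{\boxIds} + 1}$. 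For $k \ge 4\size{\boxIds}$ this is $\set{Z_k - Z_0 \ge k/4}$, and the above tail bound gives $\Pr{\iterationNumber \ge k} \le \eulerE^{-k/\alpha}$ with $\alpha = 32 (M+1/2)^2$, a constant depending only on $\updateRadius$ and $\dimensions$ (and trivially $\range$).

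I do not expect any serious obstacle here; the only delicate point is making sure the stopped process $(Z_{t \wedge \iterationNumber})$ is a true supermartingale even though the underlying process becomes constant after $\iterationNumber$. This is handled by the fact that the algorithm's state is frozen for $t \ge \iterationNumber$, so the increment of $Z$ after $\iterationNumber$ is exactly $0$, which is compatible with the supermartingale property and with the deterministic bound on the increments required by Azuma--Hoeffding.
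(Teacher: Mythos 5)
Your expectation bound is exactly the paper's argument: same process $(\size{\boxesToFix{t}})$, same drift of at least $1/2$ obtained from \Cref{cor:scaled_filter_repsulsive}, same application of the first part of \Cref{thm:additive_drift}.

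For the tail bound, you diverge: the paper simply invokes the \emph{second} part of \Cref{thm:additive_drift} after verifying conditions \ref{thm:additive_drift:start} and \ref{thm:additive_drift:step_size} (with $x = \size{\boxIds}$, step-size constant $c \le (2\updateRadius+3)^{\dimensions}$, and drift $\varepsilon = 1/2$), yielding $\alpha = 64 c^2$. You instead reconstruct that tail bound from first principles by forming the stopped supermartingale $Z_t = \size{\boxesToFix{t \wedge T}} + (t\wedge T)/2$ and applying Azuma--Hoeffding. The two routes are morally the same---your construction is essentially how the second part of \Cref{thm:additive_drift} is proved---and your argument is correct: $Z_t$ is $\mathcal{F}_t$-adapted because $T$ is a stopping time, the increments are bounded in $[-1/2, M+1/2]$ with $M = (2\updateRadius+3)^{\dimensions}$, the supermartingale property follows from the drift inequality, and the inclusion $\set{T \ge k} \subseteq \set{Z_k - Z_0 \ge k/4}$ holds for $k \ge 4\size{\boxIds}$. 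One small imprecision: when $\bayesFilter{t}=0$ the increment is any value in $[1/2, M+1/2]$ (since $\size{\partial\boxesToUpdate}$ can be anything from $0$ to $M$), not necessarily the endpoint $M+1/2$; this doesn't affect anything since you only use the interval bound. Citing the theorem as the paper does is shorter, but spelling out the martingale argument makes the proof self-contained; both deliver $\alpha$ depending only on $\updateRadius$ and $\dimensions$ (the dependence on $\range$ is vacuous in both).
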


\begin{proof}
	We aim for applying \Cref{thm:additive_drift} to prove our claim.
	For bounding $\E{\iterationNumber}[]$, we proceed analogously as in the proof of \Cref{lemma:iterations_HS}.
	In particular, we consider the process $(\size{\boxesToFix{t}})_{t \in \N_{0}}$ with the filtration $(\mathcal{F}_t)_{t \in \N_{0}}$ defined by $\mathcal{F}_t = \sigma\left((\currentPointset{j}, \boxesToFix{j}, \boxChosen{j})_{j \le t}\right)$ and rewrite our desired hitting time as $\iterationNumber = \inf\set{t \in \N_{0}}[\size{\boxesToFix{t}} \le 0]$.
	Since \ref{thm:additive_drift:non_negative} of \Cref{thm:additive_drift} is trivially satisfied, we only need to check \ref{thm:additive_drift:drift}.
	Using the lower bound from \Cref{cor:scaled_filter_repsulsive} and the same arguments as in the proof of \Cref{lemma:iterations_HS}, we can show that $\E{(\size{\boxesToFix{t}} - \size{\boxesToFix{t+1}}) \ind{T > t}}[\mathcal{F}_t] \ge \frac{1}{2} \ind{\iterationNumber > t}$ if $\updateRadius$ is sufficiently large, depending on $a$, $b$, $\range$, $\activity$ and $\dimensions$.
	Thus, applying the first part of \Cref{thm:additive_drift} proves our bound on $\E{\iterationNumber}[]$.
	
	To obtain the tail bound on $\iterationNumber$, we apply the second part of \Cref{thm:additive_drift}.
	For \ref{thm:additive_drift:start}, note that $\size{\boxesToFix{0}} = \size{\boxIds}$ and, for \ref{thm:additive_drift:step_size}, observe that 
	\[
		\absolute{\size{\boxesToFix{t}} - \size{\boxesToFix{t-1}}} \le \max\set{1, (2 \updateRadius + 3)^{\dimensions} \range^{\dimensions}} .
	\]
	Thus, setting $\alpha \ge 64 \cdot \max\set{1, (2 \updateRadius + 3)^{2\dimensions} \range^{2 \dimensions}}$ concludes the proof.
\end{proof}

Note that, in contrast to \Cref{cor:sampling_bayes_filter_HS}, \Cref{cor:sampling_bayes_filter} does not give a deterministic bound on the running time of each iteration.
Due to potential dependencies between the running time of each iteration and termination of the algorithm, it is unclear if we can simply apply Wald's equation to derive the total running time of the algorithm.
Instead, we will use a more subtle argument for this.
As a first ingredient, we need upper-bound the probability of ever observing a large number of points in any box $\boxRegion{\vectorize{v}}$ for $\vectorize{v} \in \boxIds$ up to a given iteration $k \in \N_{0}$.

\begin{lemma}\label{lemma:poisson_domination}
	There is a constant $\alpha \in \R_{\ge 1}$, only depending on $\updateRadius$, $\dimensions$ and $\range$, such that, for all $k \in \N_{0}$, all $\gamma > 1$ and all $x \ge \eulerE^{\gamma} \activity \range^{\dimensions}$, it holds that
	\[
		\Pr{\bigcup_{t = 0}^{k} \bigcup_{\vectorize{v} \in \boxIds} \set{\size{\currentPointset{t} \cap \boxRegion{\vectorize{v}}} \ge x}} 
		\le (k + 1) \alpha \eulerE^{-(\gamma - 1)x} .
	\] 
\end{lemma}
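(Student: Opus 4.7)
}

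The plan is to combine a ``first time the event occurs'' argument across iterations with Poisson domination and a standard Poisson tail bound. Write $A_{t, \vectorize{v}} = \set{\size{\currentPointset{t} \cap \boxRegion{\vectorize{v}}} \ge x}$, and let $\boxesToUpdate_{t} = \boxesToUpdate[\boxesToFix{t}][\boxChosen{t}][\updateRadius]$. The key observation is that $\currentPointset{t}$ differs from $\currentPointset{t-1}$ only on $\boxRegion{\boxesToUpdate_{t-1}}$ and only when $\bayesFilter{t-1} = 1$; in particular, if $\vectorize{v} \notin \boxesToUpdate_{t-1}$ or $\bayesFilter{t-1} = 0$, then $A_{t, \vectorize{v}}$ holds iff $A_{t-1, \vectorize{v}}$ does. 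Combined with $\currentPointset{0} = \emptyset$, this yields the inclusion
\[
	\bigcup_{t=0}^{k} \bigcup_{\vectorize{v} \in \boxIds} A_{t, \vectorize{v}}
	\subseteq \bigcup_{t=1}^{k} \bigcup_{\vectorize{v} \in \boxesToUpdate_{t-1}} \left(A_{t, \vectorize{v}} \cap \set{\bayesFilter{t-1} = 1}\right) ,
\]
so by the union bound
\[
	\Pr{\bigcup_{t=0}^{k} \bigcup_{\vectorize{v} \in \boxIds} A_{t, \vectorize{v}}}
	\le \sum_{t=1}^{k} \E{\sum_{\vectorize{v} \in \boxesToUpdate_{t-1}} \ind{\bayesFilter{t-1} = 1} \cdot \Pr{A_{t, \vectorize{v}}}[\mathcal{F}_{t-1}, \bayesFilter{t-1} = 1]} ,
\]
where $\mathcal{F}_{t-1} = \sigma((\currentPointset{j}, \boxesToFix{j}, \boxChosen{j}, \bayesFilter{j})_{j < t-1}, \currentPointset{t-1}, \boxesToFix{t-1}, \boxChosen{t-1})$.

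The second step is to control each inner probability uniformly in the boundary configuration. By the update rule, conditioned on $\mathcal{F}_{t-1}$ and $\bayesFilter{t-1} = 1$, the partial configuration $\currentPointset{t} \cap \boxRegion{\boxesToUpdate_{t-1}}$ is drawn from $\gibbs[\activity_{\currentPointset{t-1} \cap \complementOf{\boxRegion{\boxesToUpdate_{t-1}}}} \ind{\boxRegion{\boxesToUpdate_{t-1}}}][][\boxRegion{\boxesToUpdate_{t-1}}]$. Since $\potential$ is repulsive, this Gibbs measure is stochastically dominated by a Poisson point process of intensity $\activity$ on $\boxRegion{\boxesToUpdate_{t-1}}$ (Poisson domination from the appendix), and hence for each $\vectorize{v} \in \boxesToUpdate_{t-1}$ the random variable $\size{\currentPointset{t} \cap \boxRegion{\vectorize{v}}}$ is stochastically dominated by $Y \sim \mathrm{Poisson}(\activity \range^{\dimensions})$ (using $\volume{\boxRegion{\vectorize{v}}} \le \range^{\dimensions}$).

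The third step is a standard Poisson tail bound: for $Y \sim \mathrm{Poisson}(\mu)$ and $x \ge \eulerE \mu$, one has $\Pr{Y \ge x} \le \left(\eulerE \mu / x\right)^{x}$. Taking $\mu = \activity \range^{\dimensions}$ and $x \ge \eulerE^{\gamma} \activity \range^{\dimensions}$ with $\gamma > 1$ yields $\eulerE \mu / x \le \eulerE^{1-\gamma}$, and hence
\[
	\Pr{A_{t, \vectorize{v}}}[\mathcal{F}_{t-1}, \bayesFilter{t-1} = 1] \le \eulerE^{-(\gamma - 1) x} .
\]
Finally, $\size{\boxesToUpdate_{t-1}} \le (2 \updateRadius + 1)^{\dimensions}$, so each summand above is at most $(2 \updateRadius + 1)^{\dimensions} \eulerE^{-(\gamma - 1) x}$, and summing over $t \in [k]$ gives the claim with $\alpha = (2 \updateRadius + 1)^{\dimensions}$ (which is a function of $\updateRadius, \dimensions$ only, as required; $\range$ enters only through the hypothesis $x \ge \eulerE^{\gamma} \activity \range^{\dimensions}$).

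The main conceptual point is the reduction to newly resampled boxes, which replaces the naive union bound over all $(k+1) \size{\boxIds}$ pairs by a bound depending only on $k$ and the constant $(2\updateRadius + 1)^{\dimensions}$; once this is in place, the Poisson domination property (already needed in \Cref{lemma:exp_perfect_sampling}) and the standard Chernoff bound for Poisson tails finish the argument without further subtlety.
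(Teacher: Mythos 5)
Your proposal is correct and takes essentially the same approach as the paper: you reduce the union of events to the iterations at which a box is freshly resampled (bounding the number of such boxes per iteration by $(2\updateRadius+1)^{\dimensions}$), then apply Poisson domination of the repulsive Gibbs measure together with the Poisson tail bound. The only cosmetic difference is that you perform a direct ``first bad time'' decomposition where the paper argues by induction on $k$ and peels off the last iteration; unrolling the paper's induction yields exactly your telescoped sum.
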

\begin{proof}
	Let $\alpha \ge \left(2 \updateRadius + 1\right)^{\dimensions}$.
	We show the statement via induction over $k$.
	First, for $k=0$, note that $\currentPointset{0} = \emptyset$.
	Thus, we have that the left-hand side is $0$ for all $x > 0$ and the right-hand side is at least $1$ for $x=0$, proving the base case.
	
	Next, assume the statement holds for some fixed $k \in \N_{\ge 0}$.
	Let $\hat{\boxesToUpdate} \subseteq \boxIds$ be the set of boxes updated in iteration $k+1$ (possibly the empty set).
	Formally, that is $\hat{\boxesToUpdate} = \set{\vectorize{v} \in \boxIds}[\currentPointset{k+1} \cap \boxRegion{\vectorize{v}} \neq \currentPointset{k} \cap \boxRegion{\vectorize{v}}]$.
	By the induction hypothesis and union bound, it holds that
	\begin{align*}
		\Pr{\bigcup_{t = 0}^{k+1} \bigcup_{\vectorize{v} \in \boxIds} \set{\size{\currentPointset{t} \cap \boxRegion{\vectorize{v}}} \ge x}} 
		&= \Pr{\bigcup_{t = 0}^{k} \bigcup_{\vectorize{v} \in \boxIds} \set{\size{\currentPointset{t} \cap \boxRegion{\vectorize{v}}} \ge x} \cup \set{ \exists \vectorize{v} \in \hat{\boxesToUpdate} \text{ s.t. } \size{\currentPointset{k+1} \cap \boxRegion{\vectorize{v}}} \ge x}} \\ 
		&\le \Pr{\bigcup_{t = 0}^{k} \bigcup_{\vectorize{v} \in \boxIds} \set{\size{\currentPointset{t} \cap \boxRegion{\vectorize{v}}} \ge x}} + \Pr{\set{ \exists \vectorize{v} \in \hat{\boxesToUpdate} \text{ s.t. } \size{\currentPointset{k+1} \cap \boxRegion{\vectorize{v}}} \ge x}} \\
		&\le (k + 1) \alpha \cdot \eulerE^{-(\gamma - 1)x} + \Pr{\set{ \exists \vectorize{v} \in \hat{\boxesToUpdate} \text{ s.t. } \size{\currentPointset{k+1} \cap \boxRegion{\vectorize{v}}} \ge x}}	.
	\end{align*}
	It now suffices to show that
	\[
	\Pr{\set{ \exists \vectorize{v} \in \hat{\boxesToUpdate} \text{ s.t. } \size{\currentPointset{k+1} \cap \boxRegion{\vectorize{v}}} \ge x}} \le \alpha \cdot \Pr{Y \ge x} .
	\]
	First, note that $\hat{\boxesToUpdate} = \emptyset$ if $\bayesFilter{k} = 0$.
	Thus, we have
	\[
	\Pr{\set{ \exists \vectorize{v} \in \hat{\boxesToUpdate} \text{ s.t. } \size{\currentPointset{k+1} \cap \boxRegion{\vectorize{v}}} \ge x}}[\bayesFilter{k} = 0] \le \alpha \cdot \Pr{Y \ge x} 
	\]
	for all $x \in \R_{\ge 0}$.
	Now, fix $S \subseteq \boxIds$ and $\vectorize{w} \in S$ such that $\Pr{\boxesToFix{k} = S, \boxChosen{k} = \vectorize{w}, \bayesFilter{k} = 1} > 0$.
	Set $\boxesToUpdate = \boxesToUpdate[S][\vectorize{w}][\updateRadius]$ and observe that, given $\boxesToFix{k} = S$, $\boxChosen{k} = \vectorize{u}$ and $\bayesFilter{k} = 1$, it holds that $\hat{\boxesToUpdate} = \boxesToUpdate$.
	Using union bound, we have
	\begin{align*}
		&\Pr{\set{ \exists \vectorize{v} \in \hat{\boxesToUpdate} \text{ s.t. } \size{\currentPointset{k+1} \cap \boxRegion{\vectorize{v}}} \ge x}}[\boxesToFix{k} = S, \boxChosen{k} = \vectorize{w}, \bayesFilter{k} = 1] \\
		&\quad\quad= \Pr{\bigcup_{\vectorize{v} \in \boxesToUpdate}\set{\size{\currentPointset{k+1} \cap \boxRegion{\vectorize{v}}} \ge x}}[\boxesToFix{k} = S, \boxChosen{k} = \vectorize{w}, \bayesFilter{k} = 1] \\
		&\quad\quad\le \sum_{\vectorize{v} \in \boxesToUpdate} \Pr{\size{\currentPointset{k+1} \cap \boxRegion{\vectorize{v}}} \ge x}[\boxesToFix{k} = S, \boxChosen{k} = \vectorize{w}, \bayesFilter{k} = 1] .
	\end{align*}
	Given $\boxesToFix{k} = S$, $\boxChosen{k} = \vectorize{w}$ and $\bayesFilter{k} = 1$, it holds that $\currentPointset{k+1} \cap \boxRegion{\boxesToUpdate}$ is sampled from $\gibbs[\boxesToUpdate][\eta \cap \boxesToUpdate^{\comp}]$ for some feasible $\eta \in \pointsets[\region]$.
	Therefore, for all $\vectorize{v} \in \boxesToUpdate$,  $\size{\currentPointset{k+1} \cap \boxRegion{\vectorize{v}}}$ is dominated by a Poisson random variable $Y$ with parameter $\activity \volume{\boxRegion{\vectorize{v}}} \le \activity \range^{\dimensions}$.
	Further, observing that $\size{\boxesToUpdate} \le \alpha$ yields
	\begin{align*}
		\Pr{\set{ \exists \vectorize{v} \in \hat{\boxesToUpdate} \text{ s.t. } \size{\currentPointset{k+1} \cap \boxRegion{\vectorize{v}}} \ge x}}[\boxesToFix{k} = S, \boxChosen{k} = \vectorize{w}, \bayesFilter{k} = 1] \le \alpha \cdot \Pr{Y \ge x} .
	\end{align*}
	Using the law of total expectation and \Cref{cor:poisson_tail}, we obtain
	\[
		\Pr{\set{ \exists \vectorize{v} \in \hat{\boxesToUpdate} \text{ s.t. } \size{\currentPointset{k+1} \cap \boxRegion{\vectorize{v}}} \ge x}} 
		\le \alpha \cdot \eulerE^{-(\gamma - 1)x} ,
	\]	
	which proves the claim.
\end{proof}

Using \Cref{lemma:poisson_domination}, we derive our main result on perfect sampling for repulsive bounded-range potentials based on \Cref{algo:sampling}.

\begin{theorem} \label{thm:sampling}
	Let $a, b > 0$ be such that $\potential$ satisfies $(a, b)$-strong spatial mixing up to $\activity$.
	Consider \Cref{algo:sampling} with $\bayesFilterCorrection(\cdot) = \eulerE^{-\eulerE^{-\updateRadius}}\bayesFilterCorrectionR{a}{b}(\cdot)$ as Bayes filter correction in line~\ref{algo:sampling:filter}.
	On termination, the algorithm outputs a sample from the Gibbs measure $\gibbs[\activity, \region]$.
	Moreover, if $\updateRadius$ is chosen as a sufficiently large constant, depending on $a$, $b$, $\range$, $\activity$ and $\dimensions$, then we can run the algorithm in expected time $\bigOTilde{\volume{\region}}$. 
\end{theorem}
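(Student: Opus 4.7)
}
The correctness part is essentially free: \Cref{cor:scaled_filter_repsulsive} shows that $\eulerE^{-\eulerE^{-\updateRadius}}\bayesFilterCorrectionR{a}{b}(\cdot)$ is a Bayes filter correction in the sense of \Cref{def:bayes_filter}, so \Cref{thm:correctness} immediately yields that the algorithm's output at the stopping time $\iterationNumber = \inf\set{t}[\boxesToFix{t} = \emptyset]$ is distributed as $\gibbs[\activity]$, and \Cref{cor:finite_iterations} ensures $\iterationNumber < \infty$ almost surely. The entire work therefore lies in the running-time bound.

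For the running time, I would choose $\updateRadius$ large enough (depending on $a,b,\range,\activity,\dimensions$) so that both conclusions of \Cref{lemma:iterations} apply: $\E{\iterationNumber}[] \le 2\size{\boxIds}$ and $\Pr{\iterationNumber \ge k} \le \eulerE^{-k/\alpha}$ for $k \ge 4\size{\boxIds}$. Writing $R_t$ for the running time of iteration $t$, \Cref{cor:sampling_bayes_filter} gives a constant $C$ (depending only on the allowed constants) with
\[
\E{R_t}[\currentPointset{t-1}, \boxesToFix{t-1}, \boxChosen{t-1}] \le C \cdot M_{t-1},
\]
where $M_{t} \coloneqq \max_{\vectorize{v} \in \boxIds} \size{\currentPointset{t} \cap \boxRegion{\vectorize{v}}}$ (here I absorb the constant number of boxes in $\set{\boxChosen{t-1}} \cup \partial \boxesToUpdate$ into $C$). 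Taking expectations and summing via the tower property,
\[
\E{\sum\nolimits_{t=1}^{\iterationNumber} R_t}[]
\;\le\; C \sum_{t \ge 1} \E{M_{t-1} \ind{t \le \iterationNumber}}[]
\;\le\; C \cdot \E{\iterationNumber \cdot N^{*}}[],
\]
where $N^{*} \coloneqq \max_{0 \le t \le \iterationNumber} M_{t}$.

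The plan is then to bound $\E{\iterationNumber \cdot N^{*}}[]$ by Cauchy--Schwarz as $\sqrt{\E{\iterationNumber^{2}}[] \cdot \E{(N^{*})^{2}}[]}$. The factor $\E{\iterationNumber^{2}}[] = \bigO{\size{\boxIds}^{2}}$ follows from the exponential tail bound in \Cref{lemma:iterations} (integrating $\int_{0}^{\infty} 2k \Pr{\iterationNumber \ge k} \intD k$ and splitting at $4 \size{\boxIds}$). For $\E{(N^{*})^{2}}[]$ I would combine \Cref{lemma:poisson_domination} with the same tail bound on $\iterationNumber$: for any deterministic cutoff $K$,
\[
\Pr{N^{*} \ge x} \;\le\; \Pr{\max\nolimits_{t \le K} M_{t} \ge x} + \Pr{\iterationNumber > K} \;\le\; (K+1)\alpha \eulerE^{-(\gamma-1)x} + \eulerE^{-K/\alpha},
\]
valid for $x \ge \eulerE^{\gamma}\activity\range^{\dimensions}$ and any $\gamma > 1$. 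Choosing $K = \bigTheta{\size{\boxIds}\log\size{\boxIds}}$ makes the second term polynomially small in $\size{\boxIds}$, and integrating $\int_{0}^{\infty} 2x \Pr{N^{*} \ge x} \intD x$ yields $\E{(N^{*})^{2}}[] = \bigO{\log^{2}\size{\boxIds}}$. Plugging back,
\[
\E{\sum\nolimits_{t=1}^{\iterationNumber} R_t}[] \;\le\; C \sqrt{\bigO{\size{\boxIds}^{2}} \cdot \bigO{\log^{2}\size{\boxIds}}} \;=\; \bigO{\size{\boxIds} \log \size{\boxIds}} \;=\; \bigOTilde{\volume{\region}}.
\]

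The main obstacle is the coupling between $R_{t}$, the state $\currentPointset{t-1}$ (which drives $R_t$ through $M_{t-1}$), and the stopping time $\iterationNumber$: since large $M_t$ may correlate with slow termination, neither Wald's identity nor a naive product bound suffices. The key trick is to bound $\sum_{t \le \iterationNumber} M_{t-1}$ by $\iterationNumber \cdot N^{*}$ and then decouple via Cauchy--Schwarz, at which point both factors can be controlled by the independent tail estimates of \Cref{lemma:iterations,lemma:poisson_domination}. Truncating at $K \asymp \size{\boxIds}\log\size{\boxIds}$ inside the tail estimate for $N^{*}$ is what produces the single extra logarithmic factor responsible for the $\bigOTilde{\cdot}$ rather than $\bigO{\cdot}$.
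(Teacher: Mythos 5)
Your overall strategy is genuinely different from the paper's: where the paper bounds $W=\sum_{t\le T}\size{\currentPointset{t-1}\cap\boxRegion{\set{\boxChosen{t-1}}\cup\partial\boxesToUpdate_t}}$ directly via a tail estimate on $W$ itself, you bound the sum by $T\cdot N^{*}$ and then decouple via Cauchy--Schwarz into $\E{T^2}^{1/2}\E{(N^{*})^2}^{1/2}$. This is a clean alternative, and the pieces you cite are the right ones: \Cref{lemma:iterations} for the exponential tail of $T$ (giving $\E{T^2}=\bigO{\size{\boxIds}^2}$), \Cref{lemma:poisson_domination} for the point counts, and \Cref{cor:sampling_bayes_filter} for the per-iteration cost. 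The target bound $\bigO{\size{\boxIds}\log\size{\boxIds}}$ matches what the paper obtains.

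There is, however, a genuine gap in your estimate of $\E{(N^{*})^2}$. With $K$ fixed (even at $\Theta(\size{\boxIds}\log\size{\boxIds})$), your inequality
\[
\Pr{N^{*}\ge x}\le (K+1)\alpha\eulerE^{-(\gamma-1)x}+\eulerE^{-K/\alpha}
\]
has a constant-in-$x$ second term, so $\int_0^\infty 2x\,\Pr{N^{*}\ge x}\intD x$ diverges as written; you cannot conclude $\E{(N^{*})^2}<\infty$, let alone $\bigO{\log^2\size{\boxIds}}$. The fix is exactly the device the paper uses for its $k(w)$: let the truncation level grow with the integration variable, e.g.\ $K(x)=\max\set{\lceil 4\size{\boxIds}\rceil,\ \lceil\alpha(\gamma-1)x\rceil}$, so that both terms decay like $\eulerE^{-(\gamma-1)x}$ once $x\gtrsim\size{\boxIds}$. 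You then split the integral at some $x_0=\Theta(\log\size{\boxIds})$: for $x<x_0$ use $\Pr{N^{*}\ge x}\le 1$ (contributing $x_0^2=\bigO{\log^2\size{\boxIds}}$), for $x_0\le x\lesssim\size{\boxIds}$ the term $(4\size{\boxIds}+1)\alpha\eulerE^{-(\gamma-1)x}$ integrates to $\bigO{\log\size{\boxIds}}$ once $x_0\ge 2/(\gamma-1)\cdot\log\size{\boxIds}$, and for $x\gtrsim\size{\boxIds}$ the optimized tail $\lesssim x\eulerE^{-(\gamma-1)x}$ contributes $\bigO{1}$. With that repair the Cauchy--Schwarz route goes through and gives the same $\bigOTilde{\volume{\region}}$ bound. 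One further small point: \Cref{cor:sampling_bayes_filter} gives $\E{\iterationTime{t}}[\cdot]\le\alpha_1\size{\cdot}+\alpha_2$; you should keep the additive $\alpha_2$ (handled via $\E{T}\le 2\size{\boxIds}$) rather than absorbing everything into $C\cdot M_{t-1}$, which is $0$ when the relevant boxes are empty.
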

\begin{proof}
	If $\potential$ satisfies $(a, b)$-strong spatial mixing up to $\activity$, we know by \Cref{cor:scaled_filter_repsulsive} that $\eulerE^{-\eulerE^{-\updateRadius}}\bayesFilterCorrectionR{a}{b}(\cdot)$ is a Bayes filter correction.
	Thus, the first part of the statement follows from \Cref{thm:correctness}.
	
	Next, let $\updateRadius$ be chosen as a sufficiently large constant, depending on $a, b, \range, \activity$ and $\dimensions$ to satisfy \Cref{lemma:iterations}.
	We rewrite the running time of \Cref{algo:sampling} as 
	\begin{align*}
		\E{\sum_{t=1}^{\iterationNumber} \iterationTime{t}}[]
		&= \E{\E{\sum_{t=1}^{\infty}\ind{\iterationNumber \ge t} \iterationTime{t}}[\boxesToFix{t-1}, \currentPointset{t-1}, \boxChosen{t-1}]}[] \\
		&= \E{\sum_{t=1}^{\infty} \ind{\iterationNumber \ge t} \E{\iterationTime{t}}[\boxesToFix{t-1}, \currentPointset{t-1}, \boxChosen{t-1}]}[] \\
	\end{align*}
	Moreover, by \Cref{cor:sampling_bayes_filter}, there are constants $a_1, a_2$, only depending on $\updateRadius, \dimensions, \activity$ and $\range$, such that
	\begin{align*}
		\E{\iterationTime{t}}[\boxesToFix{t-1}, \currentPointset{t-1}, \boxChosen{t-1}] \le \alpha_1 \size{\currentPointset{t-1} \cap (\partial \boxesToUpdate_{t} \cup \boxChosen{t-1})} + \alpha_2 
	\end{align*} 
	where $\boxesToUpdate_{t} = \boxesToUpdate[\boxesToFix{t-1}][\boxChosen{t-1}][\updateRadius]$.
	Thus, we have
	\[
		\E{\sum_{t=1}^{\iterationNumber} \iterationTime{t}}[]
		\le \alpha_1 \E{\sum_{t=1}^{\infty} \ind{\iterationNumber \ge t} \size{\currentPointset{t-1} \cap (\partial \boxesToUpdate_{t} \cup \boxChosen{t-1})}}[] + \alpha_2 \E{\sum_{t=1}^{\infty} \ind{\iterationNumber \ge t}}[] .
	\]
	Note that by \Cref{lemma:iterations}
	\[
		\sum_{t=1}^{\infty} \E{\ind{\iterationNumber \ge t}}[] 
		= \sum_{t'=0}^{\infty} \Pr{\ind{\iterationNumber > t'}}
		= \E{\iterationNumber}[] 
		\in \bigO{\volume{\region}}.
	\]
	It remains to bound
	\[
		\E{\sum_{t=1}^{\infty} \ind{\iterationNumber \ge t} \cdot \size{\currentPointset{t-1} \cap (\partial \boxesToUpdate_{t} \cup \boxChosen{t-1})}}[] .
	\]
	To this end, write $W = \sum_{t=1}^{\infty} \ind{\iterationNumber \ge t} \cdot \size{\currentPointset{t-1} \cap (\partial \boxesToUpdate_{t} \cup \boxChosen{t-1})}$.
	Since $W$ is non-negative, we have
	\[
		\E{W}[] 
		= \int_{\R_{\ge 0}} \Pr{W > w} \intD w 
		\le \hat{w} + \int_{\hat{w}}^{\infty} \Pr{W > w} \intD w
	\]
	for every $\hat{w} \in \R_{\ge 0}$.
	Next, observe that for every $t \in \N$
	\[
		\size{\partial \boxesToUpdate_{t} \cup \boxChosen{t-1}}
		\le \size{\partial \boxesToUpdate_{t} \cup \boxesToUpdate_{t}}
		\le \left(2 \updateRadius + 1\right)^{\dimensions}
		\eqqcolon \alpha_3.
	\] 
	Thus, for every $w \in \R_{\ge 0}$ and every $k: \R_{\ge 0} \to \N$ the following holds: if $W > w$, then  $\iterationNumber > k(w)$ or there is a time point $t \le k(w)$ and a box $\vectorize{v} \in \boxIds$ such that $\size{\currentPointset{t-1} \cap \boxRegion{\vectorize{v}}} > \frac{w}{k(w) \alpha_3}$.
	Consequently, we have  
	\begin{align*}
		\Pr{W > w} 
		&\le \Pr{\set{\iterationNumber > k(w)} \cup \bigcup_{t = 1}^{k(w)} \bigcup_{\vectorize{v} \in \boxIds} \set{\size{\currentPointset{t-1} \cap \boxRegion{\vectorize{v}}} > \frac{w}{k(w) \alpha_3}}} \\
		&\le \Pr{\iterationNumber > k(w)} + \Pr{\bigcup_{t = 1}^{k(w)} \bigcup_{\vectorize{v} \in \boxIds} \set{\size{\currentPointset{t-1} \cap \boxRegion{\vectorize{v}}} > \frac{w}{k(w) \alpha_3}}}.
	\end{align*}
	Moreover, applying \Cref{lemma:poisson_domination} shows that there is a constant $\alpha_4$, only depending on $\updateRadius$, $\range$ and $\dimensions$, such that for all $w \ge \eulerE^2 \alpha_3 \activity \range^{\dimensions} k(w)$ it holds that
	\[
		\Pr{W > w} \le \Pr{\iterationNumber \ge k(w)} + (k(w) - 1) \alpha_4 \exponential{-\frac{w}{k(w) \alpha_3}},
	\]
	Now, suppose that $k$ is measurable, we get for $\hat{w} \in \R_{\ge 0}$
	\[
		\E{W}[] \le \hat{w} + \int_{\hat{w}}^{\infty} \Pr{\iterationNumber \ge k(w)} \intD w + \alpha_4 \int_{\hat{w}}^{\infty} k(w) \exponential{-\frac{w}{k(w) \alpha_3}} \intD w .
	\]
	We now claim that for a suitable choice of $\hat{w}$ and $k$ the right-hand side is in $\bigOTilde{\volume{\region}}$.
	
	To this end, let $k(w) = \max\left\{\left\lceil 4 \eulerE \size{\boxIds}\right\rceil, \left\lceil 2 \alpha_5 \ln(w)\right\rceil\right\}$, where $\alpha_5$ is the constant from the tail bound in \Cref{lemma:iterations}, which only depends on $\updateRadius, \range$ and $\dimensions$.
	Moreover, choose $\hat{w} \ge 1$ sufficiently large such that for all $w \ge \hat{w}$ it holds that $w \ge k(w)$, $w \ge \eulerE^2 \alpha_3 \activity \range^{\dimensions} k(w)$ and $\frac{w}{\ln(w)} \ge 3 \alpha_3 k(w)$.
	Note that this can be achieved for some $\hat{w} \in \bigOTilde{\size{\boxIds}}$.
	
	For our choice of $k(w)$, \Cref{lemma:iterations} yields
	\[
		\int_{\hat{w}}^{\infty} \Pr{\iterationNumber \ge k(w)} \intD w \le \int_{\hat{w}}^{\infty} w^{-2} \intD w ,
	\]
	which is bounded by $1$ for $\hat{w} \ge 1$.
	Moreover, for $w \ge k(w)$ and $\frac{w}{\ln(w)} \ge 3 \alpha_3 k(w)$ we have 
	\[
		\exponential{-\frac{w}{k(w) \alpha_3}} \le \frac{1}{k(w) w^2} 
	\]
	and thus
	\[
		\int_{\hat{w}}^{\infty} k(w) \exponential{-\frac{w}{k(w) \alpha_3}} \intD w  \le \int_{\hat{w}}^{\infty} w^{-2} \intD w \le 1  .
	\]
	Consequently, we have
	\[
		\E{W}[] \le \hat{w} + \bigO{1} \in \bigOTilde{\size{\boxIds}}
	\]
	and, since $\size{\boxIds} \in \bigO{\volume{\region}}$, this concludes the proof.
\end{proof}

\begin{remark}
	Since \Cref{thm:sampling} requires knowing constants $a, b$ such that the Point process satisfies $(a, b)$-strong spatial mixing (in contrast to \Cref{thm:sampling_HS}), we can use these constant to compute a sufficiently large value for $\updateRadius$.
	Elementary calculations suggest to choose $\updateRadius > \max\set{\frac{1 + 7 \range}{2}, 16 a' {b'}^{2}}$, where $b' = \min\set{1, b \range}$ and $a' = 2 a 3^{\dimensions} \range^{\dimensions} \eulerE^{2 b \range} \left(\activity \range^{\dimensions} + \eulerE^{\activity 3^{\dimensions} \range^{\dimensions}}\right) + 1$.
\end{remark}

\section{Bernoulli Factories}
\label{sec:BernoulliFactories}

In this section we prove \Cref{lemma:bernoulli_frac}, showing how to sample a random variable from $\Ber{\frac{p}{q}}$ given access to a $\Ber{p}$ and $\Ber{q}$ sampler, when $q-p>\epsilon$. This happens in the following three steps.

The first step is to sample a random variable according to $\Ber{\frac{1 - (q - p)}2 }$ given access to $\Ber{p}$ and $\Ber{q}$. This is achieved by \Cref{BerAux} below.
\begin{algorithm}
	\caption{$\Ber{\frac{1 - (q - p)}2}$ from $\Ber{p}$ and $\Ber{q}$}\label{BerAux}
	Draw $u \sim \Ber{1/2}$ \\
	\If{$u = 1$}{
		Draw $y \sim \Ber{q}$ \\
		\Return $1-y$
	}
	\Else{Draw $x \sim \Ber{p}$ \\
	\Return $x$}
\end{algorithm}
It is easy to verify that this algorithm returns~1 with the correct probability.

For the second step, let $\varrho=\frac{1 - (q - p)}2$. From \Cref{BerAux} we now assume to have access to a $\Ber{\varrho}$ random variable. The next step is to use Huber's algorithm \cite{MR3506427} and obtain a $\Ber{2\varrho}$. For the algorithm to work within the required run-time guarantees, we need $\varrho<\frac{1-\epsilon}{2}$, which holds since we assumed $q-p>\epsilon$. For convenience, we provide the pseudocode of Huber's algorithm in \Cref{Ber2p}. The correctness of the algorithm can be found in \cite[Section~2.3]{MR3506427}.
\begin{algorithm}
	\caption{$\Ber{2\varrho}$ from $\Ber{\varrho}$}\label{Ber2p}
	$\epsilon \leftarrow \min\{\epsilon, 0.644\}$,  
	$k \leftarrow \frac{23}{5\epsilon}$,
	$i \leftarrow 1$,
	$R \leftarrow 1$,
	$C \leftarrow 2$ \\
	\While{$i \neq 0$ and $R\neq0$}{
		\While{$0 < i < k$}{
			Draw $u \sim \Ber{\varrho}$ \\
			Draw $g \sim Geo\left(\frac{C-1}C\right)$ \\
			$i \leftarrow i - 1 + (1-u)g$
		}
		\If{$i \geq k$}{
			Draw $R \sim Ber\left(\left(1 + \frac\epsilon2\right)^{-i}\right)$ \\
			\If{$R = 0$}{
				\Return $0$
			}
			$C \leftarrow C(1 + \frac\epsilon2)$,
			$\epsilon \leftarrow \frac\epsilon2$,
			$k \leftarrow 2k$ 
		}
	}
	\Return $1$
\end{algorithm}

\Cref{Ber2p} now gives us access to a $\Ber{2\varrho}$ sampler and consequently to a $\Ber{1-2\varrho}=\Ber{q-p}$ sampler, simply by flipping the returned bit. The final step is to sample from $\Ber{\frac{p}{q}}$ when given access to a sampler for $\Ber{q - p}$. This is done via \Cref{BernFrac}.



\begin{algorithm}
	\caption{$\Ber{\frac pq }$ from $\Ber{p}$ and $\Ber{q-p}$} \label{BernFrac}
	Set $b = -1$ \\
	\While{$b = -1$}{
		Draw $u_1 \sim \Ber{1/2}$ \\
		\If{$u_1 = 1$}{
			Draw $x \sim \Ber{p}$ \\
			\If{$x = 1$}{
				Set $b = 1$
			}
		}
		\Else{Draw $y \sim \Ber{q-p}$ \\
		\If{$y = 1$}{
			Set $b = 0$
		}
  }
	}
	\Return $b$
\end{algorithm}

Regarding the correctness of \Cref{BernFrac} note that, within a single while-loop, the probability the algorithm returns 1 is $p/2$, while the probability that the algorithm enters the while-loop again is $1-q/2$. Conditioned on the fact that the algorithm will terminate, we observe that \Cref{BernFrac} returns 1 with probability $p/q$.



We are now ready to prove \Cref{lemma:bernoulli_frac}, whose statement we repeat here for convinience.

\bernoulliFrac*
\begin{proof}
	We use \Cref{BernFrac} which calls \Cref{Ber2p}, which in turn calls \Cref{BerAux}, as we explained above.
	For simplicity, we may assume that, every time we sample $\Ber{p}$, we also sample from $\Ber{q}$ and vice versa.
	Thus, let $(X_i)_{i \in \N} \in (\set{0, 1}^2)^{\N}$ be a sequence of samples from the product distribution $(\Ber{p} \otimes \Ber{q})^{\otimes \N}$ and assume that $X_i = (X_i^{(p)}, X_i^{(q)})$ is the outcome of the $i$th time the algorithm samples from $\Ber{p}$ and $\Ber{q}$.
	Moreover, let $(T_i)_{i \in \N}$ be the running time for obtaining the $X_i$ using the assumed oracle, and let $N$ denote the total number of samples from $\Ber{p}$ and $\Ber{q}$ that \Cref{BernFrac} requires.
	
	Our goal is to show that $\E{\sum_{i = 1}^{N} T_i} \in \bigO{t\epsilon^{-2}}$.
	To this end, note that, for every $i \in \N$, the event $N \ge i$ does only depend on the sequence $(X_i)_{i \in \N}$ via the subsequence $(X_j)_{j < i}$.
	Moreover, by our assumptions on the oracle, it holds that $\E{T_i}[(X_j)_{j < i}] \le 2t$, where the factor of $2$ comes from the fact that we sample both $\Ber{p}$ and $\Ber{q}$.
	Thus, by Wald's equation, we have
	\[
		\E{\sum_{i = 1}^{N} T_i} 
		= \E{\sum_{i \ge 1} \ind{N \ge i} \E{T_i}[(X_j)_{j < i}]}
		\le 2t \E{N}.
	\]
	It remains to show that $\E{N} \in \bigO{\epsilon^{-2}}$. 
 
 \Cref{BernFrac} will do $2/q\in\bigO{\epsilon^{-1}}$ while-loops in expectation, as each while-loop terminates with probability $q/2$. Furthermore, each while-loop calls either $\Ber{p}$ or \Cref{Ber2p}, both with probability $1/2$. As the total number of loops is determined by the outcome of the final loop, we can use Wald's equation again to get that $\E{N}\in\bigO{\epsilon^{-1}+\epsilon^{-1}\E{N'}}$, where $N'$ is the number of $X_i$ samples that \Cref{Ber2p} requires.  Observe (from the pseudocode of \Cref{BerAux}) that each $\Ber{\varrho}$-call \Cref{Ber2p} requires only a single $X_i$ sample. From \cite[Theorem~1.1]{MR3506427} we get that \Cref{Ber2p} requires at most $\bigO{\epsilon^{-1}}$ $\Ber{\varrho}$ samples in expectation, which implies that $\E{N'}\in\bigO{\epsilon^{-1}}$. This concludes the proof of the lemma.
\end{proof}

\ifshowAuthors
\section*{Acknowledgments}
We thank Mark Jerrum for very helpful discussions on this topic. Konrad Anand was funded by a studentship from Queen Mary, University of London. Andreas Göbel was funded by the project PAGES (project No. 467516565) of the German Research Foundation (DFG).
Marcus Pappik was funded by the HPI Research School on Data Science and Engineering.  Will Perkins was supported in part by NSF grant  CCF-2309708. 
\fi

\bibliographystyle{abbrv}
\bibliography{References}

\appendix
\section{Measure theory and conditional expectations}

\subsection{Conditional expectation} \label{appendix:measures}
We start with a brief recap of the notation used in the appendix.

Let $(\Omega, \mathcal{A}, \PrSymbol)$ be a probability space.
For an event $A \in \sigmafield$ with $\Pr{A} > 0$, we write $\PrSymbol[A]$ for the probability measure $\Pr{\,\cdot}[][A] = \Pr{\,\cdot}[A]$
on $(\statespace, \sigmafield)$.
Note that for all events $A, B \in \sigmafield$ with $\Pr{A \cap B} > 0$ it holds that $\PrSymbol[A \cap B] = (\PrSymbol[A])_{B}$.
Let $f, g: \statespace \to \R$ be  measurable maps.
We denote by $\E{f}$ the expectation of $f$ under the measure $\PrSymbol$.
For a sub-$\sigma$-field $\mathcal{F} \subseteq \sigmafield$, we write $\E{f}[\mathcal{F}]$ as a placeholder for any version of a conditional expectation of $f$ given $\mathcal{F}$ under the probability measure $\PrSymbol$.
Further, we write $\E{\,\cdot}[f]$ for conditional expectations given $\sigma(f)$, the $\sigma$-field generated by $f$, and $\E{\,\cdot}[f, g]$ for conditional expectations given $\sigma(\sigma(f) \cup \sigma(g))$.
Finally, for an event $A \in \sigmafield$ with $\Pr{A} > 0$ and a sub-$\sigma$-field $\mathcal{F} \subseteq \sigmafield$, we write $\E{f}[][A]$ for the expectation of $f$ under $\PrSymbol[A]$ and $\E{f}[\mathcal{F}][A]$ for the conditional expectation of $f$ given $\mathcal{F}$ under the measure $\PrSymbol[A]$.

The following two statements allow us to relate conditional expectations under different probability distributions.

\begin{lemma} \label{lemma:conditioning_on_event}
	Let $(\Omega, \mathcal{A}, \PrSymbol)$ be a probability space, $X$ be an integrable random variable, let $\mathcal{F} \subseteq \mathcal{A}$ be a sub-$\sigma$-field and let $A \in \mathcal{A}$ with $\Pr{A} > 0$.
	Then 
	\[
		\E{\ind{A} X}[\mathcal{F}] = \E{X}[\mathcal{F}][A] \E{\ind{A}}[\mathcal{F}]  
	\]
	$\PrSymbol$-almost surely. 
	Moreover, it holds that
	\[
	\E{X}[\mathcal{F}][A] = \frac{\E{\ind{A} X}[\mathcal{F}]}{\E{\ind{A}}[\mathcal{F}]}
	\]
	$\PrSymbol[A]$-almost surely.
\end{lemma}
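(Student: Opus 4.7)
The plan is that both identities fall out from the defining property of conditional expectation, combined with the observation that $\E{\ind{A} X}[\mathcal{F}]$ can be rewritten by ``pulling out'' the $\mathcal{F}$-measurable factor once we recognize $\E{X}[\mathcal{F}][A]$ as such a factor.

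First I would prove the first identity. Let $Y = \E{X}[\mathcal{F}][A]$, which is by construction $\mathcal{F}$-measurable and integrable under $\PrSymbol[A]$; hence $Y \cdot \E{\ind{A}}[\mathcal{F}]$ is $\mathcal{F}$-measurable. To verify the defining integral equation, I would fix $F \in \mathcal{F}$ and compute
\[
\int_{F} Y \cdot \E{\ind{A}}[\mathcal{F}] \intD \PrSymbol
= \int \ind{F} Y \cdot \E{\ind{A}}[\mathcal{F}] \intD \PrSymbol
= \int \ind{F} Y \cdot \ind{A} \intD \PrSymbol,
\]
where the second equality uses that $\ind{F} Y$ is $\mathcal{F}$-measurable, together with the tower property. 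Then, using the definition of $Y$ as a conditional expectation under $\PrSymbol[A]$ and the identity $\intD \PrSymbol[A] = \Pr{A}^{-1} \ind{A} \intD \PrSymbol$, I get
\[
\int \ind{F} Y \cdot \ind{A} \intD \PrSymbol
= \Pr{A} \int_F Y \intD \PrSymbol[A]
= \Pr{A} \int_F X \intD \PrSymbol[A]
= \int \ind{F} X \cdot \ind{A} \intD \PrSymbol
= \int_F \ind{A} X \intD \PrSymbol.
\]
Since this holds for every $F \in \mathcal{F}$ and $Y \cdot \E{\ind{A}}[\mathcal{F}]$ is $\mathcal{F}$-measurable, it is a version of $\E{\ind{A} X}[\mathcal{F}]$, giving the first identity $\PrSymbol$-almost surely.

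For the second identity, the key point is that the division is legitimate $\PrSymbol[A]$-almost surely. I would first verify that $\E{\ind{A}}[\mathcal{F}] > 0$ $\PrSymbol[A]$-almost surely: indeed,
\[
\Pr{\E{\ind{A}}[\mathcal{F}] = 0}[][A]
= \Pr{A}^{-1} \int \ind{\set{\E{\ind{A}}[\mathcal{F}] = 0}} \ind{A} \intD \PrSymbol
= \Pr{A}^{-1} \int \ind{\set{\E{\ind{A}}[\mathcal{F}] = 0}} \E{\ind{A}}[\mathcal{F}] \intD \PrSymbol
= 0,
\]
using the $\mathcal{F}$-measurability of the indicator and the tower property. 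On the complement, I divide the first identity by $\E{\ind{A}}[\mathcal{F}]$ to obtain the second identity $\PrSymbol[A]$-almost surely.

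I do not anticipate a genuine obstacle: the argument is pure bookkeeping with measurability and the defining property of conditional expectation. The only subtle point is the justification that the denominator in the second identity is almost surely positive under $\PrSymbol[A]$ (not under $\PrSymbol$), which is exactly what makes the identity hold in the stated almost-sure sense.
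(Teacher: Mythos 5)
Your proof is correct and takes essentially the same route as the paper: you verify that $\E{X}[\mathcal{F}][A]\,\E{\ind{A}}[\mathcal{F}]$ is an $\mathcal{F}$-measurable function satisfying the defining integral identity for $\E{\ind{A}X}[\mathcal{F}]$, passing back and forth between $\PrSymbol$ and $\PrSymbol[A]$ via $\ind{A}\intD\PrSymbol = \Pr{A}\intD\PrSymbol[A]$ and pulling out the $\mathcal{F}$-measurable factor. The only difference is cosmetic: you write out the integral computations and the verification that $\E{\ind{A}}[\mathcal{F}]>0$ $\PrSymbol[A]$-a.s., which the paper states without proof.
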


\begin{proof}
	By definition, $\E{X}[\mathcal{F}][A] \E{\ind{A}}[\mathcal{F}]$ is $\mathcal{F}$-measurable.
	Moreover, for any $B \in \mathcal{F}$ it holds that
	\begin{align*}
		\E{\ind{B} \E{X}[\mathcal{F}][A] \E{\ind{A}}[\mathcal{F}]}[]
		&= \E{\ind{B} \ind{A} \E{X}[\mathcal{F}][A]}[] \\
		&= \E{\ind{A}}[] \E{\ind{B} \E{X}[\mathcal{F}][A]}[][A] \\
		&= \E{\ind{A}}[] \E{\ind{B} X}[][A] \\
		&= \E{\ind{B} \ind{A} X}[] \\
		&= \E{\ind{B} \E{\ind{A} X}[\mathcal{F}]}[] .
	\end{align*}
	Thus, it holds that 
	\[
	\E{\ind{A} X}[\mathcal{F}] = \E{X}[\mathcal{F}][A] \E{\ind{A}}[\mathcal{F}]  
	\]
	$\PrSymbol$-almost surely.
	Next, observe that for $A \in \mathcal{A}$ with $\Pr{A} > 0$ it holds that $\E{A}[\mathcal{F}] > 0$ $\PrSymbol[A]$-almost surely.
	Thus, it follows immediately that
	\[
	\E{X}[\mathcal{F}][A] = \frac{\E{\ind{A} X}[\mathcal{F}]}{\E{\ind{A}}[\mathcal{F}]} 
	\]
	$\PrSymbol[A]$-almost surely.
\end{proof}

The following properties can be concluded.
\begin{lemma} \label{lemma:law_of_total_probability}
	Let $(\Omega, \mathcal{A}, \PrSymbol)$ be a probability space, $X$ be an integrable random variable and $\mathcal{F} \subseteq \mathcal{A}$ be a sub-$\sigma$-field.
	Let $A_1, \dots, A_n \in \mathcal{A}$ be disjoint and such that $\Pr{A_i} > 0$ for all $1 \le i \le n$ and $\Pr{\bigcup_{i=1}^{n} A_i} = 1$. 
	If an $\mathcal{F}$-measurable function $f: \Omega \to \R$ is a version of $\E{X}[\mathcal{F}][A_i]$ for all $1 \le i \le n$, then $f$ is also a version of $\E{X}[\mathcal{F}]$.
\end{lemma}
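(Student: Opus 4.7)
The plan is to verify the defining properties of a conditional expectation directly, namely $\mathcal{F}$-measurability (given) together with the integral identity $\E{\ind{B} f} = \E{\ind{B} X}$ for every $B \in \mathcal{F}$, by decomposing along the partition $\set{A_i}[1 \le i \le n]$.

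First I would observe that, since $\Pr{\bigcup_{i=1}^{n} A_i} = 1$ and the $A_i$ are disjoint, for any integrable $g$ and any $B \in \mathcal{F}$ one has
\[
    \E{\ind{B} g} = \sum_{i=1}^{n} \E{\ind{B} \ind{A_i} g} = \sum_{i=1}^{n} \Pr{A_i} \E{\ind{B} g}[][A_i],
\]
where the second equality is just the definition of $\PrSymbol[A_i]$. Before applying this to $f$, I would briefly check that $f$ is $\PrSymbol$-integrable: being a version of $\E{X}[\mathcal{F}][A_i]$, $f$ satisfies $\E{\size{f}}[][A_i] \le \E{\size{X}}[][A_i]$ $\PrSymbol[A_i]$-a.s., so summing with weights $\Pr{A_i}$ yields $\E{\size{f}} \le \E{\size{X}} < \infty$.

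The key step is now immediate: apply the decomposition above to both $f$ and $X$. For each $i$, since $f$ is a version of $\E{X}[\mathcal{F}][A_i]$, we have $\E{\ind{B} f}[][A_i] = \E{\ind{B} X}[][A_i]$ for every $B \in \mathcal{F}$. Multiplying by $\Pr{A_i}$ and summing over $i$ gives $\E{\ind{B} f} = \E{\ind{B} X}$, as required. Combined with the $\mathcal{F}$-measurability of $f$, this shows that $f$ is a version of $\E{X}[\mathcal{F}]$ under $\PrSymbol$.

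There is no real obstacle; the only delicacy is in managing the almost-sure qualifiers: each identity $\E{\ind{B} f}[][A_i] = \E{\ind{B} X}[][A_i]$ holds for all $B \in \mathcal{F}$ (as an integral identity, not $\omega$-wise), so the passage to the unconditional measure incurs no exceptional sets beyond the already-null event $\complementOf{\bigcup_i A_i}$.
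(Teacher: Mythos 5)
Your proof is correct and takes a genuinely different route from the paper. The paper manipulates conditional expectations as $\mathcal{F}$-measurable random variables: it writes $X = \sum_i X\ind{A_i}$ a.s., uses linearity to get $\E{X}[\mathcal{F}] = \sum_i \E{X\ind{A_i}}[\mathcal{F}]$, then invokes \Cref{lemma:conditioning_on_event} to rewrite each summand as $\E{X}[\mathcal{F}][A_i]\,\E{\ind{A_i}}[\mathcal{F}]$, substitutes $f$ for $\E{X}[\mathcal{F}][A_i]$, and sums the indicator terms to $1$. You instead verify the defining integral identity $\E{\ind{B}f} = \E{\ind{B}X}$ for every $B \in \mathcal{F}$ directly, by decomposing the unconditional expectation as $\sum_i \Pr{A_i}\,\E{\cdot}[][A_i]$ and using that each conditional version of $f$ satisfies the corresponding identity under $\PrSymbol[A_i]$. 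Your approach is self-contained (it does not rely on \Cref{lemma:conditioning_on_event}), and, as you correctly note at the end, it sidesteps a small delicacy in the paper's proof: substituting $f$ for $\E{X}[\mathcal{F}][A_i]$ inside a $\PrSymbol$-a.s.\ equality requires observing that $\E{\ind{A_i}}[\mathcal{F}]$ vanishes a.s.\ off the set where the two agree, whereas your argument works purely at the level of integrals and never needs that observation. You also make the integrability of $f$ under $\PrSymbol$ explicit, which the paper leaves implicit (it falls out there since $f$ is shown a.s.\ equal to the already-integrable $\E{X}[\mathcal{F}]$). One tiny phrasing slip: $\E{\size{f}}[][A_i] \le \E{\size{X}}[][A_i]$ is a numerical inequality, not an ``$\PrSymbol[A_i]$-a.s.''\ statement; the a.s.\ qualifier belongs on the pointwise bound $\size{f} \le \E{\size{X}}[\mathcal{F}][A_i]$ from conditional Jensen, from which the numerical one follows by integrating. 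This does not affect the correctness of the argument.
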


\begin{proof}
	Since the events $A_1, \dots, A_n \in \mathcal{A}$ and satisfy $\Pr{\bigcup_{i=1}^{n} A_i} = 1$, we have
	\[
	X 
	= X \ind{\bigcup_{i=1}^{n} A_i} 
	= \sum_{i=1}^{n} X \ind{A_i} 
	\]
	$\PrSymbol$-almost surely.
	Thus, by linearity of expectation, we have
	\[
	\E{X}[\mathcal{F}] 
	= \sum_{i=1}^{n} \E{X \ind{A_i}}[\mathcal{F}] .
	\]
	Furthermore, since $\Pr{A_i} > 0$ for all $1 \le i \le n$ \Cref{lemma:conditioning_on_event} gives
	\[
	\sum_{i=1}^{n} \E{X \ind{A_i}}[\mathcal{F}] 
	= \sum_{i=1}^{n} \E{X}[\mathcal{F}][A_i] \E{\ind{A_i}}[\mathcal{F}]
	= f \sum_{i=1}^{n} \E{\ind{A_i}}[\mathcal{F}] 
	\]
	$\PrSymbol$-almost surely.
	Finally, observing that
	\[
	\sum_{i=1}^{n} \E{\ind{A_i}}[\mathcal{F}] 
	= \E{\ind{\bigcup_{i=1}^{n} A_i}}[\mathcal{F}]
	= 1
	\]
	concludes the proof.
\end{proof}

\begin{lemma} \label{lemma:adding_conditions}
	Let $(\Omega, \mathcal{A}, \PrSymbol)$ be a probability space, $X$ be an integrable random variable, let $\mathcal{F} \subseteq \mathcal{A}$ be a sub-$\sigma$-field and let $A \in \mathcal{A}$ with $\Pr{A} > 0$.
	\begin{enumerate}[(1)]
		\item\label{lemma:adding_conditions:1} If $A \in \mathcal{F}$ then $\E{X}[\mathcal{F}] = \E{X}[\mathcal{F}][A]$ $\PrSymbol[A]$-almost surely.
		\item\label{lemma:adding_conditions:2} If  $\E{\ind{A}}[\mathcal{F}] = \E{\ind{A}}[\mathcal{F}, \mathcal{G}]$ for a $\sigma$-field $\mathcal{G} \subseteq \mathcal{A}$ with $\sigma(X) \subseteq \mathcal{G}$, then $\E{X}[\mathcal{F}] = \E{X}[\mathcal{F}][A]$ $\PrSymbol[A]$-almost surely. 
	\end{enumerate}
\end{lemma}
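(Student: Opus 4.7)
My plan is to observe that (1) is actually a special case of (2): if $A\in\mathcal{F}$, then $\E{\ind{A}}[\mathcal{F}]=\ind{A}$ and, for any $\sigma$-field $\mathcal{G}$, $\E{\ind{A}}[\mathcal{F},\mathcal{G}]=\ind{A}$ as well, so the hypothesis of (2) is automatically satisfied with $\mathcal{G}=\sigma(X)$. Hence it suffices to prove (2), and I will record (1) as a corollary (or prove it directly, since in that case the argument collapses to a one-liner).

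For (2), the key identity to establish is
\[
    \E{\ind{A}\,X}[\mathcal{F}] = \E{X}[\mathcal{F}]\cdot\E{\ind{A}}[\mathcal{F}] \quad \PrSymbol\text{-a.s.}
\]
Once this holds, \Cref{lemma:conditioning_on_event} gives $\E{X}[\mathcal{F}][A]=\E{\ind{A}X}[\mathcal{F}]/\E{\ind{A}}[\mathcal{F}]$ $\PrSymbol[A]$-a.s., and dividing both sides of the identity by the $\PrSymbol[A]$-a.s.\ positive quantity $\E{\ind{A}}[\mathcal{F}]$ yields $\E{X}[\mathcal{F}][A]=\E{X}[\mathcal{F}]$ $\PrSymbol[A]$-a.s., which is the claim.

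To establish the displayed identity I will use the tower property together with the sub-$\sigma$-field $\sigma(\mathcal{F}\cup\mathcal{G})$. Since $X$ is $\sigma(\mathcal{F}\cup\mathcal{G})$-measurable (because $\sigma(X)\subseteq\mathcal{G}$), we may pull $X$ outside the inner conditional expectation:
\[
    \E{\ind{A}\,X}[\mathcal{F},\mathcal{G}] = X\cdot\E{\ind{A}}[\mathcal{F},\mathcal{G}] = X\cdot\E{\ind{A}}[\mathcal{F}],
\]
where the second equality invokes the hypothesis of (2). Taking conditional expectations with respect to $\mathcal{F}$ and using that $\E{\ind{A}}[\mathcal{F}]$ is $\mathcal{F}$-measurable yields
\[
    \E{\ind{A}\,X}[\mathcal{F}] = \E{X\cdot\E{\ind{A}}[\mathcal{F}]}[\mathcal{F}] = \E{\ind{A}}[\mathcal{F}]\cdot\E{X}[\mathcal{F}],
\]
as desired.

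The proof is essentially bookkeeping with conditional expectations, so I do not anticipate a conceptual obstacle. The only care needed is to justify that the a.s.\ identities transfer correctly from $\PrSymbol$ to $\PrSymbol[A]$ (which follows because $\PrSymbol[A]$ is absolutely continuous with respect to $\PrSymbol$, so $\PrSymbol$-a.s.\ statements remain valid $\PrSymbol[A]$-a.s.), and to check that $\E{\ind{A}}[\mathcal{F}]>0$ holds $\PrSymbol[A]$-a.s., which is standard: the event $\{\E{\ind{A}}[\mathcal{F}]=0\}$ lies in $\mathcal{F}$ and satisfies $\Pr{A\cap\{\E{\ind{A}}[\mathcal{F}]=0\}}=\E{\ind{\{\E{\ind{A}}[\mathcal{F}]=0\}}\E{\ind{A}}[\mathcal{F}]}=0$.
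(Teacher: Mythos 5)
Your proof is correct and, for part (2), is essentially the same as the paper's: both use the tower property through $\sigma(\mathcal{F}\cup\mathcal{G})$, pull $X$ out of the inner conditional expectation using $\sigma(X)\subseteq\mathcal{G}$, invoke the hypothesis to replace $\E{\ind{A}}[\mathcal{F},\mathcal{G}]$ with $\E{\ind{A}}[\mathcal{F}]$, and then appeal to \Cref{lemma:conditioning_on_event}. The one small point of difference is your observation that (1) is a special case of (2) — the paper instead proves (1) directly via the shortcut $\E{X\ind{A}}[\mathcal{F}]=\ind{A}\,\E{X}[\mathcal{F}]$ — but both routes are equally elementary, so this is a matter of taste rather than a genuinely different argument.
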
 

\begin{proof}
	For \ref{lemma:adding_conditions:1}, note that for $A \in \mathcal{F}$ we have $\E{X \ind{A}}[\mathcal{F}] = \ind{A} \E{X}[\mathcal{F}]$.
	Moreover, \Cref{lemma:conditioning_on_event} yields
	\[
	\E{X \ind{A}}[\mathcal{F}] 
	= \E{\ind{A}}[\mathcal{F}] \E{X}[\mathcal{F}][A]
	= \ind{A} \E{X}[\mathcal{F}][A] .
	\]
	Thus, we have $\ind{A} \E{X}[\mathcal{F}] = \ind{A} \E{X}[\mathcal{F}][A]$ and, in particular, $\E{X}[\mathcal{F}] = \E{X}[\mathcal{F}][A]$ $\PrSymbol[A]$-almost surely.
	
	For \ref{lemma:adding_conditions:2}, observe that 
	\begin{align*}
		\E{X \ind{A}}[\mathcal{F}] 
		&= \E{\E{X \ind{A}}[\mathcal{F}, \mathcal{G}]}[\mathcal{F}] \\
		&= \E{X \E{\ind{A}}[\mathcal{F}, \mathcal{G}]}[\mathcal{F}] \\
		&= \E{X \E{\ind{A}}[\mathcal{F}]}[\mathcal{F}] \\
		&= \E{X}[\mathcal{F}] \E{\ind{A}}[\mathcal{F}]
	\end{align*}
	$\PrSymbol$-almost surely, where the second equality follows from $\sigma(X) \subseteq \mathcal{G}$ and the third follows from $\E{\ind{A}}[\mathcal{F}] = \E{\ind{A}}[\mathcal{F}, \mathcal{G}]$.
	The claim now follows from \Cref{lemma:conditioning_on_event}.
	
\end{proof}

\subsection{Regular conditional distributions} \label{appendix:rcd}
Consider a probability space $(\Omega, \mathcal{A}, \PrSymbol)$ with a sub-$\sigma$-field $\mathcal{F} \subseteq \mathcal{A}$, a measure space $(D, \mathcal{D})$ and $(D, \mathcal{D})$-valued random variable $X$.
A map $Q: \Omega \times \mathcal{D} \to [0, 1]$ is called a regular conditional distribution of $X$ given $\mathcal{F}$ if 
\begin{enumerate}[(1)]
	\item $Q(\omega, \cdot)$ is a probability measure on $(D, \mathcal{D})$ for all $\omega \in \Omega$ and
	\item $Q(\cdot, A)$ is a version of $\E{\ind{X \in A}}[\mathcal{F}]$ for all $A \in \mathcal{D}$.
\end{enumerate} 
The following statements makes regular conditional distributions particularly useful.


\begin{theorem}[{\cite{ccinlar2011probability}[Theorem $2.19$]}]\label{thm:integration_RCP}
	Let $(\Omega, \mathcal{A}, \PrSymbol)$ be a probability space, let $X$ be a $(D, \mathcal{D})$-valued random variable and let $Y$ be a $(E, \mathcal{E})$-valued random variable.
	If $Q: \Omega \times \mathcal{D}$ is a regular conditional distribution of $X$ given $\sigma(Y)$, then, for all $\mathcal{D} \otimes \mathcal{E}$-measurable $f: D \times E \to \R_{\ge 0}$ it holds that
	\[
	\E{f(X, Y)}[Y](\omega) = \int_{D} f(x, Y(\omega)) Q(\omega, \intD x) 
	\]
	for $\PrSymbol$-almost all $\omega \in \Omega$.
\end{theorem}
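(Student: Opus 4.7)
The plan is the standard measure-theoretic ``machine'': prove the identity first for indicators of measurable rectangles, extend to indicators of arbitrary product-measurable sets via a $\pi$-$\lambda$ argument, then to non-negative simple functions by linearity, and finally to all non-negative $\mathcal{D}\otimes\mathcal{E}$-measurable $f$ by monotone convergence. Throughout, I will denote the right-hand side by $g_f(\omega)\coloneqq\int_D f(x,Y(\omega))\,Q(\omega,\intD x)$ and check at each stage that $g_f$ is $\sigma(Y)$-measurable and is a version of $\E{f(X,Y)}[Y]$.

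First, I would treat the base case $f=\ind{A\times B}$ with $A\in\mathcal{D}$ and $B\in\mathcal{E}$. Then $g_f(\omega)=\ind{B}(Y(\omega))\,Q(\omega,A)$. Since $Q(\cdot,A)$ is, by the definition of regular conditional distribution, a version of $\E{\ind{X\in A}}[Y]$, and $\ind{B}(Y)$ is $\sigma(Y)$-measurable, $g_f$ is $\sigma(Y)$-measurable. To see $g_f=\E{\ind{X\in A}\ind{Y\in B}}[Y]$, take any $C\in\sigma(Y)$; then
\[
\E{\ind{C}\,\ind{B}(Y)\,Q(\cdot,A)}=\E{\ind{C\cap\{Y\in B\}}\,\E{\ind{X\in A}}[Y]}=\E{\ind{C\cap\{Y\in B\}}\ind{X\in A}},
\]
using that $C\cap\{Y\in B\}\in\sigma(Y)$ and the defining property of conditional expectation. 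This is exactly what is required.

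Next, I would extend to $f=\ind{C}$ for arbitrary $C\in\mathcal{D}\otimes\mathcal{E}$ by Dynkin's $\pi$-$\lambda$ theorem. Let $\mathcal{P}$ be the $\pi$-system of measurable rectangles and let $\mathcal{L}$ be the collection of sets $C\in\mathcal{D}\otimes\mathcal{E}$ for which $g_{\ind{C}}$ is $\sigma(Y)$-measurable and is a version of $\E{\ind{(X,Y)\in C}}[Y]$. Step~1 shows $\mathcal{P}\subseteq\mathcal{L}$. Closure of $\mathcal{L}$ under complementation uses that $Q(\omega,\cdot)$ is a probability measure (so $g_{\ind{C^c}}=1-g_{\ind{C}}$), and closure under countable increasing unions uses monotone convergence inside the integral $Q(\omega,\intD x)$ and monotone convergence for conditional expectations on the other side. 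Hence $\mathcal{L}$ is a $\lambda$-system containing $\mathcal{P}$, so $\mathcal{L}=\sigma(\mathcal{P})=\mathcal{D}\otimes\mathcal{E}$.

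Finally, by linearity the identity extends to non-negative simple $\mathcal{D}\otimes\mathcal{E}$-measurable functions, and by monotone convergence (applied on both sides as above, using $f\ge 0$) to all $\mathcal{D}\otimes\mathcal{E}$-measurable $f\colon D\times E\to\R_{\ge 0}$. The main obstacle is the bookkeeping: each extension step produces a ``$\PrSymbol$-almost sure'' exceptional set, and care is needed to verify that $g_f$ is jointly $\mathcal{A}$-measurable in $\omega$ so that the exceptional null sets can be combined without issue. This measurability is inherited from the fact that for fixed $A\in\mathcal{D}$, $\omega\mapsto Q(\omega,A)$ is $\mathcal{F}$-measurable, and is preserved under the operations used in the $\pi$-$\lambda$ and monotone-class steps; in the final monotone-convergence step the countable supremum of measurable functions remains measurable, so the argument closes.
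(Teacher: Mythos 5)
This statement is imported into the paper as a cited result (\cite{ccinlar2011probability}, Theorem 2.19) and is not proved there, so there is no in-paper argument to compare against. Your proposal is the standard monotone-class proof of the disintegration identity and it is correct: the rectangle case correctly reduces to the defining property of $Q(\cdot,A)$ as a version of $\E{\ind{X\in A}}[Y]$ together with the fact that $C\cap\set{Y\in B}\in\sigma(Y)$; the $\pi$-$\lambda$ step, the passage to simple functions, and the final monotone-convergence step (using conditional monotone convergence on the left and monotone convergence under $Q(\omega,\cdot)$ on the right, with only countably many exceptional null sets to union) are all sound, and you correctly flag the only delicate points, namely the $\sigma(Y)$-measurability of $\omega\mapsto\int_D f(x,Y(\omega))\,Q(\omega,\intD x)$ and the bookkeeping of null sets.
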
\

Moreover, the following lemma helps to identify regular conditional distributions based on a $\pi$-system.
\begin{lemma} \label{lemma:extension_RCD}
	Let $(\Omega, \mathcal{A}, \PrSymbol)$ be a probability space and let $\mathcal{F} \subseteq \mathcal{A}$ be a sub-$\sigma$-field.
	Let $X$ be a $(D, \mathcal{D})$-valued random variable on $(\Omega, \mathcal{A}, \PrSymbol)$ and let $\mathcal{G} \subseteq \mathcal{D}$ be a $\pi$-system that generates $\mathcal{D}$.
	Assume there is a function $Q: \Omega \times \mathcal{D}$ such that 
	\begin{enumerate}[(1)]
		\item $Q(\omega, \cdot)$ is a probability distribution on $(D, \mathcal{D})$ for all $\omega \in \Omega$ and
		\item $Q(\cdot, G)$ is a version of $\E{\ind{X \in G}}[\mathcal{F}]$ for all $G \in \mathcal{G}$ .
	\end{enumerate}
	Then $Q(\cdot, A)$ is a version of $\E{\ind{X \in A}}[\mathcal{F}]$ for all $A \in \mathcal{D}$ and, in particular, $Q$ is a regular conditional distribution for $X$ given $\mathcal{F}$.	
\end{lemma}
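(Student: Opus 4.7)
The plan is a routine application of Dynkin's $\pi$-$\lambda$ theorem (monotone class argument). Recall that $Q(\cdot, A)$ being a version of $\E{\ind{X \in A}}[\mathcal{F}]$ means two things: $Q(\cdot, A)$ is $\mathcal{F}$-measurable, and $\E{\ind{B} \cdot Q(\cdot, A)} = \Pr{\set{X \in A} \cap B}$ for every $B \in \mathcal{F}$. With this in mind, the set
\[
\mathcal{L} = \set{A \in \mathcal{D}}[Q(\cdot, A) \text{ is } \mathcal{F}\text{-measurable and } \E{\ind{B} Q(\cdot, A)} = \Pr{\set{X \in A} \cap B} \text{ for all } B \in \mathcal{F}]
\]
collects exactly the good events, and the goal reduces to showing $\mathcal{L} = \mathcal{D}$.

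The first step is to verify that $\mathcal{L}$ is a $\lambda$-system. Containing $D$ is immediate because $Q(\omega, D) = 1$ for every $\omega$ and $\Pr{\set{X \in D} \cap B} = \Pr{B}$. Closure under proper differences (or equivalently complements within $D$) uses $Q(\omega, A_2 \setminus A_1) = Q(\omega, A_2) - Q(\omega, A_1)$ when $A_1 \subseteq A_2$, both in $\mathcal{L}$; $\mathcal{F}$-measurability is preserved by subtraction, and the integral identity follows by linearity. Finally, if $(A_n)_{n \in \N} \subseteq \mathcal{L}$ is increasing with union $A$, then for every $\omega \in \Omega$ we have $Q(\omega, A) = \lim_{n \to \infty} Q(\omega, A_n)$ by countable additivity of the probability measure $Q(\omega, \cdot)$. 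Measurability is preserved by pointwise limits, and by monotone convergence
\[
\E{\ind{B} Q(\cdot, A)} = \lim_{n \to \infty} \E{\ind{B} Q(\cdot, A_n)} = \lim_{n \to \infty} \Pr{\set{X \in A_n} \cap B} = \Pr{\set{X \in A} \cap B}
\]
for every $B \in \mathcal{F}$, so $A \in \mathcal{L}$.

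The second step is to invoke the hypothesis: $\mathcal{G} \subseteq \mathcal{L}$ by assumption, and $\mathcal{G}$ is a $\pi$-system generating $\mathcal{D}$. Dynkin's $\pi$-$\lambda$ theorem therefore gives $\mathcal{L} \supseteq \sigma(\mathcal{G}) = \mathcal{D}$, and since $\mathcal{L} \subseteq \mathcal{D}$ by definition we conclude $\mathcal{L} = \mathcal{D}$. This proves the first assertion, and the second assertion (that $Q$ is a regular conditional distribution) is then immediate from the definition, since property (1) of the hypothesis already guarantees that $Q(\omega, \cdot)$ is a probability measure for each $\omega$.

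There is no real obstacle here; the only subtlety is to keep the $\omega$-wise countable additivity of $Q(\omega, \cdot)$ in mind when handling monotone limits, and to ensure the $\lambda$-system axioms are checked for the chosen formulation of $\mathcal{L}$ rather than some weaker ``almost sure'' variant.
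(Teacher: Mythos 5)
Your proof is correct and takes essentially the same approach as the paper: both arguments apply Dynkin's $\pi$-$\lambda$ theorem to the collection of events $A$ for which $Q(\cdot,A)$ is a version of $\E{\ind{X\in A}}[\mathcal{F}]$. You verify the $\lambda$-system axioms via proper differences and increasing unions while the paper uses complements and countable disjoint unions, but these are equivalent characterizations and the substance of the argument is identical.
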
 

\begin{proof}
	Consider the set of events
	\[
	\mathcal{H} = \set{A \in \mathcal{D}}[Q(\cdot, A) \text{ is a version of } \E{\ind{X \in A}}[\mathcal{F}]] .
	\]
	Our goal is to prove $\mathcal{H} = \mathcal{D}$.
	To this end, note that $\mathcal{G} \subseteq \mathcal{H} \subseteq \mathcal{D}$.
	Thus, if we prove that $\mathcal{H}$ is a Dynkin system, then the $\pi$-$\lambda$ Theorem implies that $\mathcal{D} = \sigma(\mathcal{G}) \subseteq \mathcal{H}$, which proves our claim.
	To show that $\mathcal{H}$ is a Dynkin system, we need to argue that $D \in \mathcal{H}$ and that $\mathcal{H}$ is closed under complements and countable disjoint unions.
	
	To see that $D \in \mathcal{H}$, note that $Q(\omega, D) = 1$ for all $\omega \in \Omega$.
	Thus, $Q(\cdot, \Omega)$ is trivially $\mathcal{F}$-measurable.
	Moreover, for any $B \in \mathcal{F}$, it holds that
	\[
	\E{\ind{B} Q(\cdot, D)}[\mathcal{F}]
	= \E{\ind{B}}[\mathcal{F}]
	= \E{\ind{B} \ind{X \in D}}[\mathcal{F}] ,
	\]
	which shows that $Q(\cdot, D)$ is indeed a version of $\E{\ind{X \in D}}[\mathcal{F}]$.
	
	Next, fix some $G \in \mathcal{H}$ and observe that $Q(\omega, \complementOf{G}) = 1 - Q(\omega, G)$ for all $\omega \in \Omega$.
	Since $Q(\cdot, G)$ is by assumption a version of $\E{\ind{X \in G}}[\mathcal{F}]$ (therefore $\mathcal{F}$-measurable), this shows that $Q(\cdot, \complementOf{G})$ is $\mathcal{F}$-measurable.
	Moreover, for all $B \in \mathcal{F}$, we have
	\begin{align*}
		\E{\ind{B} Q(\cdot, \complementOf{G})}[\mathcal{F}] 
		&= \E{\ind{B}}[\mathcal{F}] - \E{\ind{B} Q(\cdot, G)}[\mathcal{F}] \\
		&= \E{\ind{B} \ind{X \in D}}[\mathcal{F}] - \E{\ind{B} \ind{X \in G}}[\mathcal{F}] \\
		&= \E{\ind{B} \ind{X \in \complementOf{G}}}[\mathcal{F}] ,
	\end{align*}
	which proves that $\complementOf{G} \in \mathcal{H}$.
	
	Finally, consider some sequence of disjoint events $(G_n)_{n \in \N} \in \mathcal{H}^{\N}$ and set $G= \bigcup_{n \in \N} G_n$. 
	Note that $Q(\omega, G) = \sum_{n \in \N} Q(\omega, G_n)$ for all $\omega \in \Omega$.
	Since each function $Q(\cdot, G_n)$ is a version of $\E{\ind{X \in G_n}}[\mathcal{F}]$, this implies that $Q(\cdot, G)$ is $\mathcal{F}$-measurable.
	Moreover, for all $B \in \mathcal{F}$ it holds that
	\[
	\E{\ind{B} Q(\cdot, G)}[\mathcal{F}] 
	= \sum_{n \in \N} \E{\ind{B} Q(\cdot, G_n)}[\mathcal{F}]
	= \sum_{n \in \N} \E{\ind{B} \ind{X \in G_n}}[\mathcal{F}]
	= \E{\ind{B} \ind{X \in G}}[\mathcal{F}],
	\] 
	showing that $G \in \mathcal{H}$.
	Thus, $\mathcal{H}$ is a Dynkin system, which concludes the proof.
\end{proof}

\section{Hitting times and tail bounds}

We frequently make use of the following version of Wald's identity.
\begin{lemma}\label{lemma:walds_equation}
	Let $(\Omega, \mathcal{A}, \PrSymbol)$ be a probability space, let $(X_n)_{n \in N_{\ge 1}}$ be a sequence of random variables on $(\Omega, \mathcal{A}, \PrSymbol)$ with values in $\R_{\ge 0}$ and let $\mathcal{F} \subseteq \mathcal{A}$ be a sub-$\sigma$-field. 
	Suppose there is a $\mathcal{F}$-measurable random variable $M$ such that for all $n \in \N$ it holds that $\E{X_n}[\mathcal{F}] \le M$ almost surely.
	Let $N$ be a random variable in $\N$ such that for all $n \in \N$ it holds that $\E{X_n \ind{N \ge n}}[\mathcal{F}] = \E{X_n}[\mathcal{F}] \E{\ind{N \ge n}}[\mathcal{F}]$ almost surely.
	Then $\E{\sum_{n=1}^{N} X_n}[\mathcal{F}] \le M \E{N}[\mathcal{F}]$ almost surely. 
\end{lemma}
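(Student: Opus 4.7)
The plan is a direct conditional analogue of the standard proof of Wald's identity, made possible by the fact that all summands are non-negative, so Tonelli-type interchange works freely for conditional expectations.

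First I would rewrite the finite sum $\sum_{n=1}^{N} X_n$ as an infinite sum of non-negative terms by introducing an indicator, using that $N$ takes values in $\N$ and that the $X_n$ are non-negative:
\[
	\sum_{n=1}^{N} X_n \;=\; \sum_{n \ge 1} X_n \ind{N \ge n}.
\]
Taking conditional expectation given $\mathcal{F}$ on both sides, I would then invoke the conditional monotone convergence theorem (applied to the partial sums $\sum_{n=1}^{k} X_n \ind{N \ge n}$, which are non-negative and increasing in $k$) to interchange the sum and the conditional expectation, yielding
\[
	\E{\sum_{n=1}^{N} X_n}[\mathcal{F}] \;=\; \sum_{n \ge 1} \E{X_n \ind{N \ge n}}[\mathcal{F}]
\]
almost surely.

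Next, I would apply the factorization hypothesis $\E{X_n \ind{N \ge n}}[\mathcal{F}] = \E{X_n}[\mathcal{F}] \cdot \E{\ind{N \ge n}}[\mathcal{F}]$ term by term, together with the bound $\E{X_n}[\mathcal{F}] \le M$. Since all quantities involved are non-negative and $M$ is $\mathcal{F}$-measurable, this gives
\[
	\sum_{n \ge 1} \E{X_n \ind{N \ge n}}[\mathcal{F}]
	\;\le\; M \cdot \sum_{n \ge 1} \E{\ind{N \ge n}}[\mathcal{F}]
\]
almost surely. A second application of conditional monotone convergence pulls the sum back inside the conditional expectation, and since $N \in \N$ satisfies $\sum_{n \ge 1} \ind{N \ge n} = N$ pointwise, this last sum equals $\E{N}[\mathcal{F}]$ almost surely. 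Combining the inequalities delivers the claim.

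The only subtlety, and what I would expect to be the main point to verify carefully, is that the conditional monotone convergence theorem is being applied to versions of the conditional expectations that are consistent on a common almost-sure set; this is standard but must be handled because the almost-sure identities hold only outside a $\PrSymbol$-null set for each $n$, and one must take the union of countably many such null sets to get a single almost-sure set on which all identities hold simultaneously. No additional measurability assumption is required since both sides are defined via $\mathcal{F}$-measurable versions, and the resulting inequality is thus an $\PrSymbol$-almost-sure inequality between $\mathcal{F}$-measurable random variables, as claimed.
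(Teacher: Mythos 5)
Your proposal is correct and follows essentially the same route as the paper's proof: rewrite the random sum as $\sum_{n\ge 1} X_n \ind{N \ge n}$, interchange sum and conditional expectation by conditional monotone convergence, apply the factorization hypothesis and the bound by $M$, and identify $\sum_{n\ge1}\ind{N\ge n}=N$. The extra remark about collecting the countably many null sets is a fine (standard) point that the paper leaves implicit.
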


\begin{proof}
	Using monotone convergence we have
	\begin{align*}
		\E{\sum_{n=1}^{N} X_n}[\mathcal{F}]
		&= \sum_{n=1}^{\infty} \E{\ind{N \ge n} X_n}[\mathcal{F}]
		= \sum_{n=1}^{\infty} \E{X_n}[\mathcal{F}] \E{\ind{N \ge n}}[\mathcal{F}] \\
		&\le M \E{\sum_{n=1}^{\infty} \ind{N \ge n}}[\mathcal{F}]
		= M \E{N}[\mathcal{F}] \qedhere
	\end{align*}
\end{proof}

Moreover, we use the following drift theorem to bound the expected number of iterations of our sampling algorithm.
\begin{theorem}[{\cite[Theorem 1]{lengler2020drift}, \cite[Theorem 2]{kotzing2014concentration}}] \label{thm:additive_drift}
	Let $(X_t)_{t \in \N_{0}}$ be an integrable random process over $\R$ that is adapted to a filtration $(\mathcal{F}_t)_{t \in \N_{0}}$ and let $T = \inf\set{t \in \N_{0} \mid X_t \le 0}$.
	Assume
	\begin{enumerate}[a)]
		\item $X_t \ind{T \ge t} \ge 0$ for all $t \in N$ and
		\label{thm:additive_drift:non_negative}
		\item there is some $\varepsilon \in \R_{>0}$ such that $\E{(X_{t} - X_{t+1}) \ind{T > t}}[\mathcal{F}_t] \ge \varepsilon \ind{T > t}$ for all $t \in N$.
		\label{thm:additive_drift:drift}
	\end{enumerate}
	Then $\E{T}[] \le \frac{\E{X_0}[]}{\varepsilon}$.
	Further, suppose that 
	\begin{enumerate}[a)]
		\setcounter{enumi}{2}
		\item $X_0 \le x$ for some $x \in \R_{> 0}$ and 
		\label{thm:additive_drift:start}
		\item there is some $c \in \R_{>0}$ such that $\absolute{X_{t} - X_{t+1}} < c$ for all $t \in \N_{0}$.
		\label{thm:additive_drift:step_size} 
	\end{enumerate}
	Then, for all $s \ge \frac{2 x}{\varepsilon}$, $\Pr{T \ge s} \le \exponential{- \frac{s \varepsilon^2}{16 c^2}}$.
\end{theorem}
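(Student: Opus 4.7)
Both parts hinge on passing to the stopped process $Y_t \coloneqq X_{\min(t,T)}$, which satisfies $Y_t \ge 0$ by hypothesis (a) and $Y_{t+1} = Y_t$ on $\{T \le t\}$. On $\{T > t\}$ we have $Y_{t+1} - Y_t = X_{t+1} - X_t$, so hypothesis (b) translates to $\E{Y_t - Y_{t+1}}[\mathcal{F}_t] \ge \varepsilon \ind{T > t}$ almost surely. I will use this single identity to drive both statements.

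\textbf{First part (expected hitting time).} Telescoping gives
\[
\E{Y_0 - Y_t}[] = \sum_{s=0}^{t-1} \E{Y_s - Y_{s+1}}[] \ge \varepsilon \sum_{s=0}^{t-1} \Pr{T > s}.
\]
Since $Y_0 = X_0$ and $Y_t \ge 0$, the left-hand side is bounded above by $\E{X_0}[]$. Letting $t \to \infty$, monotone convergence turns the right-hand side into $\varepsilon \cdot \E{T}[]$, yielding $\E{T}[] \le \E{X_0}[]/\varepsilon$.

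\textbf{Second part (tail bound).} The natural idea is to build a supermartingale that grows linearly in $t$ whenever $T$ has not yet been hit, and then apply Azuma--Hoeffding. Concretely, set
\[
Z_t \coloneqq Y_t + \varepsilon \cdot \min(t, T).
\]
On $\{T > t\}$, $\E{Z_{t+1} - Z_t}[\mathcal{F}_t] = \E{Y_{t+1}-Y_t}[\mathcal{F}_t] + \varepsilon \le 0$, while on $\{T \le t\}$ both summands stay constant, so $(Z_t)$ is a supermartingale. Its increments are bounded by $|Z_{t+1}-Z_t| \le c + \varepsilon$; note that hypothesis (b) forces $\varepsilon \le c$ (the conditional drift cannot exceed the almost sure step bound), so in fact $|Z_{t+1}-Z_t| \le 2c$. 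Now on $\{T > s\}$ we have $Z_s = Y_s + \varepsilon s \ge \varepsilon s$, hence
\[
\Pr{T > s} \le \Pr{Z_s - Z_0 \ge \varepsilon s - X_0} \le \Pr{Z_s - Z_0 \ge \varepsilon s/2}
\]
for $s \ge 2x/\varepsilon$, using hypothesis (c) to replace $X_0$ by $x$. Applying the Azuma--Hoeffding inequality for supermartingales with increment bound $2c$ gives
\[
\Pr{Z_s - Z_0 \ge \varepsilon s/2} \le \exponential{- \frac{(\varepsilon s/2)^2}{2 s (2c)^2}} = \exponential{- \frac{s \varepsilon^2}{32 c^2}},
\]
which is of the stated form (the constant $16$ quoted in the theorem can be recovered by the standard tightening that exploits $|Y_{t+1}-Y_t| \le c$ separately from the deterministic drift $\varepsilon \ind{T > t}$ in a one-sided Azuma bound).

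\textbf{Main obstacle.} The only delicate point is the tail bound: the drift hypothesis (b) is one-sided and conditional, while Azuma requires bounded martingale differences. The construction of $Z_t$ must therefore simultaneously absorb the drift into a deterministic part (to turn the inequality into a supermartingale statement) and preserve a clean increment bound so the concentration inequality applies with a constant close to the claimed $16c^2$. The expected-value argument is straightforward telescoping; all the work is in choosing $Z_t$ correctly and in the Azuma application.
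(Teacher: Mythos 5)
The paper does not actually prove this theorem: it is quoted from the cited drift-analysis literature (Lengler; K\"otzing--Molter), so there is no internal proof to compare against. Your argument is, in substance, the standard proof from those references. The first part is complete and correct: stopping the process to get $Y_t = X_{\min(t,T)}$, observing $Y_t \ge 0$ (hypothesis (a) applied at the random index $T$ gives $X_T \ge 0$ on $\{T<\infty\}$), telescoping $\mathds{E}[Y_0 - Y_t] \ge \varepsilon \sum_{s=0}^{t-1}\mathds{P}[T>s]$, and letting $t\to\infty$ is exactly the additive drift argument.

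The one genuine loose end is the constant in the tail bound. The supermartingale $Z_t = Y_t + \varepsilon\min(t,T)$ and the reduction $\mathds{P}[T>s] \le \mathds{P}[Z_s - Z_0 \ge \varepsilon s/2]$ for $s \ge 2x/\varepsilon$ are right, but the symmetric Azuma bound with increment bound $2c$ yields $\exp\bigl(-s\varepsilon^2/(32c^2)\bigr)$, which is strictly weaker than the claimed $\exp\bigl(-s\varepsilon^2/(16c^2)\bigr)$, and the remark that the constant ``can be recovered by the standard tightening'' is an assertion rather than a proof. The fix is concrete: conditionally on $\mathcal{F}_t$ the increment $Z_{t+1}-Z_t = (Y_{t+1}-Y_t) + \varepsilon\ind{T>t}$ lies in an interval of length $2c$ whose endpoints are $\mathcal{F}_t$-measurable (the predictable drift term shifts the interval but does not lengthen it). The Hoeffding--Azuma inequality in its conditional-interval form for supermartingales then gives $\mathds{P}[Z_s - Z_0 \ge \lambda] \le \exp\bigl(-2\lambda^2/(s(2c)^2)\bigr) = \exp\bigl(-\lambda^2/(2sc^2)\bigr)$, which with $\lambda = \varepsilon s/2$ is $\exp\bigl(-s\varepsilon^2/(8c^2)\bigr)$. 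This implies the stated bound with room to spare, and the slack also absorbs the rounding needed to pass from your $\mathds{P}[T>s]$ to the theorem's $\mathds{P}[T\ge s]$ at non-integer $s$. With that substitution your proof is complete.
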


Finally, we make use of the following tail bound for Poisson random variables.
\begin{theorem}[{\cite[Theorem 5.4]{mitzenmacher2017probability}}] \label{thm:poisson_tail}
	Let $Y \sim Pois(\rho)$ for some $\rho \in \R_{>0}$.
	For all $y > \rho$ it holds that $\Pr{Y \ge y} \le \eulerE^{- \rho} \left(\frac{\eulerE \rho}{y}\right)^{y}$.
\end{theorem}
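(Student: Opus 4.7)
The plan is to apply the standard Chernoff bound technique to the Poisson distribution, using Markov's inequality on the moment generating function and then optimizing over the parameter.

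First I would recall that for $Y \sim \mathrm{Pois}(\rho)$ the moment generating function admits the closed form
\[
\E{\eulerE^{t Y}} = \sum_{k \ge 0} \eulerE^{t k} \cdot \eulerE^{-\rho} \frac{\rho^k}{k!} = \eulerE^{-\rho} \sum_{k \ge 0} \frac{(\eulerE^{t} \rho)^k}{k!} = \exponential{\rho (\eulerE^{t} - 1)}
\]
for any $t \in \R$. Next, fix $t > 0$ (to be chosen). Since $\eulerE^{t x}$ is a non-negative, strictly increasing function of $x$, Markov's inequality gives
\[
\Pr{Y \ge y} = \Pr{\eulerE^{t Y} \ge \eulerE^{t y}} \le \eulerE^{-t y} \E{\eulerE^{t Y}} = \exponential{\rho (\eulerE^{t} - 1) - t y}.
\]

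The second step is to optimize the right-hand side in $t$. Differentiating the exponent yields $\rho \eulerE^{t} - y$, whose unique zero is $t^{\ast} = \ln(y / \rho)$. The assumption $y > \rho$ ensures $t^{\ast} > 0$, so this choice is admissible. Plugging $t^{\ast}$ in gives $\eulerE^{t^{\ast}} = y / \rho$ and hence
\[
\rho (\eulerE^{t^{\ast}} - 1) - t^{\ast} y = y - \rho - y \ln(y / \rho),
\]
so that
\[
\Pr{Y \ge y} \le \exponential{y - \rho - y \ln(y/\rho)} = \eulerE^{-\rho} \cdot \eulerE^{y} \cdot \left(\frac{\rho}{y}\right)^{y} = \eulerE^{-\rho} \left(\frac{\eulerE \rho}{y}\right)^{y},
\]
which is exactly the claimed bound.

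There is essentially no obstacle here; the argument is the textbook Chernoff bound for a Poisson tail, and all the nontrivial input (the explicit MGF and the minimization over $t$) is a direct calculation. The only minor point to check is that $t^{\ast} > 0$, which is precisely the hypothesis $y > \rho$; without this hypothesis Markov's inequality applied with $t \le 0$ is not useful. I would simply present the two displays above as the proof.
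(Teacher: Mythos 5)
Your proof is correct and is exactly the standard Chernoff-bound argument (Markov's inequality applied to the moment generating function $\exponential{\rho(\eulerE^{t}-1)}$, optimized at $t^{\ast}=\ln(y/\rho)$), which is the same derivation as in the cited source; the paper itself states this as an imported result without reproving it. The one point worth checking, that $y>\rho$ guarantees $t^{\ast}>0$ so the optimizer is admissible, is handled correctly.
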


In particular, we use the following corollary of the above bound.
\begin{corollary}\label{cor:poisson_tail}
	Let $Y \sim Pois(\rho)$ for some $\rho \in \R_{>0}$.
	For all $\gamma > 1$ all $y \ge \eulerE^{\gamma} \rho$ it holds that $\Pr{Y \ge y} \le \eulerE^{- (\gamma - 1) y}$.
\end{corollary}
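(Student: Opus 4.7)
The plan is to obtain this as a direct specialization of Theorem~\ref{thm:poisson_tail} by simply substituting the bound $y \ge \eulerE^{\gamma}\rho$ into the right-hand side and bounding away the prefactor $\eulerE^{-\rho}$.

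First I would verify that Theorem~\ref{thm:poisson_tail} applies. Since $\gamma > 1$, we have $\eulerE^{\gamma} > \eulerE > 1$, so any $y \ge \eulerE^{\gamma}\rho$ satisfies $y > \rho$, which is the hypothesis needed. Therefore
\[
\Pr{Y \ge y} \le \eulerE^{-\rho}\left(\frac{\eulerE \rho}{y}\right)^{y}.
\]

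Next I would bound each factor. The constraint $y \ge \eulerE^{\gamma}\rho$ gives $\frac{\eulerE\rho}{y} \le \frac{\eulerE\rho}{\eulerE^{\gamma}\rho} = \eulerE^{1-\gamma}$, and raising to the $y$-th power yields $\left(\frac{\eulerE\rho}{y}\right)^{y} \le \eulerE^{(1-\gamma)y} = \eulerE^{-(\gamma-1)y}$. Since $\rho > 0$ we also have $\eulerE^{-\rho} \le 1$. Combining these two observations gives the claimed bound $\Pr{Y \ge y} \le \eulerE^{-(\gamma-1)y}$.

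There is no real obstacle here; the statement is a clean one-line consequence of the cited Chernoff-type bound, and the only subtlety is to record that $\gamma > 1$ implies $\eulerE^{\gamma}\rho > \rho$ so that the hypothesis of Theorem~\ref{thm:poisson_tail} is satisfied.
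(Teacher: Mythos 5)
Your proof is correct and matches the paper's argument exactly: both verify that $y \ge \eulerE^{\gamma}\rho > \rho$ so Theorem~\ref{thm:poisson_tail} applies, bound $\left(\eulerE\rho/y\right)^{y}$ by $\eulerE^{-(\gamma-1)y}$ using $y \ge \eulerE^{\gamma}\rho$, and drop the prefactor $\eulerE^{-\rho} \le 1$. No differences to report.
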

\begin{proof}
	Since $y \ge \eulerE^{\gamma} \rho > \rho$, \Cref{thm:poisson_tail} implies that
	\[
	\Pr{Y \ge y} 
	\le \eulerE^{- \rho} \left(\frac{\eulerE \rho}{y}\right)^{y}
	\le \left(\frac{\eulerE \rho}{\eulerE^{\gamma} \rho}\right)^{y}
	= \eulerE^{- (\gamma - 1) y}. \qedhere
	\]
\end{proof}

\section{Gibbs point processes} \label{appendix:gpps}

Here we collect some useful lemmas about Gibbs point processes.

The following technical lemma will come in handy.
\begin{lemma} \label{lemma:pi_system}
	Let $\region \in \borel$ and let $\region_1, \region_2 \subseteq \region$ be a partitioning of $\region$ into measurable sets.
	Let $\mathcal{G}$ denote all events of the form $\{\eta \in \pointsets[\region] \mid \eta \cap \region_1 \in A_1, \eta \cap \region_2 \in A_2\}$ with $A_1 \in \pointsetEvents[\region_1]$ and $A_2 \in \pointsetEvents[\region_2]$.
    Then $\mathcal{G}$ is a $\pi$-system that satisfies $\mathcal{G} \subseteq \pointsetEvents[\region] \subseteq \sigma(\mathcal{G})$.
\end{lemma}

\begin{proof}
    Seeing that $\mathcal{G}$ is a $\pi$-system is trivial.
	For showing that $\mathcal{G} \subseteq \pointsetEvents[\region]$, it suffices to show that for $i \in \{1, 2\}$ the projection $\projection{\region_i}: \eta \mapsto \eta \cap \region_i$, viewed as a map $\pointsets \to \pointsets[\region_i]$, is $\pointsetEvents$-$\pointsetEvents[\region_i]$-measurable.
    To this end, note that for every $k \in \N_0 \cup \{\infty\}$ and $\region' \subseteq \region_i$ the event $E = \{\eta \in \pointsets[\region_i] \mid \size{\eta \cap \region'} = k\} \in \pointsetEvents[\region_i]$ satisfies 
    \[
        \projection{\region_1}^{-1}(E) = \{\eta \in \pointsets \mid \size{\eta \cap \region'} = k\} \in \pointsetEvents.
    \]
    As such events $E$ generate $\pointsetEvents[\region_i]$, the desired measurability of $\projection{\region_i}$ follows, which proves the inclusion $\mathcal{G} \subseteq \pointsetEvents[\region]$.

    For $\pointsetEvents[\region] \subseteq \sigma(\mathcal{G})$, let $E = \{\eta \in \pointsets[\region] \mid \size{\eta \cap \region'} = k\} \in \pointsetEvents[\region]$ for some $k \in \N_0 \cup \{\infty\}$ and $\region' \subseteq \region_i$, and observe that such events generate $\pointsetEvents[\region]$.
    Next, observe that such an event $E$ can be expressed as
    \[
        E = \bigcup_{k_1 + k_2 = k} \{\eta \in \pointsets[\region] \mid \size{\eta \cap \region_1 \cap \region'} = k_1, \size{\eta \cap \region_2 \cap \region'} = k_2\}.
    \]
    As, for each $i \in \{1, 2\}$, it holds that $\{\eta' \in \pointsets[\region_i] \mid \size{\eta' \cap (\region_i \cap \region')} = k_i\} \in \pointsetEvents[\region_i]$, we have thus written $E$ as a countable union of element of $\mathcal{G}$, which concludes the proof.
\end{proof}

In this bounded-range setting, the following lemma might be seen as a version of the spatial Markov property for partition functions.
\begin{lemma} \label{lemma:spatial_markov}
	Let $\potential$ be a repulsive potential with bounded range $\range \in \R_{\ge 0}$.
	Moreover, let $\subregion \subseteq \region$ and let be any activity function $\activityFunction: \R^{\dimensions} \to \R_{\ge 0}$.
	For every two point configurations $\eta_1, \eta_2 \in \pointsets[\region]$ with $\dist{\subregion}{\eta_1 \symmDiff \eta_2} \ge \range$, where $\symmDiff$ denotes the symmetric difference, it holds that $\partitionFunction[\subregion][\eta_1](\activityFunction) = \partitionFunction[\subregion][\eta_2](\activityFunction)$.
\end{lemma}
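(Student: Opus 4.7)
The plan is direct: show that the tilted activity functions $\activityFunction_{\eta_1}$ and $\activityFunction_{\eta_2}$ agree pointwise on $\subregion$, which makes the integrands defining the two partition functions identical on the entire domain of integration, so the series match term by term.

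First, I would fix an arbitrary $y \in \subregion$ and compare $\activityFunction_{\eta_1}(y)$ with $\activityFunction_{\eta_2}(y)$. By the definition $\activityFunction_{\eta}(y) = \activityFunction(y)\eulerE^{-\sum_{x \in \eta}\potential[x][y]}$, it suffices to verify that $\sum_{x \in \eta_1}\potential[x][y] = \sum_{x \in \eta_2}\potential[x][y]$. Writing each $\eta_i = (\eta_1 \cap \eta_2) \cup (\eta_i \setminus (\eta_1 \cap \eta_2))$ and using the fact that $\eta_i \setminus (\eta_1 \cap \eta_2) \subseteq \eta_1 \symmDiff \eta_2$, the shared part cancels and we are left to show that $\potential[x][y] = 0$ for every $x \in \eta_1 \symmDiff \eta_2$. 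This is where the distance hypothesis enters: by assumption $\dist{x}{y} \ge \dist{\subregion}{\eta_1 \symmDiff \eta_2} \ge \range$, and because $\potential$ has range at most $\range$ (i.e.\ vanishes outside the $\range$-ball, in the convention used throughout the paper), we indeed get $\potential[x][y] = 0$.

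Second, once $\activityFunction_{\eta_1}(y) = \activityFunction_{\eta_2}(y)$ for every $y \in \subregion$, the products $(\activityFunction_{\eta_1})^{\vectorize{x}} = (\activityFunction_{\eta_2})^{\vectorize{x}}$ agree for every $k \in \N_{0}$ and every $\vectorize{x} \in \subregion^{k}$. Plugging this into
\[
    \partitionFunction[\activityFunction_{\eta_i}][\subregion] = \sum_{k \ge 0} \frac{1}{k!} \int_{\subregion^{k}} (\activityFunction_{\eta_i})^{\vectorize{x}} \eulerE^{-\hamiltonian[\vectorize{x}]} \intD \vectorize{x}
\]
gives term-by-term equality of the two series, hence $\partitionFunction[\activityFunction_{\eta_1}][\subregion] = \partitionFunction[\activityFunction_{\eta_2}][\subregion]$, as required.

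There is no real obstacle here; the lemma is essentially a bookkeeping statement that encodes the finite-range nature of $\potential$ at the level of partition functions. The only mild care needed is to be consistent with the definition of finite range used in the paper, so that the boundary case $\dist{x}{y} = \range$ is handled correctly; once that convention is fixed, the argument above applies without change.
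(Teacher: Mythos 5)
Your proof takes essentially the same route as the paper's: observe that the tilted activity functions $\activityFunction_{\eta_1}$ and $\activityFunction_{\eta_2}$ agree pointwise on $\subregion$ because any $x$ in $\eta_1 \symmDiff \eta_2$ is at distance at least $\range$ from every $y \in \subregion$, and then conclude the two partition-function series coincide term by term. The paper's version is just terser, leaving implicit the decomposition of each $\eta_i$ into $\eta_1 \cap \eta_2$ and $\eta_i \setminus (\eta_1 \cap \eta_2)$ that you spell out; the edge case $\dist{x}{y} = \range$ you flag is indeed a convention issue (the paper defines finite range via the strict inequality $\dist{x}{y} > \range$), and the paper glosses over it exactly as you do.
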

\begin{proof}
	Since the range of $\potential$ is bounded by $\range$, it holds that $\activityFunction_{\eta_1}(x) = \activityFunction_{\eta_2}(x)$ for all $x \in \region$.
	Therefore, we have
	\begin{align*}
		\partitionFunction[\subregion][\eta_1](\activityFunction)
		= \sum_{k \ge 0} \frac{1}{k!} \int_{\subregion^k} \activityFunction_{\eta_1}^{\vectorize{x}} \eulerE^{-\hamiltonian[\vectorize{x}]} \intD \vectorize{x} 
		= \sum_{k \ge 0} \frac{1}{k!} \int_{\subregion^k} \activityFunction_{\eta_2}^{\vectorize{x}} \eulerE^{-\hamiltonian[\vectorize{x}]} \intD \vectorize{x} 
		= \partitionFunction[\subregion][\eta_2](\activityFunction),
	\end{align*}
	which proves the claim.
\end{proof}

\end{document}